\documentclass[a4paper,UKenglish,cleveref, autoref, thm-restate, numberwithinsect]{lipics-v2021}

\pdfoutput=1 
\hideLIPIcs  

\title{Improving Reachability in Vector Addition Systems through Pumpability} 

\keywords{vector addition system, reachability, pumpability}

\author{Weijun Chen}
{BASICS, Shanghai Jiao Tong University}
{cwj2018@sjtu.edu.cn}
{https://orcid.org/0009-0007-2611-0338}
{}

\author{Yuxi Fu}
{BASICS, Shanghai Jiao Tong University}
{fu-yx@cs.sjtu.edu.cn}
{https://orcid.org/0000-0001-6429-7550}
{}

\author{Yangluo Zheng}
{BASICS, Shanghai Jiao Tong University}
{wunschunreif@sjtu.edu.cn}
{https://orcid.org/0009-0000-1028-5458}
{}

\authorrunning{W. Chen, Y. Fu, and Y. Zheng} 

\Copyright{Weijun Chen, Yuxi Fu, and Yangluo Zheng} 

\ccsdesc[500]{Theory of computation~Logic and verification}
\ccsdesc[500]{Theory of computation~Concurrency}

\keywords{vector addition system, reachability, pumpability} 

\category{} 

\relatedversion{} 

\funding{All the authors are supported by the NSFC (National Natural Science Foundation of China) grant No.\ 62572319.}

\nolinenumbers 

\EventEditors{John Q. Open and Joan R. Access}
\EventNoEds{2}
\EventLongTitle{42nd Conference on Very Important Topics (CVIT 2016)}
\EventShortTitle{CVIT 2016}
\EventAcronym{CVIT}
\EventYear{2016}
\EventDate{December 24--27, 2016}
\EventLocation{Little Whinging, United Kingdom}
\EventLogo{}
\SeriesVolume{42}
\ArticleNo{23}

\usepackage{bm}
\usepackage{stmaryrd}

\begin{document}

\newcommand{\defproblem}[3]{
    \;\par
    \noindent\fbox{
        \begin{minipage}{0.96\columnwidth}
            \underline{#1}
            \begin{description}
                \item[Input:] {#2}
                \item[Question:] {#3}
            \end{description}
        \end{minipage}
    }
    \par\;
}

\newcommand{\OpName}[1]{\mathop{\operatorname{\textup{#1}}}\nolimits}
\newcommand{\OpNameSC}[1]{\mathop{\operatorname{\textup{\textsc{#1}}}}\nolimits}

\renewcommand{\Vec}[1]{\bm{#1}}
\newcommand{\norm}[1]{\left\Vert{#1}\right\Vert}

\newcommand{\Supp}{\OpName{supp}}
\newcommand{\CycleSpace}{\OpNameSC{cyc}}
\newcommand{\Len}{\OpNameSC{len}}
\newcommand{\Span}{\OpName{span}}
\newcommand{\Drop}{\OpNameSC{drop}}
\newcommand{\Cone}{\OpName{cone}}
\newcommand{\SeqCone}{\operatorname{SeqCone}\nolimits}
\newcommand{\Next}{\operatorname{next}\nolimits}
\newcommand{\Rank}{\operatorname{rank}\nolimits}
\newcommand{\RankFull}{\operatorname{rank}\nolimits_{\text{full}}}
\newcommand{\ExpF}[1]{{\normalfont\textsf{exp}({#1})}}
\newcommand{\PolyF}[1]{{\normalfont\textsf{poly}({#1})}}
\newcommand{\Clean}[1]{{\normalfont\textrm{clean}({#1})}}
\newcommand{\Dec}[1]{{\normalfont\textrm{dec}({#1})}}
\newcommand{\Facc}{\textsc{Facc}}
\newcommand{\Bacc}{\textsc{Bacc}}

\newcommand{\gdim}{\OpName{gdim}}
\newcommand{\dimcyc}{\dim_{\textup{cyc}}}
\newcommand{\dimcom}{\dim_{\textup{com}}}
\newcommand{\Reach}{\OpName{Reach}}
\newcommand{\Source}{\OpName{src}}
\newcommand{\Target}{\OpName{trg}}
\newcommand{\Argmin}{\mathop{\operatorname{arg\, min}}}
\newcommand{\Argmax}{\mathop{\operatorname{arg\, max}}}
\newcommand{\Size}{\OpNameSC{size}}

\newcommand{\PSPACE}{{\normalfont\textsf{PSPACE}}}
\newcommand{\NP}{{\normalfont\textsf{NP}}}
\newcommand{\NL}{{\normalfont\textsf{NL}}}
\newcommand{\ELEM}{{\normalfont\textsf{ELEMENTARY}}}

\newcommand{\bracket}[1]{\left\langle{#1}\right\rangle}
\renewcommand{\Bar}[1]{\overline{#1}}
\newcommand{\lelex}{\le_{\text{lex}}}
\newcommand{\ltlex}{<_{\text{lex}}}

\newcommand{\LPSSystem}[1]{\mathcal{E}_{\normalfont\text{LPS}}({#1})}
\newcommand{\LPSSystemHomo}[1]{\mathcal{E}^0_{\normalfont\text{LPS}}({#1})}

\newcommand{\KLMSystem}[1]{\mathcal{E}({#1})}
\newcommand{\KLMSystemHomo}[1]{\mathcal{E}^0({#1})}

\newcommand{\RotTo}{\mathbin{\curvearrowright}}
\newcommand{\RotToEq}{\mathbin{\underline{\curvearrowright}}}
\newcommand{\NotRotToEq}{\mathbin{\not\kern -0.3em \RotToEq}}
\newcommand{\Inner}[1]{\langle {#1} \rangle}
\newcommand{\UBCounters}{\operatorname{\textsc{ub}}}

\maketitle

\begin{abstract}
Vector addition systems (VAS) constitute an important model of computation and concurrency that is equally expressive as the Petri net model. Recently, a lot of research has been conducted on vector addition systems with states (VASS), which are VASes equipped with a finite state control. Results on VASS naturally carry over to VAS, but no straightforward improvement is available. In this paper, we investigate the reachability problem in VAS in fixed dimensions. Based on a pumpability analysis of VAS that refines Rackoff's extraction for VASS, we obtain an $\textsf{F}_{d-2}$ upper bound for the $d$-dimensional VAS reachability problem, improving the $\textsf{F}_d$ upper bound inherited from the $d$-dimensional VASS reachability problem. Low-dimensional VASes are also considered. In particular, we establish a $\textsf{PSPACE}$ upper bound for reachability in 4-dimensional VAS and an $\textsf{ELEMENTARY}$ upper bound for 5-dimensional VAS, while the same upper bounds were known only for 2-VASS and 3-VASS, respectively. The result for 4-VAS particularly hinges on a simplified projection technique developed for geometrically 2-dimensional VASSes, whose reachability problem is shown to be equivalent to 2-VASS. 

\end{abstract}

\section{Introduction}
\label{sec:introduction}

Vector addition systems (VAS), being equivalent to Petri nets, are a well-established model of concurrency. A VAS is simply defined by a finite set of integer vectors called actions. Configurations, which are non-negative integer vectors, may move along these actions. The reachability problem, asking whether there is a run from one configuration to another, is the central algorithmic problem in the study of VAS and Petri nets that has found various applications \cite{DBLP:journals/tie/ZurawskiZ94}. Decidability of the reachability problem in VAS has been known for over 40 years \cite{DBLP:journals/siamcomp/Mayr84, DBLP:conf/stoc/Kosaraju82, DBLP:journals/tcs/Lambert92}. Its complexity was settled to be Ackermann-complete only recently \cite{DBLP:conf/lics/LerouxS19, DBLP:conf/focs/Leroux21, DBLP:conf/focs/CzerwinskiO21}. Zooming in to the $d$-dimensional VAS reachability problem with $d$ fixed, a significant gap still remains between the complexity lower bound $\textsf{F}_{O(d/2)}$ \cite{DBLP:conf/fsttcs/CzerwinskiJ0LO23} and upper bound $\textsf{F}_d$ \cite{DBLP:conf/lics/LerouxS19}, where $\textsf{F}_d$ is the $d$-th level of the fast-growing complexity hierarchy \cite{DBLP:journals/toct/Schmitz16}.

In the literature, the enhanced model ``vector addition system with states (VASS)'' turns out to be more popular. VASS augments VAS with a finite state control, so that each state has its own set of actions. Indeed, the Ackermann-completeness result mentioned above was proved directly for VASS. Considering the expressive power, Hopcroft and Pansiot \cite{DBLP:journals/tcs/HopcroftP79} showed that the finite state control can be simulated by a VAS with three additional dimensions. In general, this implies problems in VASS are inter-reducible with those in VAS if the dimension is not a fixed parameter. In fixed dimensions, one might naturally expect better complexity bounds for VAS than for VASS. However, such results are not known for most problems. For the $d$-dimensional VAS reachability problem, the best upper bound we have is $\textsf{F}_d$ inherited from $d$-dimensional VASS \cite{DBLP:conf/icalp/FuYZ24}. Gaps between VAS and VASS are clearer in low dimensions. For example, it was shown in \cite{DBLP:journals/tcs/HopcroftP79} that reachability relations in both 2-dimensional VASS and 5-dimensional VAS are effectively semilinear. Inspired by this semilinearity, \PSPACE{}-completeness is established for reachability in 2-VASS \cite{DBLP:journals/jacm/BlondinEFGHLMT21}. However, for 5-VAS, currently we do not even have an elementary upper bound. This paper is devoted to narrowing the gap between VAS and VASS.

\paragraph*{Contribution.}
In this paper, we mainly focus on the reachability problem in fixed-dimensional VAS. Our first result shows that in the same fixed dimension $d$, the complexity of reachability in VAS could be lower than that in VASS.

\begin{theorem}
    Reachability in $(d+2)$-VAS is in $\mathsf{F}_d$ for every $d\ge 3$.
\end{theorem}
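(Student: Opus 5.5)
Our plan is to reduce a $(d+2)$-VAS reachability instance to reachability in $d$-dimensional VASS, or to a KLM-style decomposition whose complexity is governed by $d$ rather than $d+2$, and then invoke the $\textsf{F}_d$ bound for $d$-VASS reachability \cite{DBLP:conf/lics/LerouxS19,DBLP:conf/icalp/FuYZ24}. The mechanism is a pumpability analysis that refines Rackoff's extraction. Fix a $(d+2)$-VAS $A$ with source $\Vec{s}$ and target $\Vec{t}$.

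\textbf{Step 1: bounded prefixes.} Consider any run from $\Vec{s}$ to $\Vec{t}$ and look at the coordinates that stay bounded. In the style of Rackoff, we choose a threshold $B$ and a sequence of thresholds. Along the run, either all counters stay below $B$ (a finite-state search over exponentially many configurations), or some counter exceeds $B$ for the first time. At that moment the remaining counters are below the current threshold, so we can record them in a finite control. For a VAS, the first counter that becomes large can be made pumpable. Because $A$ has no states, a Rackoff-style argument shows that once a counter is large, a short witness run (of length bounded by a function of $B$) increases it. Since that witness can be repeated from every configuration with the same small counters, the counter can be treated as $\omega$. Doing this once from $\Vec{s}$ and once backward from $\Vec{t}$ on the reversed VAS gives the refinement.

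\textbf{Step 2: two free dimensions.} The gain of two dimensions should come from the forward and the backward directions each contributing one pumped coordinate. Forward, once coordinate $i$ is pumpable from the reached configuration $\Vec{s}'$, the instance becomes: reach $\Vec{t}'$ with coordinate $i$ unconstrained in the middle and only its final value checked. Backward, on the reversed VAS a coordinate $j$ becomes pumpable toward $\Vec{t}'$. Dropping both coordinates while keeping their net effect as a linear constraint yields a VASS whose states encode the bounded information in those two coordinates, with dimension $d$ once $i\ne j$. We then enforce the linear constraint on the dropped coordinates using KLM/ideal decomposition arguments, which add only elementary overhead. If $i=j$, or if no coordinate ever becomes large, we get a bounded instance or a $(d+1)$-dimensional one where the same argument applies once more. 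Each case produces at most $\textsf{F}_{d}$-many instances of size bounded by a primitive recursive function, and $\textsf{F}_d$ is closed under such reductions when $d\ge 3$.

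\textbf{Main obstacle.} The hard part is justifying that a pumped coordinate can truly be dropped. A coordinate that is merely large at one point may still need to be decreased exactly to hit $\Vec{t}$, and its later evolution may fall back to zero. We would first prove a lemma stating that in a VAS, if a configuration admits a run making coordinate $i$ exceed a bound $f(\Size)$ while the others stay small, then coordinate $i$ is pumpable: there is a cycle-free sequence with positive effect on $i$ that is nonnegative on the other coordinates. This should follow from the absence of states, since every action sequence is applicable from any sufficiently large configuration. Using this lemma, we would show that any run can be rearranged so that the middle segment never brings $i$ or $j$ near zero, by inserting pumps forward and reverse-pumps backward. This reduces the middle to a $d$-VASS reachability problem carrying linear side constraints. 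If the rearrangement does not go through as stated, the fallback is to apply the KLM decomposition directly and argue that the rank/dimension measure of the initial perfect decomposition drops by two, because two coordinates are forward- and backward-unbounded.
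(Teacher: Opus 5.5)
\textbf{There is a genuine gap in Step 2, which is where the whole two-dimension gain is supposed to come from.}

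For exact reachability you cannot drop a coordinate just because it is pumpable in one direction. Suppose coordinate $i$ has a forward pump at $\Vec{s}'$ and coordinate $j$ has a backward pump at $\Vec{t}'$. A run of the projected system that respects the net effects on $i$ and $j$ is only a $\mathbb{Z}$-run in those two coordinates.
\begin{itemize}
\item To keep $i$ non-negative you must pump it up at the start. The run then overshoots $\Vec{t}'(i)$, and nothing removes the surplus, because $i$ has no backward pump.
\item Symmetrically, $j$ cannot be raised at the start.
\end{itemize}
Your ``rearrangement'' therefore cannot keep both coordinates away from zero and still hit $\Vec{t}'$ exactly. Lifting a $\mathbb{Z}$-run requires three things: the start configuration pumpable in \emph{all} coordinates forward, the end configuration pumpable in all coordinates backward, and a way to cancel the forward pump against the backward one. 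A pump on a single coordinate, as in your obstacle lemma, is not enough.

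Several other steps also fail.
\begin{itemize}
\item A ``linear side constraint'' on an unbounded dropped coordinate cannot be kept in finite control, so the object you obtain is not a $d$-VASS, and there is no $\textsf{F}_d$ bound to invoke for it.
\item The case $i=j$ and the KLM fallback (``the rank drops by two'') are assertions without an argument.
\item Treating a counter as $\omega$ in Step 1 is a coverability device; it does not preserve exact reachability.
\end{itemize}

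\textbf{The missing idea is wideness.}
\begin{itemize}
\item \emph{Reduction to wide sequential VAS.} Take a normal vector $\Vec{n}$ with $\Inner{\Vec{n},\Vec{a}}\ge 0$ for all transitions, chosen so that the transitions orthogonal to $\Vec{n}$ form a wide set. Every other transition can fire at most $\Inner{\Vec{n},\Vec{t}-\Vec{s}}$ times, so it becomes a bridge of a wide sequential VAS.
\item \emph{Improved pumping lemma.} In a wide sequential VAS with no fixed coordinates, every coordinate has some self-loop with positive effect on it. Hence a configuration in which \emph{all but one} coordinate is large is already pumpable in every coordinate (\autoref{lem:tgt-pumpable-of-unbounded-d-1}). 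Combined with Rackoff's extraction, failure of pumpability leaves \emph{two} bounded coordinates, not one.
\item \emph{Splitting the run.} Cut the run into two halves. If both halves see all but one coordinate become large, you get a forward-pumpable and a backward-pumpable configuration joined by a $\mathbb{Z}$-run. Wideness lifts this to a polynomially long run: $-\Delta(\theta_2)-\Delta(\theta_1)\in\Cone(V)$ cancels the two pumps. Otherwise each half keeps two coordinates bounded. These are encoded into states separately, and the two pieces are glued by a zero-effect transition.
\item \emph{What the reduction yields.} The two halves may bound different pairs of coordinates. So the result is a $(d+2)$-VASS with $\dimcom\le d$, not a $d$-VASS, and you need the $\textsf{F}_m$ bound parametrized by component-dimension rather than by dimension.
\end{itemize}
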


This result is based on a sharper analysis of pumpability in VAS. A $d$-dimensional configuration $\Vec{s}$ is said to be pumpable if $\Vec{s}$ can reach some $\Vec{s}' \ge \Vec{s} + \Vec{1}$. Intuitively, when both the source and the target configurations are pumpable, one can bound the length of the shortest runs between them by a polynomial. On the other hand, when pumpability fails, a known technique called Rackoff's extraction \cite{DBLP:journals/tcs/Rackoff78} exhibits one bounded coordinate on any run. The dimension can then be reduced by one. In \autoref{sec:pumpability}, we improve this technique and show that for a specific type of VAS, the failure of pumpability yields two bounded coordinates. This allows us to reduce the $(d+2)$-VAS reachability problem to ``almost'' $d$-VASS reachability: the reduced VASS might have dimension more than $d$, but the \emph{geometric dimension} of each of its strongly connected components is at most $d$. Geometric dimension, defined as the dimension of the vector space spanned by effects of simple cycles, is a metric on VASS gaining importance in recent research \cite{DBLP:conf/icalp/FuYZ24,DBLP:conf/icalp/CzerwinskiJ0O25,DBLP:conf/concur/Zheng25,DBLP:conf/lics/GuttenbergCL25}. A closer inspection of \cite{DBLP:conf/icalp/FuYZ24} shows that the $\textsf{F}_d$ upper bound for $d$-VASS indeed holds for VASSes produced by our reduction where each strongly connected component has geometric dimension at most $d$. 

Our second result focuses on low-dimensional VAS, especially 4-VAS and 5-VAS.

\begin{theorem}
    Reachability in 5-VAS is in $\ELEM$. Reachability in 4-VAS is in $\PSPACE$ under binary encoding and is $\NL$-complete under unary encoding.
\end{theorem}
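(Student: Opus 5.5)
The plan follows the same pumpability-based reduction as for the first theorem, instantiated in dimensions $5$ and $4$. Fix a $d$-VAS $A$ with $d\in\{4,5\}$ and source and target configurations $\Vec{s},\Vec{t}$. I split on pumpability.

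\textbf{Case 1: both pumpable.} Suppose $\Vec{s}$ is pumpable in $A$ and $\Vec{t}$ is pumpable in the reversed VAS. Then, by the pumping argument sketched in the introduction, reachability reduces to the existence of a solution of a linear characteristic system, and shortest runs have polynomial length. This case is therefore decidable within \NP{} under binary encoding, and in \NL{} under unary encoding once the pumping witnesses are shown to be polynomially bounded in the unary size.

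\textbf{Case 2: pumpability fails.} Here I apply the refined Rackoff extraction of \autoref{sec:pumpability}. For the specific class of VAS it covers, the failure of pumpability yields two coordinates that stay bounded along every run, with the bound exponential in $d$ and in the action norms (and elementary in general). I hard-code these two coordinates into a finite state control. This reduces the $d$-VAS instance to a VASS whose strongly connected components each have geometric dimension at most $d-2$. Preprocessing is needed so that the two-bounded-coordinate lemma applies; I would treat the case where only one coordinate can be bounded by a one-dimension reduction combined with Hopcroft–Pansiot-style reasoning.

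\textbf{Applying the low-dimensional bounds.} For $d=5$ the reduction produces VASSes of geometric dimension at most $3$, exponentially large. Reachability for geometrically $3$-dimensional VASS is elementary (via \cite{DBLP:conf/icalp/FuYZ24} or the $3$-VASS bounds), so composing gives \ELEM{}. For $d=4$ the reduction yields geometrically $2$-dimensional VASSes of exponential size, and here I use the simplified projection technique mentioned in the abstract. Each such VASS is equivalent to a genuine $2$-VASS whose size is polynomial in the size of the geometrically $2$-dimensional VASS, obtained by projecting onto a basis of the cycle space and controlling the remaining coordinates by the states. Since $2$-VASS reachability is \PSPACE{}-complete in binary and \NL{}-complete in unary, and since the exponential blow-up from the bounded coordinates can be generated on the fly in polynomial space, we get \PSPACE{} for binary $4$-VAS.

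\textbf{Unary encoding and main obstacle.} The bounded coordinates are then only polynomially large, so on-the-fly simulation stays in \NL{}. \NL{}-hardness is inherited from graph reachability, which embeds easily into $4$-VAS. The main obstacle is the projection step for geometrically $2$-dimensional VASS. One must show that the non-cycle-space directions can be tracked with polynomially bounded information per SCC, and that the run structure across SCCs, which is a linear path scheme, preserves the $2$-VASS bounds. I would try first to bound the components orthogonal to the cycle span by the path length through the SCC DAG, and then invoke the $2$-VASS small-run theorem.
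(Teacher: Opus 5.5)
There is a genuine gap, and it lies in the non-wide case, where the dimension count decides everything.

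\textbf{The reduction is too coarse outside the wide case.} Hard-coding two bounded coordinates per bounded segment yields only \autoref{thm:len-d-plus-2-vas-le-len-vass-max-dim-d}: a VASS of component-dimension at most $d-2$. You get an \emph{actual} $(d-2)$-VASS only if the same two coordinates stay bounded along the whole run. That happens in the wide case: a wide VAS has one state, so reaching a pumpable configuration makes $\Vec{s}$ itself pumpable. For a non-wide VAS, the refined extraction applies only after passing to wide \emph{sequential} VASes, where pumpability depends on the state. You must then cut the run into a source half and a target half, whose bounded pairs may differ. You encode each half and glue the two VASSes by a bridge, and the coordinates cannot be deleted.

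For $d=5$ this leaves component-dimension $3$. There nothing better than $\mathsf{F}_3$ (\autoref{thm:d-vass-in-fd}) is known. The triply-exponential bound of \cite{DBLP:conf/icalp/CzerwinskiJ0O25} needs actual dimension $3$, so your claim that geometric dimension $3$ is elementary is unsupported.

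For $d=4$ each half has cycle-dimension at most $2$, but the glued VASS can have cycle-dimension $3$. The projection to a $2$-VASS (\autoref{lem:geo-2d-to-2d-poly}) needs the \emph{whole} cycle space to be $2$-dimensional, so it does not apply. An SCC-by-SCC linear-path-scheme argument only gives runs exponential in the unary size. That means \textsf{EXPSPACE} rather than \PSPACE{} in binary, and nothing near \NL{} in unary. Your ``one-dimension reduction with Hopcroft--Pansiot-style reasoning'' does not repair this.

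\textbf{The missing idea: collinearity.} The paper uses the finer characterization \autoref{lem:len-d-vas-le-cases}, which gains one more dimension from collinearity.
\begin{itemize}
\item Passing to wide sequential VASes already lowers the cycle-dimension by one.
\item If two coordinates are negatively collinear, they are bounded on every run (\autoref{lem:len-of-neg-col-vas-le-d-minus-2-vass}). This gives an actual $(d-2)$-VASS.
\item Otherwise, pumpability still holds when all bounded coordinates are pairwise positively collinear (\autoref{lem:cfg-pumpable-of-bounded-pos-col}). So in the bounded case each half has two \emph{non-collinear} bounded coordinates and loses two further cycle-dimensions, giving $\dimcyc\le\max\{0,\dimcyc(V)-3\}$ per half.
\end{itemize}
For $d=4$ this is at most $1$ per half and at most $2$ globally, which is exactly what makes the projection applicable. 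For $d=5$ it is at most $2$ per half, handled by linear path schemes, while the other branch is a genuine $3$-VASS.

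\textbf{Smaller issues.}
\begin{itemize}
\item All blow-ups must be polynomial in the unary size; for fixed $d$ the threshold is $P(\cdot)^{(d+1)!}$. The ``exponentially large'' intermediate VASSes you describe would already push binary $4$-VAS out of \PSPACE{}, and they contradict your unary claim.
\item The lifting in your Case~1 needs wideness to compensate $\Delta(\theta_1)+\Delta(\theta_2)$ inside the cone (\autoref{lem:lift-Z-run-by-pumpable}). You never arrange wideness.
\item \NL{}-hardness for a VAS needs the logspace simulation of a $0$-VASS by a $3$-VAS from \cite{DBLP:journals/tcs/HopcroftP79}. Graph reachability does not embed directly into a single-state system.
\end{itemize}
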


To the best of our knowledge, this provides the first elementary upper bound for VAS of dimension greater than $3$ (we remark that a \PSPACE{} upper bound for $3$-VAS could be derived from \cite[Lemma 2]{DBLP:conf/icalp/CzerwinskiJ0O25}). For low-dimensional VAS, we need a fine-grained characterization of shortest runs. We show that the shortest runs in $d$-dimensional VAS are captured either by $(d-2)$-dimensional VASS, or by a VASS consisting of two sub-VASSes whose geometric dimensions are at most $d - 3$. For 4- and 5-VAS, this allows us to use as building blocks the existing results on 2-VASS \cite{DBLP:journals/jacm/BlondinEFGHLMT21}, 3-VASS \cite{DBLP:conf/icalp/CzerwinskiJ0O25}, and geometrically 2-dimensional VASS \cite{DBLP:conf/icalp/FuYZ24,DBLP:conf/concur/Zheng25}. We remark that a projection technique converting geometrically 2-dimensional VASS to 2-VASS is also provided in this paper, simplifying that of \cite{DBLP:conf/concur/Zheng25}. 
The technique produces tight complexity bounds for reachability in geometrically 2-dimensional VASS under unary encoding, which is not answered in \cite{DBLP:conf/concur/Zheng25}. To be specific, we show that this problem is \NL{}-complete when the actual dimension is fixed, and is \NP{}-complete otherwise.

\paragraph*{Organization.} 
In \autoref{sec:preliminary}, we fix notation and introduce important concepts in VAS. Section \ref{sec:pumpability} exhibits our main technical contribution---a weaker pumpability condition based on fine-tuning Rackoff's extraction. In \autoref{sec:d-vas} we apply the pumpability condition to show $\textsf{F}_{d-2}$ upper bound for $d$-dimensional VAS where $d > 4$. In \autoref{sec:low-dim-vas-boundedness} we show a fine-grained analysis of VAS runs using similar techniques. This will be useful in \autoref{sec:low-dim-vas}, where we derive better bounds for 4- and 5-VAS. Finally, \autoref{sec:conclusion} proposes possible future directions.

\section{Preliminaries}
\label{sec:preliminary}

Let $\mathbb{N}, \mathbb{Z}, \mathbb{Q}$ denote the sets of natural numbers, integers, and rational numbers, respectively.
We use, for example, $\mathbb{Q}_{\ge 0}$ for the set of rationals restricted to non-negative values. Let $m \le n$ be two integers, we write $[m, n]$ for the set $\{m, m + 1, \ldots, n\}$. The interval $[1, n]$ is abbreviated as $[n]$.
Vectors are written in boldface $\Vec{u}, \Vec{v}, \Vec{x}, \Vec{y}$, etc. For a vector $\Vec{x} \in \mathbb{Q}^d$, we write $\Vec{x}(i)$ for its $i$-th component, where $i \in [d]$. We compare vectors component-wise, so that $\Vec{x} \le \Vec{y}$ means $\Vec{x}(i) \le \Vec{y}(i)$ for all $i \in [d]$. We use the max norm for vectors, denoted by $\norm{\Vec{x}} := \max_{i \in [d]}|\Vec{x}(i)|$. For a finite set $X \subseteq \mathbb{Q}^d$ we write $\norm{X} := \max_{\Vec{x} \in X}\norm{\Vec{x}}$. The \emph{inner product} of two vectors $\Vec{x}, \Vec{y}\in \mathbb{Q}^d$ is written as $\Inner{\Vec{x}, \Vec{y}}:=\sum_{i \in [d]} \Vec{x}(i)\cdot \Vec{y}(i)$. When the dimension $d$ is clear from the context, we write $\Vec{e}_i \in \mathbb{Q}^d$ for the unit vector with a $1$ in its $i$-th coordinate. We write $\Vec{1} := (1, 1, \ldots, 1)$ and $\Vec{0} := (0, 0, \ldots, 0)$. For a finite set $X \subseteq \mathbb{Q}^d$ of vectors where $X = \{\Vec{x}_1, \ldots, \Vec{x}_n\}$, the \emph{cone} generated by $X$ is $\Cone(X) := \{\lambda_1 \Vec{x}_1 + \cdots + \lambda_n \Vec{x}_n \mid \lambda_1, \ldots, \lambda_n \in \mathbb{Q}_{\ge 0}\}$.

\paragraph*{VASS, VAS, and Sequential VAS}

A $d$-dimensional \emph{vector addition system with states} ($d$-VASS) is a pair $V = (Q, T)$ where $Q$ is a finite set of \emph{states} and $T \subseteq Q \times \mathbb{Z}^d \times Q$ is a finite set of \emph{transitions}. By a \emph{configuration} of $V$ we mean a pair $(p, \Vec{x})$, often denoted by $p(\Vec{x})$, of a state $p \in Q$ and a non-negative vector $\Vec{x} \in \mathbb{N}^d$. A transition $t = (p, \Vec{a}, q)$ induces one-step runs $p(\Vec{x}) \xrightarrow{t} q(\Vec{y})$ where $\Vec{x}, \Vec{y} \in \mathbb{N}^d$ satisfy $\Vec{x} + \Vec{a} = \Vec{y}$. Paths are words over transitions $\pi = (p_1, \Vec{a}_1, q_1)\ldots (p_k, \Vec{a}_k, q_k) =: t_1\ldots t_k \in T^*$ such that $q_i = p_{i+1}$ for $i \in [k-1]$. If further $q_k = p_1$, we say $\pi$ is a \emph{cycle} in $V$. A cycle consisting of a single transition is called a \emph{self-loop}. The effect of such a path $\pi$ is $\Delta(\pi) := \sum_{i = 1}^k \Vec{a}_k$. The length of $\pi$ is denoted by $|\pi|:=k$. Runs induced by paths are concatenations of the one-step runs induced by the transitions, e.g.,
\begin{equation}
  \pi : p_1(\Vec{x}_1) \xrightarrow{t_1} q_1(\Vec{y_1}) \xrightarrow{t_2} \cdots \xrightarrow{t_k} q_k(\Vec{y}_k).
\end{equation}
We emphasize that $\Vec{x}_1, \Vec{y}_1, \ldots, \Vec{y}_k \ge \Vec{0}$ must be non-negative.
We write $p(\Vec{x}) \xrightarrow{\pi} q(\Vec{y})$ if there is a run induced by $\pi$ from $p(\Vec{x})$ to $q(\Vec{y})$. We also write $p(\Vec{x}) \xrightarrow{*} q(\Vec{y})$, quantifying the path existentially. Two paths $\pi_1, \pi_2$ can be concatenated into $\pi_1\pi_2$ if the target state of $\pi_1$ matches the source state of $\pi_2$. If $\pi$ is a cycle, we write $\pi^m$ for the $m$-fold concatenation of $\pi$. Sometimes we also consider $\mathbb{Z}$-configurations $p(\Vec{z})$, where the vector $\Vec{z} \in \mathbb{Z}^d$ can take negative values. We define $\mathbb{Z}$-runs similarly over $\mathbb{Z}$-configurations, denoted by $p(\Vec{z}) \xrightarrow{\pi}_{\mathbb{Z}} q(\Vec{w})$. 
Consider a run $\pi = t_1\ldots t_k$ from state $p$ to $q$.
We define the \emph{drop} of $\pi$ to be the vector $\Drop(\pi)\in\mathbb{N}^d$ where for each $i \in [d]$, $\Drop(\pi)(i)$ is the maximum decrease of the $i$-th component along this path, i.e.,
\begin{equation}
    \Drop(\pi)(i):= -\min(0,\min_{j\in [k]}\Delta(t_1\dots t_j)(i)).
\end{equation}
It is straightforward that for any $\Vec{z}\ge \Drop(\pi)$, $p(\Vec{z}) \xrightarrow{\pi} q(\Vec{w})$ holds for some $\Vec{w} \in \mathbb{N}^d$.

The (unary-encoded) \emph{size} of $V$ is defined by $\Size(V) := |Q|+ d \cdot |T| \cdot (\norm{T} + 1)$, where $\norm{T} := \max_{(p, \Vec{a}, q) \in T}\norm{\Vec{a}}$.
The \emph{reverse} of $V$ is given by $V^{\text{rev}} = (Q, T^{\text{rev}})$ where $T^{\text{rev}} = \{ (q, -\Vec{a}, p) \mid (p, \Vec{a}, q) \in T \}$.
The norm of a ($\mathbb{Z}$-)configuration $p(\Vec{x})$ is defined as $\norm{p(\Vec{x})}:=\|\Vec{x}\|$.

\emph{Vector addition systems} (VAS) are the main focus of this paper. A $d$-VAS is a $d$-VASS $V = (Q, T)$ where $Q$ contains a single state only. Transitions in $T$ are determined by its effect, and we can view $T$ as a subset of $\mathbb{Z}^d$. The unique state is often omitted. So a configuration is regarded as a vector in $\mathbb{N}^d$.

In the analysis of VAS runs, we may identify some transitions that can be fired only a bounded number of times. In order to distinguish these transitions from others, we introduce the model of \emph{sequential VAS}. Let $V$ be a $d$-VAS whose transition set is $A \subseteq \mathbb{Z}^d$, and let $\Vec{a}_1, \ldots, \Vec{a}_k \in \mathbb{Z}^d$ be vectors. The sequential VAS $S=V[\Vec{a}_1, \ldots, \Vec{a}_k]$ is defined to be the VASS $(Q, T)$ where $Q := \{ q_0, q_1, \ldots, q_k \}$ and 
\begin{equation}
  T := \{ (q_j, \Vec{a}, q_j) \mid j \in [0, k], \Vec{a} \in A \} \cup \{ (q_{j - 1}, \Vec{a}_j, q_j) \mid j \in [k] \}.
\end{equation}
We call $q_0$ the \emph{source state} and $q_k$ the \emph{target state} in $S$. Vectors $\Vec{a}_1, \ldots, \Vec{a}_k$ are called the \emph{bridges} in $S$.
{By a source/target configuration, we mean a configuration whose state is the source/target state in $S$.}
Intuitively, $V[\Vec{a}_1, \ldots, \Vec{a}_k]$ can be viewed as the VAS $V$ enhanced with $k$ special transitions which can be fired exactly once in the order they are specified.

\paragraph*{Geometric Dimension and Reachability Problems}

Let $V$ be a $d$-VASS. Its \emph{cycle space} $\CycleSpace(V)\subseteq \mathbb{Q}^d$ is the vector space spanned by the effects of simple cycles in $V$, i.e., $\CycleSpace(V) := \Span\{\Delta(\theta) \mid \theta \text{ is a simple cycle in } V\}$. Geometric dimension was introduced recently as an important parameter of VASSes. Here, we consider two types of geometric dimension:
The \emph{cycle-dimension} 
$
  \dimcyc(V) := \dim(\CycleSpace(V))
$
is the dimension of the whole cycle space. This was known as \emph{geometric dimension} in \cite{DBLP:conf/icalp/FuYZ24} and \cite{DBLP:conf/concur/Zheng25}.
The \emph{component-dimension} 
$
  \dimcom(V) := \max_{\text{SCC } V' \text{ in } V} \dim(\CycleSpace(V'))
$
is the maximum value among the dimensions of the cycle spaces of strongly connected components in $V$. This coincides with the dimension introduced in \cite{DBLP:conf/lics/GuttenbergCL25}.
By convention, we say that a VASS $V$ is geometrically $d$-dimensional if $\dimcyc(V)\le d$. Notice that for a VAS $V$, both $\dimcyc(V)$ and $\dimcom(V)$ are equal to the dimension of the vector space spanned by the effects of transitions of $V$, as every transition is a self-loop. {We refer the readers to \cite{exploring-geo} for further research about these two parameters.}

The reachability problem in VASSes can be formulated as follows:
\begin{quotation}
    \begin{description}
        \item[Input:] A VASS $V = (Q, T)$ with configurations $p(\Vec{x}), q(\Vec{y})$.
        \item[Question:] Does it hold $p(\Vec{x}) \xrightarrow{*} q(\Vec{y})$?
    \end{description}
\end{quotation}
Complexity bounds for reachability problems are usually obtained from length bounds on runs witnessing the reachability. Let $V$ be a VASS and $s, t$ be two configurations of $V$. {We also call the triple $(V, s, t)$ a VASS, with $s$ and $t$ understood as the source and target configurations.} We write $\Len(V, s, t)$ for the set of lengths of runs from $s$ to $t$, and $\Size(V, s, t)$ for $\Size(V) + d\cdot (\norm{s} + \norm{t} + 1)$. The following bound can be obtained from the famous KLMST decomposition algorithm.

\begin{theorem}[{\cite[Lemma 6.2]{DBLP:conf/icalp/FuYZ24}}]\label{thm:d-vass-in-fd}
  Let $(V,s,t)$ be a VASS of component-dimension $m \ge 3$. Then $\min \Len(V, s, t) \le F_m(g(\Size(V, s, t)))$ if $\Len(V, s, t) \ne \emptyset$, where $g$ is some elementary function independent of $m$, and $(V,s,t)$.
\end{theorem}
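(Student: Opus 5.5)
This statement is cited from \cite[Lemma 6.2]{DBLP:conf/icalp/FuYZ24}, so the plan is to re-run the KLMST-based length analysis of that paper and check that it only ever uses the cycle-space dimension of individual strongly connected components, never the ambient dimension $d$. The argument will be organised as a decomposition algorithm on KLM sequences $\xi = V_0(\Vec{x}_0,\Vec{y}_0)\,t_1\,V_1(\Vec{x}_1,\Vec{y}_1)\cdots t_k\,V_k(\Vec{x}_k,\Vec{y}_k)$. Each $V_i$ is a strongly connected VASS, and some coordinates are fixed to constants. The starting point is the initial sequence built from $(V,s,t)$ by splitting $V$ into its SCCs along a path of the DAG of components. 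Each component has $\dim(\CycleSpace)\le m$ by hypothesis. The disjunction over these initial sequences is finite and of size elementary in $\Size(V,s,t)$.

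Next I would define a ranking function in the style of \cite{DBLP:conf/lics/LerouxS19}: $\Rank(\xi)\in\mathbb{N}^{m}$ counts the components $V_i$ whose cycle space has dimension exactly $j$, for each $j$. The key point is to index by $\dimcyc(V_i)$ rather than by the number of unbounded counters. Ranks are compared lexicographically from the highest dimension down.

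Each decomposition step will be one of the following.
\begin{enumerate}
\item When the $\mathbb{Z}$-characteristic system is not satisfiable with all cycle multiplicities unbounded, some transition multiplicity is bounded. That component is unfolded into a DAG of SCCs, each with a strictly smaller cycle space.
\item When a counter is bounded on the forward or backward coverability side (Rackoff/Karp–Miller), that counter is fixed to a value in the state space. This either keeps or strictly shrinks the cycle space, because cycles with nonzero effect on a fixed counter disappear.
\end{enumerate}
In both cases one component of dimension $j$ is replaced by elementary-many components of dimension $<j$, with elementary growth of the size. When neither step applies, the sequence is perfect. Lambert's iteration lemma then yields a run whose length is elementary in the size of the perfect sequence.

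Finally I would bound the total work. Along any branch the rank decreases lexicographically in $\mathbb{N}^m$, and each step inflates sizes by an elementary function. The standard length-of-bad-sequences (controlled descent) argument then bounds the number of steps and the final size by $F_m(g(\Size))$ for an elementary $g$. The threshold $m\ge3$ is what absorbs the elementary overheads into $F_m$.

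I expect the main obstacle to be the first step's guarantee that the dimension strictly drops for an arbitrary cycle-space dimension. Bounding one transition does not obviously kill a whole dimension of the cycle space. I would argue via the characteristic system: if the homogeneous solutions with full support span only a proper subspace, the newly formed SCCs avoid some transition in every cycle. Their cycle spaces are then spanned by proper subsets of the cycles. One must still check that the span actually becomes smaller. If it does not, the component is kept and we argue that some perfectness condition now holds. This is exactly the delicate geometric point handled in \cite{DBLP:conf/icalp/FuYZ24}, and I would follow their treatment there.
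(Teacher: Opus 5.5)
\textbf{There is a gap in your final counting step, and it stems from a missing idea.} Your overall plan matches what the paper intends: it only cites \cite[Lemma 6.2]{DBLP:conf/icalp/FuYZ24} and remarks that the ranking function there depends only on the cycle-space dimensions of SCCs. The trouble is that the rank you reconstruct, $\Rank(\xi)\in\mathbb{N}^m$, counts \emph{all} cycle-space dimensions and uses perfect sequences plus Lambert's lemma as the only terminal case. A lexicographic descent in $\mathbb{N}^k$ where every step inflates sizes by an elementary function only gives bounds at level $F_{k+2}$. Already for $k=1$ you may take about $\Size$ steps, each applying an exponential, which is a tower, i.e.\ $F_3$; each further rank coordinate adds one level. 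So your argument yields $F_{m+2}$, essentially a Leroux--Schmitz-type bound \cite{DBLP:conf/lics/LerouxS19}, not $F_m$. The remark that ``$m\ge 3$ absorbs the elementary overheads'' does not repair this. For $m\ge 3$, $F_m$ absorbs an elementary function applied \emph{after} it, since $2^{F_m(x)}\le F_m(x+1)$, but it cannot absorb two extra rank coordinates.

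\textbf{The missing ingredient is the base case that drives \cite{DBLP:conf/icalp/FuYZ24}.} Components of cycle-space dimension at most $2$ are never decomposed. Their reachability is characterized by linear path schemes \cite[Theorem 3.4]{DBLP:conf/icalp/FuYZ24}, hence by polynomial-size integer programs. This gives an exponential length bound for runs through a sequence of such components; see how it is used in the proof of \autoref{lem:len-4-vas-le-3-exp}. The rank therefore needs coordinates only for dimensions $3,\dots,m$. It lives in $\mathbb{N}^{m-2}$, and elementary control gives $F_{(m-2)+2}=F_m$. The exponential cost of the terminal linear-path-scheme step is what $m\ge 3$ absorbs, and that is the real reason for the threshold. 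Without this base case, or some other elementary bound for geometrically $2$-dimensional VASS, your scheme does not reach $F_m$.

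\textbf{A smaller inconsistency in step 2.} You concede that fixing a counter may leave the cycle space unchanged. That contradicts your next claim that the component is replaced by components of dimension $<j$. Such steps do not decrease your rank, so you need a secondary progress measure, for instance the number of non-fixed coordinates. You must also check that interleaving these steps keeps the control of a rank-decreasing step elementary.
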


The function $F_m$ is the $m$-th fast-growing function, which will be introduced below. We remark that the statement of this theorem in \cite{DBLP:conf/icalp/FuYZ24} is for fixed dimension $m$ rather than fixed component-dimension $m$. The version we use here can be easily obtained from the ranking function defined in \cite{DBLP:conf/icalp/FuYZ24}.

\paragraph*{Fast-Growing Complexity Hierarchy}

The fast-growing functions $(F_k)_{k \in \mathbb{N}}$ are defined as follows:
$F_0(x) := x + 1, \quad F_{k+1}(x) := F_k^{x+1}(x)$,
where $F_k^{x+1}$ means the $(x+1)$-fold iteration of $F_k$. For example, $F_2(x) = 2^{x+1}(x+1)-x$ is exponential in $x$, and $F_3(x)$ grows even faster than a tower of exponents. We say a function is \emph{elementary} if it is bounded by $F_2^c$ for some constant $c$. Based on these functions we may define a hierarchy $(\mathscr{F}_k)_{k\in \mathbb{N}}$ of computable functions: $\mathscr{F}_k := \bigcup_{c \in \mathbb{N}} \textsf{FDTIME}(F_k^c(n))$.
The fast-growing complexity hierarchy is defined as follows:
$\textsf{F}_k := \bigcup_{p \in \mathscr{F}_{k-1}} \textsf{DTIME}(F_k(p(n))).$
For more details of this complexity hierarchy---which we will not use for this paper---we refer the readers to \cite{DBLP:journals/toct/Schmitz16}.

\paragraph*{Wideness, Pumpability and Fixed Coordinates}

We have introduced the cycle space of a VASS, which is the vector space spanned by the effects of simple cycles. However, it is the \emph{cone} generated by the effects of simple cycles that better describes the behavior of the VASS. Let $V$ be a VASS and $E = \{\Vec{e}_1, \ldots, \Vec{e}_k\}$ be the set of effects of simple cycles in $V$. We define the cone of $V$ as $\Cone(V) := \Cone(E)$.
And $V$ is said to be \emph{wide} if $\Cone(V) = \CycleSpace(V)$, so any combination of cycle effects can be expressed using non-negative coefficients. We remark that the notion of wideness was proposed in \cite{DBLP:conf/icalp/CzerwinskiJ0O25} in a slightly different manner.

Pumpability is another important notion that we will investigate. Let $p(\Vec{x})$ be a configuration in $V$. We say $p(\Vec{x})$ is \emph{forward pumpable} (or simply \emph{pumpable}) in $V$ if there is a run $p(\Vec{x}) \xrightarrow{*} p(\Vec{x}')$ with $\Vec{x}' \ge \Vec{x} + \Vec{1}$. We say $p(\Vec{x})$ is \emph{backward pumpable} in $V$ if it is forward pumpable in $V^{\text{rev}}$, or equivalently, if there is a run $p(\Vec{x}') \xrightarrow{*} p(\Vec{x})$ in $V$ with $\Vec{x}' \ge \Vec{x} + \Vec{1}$. 

Let $S=V[\Vec{a}_1,\dots,\Vec{a}_k]$ be a sequential $d$-VAS. We say that a coordinate $i \in [d]$ is \emph{fixed} in $S$ if $\Vec{u}(i)=0$ holds for every transition $\Vec{u}$ in $V$. Fixed coordinates are annoying from time to time. We may use the following easy lemma to get rid of them.

\begin{restatable}{lemma}{existsSequentialVasWithoutFixedCoordinate}\label{lem:exists-sequential-vas-without-fixed-coordinate}
    Let $(S,s,t)$ be a sequential $d$-VAS. Then there exists a sequential $d'$-VAS $(S',s',t')$ containing no fixed coordinates such that $d'\le d$, $\Size(S',s',t')\le \Size(S,s,t)$, and $\Len(S,s,t)=\Len(S',s',t')$.
\end{restatable}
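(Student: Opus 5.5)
The plan is to delete fixed coordinates one at a time; iterating gives the statement. Let $S=V[\Vec{a}_1,\ldots,\Vec{a}_k]$ with transition set $A$, and suppose coordinate $i$ is fixed, so $\Vec{u}(i)=0$ for every $\Vec{u}\in A$. Along any run of $S$, coordinate $i$ changes only when a bridge fires, and each bridge fires exactly once, in order. Hence on every run from $s=q_0(\Vec{x})$ to $t=q_k(\Vec{y})$, the value of coordinate $i$ in state $q_j$ is fixed at $c_j:=\Vec{x}(i)+\sum_{l\le j}\Vec{a}_l(i)$.

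We first settle the degenerate case. If some $c_j<0$, or if $c_k\ne\Vec{y}(i)$, then $\Len(S,s,t)=\emptyset$. In that case we output a trivial unreachable instance of no larger size, for instance the $1$-VAS with no bridges, one transition $1$, and source $0$, target $0$ \emph{with no transitions}. We have to make sure this instance has no fixed coordinate, since an empty transition set would make coordinate $1$ fixed. A cleaner choice is the $0$-dimensional VAS with bridges $\Vec{a}_1,\ldots,\Vec{a}_k$ dropped to dimension $0$. That instance is reachable, so it is the wrong choice here. Instead we take a $1$-VAS with the single transition $+1$ and no bridges, with source $1$ and target $0$. The target is unreachable, so $\Len=\emptyset$, the coordinate is not fixed, and its size is constant and hence at most $\Size(S,s,t)$, up to the trivial edge cases. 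The only point needing care is that the size convention makes this hold, e.g.\ that $d\ge1$ whenever a fixed coordinate exists.

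In the main case, every $c_j\ge 0$ and $c_k=\Vec{y}(i)$. We project away coordinate $i$: let $A'$ be $A$ with coordinate $i$ removed, let the bridges be $\Vec{a}'_j$ with coordinate $i$ removed, and let $s',t'$ be the projected configurations. A run of $S$ projects to a run of $S'$ of the same length, since the transition sequence is the same and the remaining coordinates stay nonnegative. Conversely, a run of $S'$ from $s'$ to $t'$ passes through the states $q_0,\ldots,q_k$ in order. Lifting it by setting coordinate $i$ to $c_j$ while in state $q_j$ gives a run of $S$: coordinate $i$ stays nonnegative because every $c_j\ge 0$, and it ends at $\Vec{y}(i)$. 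Thus $\Len(S,s,t)=\Len(S',s',t')$. The size does not increase, because $d$ decreases, $|Q|$ and $|T|$ are unchanged (merged duplicate transitions only lower $|T|$), and the norms do not increase.

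Since each step lowers the dimension, after at most $d$ steps no fixed coordinate remains. The main obstacle is only the bookkeeping. Projecting may merge transitions or create a zero transition, and that can make a new coordinate fixed. This does no harm, because the iteration simply continues. The one other thing to check is that the degenerate-case output really has size at most $\Size(S,s,t)$.
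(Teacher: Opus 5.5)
Your proposal is correct and is essentially the paper's proof. In the unreachable case you output a trivial unreachable instance. Otherwise you drop the fixed coordinates, whose values are determined by the current state $q_j$ and stay nonnegative, so runs correspond one-to-one with equal lengths and the size does not grow.

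The paper drops all fixed coordinates at once rather than iterating. This is the same thing, because deleting coordinate $i$ leaves every other entry of the transitions of $V$ unchanged; in particular, contrary to your closing remark, projection can never create a new fixed coordinate.
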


Unless otherwise specified, we may assume without loss of generality that the sequential $d$-VAS in the subsequent discussion contains no fixed coordinates, i.e., there exists at least one transition $\Vec{u}_i$ with $\Vec{u}_i(i)\ne 0$ in $V$ for each $i \in [d]$.

\section{Pumpability from \texorpdfstring{$d - 1$}{d-1} Unbounded Counters}
\label{sec:pumpability}

\emph{Rackoff's extraction} \cite[Lemma A.1]{DBLP:conf/lics/LerouxS19} (see also \cite{DBLP:journals/tcs/Rackoff78}, \cite[Lemma 5]{DBLP:conf/icalp/CzerwinskiJ0O25}, and \cite{DBLP:journals/jacm/KunnemannMSSW25}) is an important technique to extract pumpability or coverability from an ``unbounded run''. In the scenario of $d$-VASS runs, this technique states, intuitively, that if there is a run where each of the $d$ coordinates has ever been raised above some sufficiently large value, then one can construct a run from the same source whose target has large values in all $d$ coordinates. A consequence of this fact is that, within one strongly connected component of VASS, at least one coordinate on a run from some unpumpable configuration must remain bounded. Dimension reduction can then be conducted for reachability instances starting from an unpumpable configuration.

In this section, we improve Rackoff's extraction for wide sequential $d$-VASes. We show that coverability or pumpability is guaranteed if \emph{all but one} coordinates can be raised above some large value in a run. This will give us at least two bounded coordinates in the unpumpable case. Let us now state the result formally. Let $\pi$ be a run in some $d$-VASS $V$ and $B \in \mathbb{N}$, and we introduce the notation $\UBCounters(\pi, B)$ for the set of indices $i \in [d]$ such that $\pi$ contains a configuration  $q_i(\Vec{x}_i)$ with $\Vec{x}_i(i) \ge B$. That is, $\UBCounters(\pi, B)$ are those coordinates that ever reached a value above $B$ along $\pi$. Our improved Rackoff's extraction is stated as follows.

\begin{theorem}
    \label{thm:reach-pumpable-of-ever-unbounded-d-1}
    There is a polynomial $P$ such that, given a wide sequential $d$-VAS $S = V[\Vec{a}_1, \ldots, \Vec{a}_k]$ with source {configuration $q_0(\Vec{x}_0)$}, if there is a run $\pi$ from $q_0(\Vec{x}_0)$ such that $|\UBCounters(\pi, U)| \ge d - 1$ where $U := P(\norm{\Vec{x}_0}, \Size(S))^{(d+1)!}$, then there is a run $\rho$ with $|\rho| \le U$ such that $q_0(\Vec{x}_0) \xrightarrow{\rho} q(\Vec{x})$ for some configuration $q(\Vec{x})$ pumpable in $S$. Moreover, $\rho$ can be obtained from $\pi$ by removing some self-loops.
\end{theorem}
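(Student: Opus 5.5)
The plan is to follow the inductive scheme of Rackoff's extraction, with one new ingredient: wideness lets a single "deficient'' coordinate be pumped using the other $d-1$ large coordinates.

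\textbf{Step 1: the key lemma.} First I would prove the following auxiliary statement. Let $S$ be wide and without fixed coordinates (by \autoref{lem:exists-sequential-vas-without-fixed-coordinate}). Let $q(\Vec{x})$ be a configuration in which every coordinate except possibly one coordinate $i$ is at least some polynomial bound $B_0$ in $\Size(S)$. Then $q(\Vec{x})$ is pumpable.

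The proof would go as follows. Since the coordinate $i$ is not fixed, some transition has a nonzero $i$-th entry. Wideness gives $\Cone(V)=\CycleSpace(V)$, and every transition of $V$ is a self-loop at every state $q_j$. Hence $\Vec{1}\in\Cone$ when the span is full, and more generally some nonnegative integer combination of transitions has effect $\ge\Vec{1}$ restricted to the support of the span. Since there are no fixed coordinates and $V$ is wide, I would argue that the cone equals the span and contains a vector $\Vec{w}$ with $\Vec{w}\ge \Vec{1}$. By Carathéodory and Cramer's rule, its coefficients can be taken polynomially bounded in $\norm{A}$ once $d$ is fixed.

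Next, order the firing: first fire all transitions with positive $i$-entry, then the rest. Only coordinates other than $i$ may drop during this firing, and they drop by at most the polynomial bound. This gives a run $q(\Vec{x})\to q(\Vec{x}+\Vec{w})$ as soon as the other coordinates are at least $B_0$. So one coordinate may start at $0$.

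\textbf{Step 2: induction on the number of large coordinates.} Next I would run Rackoff's argument, inducting on $m=d-1,d-2,\ldots$. The invariant is this: if a run reaches a configuration where the coordinates in a set $I$ with $|I|\ge m$ have all exceeded a threshold $U_m$, then a short prefix-derived run reaches a pumpable configuration.

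The scheme works as follows. Take the first time some new coordinate exceeds $U_{m}$. Before that time, all coordinates outside the current large set are bounded by $U_{m}$. This means the relevant segment lives in a VASS with finitely many (at most $U_m^{d}\cdot(k+1)$) control configurations on the bounded coordinates.

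Cycles in that bounded projection are removed by deleting self-loops. Deleting self-loops is legitimate because transitions in a VAS commute across states except for bridges, which are kept. The deletion only shortens the run, at the cost of a bounded decrease of the large coordinates. The thresholds are chosen so that $U_{m}$ dominates the length of the shortened run times $\Size(S)$ plus $U_{m+1}$. This yields the recursion $U_m \approx P\cdot U_{m+1}^{d}$, whose solution is $U_0\le P^{(d+1)!}$ as stated.

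The base case is $m=d-1$. There Step 1 gives pumpability directly, instead of requiring all $d$ coordinates large as in classical Rackoff. This base case is where the gain of one dimension comes from.

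\textbf{Step 3: bookkeeping.} Finally, I would check that the obtained $\rho$ is literally $\pi$ with some self-loops removed. This holds because the removed pieces are segments between two equal bounded projections at the same state. Any such segment consists solely of self-loops: a bridge changes the state irreversibly, since the states $q_0,\ldots,q_k$ form a line.

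The main obstacle I expect is Step 1. The difficulty is showing that wideness plus the absence of fixed coordinates yields a pumping cycle that only consumes coordinates other than $i$, with polynomially bounded drop. If the ordering trick fails, I would fall back to a Farkas argument. Suppose no combination increases all coordinates. Then wideness forces the span to miss $\Vec{1}$, which gives a hyperplane containing all transition effects. The coordinates in the support of its normal vector would then be jointly bounded along any run. I would derive a contradiction with coordinate $i$ being unbounded, or with it being non-fixed.
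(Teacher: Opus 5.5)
\textbf{There is a genuine gap in your Step~1: as stated, it is false.}

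Wideness together with the absence of fixed coordinates does not put any vector $\Vec{w}\ge\Vec{1}$ into $\Cone(V)$. Take $d=2$ and $V=\{(1,-1),(-1,1)\}$, which is wide and has no fixed coordinate. Every transition preserves $\Vec{x}(1)+\Vec{x}(2)$, so the configuration $(B_0,0)$ is never pumpable, however large $B_0$ is. This is consistent with the theorem only because, from a source of small norm, no coordinate of this VAS can ever become large. So pumpability of a configuration with $d-1$ large coordinates is not a property of that configuration alone. It has to come from the fact that the configuration is reachable from $q_0(\Vec{x}_0)$, with a threshold depending on $\norm{\Vec{x}_0}$.

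\textbf{The missing idea}, used in \autoref{lem:tgt-pumpable-of-unbounded-d-1}, is to build the pumping cycle out of the run itself. Wideness and non-fixedness give a single transition $\Vec{t}$ with $\Vec{t}(i)>0$: if some $\Vec{u}(i)<0$, write $-\Vec{u}$ as a nonnegative combination of transitions.
\begin{itemize}
\item From $q(\Vec{x})$ fire $\Vec{t}^H$ with $H=\norm{\Vec{x}_0}+1+\Size(S)$. This lifts every coordinate to at least $H$.
\item Then replay the run from $q_0(\Vec{x}_0)$ to $q(\Vec{x})$ with its bridges deleted. Every transition of $V$ is a self-loop at every state, so this is a cycle at $q$.
\item Its effect is $\Vec{x}+H\Vec{t}-\Vec{x}_0-\sum_j\Vec{a}_j\ge\Vec{1}$. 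Its drop after the prefix is at most $\norm{\Vec{x}_0}+\Size(S)<H$, so it is fireable.
\end{itemize}
No Carath\'eodory or Cramer bounds are needed, and the threshold $(\norm{\Vec{x}_0}+1+\Size(S))(1+\Size(S))$ is independent of $d$.

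\textbf{Your fallback can be made into a repair, but not in the form you wrote.} A hyperplane containing the span but not $\Vec{1}$ has a normal of arbitrary sign, and such a normal bounds nothing. What you need is Gordan's alternative:
\begin{itemize}
\item If $\CycleSpace(V)$ contains no strictly positive vector, there is an integral $\Vec{n}\ge\Vec{0}$, $\Vec{n}\ne\Vec{0}$, orthogonal to all transitions.
\item Non-fixedness forces $|\Supp(\Vec{n})|\ge 2$.
\item Along every run from $q_0(\Vec{x}_0)$ one has $\Inner{\Vec{n},\Vec{y}}=\Inner{\Vec{n},\Vec{x}_0+\sum_{j\le\ell}\Vec{a}_j}$, where $\ell$ is the number of bridges fired so far.
\item Hence some coordinate other than $i$ stays below about $d\norm{\Vec{n}}(\norm{\Vec{x}_0}+\Size(S))$, contradicting the hypothesis.
\end{itemize}
This again requires adding reachability from $q_0(\Vec{x}_0)$ to Step~1's hypotheses. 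Moreover, the available bounds on $\norm{\Vec{n}}$ and on the coefficients of $\Vec{w}$ are of order $\Size(S)^{O(d)}$. That would cost an extra factor $O(d)$ in the exponent relative to the stated $P(\cdot)^{(d+1)!}$ with $P$ independent of $d$.

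\textbf{Steps~2 and~3 are fine.} They are essentially the standard extraction of \autoref{lem:reach-unbounded-of-ever-unbounded}, plus the correct observation that cycles of a sequential VAS consist only of self-loops. One correction: the exponent in your recursion should be the number of still-bounded coordinates, not $d$, in order to arrive at $(d+1)!$.
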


We remark that the ``moreover'' part of this theorem guarantees that the target of $\pi$ is reachable from the pumpable configuration $q(\Vec{x})$ by a $\mathbb{Z}$-run consisting of transitions in $V$.

First, we show that the pumpability of $q(\Vec{x})$ is guaranteed if $d-1$ coordinates in $\Vec{x}$ are larger than some polynomial bound. The trick is that wideness ensures the existence of a cycle with a positive effect in the remaining coordinate, and this cycle is actually a self-loop in the sequential VAS, which can be fired whenever other coordinates have large values.

\begin{lemma}
    \label{lem:tgt-pumpable-of-unbounded-d-1}
    Let $S = V[\Vec{a}_1, \ldots, \Vec{a}_k]$ be a wide sequential $d$-VAS with source {configuration $q_0(\Vec{x}_0)$}. If there is a run $q_0(\Vec{x}_0) \xrightarrow{*} q(\Vec{x})$ for some state $q$ in $S$ such that $\Vec{x}(i) \ge B(\norm{\Vec{x}_0}, \Size(S))$ for all $i \in [d - 1]$, then $q(\Vec{x})$ is pumpable in $S$, where 
    \begin{equation}
        B(\norm{\Vec{x}_0}, \Size(S)) := (\norm{\Vec{x}_0} + 1 + \Size(S)) \cdot (1+\Size(S)).
    \end{equation}
\end{lemma}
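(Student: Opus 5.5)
The plan is to use wideness to find a positive cycle in coordinate $d$ built from self-loops only, and then fire it at the state $q$, where the other coordinates are already large.

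\textbf{Step 1: a self-loop combination raising coordinate $d$.} Since $S$ has no fixed coordinates, some transition $\Vec{u}$ of $V$ has $\Vec{u}(d)\ne 0$. Then either $\Vec{u}$ or $-\Vec{u}$ has positive $d$-th coordinate, and both lie in $\CycleSpace(S)$. The simple cycles of $S$ are exactly the self-loops labelled by transitions of $V$, because bridges go strictly forward in the state order. Hence $\CycleSpace(S)=\Span(A)$ and $\Cone(S)=\Cone(A)$, where $A$ is the transition set of $V$. Wideness gives $\Cone(A)=\Span(A)\ni \Vec{e}$ with $\Vec{e}(d)>0$. Writing $\Vec{e}$ as a nonnegative rational combination of elements of $A$ and clearing denominators, we get a multiset $\sum_{\Vec{a}\in A} n_{\Vec{a}}\Vec{a}=:\Vec{w}$ with $n_{\Vec{a}}\in\mathbb{N}$ and $\Vec{w}(d)\ge 1$.

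A cleaner choice is to take $\Vec{e}:=-\Vec{u}$ when $\Vec{u}(d)<0$, and otherwise simply use the single transition $\Vec{u}$, so $\Vec{w}=\Vec{u}$. The combination for $-\Vec{u}$ should be found via a Carathéodory-type bound so that the coefficients are at most polynomial in $\Size(S)$. I expect the total multiplicity $N:=\sum n_{\Vec{a}}$ to be at most about $\Size(S)$. This is the \textbf{main obstacle}: the stated bound $B=(\norm{\Vec{x}_0}+1+\Size(S))(1+\Size(S))$ is tight, so the pumping cycle must have length and norm controlled by $\Size(S)$. A more careful route may be needed, for instance a small positive solution to $\sum n_{\Vec{a}}\Vec{a}=-\Vec{u}$ from a Pottier/Papadimitriou style bound, or an argument via the linear span.

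\textbf{Step 2: pumping at $q$.} Let $\sigma$ be the self-loop sequence at $q$ firing each $\Vec{a}$ exactly $n_{\Vec{a}}$ times. Its drop in each coordinate is at most $N\norm{A}\le N\cdot\Size(S)$.

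Coordinate $d$ is not known to be large, so the ordering matters. First fire all loops with nonnegative $d$-entry; these use only coordinates in $[d-1]$, which are at least $B$. Next fire the loops that decrease coordinate $d$. This is only safe because the total $\Vec{w}(d)\ge1$, and even so coordinate $d$ might dip. To handle this, replace $\sigma$ by $\sigma^{m}$ ordered so that all positive-$d$ loops of all $m$ copies come first, or simply ensure $\Vec{x}(d)\ge0$ suffices once the positive loops precede.

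Firing this sequence $\sigma^{c}$ with $c$ large enough to gain $+1$ in coordinate $d$ may still decrease the coordinates in $[d-1]$. To fix that, afterwards use the positive run to compensate.

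\textbf{Step 3: gaining in the other coordinates.} The coordinates in $[d-1]$ must also strictly increase. I would apply the same wideness argument to every coordinate $i$: obtain combinations $\Vec{w}_i$ with $\Vec{w}_i(i)\ge1$, and sum them into $\Vec{w}^*$ with $\Vec{w}^*\ge\Vec{1}$, which lies in $\Cone(A)$. Then pick a polynomially bounded integer representation of $\Vec{w}^*$. Fire first the loops whose $d$-entry is nonnegative, then the rest. Since coordinates in $[d-1]$ start at $\ge B$, which exceeds the total drop, they never go negative. Coordinate $d$ is handled by the ordering, using $\Vec{x}(d)\ge 0$ plus the positive part accumulated first.

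Checking that this ordering keeps coordinate $d$ nonnegative throughout is the delicate point. The natural fix is to fire the positive-$d$ loops first, so that coordinate $d$ reaches its maximum before any decrease; the final value is then at least $\Vec{x}(d)+1$, and every intermediate value is bounded below by the final value. The resulting run $q(\Vec{x})\to q(\Vec{x}+\Vec{w}^*)$ with $\Vec{w}^*\ge\Vec{1}$ witnesses pumpability.
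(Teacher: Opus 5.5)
There is a genuine gap in Step 3. Summing combinations $\Vec{w}_i$ with $\Vec{w}_i(i)\ge 1$ does not give $\Vec{w}^*\ge\Vec{1}$, because the entries $\Vec{w}_i(j)$ with $j\ne i$ may be very negative (e.g.\ $(1,-5)+(-5,1)$).

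More fundamentally, wideness together with $\Vec{x}(i)\ge B$ for $i\in[d-1]$ cannot imply pumpability on its own. The $2$-VAS with transitions $(1,-1)$ and $(-1,1)$ is wide, but every cycle has effect on the line $\{(z,-z)\}$, so $(B,0)$ is not pumpable. Your argument never uses the run $q_0(\Vec{x}_0)\xrightarrow{*}q(\Vec{x})$, and that run is exactly what excludes such configurations. In this example it forces $\Vec{x}(1)+\Vec{x}(2)\le 2\norm{\Vec{x}_0}+2\Size(S)<B$. The ordering tricks in Steps 2--3 and a search for small Carath\'eodory/Pottier coefficients cannot repair this. The vector you want to realize need not exist in $\Cone(A)$ at all unless the run is taken into account.

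The missing idea is to use the run itself as the pumping cycle. Write the run as $\pi_0\Vec{a}_1\pi_1\cdots\Vec{a}_\ell\pi_\ell$ and let $\rho':=\pi_0\pi_1\cdots\pi_\ell$. This is a sequence of self-loops available at $q$, with $\Delta(\rho')=\Vec{x}-\Vec{x}_0-\sum_{j\le\ell}\Vec{a}_j$, which is already large in the coordinates $[d-1]$.

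Wideness is needed only to get a \emph{single} transition $\Vec{t}$ of $V$ with $\Vec{t}(d)>0$. If some $\Vec{u}$ has $\Vec{u}(d)<0$, any nonnegative representation of $-\Vec{u}\in\Cone(A)$ must contain a generator with positive $d$-th entry. So no coefficient bounds are needed, and your Step 1 obstacle disappears.

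Set $H:=\norm{\Vec{x}_0}+1+\Size(S)$. Then:
\begin{itemize}
\item The run $\Vec{t}^H$ is fireable from $\Vec{x}$ and reaches some $\Vec{x}'\ge H\cdot\Vec{1}$. This is where the hypothesis $\Vec{x}(i)\ge H(1+\Size(S))$ enters.
\item $\Drop(\rho')\le(\norm{\Vec{x}_0}+\Size(S))\cdot\Vec{1}\le\Vec{x}'$, because deleting the bridges shifts prefix effects by at most $\Size(S)$.
\item $\Delta(\Vec{t}^H\rho')=\Vec{x}'-\Vec{x}_0-\sum_j\Vec{a}_j\ge\Vec{1}$.
\end{itemize}
Hence the cycle $\Vec{t}^H\rho'$ witnesses pumpability at $q(\Vec{x})$, and this accounts exactly for the shape of $B$.
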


\begin{proof}
    Let the states of $S$ be $q_0, q_1, \ldots, q_k$. We expand the run $q_0(\Vec{x}_0) \xrightarrow{*} q(\Vec{x})$ as 
    \begin{equation}
        \begin{aligned}
            \pi: q_0(\Vec{x}_0) \xrightarrow{\pi_0} q_0(\Vec{y}_0) \xrightarrow{\Vec{a}_1} 
        q_1(\Vec{x}_1) \xrightarrow{\pi_1} & q_1(\Vec{y}_1) \xrightarrow{\Vec{a}_2}
        \cdots \xrightarrow{\Vec{a}_\ell} 
        q_\ell(\Vec{x}_\ell) \xrightarrow{\pi_\ell} & q_\ell(\Vec{y}_\ell) = q(\Vec{x})
        \end{aligned}        
    \end{equation}
    for some $\ell \in [0, k]$, where each $\pi_j$ is a path in the VAS $V$.
    
    We have assumed that $S$ is wide and has no fixed coordinates (recall \autoref{lem:exists-sequential-vas-without-fixed-coordinate}), and so is $V$. Therefore, there must be a transition (i.e.\ a self-loop) $\Vec{t}$ in $V$ such that $\Vec{t}(d) > 0$. 
    Let $H := \norm{\Vec{x}_0} + 1 + \Size(S)$. We claim that $q(\Vec{x}) \xrightarrow{\Vec{t}^{H}} q(\Vec{x} + H \cdot \Vec{t})$ is a legal run. To see this, one just needs to verify the target $\Vec{x} + H \cdot \Vec{t}$ is non-negative, as all the intermediate vectors on this path are a convex combination of $\Vec{x}$ and $\Vec{x}+H\cdot\Vec{t}$. For $i \in [d - 1]$, from $\Vec{x}(i) \ge B(\norm{\Vec{x}_0}, \Size(S))$ we have 
    \begin{equation}
        \Vec{x}(i) + H \cdot \Vec{t}(i) \ge H \cdot (1 + \Size(S)) - H \cdot \Size(S) \ge H \ge 0.
    \end{equation}
    And for $i = d$, recall that $\Vec{t}(d) > 0$, so $\Vec{x}(d) + H \cdot \Vec{t}(d) \ge H \ge 0.$ Now we have justified that $q(\Vec{x}) \xrightarrow{\Vec{t}^{H}} q(\Vec{x} + H \cdot \Vec{t})$ is a legal run, and moreover, the target $\Vec{x}' := \Vec{x} + H \cdot \Vec{t}$ satisfies $\Vec{x}'(i) \ge H$ for all $i \in [d]$. Regarding the effect of $q_0(\Vec{x}_0) \xrightarrow{\pi} q(\Vec{x}) \xrightarrow{\Vec{t}^H} q(\Vec{x}')$, we have 
    \begin{equation}
        \Delta(\pi_0 \ldots \pi_\ell \Vec{t}^H) + \sum_{i \in [\ell]}\Vec{a}_i = \Vec{x}' - \Vec{x}_0 \ge (H - \norm{\Vec{x}_0}) \cdot \Vec
        1 \ge (\Size(S) + 1) \cdot \Vec{1}. 
    \end{equation}
    Here we slightly abuse the notation, viewing $\pi_1, \ldots, \pi_\ell$ as paths in the VAS $V$, so they can be concatenated.
    Notice that $\norm{\Vec{a}_1 + \cdots + \Vec{a}_\ell} \le \Size(S)$. Let $\rho := \Vec{t}^H \pi_0 \pi_1 \ldots \pi_\ell$ be a cycle in $V$, we then have $\Delta(\rho) \ge \Vec{1}$. Next, we show that $\rho$ is fireable from $q(\Vec{x})$.
    
    Indeed, we have shown that the prefix $\Vec{t}^H$ induces a legal run $q(\Vec{x}) \xrightarrow{\Vec{t}^{H}} q(\Vec{x}')$. It remains to show that $\rho' := \pi_0 \pi_1 \ldots \pi_\ell$ is fireable from $\Vec{x}'$ in $V$.
    Notice that $\rho'$ is obtained from $\pi$ by removing all the bridges $\Vec{a}_1, \ldots, \Vec{a}_\ell$. Since $\pi$ is fireable from $q_0(\Vec{x}_0)$, we know that $\Drop(\pi) \le \norm{\Vec{x}_0} \cdot \Vec{1}$. Using the fact that $\norm{\Vec{a}_1 + \cdots + \Vec{a}_j} \le \Size(S)$ for any $j$, we deduce $\Drop(\rho') \le \Drop(\pi) + \Size(S) \cdot \Vec{1} \le (\norm{\Vec{x}_0} + \Size(S)) \cdot \Vec{1} \le H \cdot \Vec{1} \le \Vec{x}'$. This shows that $\rho'$ is fireable from $\Vec{x}'$.
\end{proof}

\autoref{lem:tgt-pumpable-of-unbounded-d-1} reduces pumpability to a special kind of coverability, in the sense that only $d-1$ coordinates need to cover some large value simultaneously. The following lemma, which is similar to the standard Rackoff's extraction, further reduces the coverability condition to covering a larger value in each coordinate separately. We remark that the proof, which is deferred to the appendix, has a similar fashion to \cite[Lemma 5]{DBLP:conf/icalp/CzerwinskiJ0O25}.

\begin{restatable}{lemma}{lemReachUnboundedofEverUnbounded}
    \label{lem:reach-unbounded-of-ever-unbounded}
    There exists a polynomial $R$ such that, given a $d$-VASS $V$ and $U \in \mathbb{N}$, for any run $\pi$ in $V$, there is a run $\rho$ starting from the same source as $\pi$ with $|\rho| \le R(U, \Size(V))^{(d+1)!}$ such that its target $q(\Vec{x})$ satisfies $\Vec{x}(i) \ge U$ for all $i \in \UBCounters(\pi, R(U, \Size(V))^{(d+1)!})$. Moreover, $\rho$ can be obtained from $\pi$ by removing some cycles.
\end{restatable}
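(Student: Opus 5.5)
We prove the lemma by induction on the number $m := |I|$ of coordinates that must be covered, with $I \subseteq [d]$ an arbitrary set of coordinates. This is the classical Rackoff scheme. Define a sequence of thresholds $L_0 := U$ and $L_{m+1} := (L_m \cdot \Size(V))^{c}\cdot |Q|$ for a suitable constant $c$, where $|Q|\le \Size(V)$. Unrolling this recursion gives a polynomial $R$ with $L_m \le R(U,\Size(V))^{(m+1)!}$. The statement to prove by induction is the following: for any run $\pi$ and any set $I \subseteq \UBCounters(\pi, L_{|I|})$, there is a run $\rho$, obtained from $\pi$ by removing cycles, with $|\rho| \le L_{|I|}$, such that its target covers $U$ on every coordinate of $I$. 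The lemma then follows by taking $I := \UBCounters(\pi, L_d)$.

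\textbf{Base case.} For $I=\emptyset$, take $\rho$ to be the empty run.

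\textbf{Inductive step.} Let $I$ be nonempty with $|I| = m$. Cut $\pi$ at the first configuration where some coordinate $i \in I$ reaches a value at least $L_{m-1}' := L_{m-1} + L_{m-1}\cdot\Size(V)$. Before this cut point, every coordinate in $I$ stays below this threshold. On the prefix, the configurations projected to $I$ therefore range over at most $|Q|\cdot (L_{m-1}')^{m}$ values. Whenever the same state and $I$-projection repeats, we remove the intermediate cycle. Removing such a cycle does not change the $I$-values of later configurations. The coordinates outside $I$ might change, but we do not care about them. The result is a prefix of length at most $|Q|(L'_{m-1})^m$ that still ends with coordinate $i$ at least $L'_{m-1}$.

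Next, apply the induction hypothesis to the suffix of $\pi$ with the set $I\setminus\{i\}$. Every $j \in I\setminus\{i\}$ reaches $L_m \ge L_{m-1}$ somewhere on $\pi$. If $j$ reaches it in the suffix, this is fine. If $j$ reaches it only in the prefix, then we instead cut at the point where the prefix is already guaranteed to be large in $j$. To handle this uniformly, we order the cut points by the coordinates' first crossing times and always cut at the earliest crossing. The coordinates that crossed only before the cut cannot exist, because the cut is the earliest crossing. The induction hypothesis yields a shortened suffix $\rho_2$ of length at most $L_{m-1}$ whose target covers $U$ on $I \setminus\{i\}$. Along $\rho_2$, coordinate $i$ decreases by at most $L_{m-1}\cdot\Size(V)$, so it still ends at least $L_{m-1} \ge U$.

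\textbf{Main obstacle.} The key difficulty is making the concatenation legal. Removing cycles changes the coordinates outside $I$, and the suffix's induction hypothesis was applied to a run starting from the original cut configuration, not the shortened one. The fix is to apply the induction hypothesis to the genuine suffix of $\pi$ first, at the true cut configuration $q(\Vec{y})$. Then we shorten only the prefix, and only by removing cycles whose effect is zero on $I$. We must also ensure that the shortened suffix remains fireable from the new cut configuration $q(\Vec{y}')$, where $\Vec{y}'$ agrees with $\Vec{y}$ on $I$ but may be smaller elsewhere. The remedy is to treat coordinates outside $I$ as irrelevant by working in the VASS projected onto $I$, where they are ignored. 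The lemma's conclusion only concerns coordinates in $\UBCounters$. However, the run must be fireable in $V$ itself, so the projected run has to be legal on every coordinate.

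This suggests strengthening the induction: take $I$ to be exactly the set of coordinates that ever exceed the threshold, so every coordinate outside $I$ stays below $L$ throughout $\pi$. Cycle removal then uses full-configuration repetition, with all $d$ coordinates bounded by $L'$ on the prefix. This guarantees that the shortened run passes through exactly the same configurations, so fireability is automatic. I expect checking this bookkeeping, together with verifying that the exponent $(d+1)!$ absorbs the $m$-th power at each of the $d$ levels, to be the delicate part.
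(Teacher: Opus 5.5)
Your induction does not go through, and its statement is in fact false. The thresholds $L_m$ depend only on $m=|I|$; you claim $L_1\le R(U,\Size(V))^{2}$. But covering even one coordinate can require runs doubly exponential in the dimension. Take Lipton's VASS, whose size is polynomial in its dimension $d$ and in which every run from the initial configuration to a designated state $q_f$ has length at least $2^{2^{\Omega(d)}}$. Add a fresh coordinate $i$ that is incremented only by a self-loop at $q_f$, and let $\pi$ pump it up to $L_1$. For $I=\{i\}$ and $U=1$, your level-$1$ hypothesis promises a covering run of length at most $L_1=\mathrm{poly}(d)$, which does not exist.

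The concatenation problem you flag is exactly where this surfaces. On the prefix, the coordinates outside $I$ are neither bounded nor known to be large, so no cycle removal there is sound. Removing a cycle with zero $I$-effect shifts all later configurations and can drive an outside coordinate negative. Moreover, the shortened prefix ends at some $q(\Vec{y}')$, from which the inductively obtained suffix was never certified.

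Your strengthening does not repair this. If $I=\UBCounters(\pi,L_m)$, the outside coordinates are only below $L_m$, so full-configuration repetition on the prefix costs about $|Q|L_m^d>L_m$, which is circular. If instead the cut threshold also defines $I$, then on the suffix the set of coordinates exceeding the next threshold contains $i$ itself, and possibly coordinates outside $I$. So the recursive call is not on $I\setminus\{i\}$, and $|I|$ need not decrease.

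The missing idea is Rackoff's projection. At every stage each coordinate must be either bounded, so that removing cycles between repeated \emph{full} configurations is sound and preserves the endpoint, or already large, so that it can be deleted and its nonnegativity follows from a length bound. The paper therefore inducts on the dimension, with thresholds indexed by it: for $M\ge\Size(V)$, set $H_d=U+L_{d-1}M$ and $L_d=MH_d^d+L_{d-1}$. It cuts $\pi$ at the first configuration where \emph{any} coordinate reaches $H_d$. That coordinate $j$ lies in $\UBCounters(\pi,H_d)$, and no coordinate of $\UBCounters(\pi,H_d)$ crosses earlier. All $d$ coordinates are below $H_d$ on the prefix, so the prefix shrinks to length at most $MH_d^d$ while ending in exactly the same configuration.

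The suffix is handled in the $(d-1)$-VASS obtained by deleting coordinate $j$, with threshold $H_{d-1}$, using $\UBCounters(\pi_2,H_d)\setminus\{j\}\subseteq\UBCounters(\pi_2^{-j},H_{d-1})$. Coordinate $j$ stays nonnegative and ends at least $U$, because it starts at least $H_d$ and the returned run has length at most $L_{d-1}$. At level $i$ only the $i$ still-tracked coordinates are counted, so the repetition bound is $MH_i^i$, which yields the exponent $(d+1)!$. Full $d$-dimensional repetition at each of $d$ levels would give an exponent of order $d^d$, which a fixed polynomial $R$ cannot absorb.

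Finally, your base case $\rho=\varepsilon$ violates the ``moreover'' clause, which forces $\rho$ to end in the target state of $\pi$. That is what supplies the $\mathbb{Z}$-run to the target of $\pi$ used in \autoref{lem:poly-bound-for-almost-unbounded}. In the paper the recursion bottoms out in dimension $0$, where exhaustive cycle removal is harmless.
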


Now we can combine these two lemmas to establish \autoref{thm:reach-pumpable-of-ever-unbounded-d-1}.

\begin{proof}[Proof of \autoref{thm:reach-pumpable-of-ever-unbounded-d-1}]
    Let $B$ be the polynomial in \autoref{lem:tgt-pumpable-of-unbounded-d-1} and $R$ be the polynomial defined in \autoref{lem:reach-unbounded-of-ever-unbounded}. We may take the polynomial $P$ defined by 
    \begin{equation}
        P(\norm{\Vec{x}_0}, \Size(S)) := R(B(\norm{\Vec{x}_0}, \Size(S)), \Size(S)).
    \end{equation}
    Let $\pi$ be a run from $q_0(\Vec{x}_0)$ in $S$ such that $|\UBCounters(\pi, U)| \ge d - 1$. By \autoref{lem:reach-unbounded-of-ever-unbounded} there is a run $\rho$ with $|\rho| \le P(\norm{\Vec{x}_0}, \Size(S))^{(d+1)!}$ such that $q_0(\Vec{x}_0) \xrightarrow{\rho} q(\Vec{x})$ for some $q(\Vec{x})$ with $\Vec{x}(i) \ge B(\norm{\Vec{x}_0}, \Size(S))$ for all $i \in \UBCounters(\pi, U)$. We may assume that $[d-1] \subseteq \UBCounters(\pi, U)$ by renaming the coordinates. Therefore by \autoref{lem:tgt-pumpable-of-unbounded-d-1} the configuration $q(\Vec{x})$ is pumpable in $S$. Finally, since each cycle in $S$ consists only of self-loops, $\rho$ can be obtained from $\pi$ by removing self-loops.
\end{proof}

\section{Bounding Shortest Runs of \texorpdfstring{$d$}{d}-VAS}
\label{sec:d-vas}

In this section, we prove our first main result, that the length of the shortest runs in a $(d+2)$-VAS can be bounded by that in a VASS with component-dimension at most $d$.

\begin{theorem}
    \label{thm:len-d-plus-2-vas-le-len-vass-max-dim-d}
    For every $d \in \mathbb{N}$, there is a polynomial $P_d$ such that, for every $(d + 2)$-VAS $(V,\Vec{s},\Vec{t})$, there is a VASS $(V',s',t')$ with $\dimcom(V') \le d$ such that $\Size(V', s', t') \le P_d(\Size(V, \Vec{s}, \Vec{t}))$, and $\min \Len(V, \Vec{s}, \Vec{t}) \le \min \Len(V', s', t')$.
\end{theorem}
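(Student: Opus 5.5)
We will prove the statement by induction on a well-founded measure, building $V'$ as a concatenation of pieces. We work with wide sequential VASes, forward and backward. Start from $S_0 = V[\,]$, which is a sequential $(d+2)$-VAS with no bridges. The first step is to reduce to the wide case. If $\Cone(V) \ne \CycleSpace(V)$, there is a linear functional $\Vec{f}$ that is nonnegative on all transitions and positive on some transition $\Vec{u}$. Since $\Inner{\Vec{f},\cdot}$ of the configuration is nondecreasing along any run, with $\norm{\Vec{f}}$ polynomially bounded by Cramer-type bounds, every transition $\Vec{u}$ with $\Inner{\Vec{f},\Vec{u}}>0$ can fire at most $\Inner{\Vec{f},\Vec{t}-\Vec{s}}$ times. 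We therefore guess the (polynomially many, in unary size) positions where such transitions fire and turn them into bridges. This yields a disjunction over sequential VASes $V_0[\Vec{a}_1,\dots,\Vec{a}_k]$ whose underlying transition set is strictly smaller in span dimension. We iterate until the underlying VAS is wide, or until its span has dimension at most $d$, in which case the SCCs (each a single state with self-loops) already have cycle-dimension at most $d$ and we are done. We then use \autoref{lem:exists-sequential-vas-without-fixed-coordinate} to drop fixed coordinates. The sizes stay polynomial because the number of bridges is bounded by a polynomial in $\Size(V,\Vec{s},\Vec{t})$.

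Next we fix a shortest run $\pi$ from $q_0(\Vec{s})$ to $q_k(\Vec{t})$ in a wide sequential $d'$-VAS $S$, where $d' \le d+2$, and split into cases according to \autoref{thm:reach-pumpable-of-ever-unbounded-d-1}, applied both forwards and to the reversed system $S^{\text{rev}}$. Let $U$ be the bound of that theorem.

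\emph{Case 1: some prefix is unbounded.} Suppose some prefix of $\pi$ has $|\UBCounters(\cdot,U)|\ge d'-1$. Then the theorem gives a short run $\rho$ to a pumpable configuration $q(\Vec{x})$, and symmetrically a short backward run to a backward pumpable configuration $q'(\Vec{y})$. We then need the standard fact that pumpable-to-backward-pumpable reachability within a wide strongly connected sequential piece has polynomially (or at least $F_d$-boundedly) short witnesses, once a $\mathbb{Z}$-run exists. The "moreover" clause supplies this $\mathbb{Z}$-run. This case is handled by a direct construction in $V'$: a bounded-length component, say a chain of states of counter values, which has component-dimension $0$.

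\emph{Case 2: every prefix stays bounded.} Otherwise at least two coordinates stay below $U$ throughout the relevant segment, where the set of bounded coordinates may change only at finitely many switch points. We take the first time the forward condition fails and the last time the backward condition fails, and split $\pi$ at those points. On each segment we encode the two bounded coordinates into the finite state control, giving $U^2$ states. This produces a VASS whose SCCs live in the remaining $d'$ coordinates restricted to transitions preserving the encoded values, so its cycles have effect $0$ on the two encoded coordinates. The cycle space of each SCC is then contained in a subspace of dimension at most $d'-2\le d$.

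Finally we glue the pieces sequentially: a forward bounded-coordinate part, a middle pumpable part, and a backward bounded-coordinate part. Guessing which pair of coordinates is bounded gives polynomially many branches, which we combine by a nondeterministic initial choice. Gluing with DAG connections does not increase $\dimcom$, and $U$ is polynomial for fixed $d$, so the size bound is $P_d$.

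The main obstacle is the middle part. We must show that between a pumpable source and a backward pumpable target in a wide sequential VAS, the existence of a $\mathbb{Z}$-run with the right bridge order implies a genuine run that is expressible in low component-dimension. Moreover, its length must not exceed what $V'$ can simulate. I would first try to handle this by pumping with the cycles $\Vec{1}$-positive obtained from pumpability, absorbing the $\mathbb{Z}$-run, and then unpumping backward, in the style of the lemma on \autoref{lem:tgt-pumpable-of-unbounded-d-1}. That would make the middle run polynomially long and thus representable by a dimension-$0$ chain.
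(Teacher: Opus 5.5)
\paragraph{Review.} Your reduction to wide sequential VASes is a valid alternative to the paper's one-shot choice of a normal vector whose orthogonal part is already wide. You iterate with an arbitrary separating functional until the span is wide or has dimension at most $d$, and in later rounds the firing bound must be corrected by the bridge sum. Encoding two bounded coordinates into states also correctly gives $\dimcom\le d'-2$.

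The gap is in the case split and the gluing. Your Case~1 only assumes some \emph{prefix} with $|\UBCounters(\cdot,U)|\ge d'-1$. To connect the forward pumpable $q(\Vec{x})$ to a backward pumpable $q'(\Vec{y})$ by a $\mathbb{Z}$-run, you need an unbounded suffix that starts \emph{after} that prefix ends. That is, you need one split $\pi=\pi_1\pi_2$ with $|\UBCounters(\pi_i,U)|\ge d'-1$ for $i=1,2$. If the two pieces overlap, merging the extracted $\mathbb{Z}$-runs would require undoing the overlap, including its bridges. The right complement (\autoref{lem:almost-bounded-or-unbounded}) is not ``every prefix is bounded''. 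Instead, let $\pi_1$ be the longest 2-bounded prefix and write $\pi=\pi_1t\pi_2$; then $\pi_2$ is 2-bounded and $\pi_1t$ is 2-bounded for the threshold $U+\Size(S)$.

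Hence there is never a middle part between two bounded parts. Either the \emph{whole} run is replaced by a polynomial-length run, so $V'$ is a chain. Or $V'$ is two encoded copies, possibly with different bounded pairs, joined by zero-effect transitions between states with consistent encoded values. Your three-piece gluing would in fact fail. \autoref{thm:reach-pumpable-of-ever-unbounded-d-1} gives a polynomial threshold and a polynomial-length run only from the initial configuration $q_0(\Vec{x}_0)$. The configuration ending a 2-bounded prefix may be arbitrarily large in the $d'-2$ unencoded coordinates. So neither the threshold nor the middle chain would be polynomial, and an $F_d$-type fallback would not give polynomial size either.

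The step you flag as the main obstacle is also left open, and your suggested fix misses the key point. Suppose you pump up by $\theta_1^n$, run the $\mathbb{Z}$-run, and unpump by $\theta_2^m$. This leaves the discrepancy $n\Delta(\theta_1)+m\Delta(\theta_2)$, which is generally nonzero because $\Delta(\theta_1)$ and $-\Delta(\theta_2)$ need not be parallel. Firing one positive self-loop, as in \autoref{lem:tgt-pumpable-of-unbounded-d-1}, does not fix this.

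In \autoref{lem:lift-Z-run-by-pumpable} the paper does two things. First, it shortens the $\mathbb{Z}$-run by an integer-programming bound; the bridge sum is fixed, so only self-loop multiplicities vary. Second, it uses wideness to write $\lambda_0(-\Delta(\theta_1)-\Delta(\theta_2))$ as a nonnegative combination of transitions with Pottier-bounded coefficients. The resulting cycle $o$ turns $\theta_1$ into $\theta_1'=\theta_1^{n_1\lambda_0}o^{n_1}$, whose effect is exactly $-n_1\lambda_0\Delta(\theta_2)$. Then $(\theta_1')^{n_2}\varsigma\,\theta_2^{n_1n_2\lambda_0}$ is a genuine run of polynomial length, using also Rackoff's polynomial bound on the lengths of the pumping cycles. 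This use of wideness is the real reason the reduction to wide systems is needed, and it is missing from your argument.
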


If $d\ge 3$, by \autoref{thm:d-vass-in-fd}, we can further deduce that the length of shortest reachability witnesses for $(V,\Vec{s},\Vec{t})$ must be bounded by $F_d(g(\Size(V,\Vec{s},\Vec{t})))$, where $g$ is an elementary function. All such runs can be enumerated and checked in $F_d(g(n))$-space. The following improved complexity upper bound follows immediately.

\begin{restatable}{theorem}{complexityDplusTwo}
    \label{thm:reach-d-plus-2-in-Fd}
    Reachability in $(d+2)$-VAS is in $\mathsf{F}_d$ for every $d\ge 3$.
\end{restatable}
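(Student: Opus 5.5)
The proof is a direct combination of \autoref{thm:len-d-plus-2-vas-le-len-vass-max-dim-d} and \autoref{thm:d-vass-in-fd}, followed by a standard space-bounded enumeration argument.

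Fix $d \ge 3$ and let $(V,\Vec{s},\Vec{t})$ be an instance of $(d+2)$-VAS reachability of size $n := \Size(V,\Vec{s},\Vec{t})$.

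\textbf{Step 1 (bound the shortest run).} By \autoref{thm:len-d-plus-2-vas-le-len-vass-max-dim-d} there is a VASS $(V',s',t')$ with $\dimcom(V') \le d$, $\Size(V',s',t') \le P_d(n)$, and $\min \Len(V,\Vec{s},\Vec{t}) \le \min \Len(V',s',t')$.

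One point needs checking here. We want that if $\Vec{t}$ is reachable from $\Vec{s}$ then $\Len(V',s',t')$ is nonempty. The inequality of minima only carries information when $\Len(V',s',t') \ne \emptyset$, so the reduction must preserve reachability. I expect the construction behind \autoref{thm:len-d-plus-2-vas-le-len-vass-max-dim-d} to give this. If the statement does not give it directly, I would read the inequality with the convention $\min\emptyset = \infty$ and note that the construction maps witnesses to witnesses.

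Given a witness for $(V',s',t')$, \autoref{thm:d-vass-in-fd} applies because the component-dimension is at most $d$ and $d \ge 3$. It yields
\[
\min \Len(V,\Vec{s},\Vec{t}) \le F_d\bigl(g(P_d(n))\bigr),
\]
where $g$ is elementary. Hence $h := g \circ P_d$ is elementary, and in particular $h \in \mathscr{F}_2 \subseteq \mathscr{F}_{d-1}$.

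\textbf{Step 2 (algorithm).} We nondeterministically guess a run of $V$ from $\Vec{s}$, one transition at a time. At each step we store only the current configuration and a step counter, and we reject if the counter exceeds $L := F_d(h(n))$.

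Every configuration along a run of length at most $L$ has norm at most $\norm{\Vec{s}} + L\cdot\norm{V}$. It is therefore stored in space $O\bigl(d\cdot\log(n\cdot L)\bigr)$, which is bounded by $F_d(h'(n))$ for some elementary $h'$.

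Savitch's theorem removes the nondeterminism with quadratic space overhead. Deterministic space $S$ converts to time $2^{O(S)}$, so the running time is at most $2^{O(F_d(h'(n))^2)}$.

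\textbf{Step 3 (absorbing overheads).} The remaining obstacle is purely notational: we must show that
\[
2^{O(F_d(h'(n))^2)} \le F_d(p(n))
\]
for some $p \in \mathscr{F}_{d-1}$.

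For this we use the standard fact that $F_2 \circ F_d \le F_d \circ (x \mapsto x+c)$ for $d \ge 3$ and a suitable constant $c$. This holds because $F_d(x+1) = F_{d-1}^{x+2}(x+1)$, which dominates any elementary function of $F_d(x)$. Taking $p(n) := h'(n)+c$ gives a function that is elementary, and hence lies in $\mathscr{F}_{d-1}$. The problem is therefore in $\textsf{F}_d$ by the definition of that class. Alternatively, one can invoke the known robustness of $\textsf{F}_d$ under elementary overheads \cite{DBLP:journals/toct/Schmitz16}.
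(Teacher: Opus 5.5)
Your proposal is correct and follows the paper's own argument. You combine \autoref{thm:len-d-plus-2-vas-le-len-vass-max-dim-d} (component-dimension at most $d$, polynomial amplification) with \autoref{thm:d-vass-in-fd} to bound the shortest witness by $F_d(g(n))$ with $g$ elementary, then enumerate runs of that length; your Step~3 spells out the overhead absorption the paper leaves implicit, and the reachability-preservation point you flag in Step~1 is settled by the paper's definition of ``length-bounded'', which explicitly requires $s \xrightarrow{*} t$ in $V$ if and only if $s' \xrightarrow{*} t'$ in $V'$.
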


The pattern of statement in \autoref{thm:len-d-plus-2-vas-le-len-vass-max-dim-d} will appear frequently in the following sections. For ease of narration, we shall introduce the following definition.

\begin{definition}
    Let $f : \mathbb{N} \to \mathbb{N}$ be a function. 
    \begin{itemize}
        \item A VASS $(V, s, t)$ is said to be \emph{length-bounded} by a VASS $(V', s', t')$ \emph{with $f$-amplification}, if $\min\Len(V, s, t) \le \min\Len(V', s', t')$ and $\Size(V', s', t') \le f(\Size(V, s, t))$, and $s \xrightarrow{*} t$ in $V$ if and only if $s' \xrightarrow{*} t'$ in V'.
        \item A VASS $(V, s, t)$ is said to be \emph{length-captured} by a family $\mathcal{V}$ of VASSes $(V', s', t')$ \emph{with $f$-amplification}, if $\Len(V, s, t) = \bigcup_{(V', s', t') \in \mathcal{V}} \Len(V', s', t')$ and $\Size(V', s', t') \le f(\Size(V, s, t))$ for all $(V', s', t') \in \mathcal{V}$.
    \end{itemize}
\end{definition}

Hence, \autoref{thm:len-d-plus-2-vas-le-len-vass-max-dim-d} could be reformulated as follows:

\begin{theorem}[\autoref{thm:len-d-plus-2-vas-le-len-vass-max-dim-d}]
    For every $d \in \mathbb{N}$, there is a polynomial $P_d$ such that, every $(d+2)$-VAS $(V, \Vec{s}, \Vec{t})$ is length-boounded by a VASS $(V', s', t')$ with $P_d$-amplification, where $\dimcom(V') \le d$.
\end{theorem}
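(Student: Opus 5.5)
We construct $V'$ as a disjoint union of a bounded number of pieces, each a VASS of component-dimension at most $d$, and route any run from $\Vec{s}$ to $\Vec{t}$ through them according to which coordinates stay bounded. Here $V$ is the $(d+2)$-VAS and we write $D := d+2$. We first reduce to the wide case. If $\dim \Span(V) \le d$, the VAS itself has component-dimension at most $d$, so we may take $(V',s',t') := (V,\Vec{s},\Vec{t})$. If $V$ is not wide, there is a nonzero linear form $\ell$ with $\Inner{\ell,\Vec{u}} \ge 0$ for all transitions $\Vec{u}$ and $\ell$ strictly positive on some transition. Transitions with positive $\ell$-value can then be fired only boundedly often (at most $\Inner{\ell,\Vec{t}-\Vec{s}}$ in total), so we guess them and view the run as a sequential VAS $V_0[\Vec{a}_1,\dots,\Vec{a}_k]$ with $V_0$ the transitions in $\ker \ell$. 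We recurse on $V_0$ until the remaining VAS is wide. A polynomial bound on $k$ is only possible if $\ell$ can be chosen with small entries (by Carathéodory/Farkas-type bounds). This is the step where polynomiality of $P_d$ must be argued carefully, and it yields a sequential VAS whose self-loop set $W$ is wide. If $\dim\Span(W) \le d$ we are done, since every SCC of a sequential VAS is a copy of $W$ and the bridges are not on cycles. Hence the essential case is a wide sequential $D$-VAS $S=W[\Vec{a}_1,\dots,\Vec{a}_k]$ with full-dimensional $W$.

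In the essential case we apply \autoref{thm:reach-pumpable-of-ever-unbounded-d-1} forward from $q_0(\Vec{s})$ and also to the reverse $S^{\text{rev}}$ from $q_k(\Vec{t})$ (the reverse of a wide sequential VAS is again wide and sequential). Let $U$ be the polynomial bound given there; $(D+1)!$ is a constant since $d$ is fixed. Take a shortest run $\pi$ from $q_0(\Vec{s})$ to $q_k(\Vec{t})$. There are three cases.
\begin{enumerate}
\item $|\UBCounters(\pi,U)| \le D-2 = d$ \emph{forward}. Then at least two coordinates stay below $U$ along the whole run. Encode these two coordinates into the finite control: states are pairs (state of $S$, values in $[0,U]^2$), and the remaining $d$ coordinates are counters. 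Every cycle of the resulting VASS projects to a cycle whose effect is zero in the two encoded coordinates, so its cycle effects lie in $\mathbb{Q}^d$ and $\dimcom \le d$. The size is polynomial in $U$ and hence in $\Size(V,\Vec{s},\Vec{t})$. We take a union over the guessed pair of bounded coordinates.
\item The analogous situation holds on the reverse side. This is handled symmetrically by bounding two coordinates on a suffix.
\item Otherwise, $\pi$ reaches at least $D-1$ coordinates above $U$ both forwards and backwards. The theorem then gives a short run $\rho$ from $q_0(\Vec{s})$ to a pumpable $q(\Vec{x})$, and symmetrically a short run from a backward-pumpable $q'(\Vec{y})$ to $q_k(\Vec{t})$, both obtained from $\pi$ by deleting self-loops. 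Between them we use the standard ``both endpoints pumpable'' argument. By the remark after \autoref{thm:reach-pumpable-of-ever-unbounded-d-1}, $q'(\Vec{y})$ is $\mathbb{Z}$-reachable from $q(\Vec{x})$ via the deleted loops. Pumping up by the forward pump, inserting the $\mathbb{Z}$-run (its order rearranged so it respects the sequence of bridges), and pumping down by the backward pump yields a run of polynomial length. Such a polynomial-length witness can be represented in $V'$ as a straight-line path gadget with no cycles, so it contributes nothing to $\dimcom$.
\end{enumerate}

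A single run may switch regimes, since the forward and backward analyses are performed on prefixes and suffixes. We therefore split $\pi$ at a bounded set of guessed intermediate states and coordinate-pairs, gluing gadgets of the three kinds in sequence. Each gadget's SCCs have dimension at most $d$, and concatenation through acyclic connecting transitions preserves this. This gives a VASS $(V',s',t')$ with $\dimcom(V') \le d$. Every run of $V'$ maps to a run of $V$, which gives the reachability equivalence, and the shortest run of $V$ maps into $V'$, which gives $\min\Len(V,\Vec{s},\Vec{t}) \le \min\Len(V',s',t')$ after accounting for the lengths of the straight-line gadgets.

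I expect the main obstacle to be case 3: showing that the pumped run realises the required $\mathbb{Z}$-effect with polynomial length while respecting the order of the bridges. The first thing to try is a Pottier/Carathéodory bound on a small nonnegative combination of loop effects, with the pumps absorbing all negative drops. A secondary difficulty is keeping the non-wide reduction polynomial rather than exponential.
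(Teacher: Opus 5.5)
Your reduction to the wide case and your case 1 are fine, but case 3 has a genuine gap once $S$ has bridges. \autoref{thm:reach-pumpable-of-ever-unbounded-d-1} produces $\rho$ by deleting \emph{self-loops} from the given run. So $\rho$ keeps every bridge and ends in the same state as that run. Applied to the whole run $\pi$, it puts the forward-pumpable configuration in the target state $q_k$. Applied to $\pi^{\text{rev}}$, it puts the backward-pumpable configuration in the source state $q_0$. For $k\ge 1$ there is no $\mathbb{Z}$-run in $S$ from $q_k$ back to $q_0$. Hence your claim that $q'(\Vec{y})$ is $\mathbb{Z}$-reachable from $q(\Vec{x})$ is false, and rearranging the deleted loops cannot repair it.

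The underlying problem is that ``not globally $2$-bounded'' does not give pumpable configurations in the right order. Suppose coordinate $D$ stays below $U$ throughout, and coordinate $1$ exceeds $U$ only strictly before coordinate $2$ first does. Then $\pi$ is not $2$-bounded, yet at every split point either the prefix misses coordinate $2$ or the suffix misses coordinate $1$. Your case 2 adds nothing for the whole run, since $\UBCounters(\pi,U)$ does not depend on the direction of traversal. The closing plan of gluing gadgets of three kinds is not an argument either. The threshold in the extraction theorem depends on $\norm{\Vec{x}_0}$, which is not polynomially bounded at intermediate configurations of a long run, and you never show that boundedly many pieces suffice.

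Your phrase ``full-dimensional $W$'' hides exactly this case. After a non-wide step $\dim\Span(W)\le d+1$, so $\dim\Span(W)=D$ forces $k=0$. There your whole-run argument can indeed be made to work, as in \autoref{lem:len-wide-vas}, because $\Vec{s}$ and $\Vec{t}$ are then themselves pumpable. The case $\dim\Span(W)=d+1$ with bridges is left uncovered.

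The missing idea is the single-split dichotomy of \autoref{lem:almost-bounded-or-unbounded}. Take the longest prefix $\pi_1$ in which at least two coordinates stay below $B$, and write $\pi=\pi_1 t\pi_2$. There are two outcomes.
\begin{itemize}
\item If $\pi_2$ has at least $D-1$ coordinates reaching $B$, then $\pi$ splits at one configuration $q(\Vec{w})$ into two halves, each reaching $B$ in $D-1$ coordinates. Extract the forward-pumpable configuration from the prefix and the backward-pumpable one from the suffix. Both lie in state $q$, are joined by a $\mathbb{Z}$-run of self-loops of $V$, and have polynomially bounded norms. So the wideness-based \autoref{lem:lift-Z-run-by-pumpable} gives a polynomial run (\autoref{lem:poly-bound-for-almost-unbounded}).
\item Otherwise both $\pi_1 t$ and $\pi_2$ are $2$-bounded by $B+\Size(S)$. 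Encode a possibly different pair of bounded coordinates in each half, and join the two encoded copies by one zero-effect transition. This keeps $\dimcom\le d$ (\autoref{lem:dimension-reduction-for-almost-bounded-runs}).
\end{itemize}

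Two smaller points. First, your recursive non-wide reduction is a valid alternative to the paper's one-shot choice of $\Vec{n}$ in \autoref{lem:exists-normal-vector-with-wide-ortho}, but only once you prove the small-norm bound you postpone. Second, you swapped the two directions at the end. Every run of $V'$ yielding a run of $V$ of at most the same length is what gives $\min\Len(V,\Vec{s},\Vec{t})\le\min\Len(V',s',t')$. Mapping runs of $V$ into $V'$ gives the other direction of the reachability equivalence.
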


Notice that we may always assume a run $s \xrightarrow{\pi} t$ exists in the VASS $(V, s, t)$, as otherwise we can take $(V', s', t')$ to be an unreachable instance with constant size. In the following, we tacitly adopt this assumption.

\subsection{Non-Wide VAS to Wide Sequential VAS}

We will actually prove \autoref{thm:len-d-plus-2-vas-le-len-vass-max-dim-d} for wide sequential VASes. Here, we show how to convert a non-wide VAS to a family of wide sequential VASes.

\begin{lemma}
    \label{lem:non-wide-to-family-of-wide-seq}
    For every $d \in \mathbb{N}$, there is a polynomial $W_d$ such that, every $d$-VAS $(V, \Vec{s}, \Vec{t})$ is length-captured by a family $\mathcal{S}$ of wide sequential $d$-VASes $(S, s', t')$ with $W_d$-amplification. Moreover, if $V$ is non-wide, then $\dimcyc(S) \le \dimcyc(V) - 1$.
\end{lemma}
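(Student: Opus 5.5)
The plan is to split the transitions of a non-wide VAS into the \emph{lineality part}, which may be fired unboundedly often, and the remaining transitions, which can only be fired a polynomially bounded number of times in any run; the latter then become bridges. Let $A\subseteq\mathbb{Z}^d$ be the transition set of $V$. If $V$ is wide, take $\mathcal{S}=\{V[\,]\}$, the sequential VAS with no bridges, and we are done. Otherwise let $L:=\Cone(A)\cap(-\Cone(A))$ be the lineality space of the cone. Put $A_0:=A\cap L$ and $A_1:=A\setminus A_0$. Let $V_0$ be the VAS with transition set $A_0$.

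\textbf{Step 1: $V_0$ is wide, and its cycle space is strictly smaller.} Every $\Vec{a}\in A_0$ satisfies $-\Vec{a}\in\Cone(A)$. I will show that $-\Vec{a}\in\Cone(A_0)$ by a standard argument: write $-\Vec{a}=\sum\lambda_{\Vec{b}}\Vec{b}$ with $\lambda_{\Vec{b}}\ge 0$. Then each $\Vec{b}$ with $\lambda_{\Vec{b}}>0$ lies in $L$, because $\Vec{b}$ belongs to a nonnegative combination summing to an element of $L$. So $\Cone(A_0)$ is a linear space, and $V_0$ is wide. Since $V$ is non-wide, $A_1\neq\emptyset$. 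Moreover, $\Span(A_0)\subseteq L\subsetneq\Span(A)$, because $L=\Span(A)$ would force wideness. Hence $\dimcyc(V_0)\le\dimcyc(V)-1$.

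\textbf{Step 2: bounding the uses of $A_1$.} The key point is that there is a linear functional $\Vec{h}\in\mathbb{Z}^d$ such that $\Inner{\Vec{h},\Vec{a}}=0$ for $\Vec{a}\in A_0$ and $\Inner{\Vec{h},\Vec{a}}\ge 1$ for $\Vec{a}\in A_1$. This is a separation/Farkas argument on the pointed cone $\Cone(A)/L$. Standard bounds on integer solutions of linear systems give $\norm{\Vec{h}}\le(d\norm{A}+1)^{O(d)}$. Now take any run $\Vec{s}\xrightarrow{\pi}\Vec{t}$. We have $\Inner{\Vec{h},\Vec{t}-\Vec{s}}=\sum_{\Vec{a}\in A_1}\#_{\Vec{a}}(\pi)\Inner{\Vec{h},\Vec{a}}$, so the number of $A_1$-transitions in $\pi$ is at most $k_0:=\Inner{\Vec{h},\Vec{t}-\Vec{s}}\le d\norm{\Vec{h}}(\norm{\Vec{s}}+\norm{\Vec{t}})$. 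This is the main obstacle. The quantity $k_0$ must be polynomial in $\Size(V,\Vec{s},\Vec{t})$ for fixed $d$, and the naive bound on $\Vec{h}$ is $\norm{A}^{O(d)}$, which is polynomial in $\Size$ for fixed $d$ under unary size. That bound is acceptable here, since $W_d$ may depend on $d$. I would first try to obtain $\Vec{h}$ from a vertex of the polyhedron $\{\Vec{h}:\Inner{\Vec{h},\Vec{a}}=0\ (\Vec{a}\in A_0),\ \Inner{\Vec{h},\Vec{a}}\ge1\ (\Vec{a}\in A_1)\}$ via Cramer's rule, and then scale by the denominator.

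\textbf{Step 3: the family.} Set $\mathcal{S}:=\{(V_0[\Vec{a}_1,\ldots,\Vec{a}_k],q_0(\Vec{s}),q_k(\Vec{t})) \mid k\le k_0,\ \Vec{a}_j\in A_1\}$. Each run of $V$ from $\Vec{s}$ to $\Vec{t}$ uses some sequence $\Vec{a}_1\ldots\Vec{a}_k$ of $A_1$-transitions with $k\le k_0$, interleaved with $A_0$-segments. It is therefore exactly a run of the corresponding sequential VAS, with the same length. Conversely, every run of a member of $\mathcal{S}$ is a run of $V$. So $\Len(V,\Vec{s},\Vec{t})$ is the union of the lengths over $\mathcal{S}$. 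Each member has size $O((k_0+1)\Size(V))$ plus the configuration norms, which is polynomial; this gives $W_d$. Each member is wide, because its cycles are exactly the $A_0$ self-loops. It has the same cycle space as $V_0$, so its cycle-dimension is at most $\dimcyc(V)-1$.
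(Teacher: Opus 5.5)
Your proposal is correct and follows essentially the paper's route. The paper gets its normal vector from a Farkas--Minkowski--Weyl facet description of $\Cone(T)$, and its set $T_0=\{\Vec{t}\in T\mid\Inner{\Vec{n},\Vec{t}}=0\}$ is exactly your $A\cap L$. It then bounds an integral solution of the very system you write ($\Inner{\Vec{n},\Vec{a}}=0$ on $T_0$ and $\Inner{\Vec{n},\Vec{a}}\ge 1$ on $T_1$) and builds the same family $V_0[\Vec{a}_1,\ldots,\Vec{a}_k]$. One small caution: that polyhedron need not have a vertex, since it contains lines in every direction orthogonal to $\Span(A)$. For the norm bound, either work on a minimal face or cite the von zur Gathen--Sieveking bound directly, as the paper does.
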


The proof idea is similar to \cite[Sect.\ 4, Case 2]{DBLP:conf/icalp/CzerwinskiJ0O25}. If $V$ is non-wide, then there exists a vector $\Vec{n}$ such that every transition belongs to the half-space determined by $\Vec{n}$. In particular, the inner product of $\Vec{n}$ with the configuration cannot be decreased by any transition. Hence, every transition whose effect has a non-zero inner product with $\Vec{n}$ can only be fired for a bounded number of times determined by $\Inner{\Vec{t} - \Vec{s}, \Vec{n}}$. We then enumerate every combination of these transitions and expand $V$ as a sequential VAS correspondingly. What is crucial here is that we need to find a good $\Vec{n}$ such that the VAS with transitions orthogonal to $\Vec{n}$ is wide. For this purpose, we introduce the following lemma.

\begin{restatable}{lemma}{existsNormalWideOrtho}
    \label{lem:exists-normal-vector-with-wide-ortho}
    Let $X \subseteq \mathbb{Z}^d$ be a finite set of vectors {such that $\Cone(X) \ne \Span(X)$}. There exists a {non-zero} vector $\Vec{n} \in \mathbb{Z}^d$ such that 
    \begin{enumerate}
        \item \label{item:normal-pos} $\Inner{\Vec{n}, \Vec{x}} \ge 0$ for all $\Vec{x} \in X$;
        \item \label{item:normal-wide} let $X_0 := \{ \Vec{x} \in X \mid \Inner{\Vec{n}, \Vec{x}} = 0 \}$; then $\Cone(X_0) = \Span(X_0)$;
        \item \label{item:normal-bound} $\norm{\Vec{n}} \le (d + 1)\cdot (r\norm{X})^r$, where $r := \dim(\Span(X))$.
    \end{enumerate}
\end{restatable}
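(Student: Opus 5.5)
We take $\Vec{n}$ to be an integer vector in the relative interior of the dual cone $C^* := \{\Vec{y} \in \Span(X) \mid \Inner{\Vec{y},\Vec{x}} \ge 0 \text{ for all } \Vec{x} \in X\}$. Working inside $L := \Span(X)$ is natural: the condition $\Cone(X) \ne \Span(X)$ says precisely that $C^*$ is not $\{\Vec{0}\}$, because a cone equals its span iff its dual within that span is trivial (Farkas' lemma). Item~\ref{item:normal-pos} then holds by definition of $C^*$.

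Item~\ref{item:normal-wide} rests on the standard face correspondence. For $\Vec{n}$ in the relative interior of $C^*$, the set $F := \{\Vec{c} \in \Cone(X) \mid \Inner{\Vec{n},\Vec{c}} = 0\}$ is the minimal face of $\Cone(X)$, which is its lineality space $\Cone(X) \cap -\Cone(X)$. Since $F$ is a face, it equals $\Cone(X_0)$. A lineality space is a linear subspace, so $\Cone(X_0)$ is a subspace and hence $\Cone(X_0) = \Span(X_0)$.

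We can verify this directly. Let $\Vec{x} \in X_0$. We claim $-\Vec{x} \in \Cone(X)$. If not, Farkas' lemma in $L$ gives $\Vec{y} \in C^*$ with $\Inner{\Vec{y}, -\Vec{x}} < 0$, i.e.\ $\Inner{\Vec{y},\Vec{x}} > 0$. Because $\Vec{n}$ is in the relative interior of $C^*$ and $C^*$ is a cone, $\Vec{n} - \epsilon \Vec{y} \in C^*$ for small $\epsilon > 0$. But then $\Inner{\Vec{n} - \epsilon\Vec{y}, \Vec{x}} = -\epsilon\Inner{\Vec{y},\Vec{x}} < 0$, a contradiction.

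It remains to see that $-\Vec{x} \in \Cone(X_0)$. Write $-\Vec{x} = \sum_{\Vec{z} \in X} \lambda_{\Vec{z}} \Vec{z}$ with $\lambda_{\Vec{z}} \ge 0$. Pairing with $\Vec{n}$ gives $0 = \sum \lambda_{\Vec{z}} \Inner{\Vec{n},\Vec{z}}$, and every term is nonnegative. Hence $\lambda_{\Vec{z}} = 0$ whenever $\Inner{\Vec{n},\Vec{z}} > 0$, so $-\Vec{x} \in \Cone(X_0)$. Thus $\Cone(X_0)$ is closed under negation, which gives item~\ref{item:normal-wide}.

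For item~\ref{item:normal-bound} we need an explicit relative-interior point with small integer entries. Fix a basis of $L$ chosen from $X$, say $\Vec{b}_1,\dots,\Vec{b}_r$. By Cramer's rule applied to the systems defining extreme rays of $C^*$, we obtain generators of $C^*$ as integer vectors $\Vec{g}_1,\dots,\Vec{g}_m \in L$ with $\norm{\Vec{g}_j} \le (r\norm{X})^r$.

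To construct them concretely, note that each extreme ray of $C^*$ is cut out within $L$ by $r-1$ independent equations $\Inner{\Vec{y},\Vec{x}} = 0$ with $\Vec{x} \in X$. We express $\Vec{y} = M\Vec{w}$, where $M$ is a $d \times r$ basis matrix of $L$ built from coordinates of $X$-vectors. The solution $\Vec{w}$ is given by $(r-1)$-minors, which are bounded by Hadamard's inequality. Alternatively, and perhaps cleaner, we realize $\Vec{g}_j$ as a $d$-vector given by a generalized cross product in the coordinates selected by an $r$-subset. The subtle point is keeping $\Vec{g}_j$ inside $L$ rather than in $\mathbb{Q}^d$.

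The face correspondence then says that $X_0$ is exactly the set of $\Vec{x} \in X$ orthogonal to all of $C^*$. For each $\Vec{x} \in X \setminus X_0$ we pick one generator $\Vec{g}_{j(\Vec{x})}$ with $\Inner{\Vec{g}_{j(\Vec{x})},\Vec{x}} > 0$. We cannot simply sum these generators, since there may be up to $|X|$ distinct ones.

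Instead we want a single point that is positive on all $\Vec{x} \in X\setminus X_0$ using few generators. By Carathéodory's theorem, any relative-interior point of $C^*$ lies in the relative interior of the cone spanned by at most $\dim C^* + 1 \le r+1 \le d+1$ generators. Hence the sum of those $d+1$ integer generators is a relative-interior point, and $\norm{\Vec{n}} \le (d+1)(r\norm{X})^r$. The factor $d+1$ in the stated bound suggests exactly this.

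The main obstacle is this bounding step: getting integrality together with membership in $L$ and the $(r\norm{X})^r$ bound from Cramer/Hadamard. My first attempt will be to restrict to $r$ linearly independent coordinates of $L$, so that $L$ is identified with $\mathbb{Q}^r$ via projection, and to compute the generators there via $(r-1)\times(r-1)$ minors of entries bounded by $\norm{X}$. Each such minor is at most $(r-1)!\norm{X}^{r-1} \le (r\norm{X})^r$.
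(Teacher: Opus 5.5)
\paragraph{Items 1 and 2.} Your treatment of these is correct and is essentially the paper's argument. The paper's $\Vec{n}_0=\sum_i\Vec{a}_i$, built from the Farkas--Minkowski--Weyl description $\Cone(X)=\{\Vec{x}\mid A\Vec{x}\ge\Vec{0}\}$, is exactly a relative-interior point of the dual cone, taken in $\mathbb{Q}^d$ rather than in $L$. Its ``a sum of non-negative terms vanishes'' step is the same as yours.

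\paragraph{The gap: item 3.} This is the only quantitative content, and you leave it open as ``the main obstacle''. The claim it rests on is that $C^*$ has integer generators $\Vec{g}_j$ lying \emph{in} $L$ with $\norm{\Vec{g}_j}\le(r\norm{X})^r$. That claim is false already for $r=2$. Take $N\ge 1$ and $X=\{\Vec{a},\Vec{e}_2\}$ with $\Vec{a}=(1,N,\ldots,N)\in\mathbb{Z}^d$. The extreme ray of $C^*$ orthogonal to $\Vec{a}$ is spanned by $\Inner{\Vec{a},\Vec{a}}\Vec{e}_2-N\Vec{a}=(-N,\,1+(d-2)N^2,\,-N^2,\ldots,-N^2)$. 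This vector is primitive because $\gcd(N,1+(d-2)N^2)=1$. Its norm $1+(d-2)N^2$ exceeds $(2N)^2$ as soon as $d\ge 6$. The cause is structural: inside $L$, orthogonality to $r-1$ vectors of $X$ is expressed through Gram-type determinants, i.e.\ products of inner products of $X$-vectors. These cost extra factors of $\norm{X}$ and of $d$ that Cramer's rule on entries of size $\norm{X}$ does not account for. So, as written, item 3 is not established.

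\paragraph{The missing idea and a repair.} The vector $\Vec{n}$ need not lie in $L$. Items 1 and 2 depend only on the functional $\Inner{\Vec{n},\cdot}$ restricted to $L$, in fact only on the sign pattern of $(\Inner{\Vec{n},\Vec{x}})_{\Vec{x}\in X}$. Your last paragraph can be repaired along these lines.
\begin{enumerate}
\item Choose $I\subseteq[d]$ with $|I|=r$ such that $\{\Vec{x}|_I\mid\Vec{x}\in X\}$ has rank $r$. Vectors supported on $I$ represent every linear functional on $L$ exactly once, and for them $\Inner{\Vec{n},\Vec{x}}=\Inner{\Vec{n}|_I,\Vec{x}|_I}$. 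Hence your relative-interior argument transfers.
\item Compute the extreme rays of the pointed dual of $\Cone(\{\Vec{x}|_I\mid\Vec{x}\in X\})$ in $\mathbb{Q}^I$ via signed $(r-1)$-minors; their entries are at most $(r-1)!\norm{X}^{r-1}$.
\item Sum at most $r$ of them and \emph{extend by zeros}. Do not transport them back into $L$ through the projection, which changes inner products.
\end{enumerate}
This yields $\norm{\Vec{n}}\le r!\,\norm{X}^{r-1}\le(r\norm{X})^r$.

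Your Carath\'eodory step also needs one remark. The unit sum of the chosen generators is a strictly positive combination of the same generators as the relative-interior point. Hence it lies in the same minimal face, which is $C^*$ itself.

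\paragraph{How the paper does it.} The paper sidesteps extreme rays entirely. It fixes the sign pattern of $\Vec{n}_0$ and observes that \emph{every} integer solution of $R_0\Vec{n}=\Vec{0}$, $R_1\Vec{n}\ge\Vec{1}$ satisfies items 1 and 2. Such a solution is non-zero since $X_0\subsetneq X$. It then applies the von zur Gathen--Sieveking small-solution bound. The factor $d+1$ in the statement comes from that theorem (number of variables plus one), not from Carath\'eodory.
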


With the help of \autoref{lem:exists-normal-vector-with-wide-ortho}, we are able to identify the set of bounded transitions. A VAS can then be turned into a wide sequential VAS by making these transitions as bridges.

\begin{proof}[Proof of \autoref{lem:non-wide-to-family-of-wide-seq}]
    Let $V$ be a $d$-VAS with transitions $T \subseteq \mathbb{Z}^d$. 
    {If $V$ is wide, then we are already done. Otherwise,}
    applying \autoref{lem:exists-normal-vector-with-wide-ortho} on $T$, we get a {non-zero} vector $\Vec{n} \in \mathbb{Z}^d$ with $\norm{\Vec{n}} \le D := (d + 1) \cdot (d\norm{T})^{d}$. Moreover, we may partition $T$ into $T_1\ne \emptyset$ and $T_0$ where 
    \begin{align}
        T_1 := \{ \Vec{t} \in T \mid \Inner{\Vec{n}, \Vec{t}} > 0 \}, \quad
        T_0 := \{ \Vec{t} \in T \mid \Inner{\Vec{n}, \Vec{t}} = 0 \}.
    \end{align}
    Any transition in $T_1$ increases the inner product with $\Vec{n}$ while no transition can decrease this inner product. Fix configurations $\Vec{s}, \Vec{t} \in \mathbb{N}^d$. Consider a run $\Vec{s} \xrightarrow{\pi} \Vec{t}$. The number of occurences of transitions from $T_1$ on $\pi$ is bounded by 
    \begin{align}
        K := \Inner{\Vec{n}, \Vec{t}} - \Inner{\Vec{n}, \Vec{s}} \le 2 D \cdot \Size(V, \Vec{s}, \Vec{t}).
    \end{align}
    We take $\mathcal{S}$ to be the family of sequential $d$-VAS $S := V_0[\Vec{a}_1, \ldots, \Vec{a}_k]$ satisfying $k \le K$ and $\Vec{a}_j \in T_1$ for all $j \in [k]$, where $V_0$ is the VAS consists of transitions in $T_0$. Clearly $\mathcal{S}$ captures every run from $\Vec{s}$ to $\Vec{t}$. Every run in some $S \in \mathcal{S}$ also induces a run in $V$ as $S$ contains only transitions from $V$. Moreover, \autoref{lem:exists-normal-vector-with-wide-ortho} guarantees that $V_0$ is wide, and so is every $S \in \mathcal{S}$.
    Notice that $\CycleSpace(S) = \Span(T_0) \subsetneq \Span(T) = \CycleSpace(V)$, hence $\dimcyc(S) \le \dimcyc(V) - 1$.
    Considering the size amplification, we have $\Size(S) \le (K + 1) \cdot \Size(V) \le 3(d+1) \cdot \Size(V, \Vec{s}, \Vec{t})^{d + 1}$. We are done by taking $W_d(x) := x+3(d+1)\cdot x^{d+1}$.
\end{proof}

We remark that after the decomposition presented in the proof of \autoref{lem:non-wide-to-family-of-wide-seq}, we essentially obtain a family of wide sequential $d$-VASes, possibly with several fixed coordinates. However, by applying \autoref{lem:exists-sequential-vas-without-fixed-coordinate}, we shall assume that these new VASSes contain no fixed coordinates, and the dimension might be less than $d$.

\subsection{Boundedness of Runs}
\label{subsec:boundedness}

Now we focus on runs in wide sequential VAS. We shall identify two cases depending on whether \autoref{thm:reach-pumpable-of-ever-unbounded-d-1} can be applied to extract forward and backward pumpable configurations.

Let $\pi$ be a run in some $d$-VASS $V$ and $B \in \mathbb{N}$. Recall that $\UBCounters(\pi, B)$ is the set of indices $i \in [d]$ such that $\pi$ contains a configuration $q_i(\Vec{x}_i)$ with $\Vec{x}_i(i) \ge B$. We say $\pi$ is \emph{$k$-bounded} by $B$ if $|\UBCounters(\pi_1, B)|\le d-k$. In the literature of VASS, $1$-boundedness is often used to perform dimension reduction. For wide sequential VAS, \autoref{thm:reach-pumpable-of-ever-unbounded-d-1} allows us to shift the focus to $2$-boundedness. For technical reasons, we shall make a slight generalization here. We say that $\pi$ is \emph{almost ($2$-)bounded by $B$} if there exists a partition $\pi=\pi_1\pi_2$ such that both $\pi_1$ and $\pi_2$ are $2$-bounded by $B$. On the other hand, $\pi$ is \emph{almost unbounded by $B$} if there exists a partition $\pi=\pi_1\pi_2$ such that $|\UBCounters(\pi_1, B)|, |\UBCounters(\pi_2, B)| \ge d-1$. 
Intuitively, almost bounded runs allow us to perform dimension reduction, while almost unbounded runs permit us to apply \autoref{thm:reach-pumpable-of-ever-unbounded-d-1} to extract forward and backward pumpable configurations.
Although these two conditions do not appear to be complementary, a dichotomy between them still persists.

\begin{restatable}{lemma}{almostBoundedOrUnbounded}
    \label{lem:almost-bounded-or-unbounded}
    Suppose that $(V,s,t)$ is a $d$-VASS with $s\xrightarrow{\pi}t$ in $V$ and $B \in \mathbb{N}$. Then either $\pi$ is almost unbounded by $B$, or $\pi$ is almost bounded by $B+\Size(V)$.
\end{restatable}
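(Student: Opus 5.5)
Write $\pi = c_0c_1\cdots c_n$ as its sequence of configurations, with $c_0=s$ and $c_n=t$. The plan is a single-split-point argument. For $j\in[0,n]$ let $\pi_{\le j}$ be the prefix ending at $c_j$ and $\pi_{\ge j}$ the suffix starting at $c_j$. Define
\[
f(j):=|\UBCounters(\pi_{\le j},B)|, \qquad g(j):=|\UBCounters(\pi_{\ge j},B)|.
\]
Then $f$ is non-decreasing and $g$ is non-increasing in $j$.

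\emph{First case.} Suppose some $j$ has $f(j)\ge d-1$ and $g(j)\ge d-1$. Splitting $\pi$ at $c_j$ gives a witness that $\pi$ is almost unbounded by $B$. Splitting at a configuration means both parts contain $c_j$; this matches the convention that a run is a sequence of configurations. If the definition requires disjoint transition sequences, $c_j$ is still the last configuration of $\pi_1$ and the first of $\pi_2$, so nothing changes.

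\emph{Second case.} Otherwise, for every $j$ we have $f(j)\le d-2$ or $g(j)\le d-2$. Let $j^*$ be the largest index with $f(j^*)\le d-2$; it exists since $f(0)\le d$, and if $f(0)\ge d-1$ we argue separately below.

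Since $f(j^*+1)\ge d-1$, the hypothesis forces $g(j^*+1)\le d-2$. Split $\pi$ at the step $c_{j^*}\to c_{j^*+1}$, taking $\pi_1=\pi_{\le j^*}$ and $\pi_2$ the part from $c_{j^*}$ onward. Then $\pi_1$ is $2$-bounded by $B$, hence also by $B+\Size(V)$.

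For $\pi_2$, compared with $\pi_{\ge j^*+1}$ it has only the one extra configuration $c_{j^*}$. That configuration differs from $c_{j^*+1}$ by at most $\norm{T}\le\Size(V)$ in each coordinate. So any coordinate that reaches $B+\Size(V)$ at $c_{j^*}$ already reaches $B$ at $c_{j^*+1}$. Therefore
\[
\UBCounters(\pi_2,B+\Size(V))\subseteq \UBCounters(\pi_{\ge j^*+1},B),
\]
which has size at most $d-2$. This is exactly why the statement carries the $+\Size(V)$ slack.

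\emph{Boundary cases.}
\begin{itemize}
\item If $f(j)\le d-2$ for all $j$, the whole of $\pi$ is $2$-bounded, so take $\pi_2$ to be the trivial run at $t$. Its single configuration $t$ may be large, so instead take $\pi_1=\pi_{\le n-1}$ and $\pi_2$ the last step. Alternatively use $g(n)$: if $g(n)\ge d-1$ then, since $f(n)\le d-2$, we simply have $\pi_1=\pi$ and $\pi_2$ consisting of $t$ alone. We still need $\pi_2$ to be $2$-bounded, and $|\UBCounters(t,B)|$ might be large.
\item Symmetrically, if $f(0)\ge d-1$, then $g(0)\le d-2$, and the first configuration $s$ plays the corresponding role.
\end{itemize}

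The main obstacle is exactly this boundary bookkeeping: making sure the trivial pieces at $s$ or $t$ are handled. I would handle it by noting that if $f(n)\le d-2$, then $t$ itself has at most $d-2$ large coordinates, because $t$ lies on $\pi$. The single-configuration run at $t$ is therefore $2$-bounded, and the split $\pi=\pi\cdot(t)$ works. The case at $s$ is symmetric. With these checks the dichotomy follows.
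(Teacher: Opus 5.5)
Your argument is correct and is essentially the paper's. Like the paper, you take the longest prefix that is $2$-bounded by $B$, use the failure of almost-unboundedness at the next split point to bound the remaining suffix by $B$, and absorb the single critical transition with the $\Size(V)$ slack. The only difference is that you cut just before that transition and charge the slack to the suffix, whereas the paper cuts just after it and charges the slack to the prefix.

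Your boundary bookkeeping is also simpler than the bullet list suggests. Since $s$ lies on $\pi$, one always has $f(0)\le g(0)$, so $f(0)\ge d-1$ already falls into your first case. Since $t$ lies on $\pi$, the trivial run at $t$ is $2$-bounded whenever $f(n)\le d-2$, so the split $\pi\cdot(t)$ works.
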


In the following, we handle almost unbounded runs and almost bounded runs in wide sequential VAS separately.

\subsubsection{Almost Unbounded Runs}

In this part, we address almost unbounded runs in wide sequential VASes. We show that in this case, reachability is relatively easy to verify in the sense that a polynomial upper bound can be established on the length of shortest witnesses for such runs.

\begin{lemma}
    \label{lem:poly-bound-for-almost-unbounded}
    For every $d \in \mathbb{N}$, there exists a polynomial $U_d$ such that, given a wide sequential $d$-VAS $(S,s,t)$ with $s \xrightarrow{\pi}t$ in $S$, if $\pi$ is almost unbounded by $U_d(\Size(S, s, t))$, then there exists a run $s \xrightarrow{\rho} t$ in $S$ of length $|\rho|\le U_d(\Size(S, s, t))$.
\end{lemma}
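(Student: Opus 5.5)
The plan is to cut $\pi$ at the split point given by almost unboundedness, extract a forward pumpable configuration from the first half and a backward pumpable one from the second half, and join them by a short $\mathbb{Z}$-run made fireable through pumping.

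\textbf{Step 1: a forward pumpable configuration near $s$.} Let $\pi=\pi_1\pi_2$ be the partition witnessing almost unboundedness, so $|\UBCounters(\pi_1,U)|,|\UBCounters(\pi_2,U)|\ge d-1$, and let $s\xrightarrow{\pi_1}m\xrightarrow{\pi_2}t$. I choose $U_d$ to dominate the bound $U=P(\norm{s},\Size(S))^{(d+1)!}$ of \autoref{thm:reach-pumpable-of-ever-unbounded-d-1}. Since $d$ is fixed, this bound is still a polynomial. Applying the theorem to $\pi_1$ gives a run $s\xrightarrow{\rho_1}q(\Vec{x})$ with $|\rho_1|\le U$ and $q(\Vec{x})$ pumpable. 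Moreover, $\rho_1$ is $\pi_1$ with some self-loops $L_1$ removed, so $\Vec{x}+\Delta(L_1)$ equals the vector of $m$.

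\textbf{Step 2: a backward pumpable configuration near $t$.} The reverse of $S=V[\Vec{a}_1,\dots,\Vec{a}_k]$ is again a wide sequential VAS, namely $V^{\text{rev}}[-\Vec{a}_k,\dots,-\Vec{a}_1]$, and its size is unchanged. Applying the theorem there to the reverse of $\pi_2$ started at $t$ gives a configuration $q'(\Vec{y})$ that is backward pumpable in $S$, together with a run $q'(\Vec{y})\xrightarrow{\rho_2}t$ with $|\rho_2|\le U$. Again $\rho_2$ is a subpath of $\pi_2$ obtained by deleting self-loops $L_2$. Comparing effects gives $\Vec{y}=m+\Delta(L_2')$ for some multiset $L_2'$ of transitions of $V$, plus the bridges of $\pi_2$ occurring before the position of $q'(\Vec{y})$. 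Consequently state $q$ precedes $q'$, and there is a $\mathbb{Z}$-path from $q(\Vec{x})$ to $q'(\Vec{y})$. It consists of a multiset of $V$-transitions together with the intermediate bridges, which are fixed by the states. Both $\norm{\Vec{x}}$ and $\norm{\Vec{y}}$ are polynomial, being reached from $s$ and $t$ by polynomially long runs.

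\textbf{Step 3: a short $\mathbb{Z}$-path.} The existence of this $\mathbb{Z}$-path says that $\Vec{y}-\Vec{x}-\sum(\text{bridges})$ is a nonnegative integer combination of the effects of $V$. The bridges can be fired in any order between self-loops, since $\mathbb{Z}$-runs ignore nonnegativity. A Pottier/Papadimitriou-type small-solution bound for integer linear systems then gives a nonnegative integer solution of polynomial size, because $d$ is fixed and all norms are polynomial. This yields a $\mathbb{Z}$-path $\sigma$ from $q(\Vec{x})$ to $q'(\Vec{y})$ of polynomial length.

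\textbf{Step 4: pumping.} Next I need pumping cycles of polynomial length. The proof of \autoref{lem:tgt-pumpable-of-unbounded-d-1} builds the pumping cycle explicitly as $\Vec{t}^H\rho'$, where $\rho'$ is the bridge-free version of $\rho_1$. Its length is therefore polynomial, and likewise on the backward side. Let $\alpha$ be the pumping cycle at $q$ with $\Delta(\alpha)\ge\Vec{1}$, and $\beta$ the backward pumping cycle at $q'$. Then
\[
  \rho_1\,\alpha^N\,\sigma\,\beta^N\,\rho_2
\]
is fireable for $N\ge\norm{T}\cdot|\sigma|+\norm{\Drop(\beta)}$, which is polynomial. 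Here $\alpha^N$ raises every counter by at least $N$, $\sigma$ then stays nonnegative, and symmetrically for $\beta^N$. The total length is polynomial, which defines $U_d$.

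The main obstacle is Step 3: making sure the $\mathbb{Z}$-path between $q(\Vec{x})$ and $q'(\Vec{y})$ can be taken polynomially short, and bookkeeping precisely which bridges lie between the two states. I would first try the direct integer-programming small-solution bound as above. If that fails, the fallback is wideness: it allows any element of $\Span$ to be written with nonnegative rational cycle coefficients, which can then be rounded using extra copies of a positive combination.
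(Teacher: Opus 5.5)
Your Steps 1--3 follow the paper's route. Incidentally, \autoref{thm:reach-pumpable-of-ever-unbounded-d-1} only deletes self-loops, so both extracted configurations sit at the state of the split point; hence $q=q'$ and no bridges lie between them.

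The gap is in Step 4. The path $\rho_1\,\alpha^N\,\sigma\,\beta^N\,\rho_2$ has effect $\Delta(\rho_1)+\Delta(\sigma)+\Delta(\rho_2)+N\bigl(\Delta(\alpha)+\Delta(\beta)\bigr)$, so it does not end in $t$. After $\alpha^N\sigma$ you are at $q'(\Vec{y}+N\Delta(\alpha))$, but $\beta^N$ ends in $q'(\Vec{y})$ only when started from $q'(\Vec{y}-N\Delta(\beta))$. The cycles $\alpha$ and $\beta$ come from two unrelated applications of \autoref{lem:tgt-pumpable-of-unbounded-d-1}. Nothing forces $\Delta(\alpha)=-\Delta(\beta)$, or even that the two are parallel, so no choice of exponents repairs this. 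This mismatch, not Step 3, is the real obstacle. Step 3 needs no wideness at all: the given $\mathbb{Z}$-path is already a nonnegative integer solution, so a small-solution bound suffices. The mismatch is exactly where the paper uses wideness a second time, inside \autoref{lem:lift-Z-run-by-pumpable}.

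The missing idea is a compensating cycle.
Since $\alpha$ and $\beta$ consist of transitions of $V$, wideness gives $-(\Delta(\alpha)+\Delta(\beta))\in\Span(T)=\Cone(T)$. By Carath\'eodory and Pottier there are a polynomially bounded $\lambda_0\ge 1$ and a cycle $o$ of self-loops with $\Delta(o)=-\lambda_0\bigl(\Delta(\alpha)+\Delta(\beta)\bigr)$.
Take $n_1\ge\max\{1,\norm{\Drop(o)}\}$ and set $\alpha':=\alpha^{n_1\lambda_0}o^{n_1}$. This cycle is fireable from $q(\Vec{x})$:
\begin{itemize}
\item the configurations before and after $o^{n_1}$ are $\Vec{x}+n_1\lambda_0\Delta(\alpha)$ and $\Vec{x}-n_1\lambda_0\Delta(\beta)$, and both are $\ge n_1\cdot\Vec{1}$;
\item the configurations between iterations of $o$ interpolate linearly between these two, so each iteration is fireable.
\end{itemize}
Moreover $\Delta(\alpha')=-n_1\lambda_0\Delta(\beta)\ge\Vec{1}$. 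Hence $\rho_1\,(\alpha')^{N}\,\sigma\,\beta^{n_1\lambda_0N}\,\rho_2$ has the right effect and is fireable for a polynomial $N$.

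Pushing the correction $-N(\Delta(\alpha)+\Delta(\beta))$ into $\sigma$ instead would not work directly. The drop of the corrected $\sigma$ grows with $N$, and $\alpha^N$ need not cover it. Folding the correction into a new pumping cycle, whose fireability is checked once, avoids this circularity.
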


The proof idea is summarized as follows. If a run $\pi$ is almost unbounded by a sufficiently large value, then $\pi$ can be decomposed into two segments $\pi_1\pi_2$ such that, in both segments, at least $d-1$ coordinates go beyond the given threshold. Recall that the pumping technique presented in~\autoref{sec:pumpability} allows us to construct a $\mathbb{Z}$-run of the form 
\begin{equation}
    \label{eq:lifting}
    s\xrightarrow{\rho_1}q_{1}\left(\boldsymbol{x}_1\right)\xrightarrow{}_\mathbb{Z} q_{2}\left(\boldsymbol{x}_2\right) \xrightarrow{\rho_2} t,
\end{equation}
where $q_1\left(\boldsymbol{x}_1\right)$ and $q_2\left(\boldsymbol{x}_2\right)$ are forward pumpable and backward pumpable, respectively. Furthermore, the lengths of $\rho_1$ and $\rho_2$ are bounded by a polynomial of $\Size(S,s,t)$. It suffices to show that the second $\mathbb{Z}$-run can be replaced by a short run. A similar result has been established for strongly connected $3$-VASS \cite[Sect.\ 4, Case 1]{DBLP:conf/icalp/CzerwinskiJ0O25}. 
The proof consists of two steps: (i) replacing an arbitrary $\mathbb{Z}$-run by a short one coming from an integer linear program, and (ii) lifting the resulting $\mathbb{Z}$-run to sufficiently high via pumping cycles so that it becomes a run. Although the configurations $q_1(\Vec{x}_1)$ and $q_2(\Vec{x}_2)$ may be pumped in different directions, the wideness of $S$ ensures that an appropriate combination of transitions can eventually compensate for this discrepancy. These two steps are realized in the following lemma, whose proof can be found in the appendix.

\begin{restatable}{lemma}{liftZRunByPumpable}
    \label{lem:lift-Z-run-by-pumpable}
    Let $S=V[\Vec{a}_1,\dots,\Vec{a}_k]$ be a wide sequential $d$-VAS with source state $p$ and target state $q$. Let $p(\Vec{x})\xrightarrow{*}_\mathbb{Z} q(\Vec{y})$ be a $\mathbb{Z}$-run in $S$, where $p(\Vec{x})$ is forward pumpable and $q(\Vec{y})$ is backward pumpable. Then $p(\Vec{x})\xrightarrow{\rho}q(\Vec{y})$ for some $\rho$ in $S$ with $|\rho|\le L_d(\Size(V, p(\Vec{x}), q(\Vec{y})))$,  where $L_d$ is a polynomial independent of $V$, $p(\Vec{x})$, and $q(\Vec{y})$.
\end{restatable}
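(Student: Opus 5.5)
The proof has two steps. First, replace the given $\mathbb{Z}$-run by a short one. Second, lift that short $\mathbb{Z}$-run to a genuine run using the pumping cycles at both ends. Throughout, write $n := \Size(V, p(\Vec{x}), q(\Vec{y}))$.

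\textbf{Short $\mathbb{Z}$-run.} A $\mathbb{Z}$-run from $p(\Vec{x})$ to $q(\Vec{y})$ in $S$ is determined by two data: the bridges used, which are forced to be $\Vec{a}_{i+1},\dots,\Vec{a}_j$ in order, where $p=q_i$ and $q=q_j$; and a vector $\Vec{m}\in\mathbb{N}^{T}$ of multiplicities of the $V$-transitions. The transitions of $V$ are self-loops and the run is over $\mathbb{Z}$, so only the total counts matter. The condition is therefore the linear system
\[
\sum_{\Vec{u}\in T}\Vec{m}(\Vec{u})\,\Vec{u} = \Vec{y}-\Vec{x}-\sum_{l=i+1}^{j}\Vec{a}_l,\qquad \Vec{m}\ge\Vec{0}.
\]
It has a solution by assumption. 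A Pottier or Papadimitriou-type bound then gives a solution with $\norm{\Vec{m}}\le\PolyF{n}$, since $d$ is fixed. Alternatively, wideness lets us drop the constraint $\Vec{m}\ge\Vec{0}$ and argue directly: the right-hand side lies in the lattice generated by $T$, and it has a small integer representation. Any negative coefficient can be cancelled by adding a fixed integer vector $\Vec{c}\ge\Vec{0}$ with full support on $T$ and $\sum_{\Vec{u}}\Vec{c}(\Vec{u})\Vec{u}=\Vec{0}$. Such a $\Vec{c}$ exists with polynomial norm because $\Cone(T)=\Span(T)$ implies $-\Vec{u}\in\Cone(T)$ for every $\Vec{u}$.

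\textbf{Lifting.} Let $\theta_f$ be a cycle on $p$ with $p(\Vec{x})\xrightarrow{\theta_f}p(\Vec{x}+\Vec{e})$ and $\Vec{e}\ge\Vec{1}$. Let $\theta_b$ be a cycle on $q$ with $q(\Vec{y}+\Vec{e}')\xrightarrow{\theta_b}q(\Vec{y})$ and $\Vec{e}'\ge\Vec{1}$. I first need these cycles to be of polynomial length. For that I use a Rackoff-style argument (polynomial, since $d$ is fixed) on the existence of a pumping run from a given configuration; this is the same flavour as \autoref{lem:reach-unbounded-of-ever-unbounded}. The cycles are effectively words over $T$, since a cycle in $S$ uses only self-loops at one state. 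Wideness then gives a vector $\Vec{c}'\in\mathbb{N}^T$ of polynomial norm with $\sum\Vec{c}'(\Vec{u})\Vec{u}=\Vec{e}'-\Vec{e}$. Let $\sigma$ be the short $\mathbb{Z}$-run from the first step, and let $N$ be polynomially large. The candidate run is
\[
\rho := \theta_f^{N}\;\tau\;\theta_b^{N},
\]
where $\tau$ is a word realising the multiset $\sigma$ plus $N\cdot\Vec{c}'$. The copies of $\Vec{c}'$ are inserted at state $p$, ahead of the bridges. The effect checks out: $N\Vec{e}+\Delta(\sigma)+N(\Vec{e}'-\Vec{e})-N\Vec{e}'=\Delta(\sigma)=\Vec{y}-\Vec{x}$ (including bridges).

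\textbf{Non-negativity.} The $\theta_f^N$ prefix is fireable and raises every counter by at least $N$. The suffix $\theta_b^N$ is fireable backward from $q(\Vec{y})$, so it only requires the configuration before it to be $\Vec{y}+N\Vec{e}'$, and that is exactly what the effect count gives. The middle part needs more care. If $\Vec{c}'$ is fired naively, first the positive transitions and then the negative ones, the counters can drop by $\Theta(N)$. This is the step I expect to be the main obstacle.

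My plan for it is to interleave: repeat the block $\gamma:=(\text{one copy of } \Vec{c}')\,\theta_b'$ $N$ times, where $\theta_b'$ is $\theta_b$ read as an ordinary cycle of $V$. Each block has effect $\Vec{e}'-\Vec{e}$ plus the backward-pump effect $-\Vec{e}'$, i.e.\ net $-\Vec{e}$. So the height gained from $\theta_f^N$ decreases linearly, and the drop of each block is bounded by a fixed polynomial $D$. In full, the run is $\theta_f^{N}$, then $\sigma$ (height $\ge N - \PolyF{n}$), then the $N$ interleaved blocks. The minimum height stays above $\min(\Vec{x},\Vec{y})$ minus $D$ and plus a linear term, provided the blocks are placed symmetrically. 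This is arranged by splitting the $\Vec{c}'$ copies half after $\theta_f$ and half before $\theta_b$, each paired with a pumping cycle.

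The ordering that still needs checking is at the ends, where the configuration is only $\Vec{x}$ or $\Vec{y}$. There I fire the pumping cycles first or last, using their drops, which are bounded by the given configurations because they are fireable from them. All choices of $N$ are polynomial in $n$, which yields $|\rho|\le L_d(n)$.
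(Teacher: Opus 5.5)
Your first ingredients match the paper and are fine: a short $\mathbb{Z}$-run from an ILP bound, polynomial-length pumping cycles via Rackoff, and an integral correction $\Vec{c}'$ obtained from wideness. (That $\Vec{c}'$ can be taken integral is justified: $\Vec{e}'-\Vec{e}$ is an integer combination of $T$, and adding multiples of your full-support zero-sum vector makes it non-negative.) The gap is the non-negativity step, which you flag as the main obstacle and then leave unresolved.

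Let $w$ be a word realising $\Vec{c}'$. The arrangement you actually describe is $\theta_f^N\,\sigma\,\gamma^N$ with $\gamma=w\,\theta_b$. There the configuration before the $m$-th block is $\Vec{y}+(N-m+1)\Vec{e}$, so the last block must fire $w$ from $\Vec{y}+\Vec{e}$. Nothing relates $\Drop(w)$, an independent polynomial $D$, to $\Vec{y}+\Vec{e}$, so this can fail. Firing $\theta_b$ first inside the block does not help. $\theta_b$ is only known to be fireable from $\Vec{y}+\Vec{e}'$ and above, and $\Vec{y}+\Vec{e}\not\ge\Vec{y}+\Vec{e}'$ in general, since $\Vec{e}\ne\Vec{e}'$ is the whole reason for the correction. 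A bound of the form ``$\min(\Vec{x},\Vec{y})$ minus $D$ plus a linear term'' is exactly what fails to certify non-negativity near the two ends. The symmetric rearrangement is only asserted, with the end cases explicitly left unchecked.

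The missing observation is that no copy of $w$ ever needs to be fired near $\Vec{x}$ or $\Vec{y}$. Your first candidate already works once the copies are fired one at a time at state $p$. Take $\rho:=\theta_f^N\,w^N\,\sigma\,\theta_b^N$.

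\begin{itemize}
\item Before the $(m+1)$-th copy of $w$ the configuration is $\Vec{x}+(N-m)\Vec{e}+m\Vec{e}'\ge\Vec{x}+N\cdot\Vec{1}$, because $\Vec{e},\Vec{e}'\ge\Vec{1}$. So $N\ge\norm{\Drop(w)}$ makes every copy fireable.
\item After all copies you are at $p(\Vec{x}+N\Vec{e}')$. Then $N\ge\norm{\Drop(\sigma)}$ lets $\sigma$ fire, reaching $q(\Vec{y}+N\Vec{e}')$.
\item Finally $\theta_b^N$ descends to $q(\Vec{y})$ by backward pumpability and monotonicity.
\end{itemize}

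The $\Theta(N)$ drop you worried about only appears if the $N$ copies are merged into one multiset and reordered.

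This is precisely the paper's argument. It packages $\theta_1^{n_1\lambda_0}o^{n_1}$ as a new forward-pumping cycle $\theta_1'$ whose effect $-n_1\lambda_0\Delta(\theta_2)$ exactly cancels the backward pump. It justifies $o^{n_1}$ by checking only its first and last iterations, and then fires $(\theta_1')^{n_2}\,\varsigma\,\theta_2^{n_1n_2\lambda_0}$.
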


We remark that the wideness is essential in the proof of \autoref{lem:lift-Z-run-by-pumpable}. Next, we proceed to the proof of \autoref{lem:poly-bound-for-almost-unbounded}.

\begin{proof}[Proof of \autoref{lem:poly-bound-for-almost-unbounded}]
    Let $P$ be the polynomial defined in \autoref{thm:reach-pumpable-of-ever-unbounded-d-1} and $L_d$ be the polynomial defined in \autoref{lem:lift-Z-run-by-pumpable}. We take $U_d$ as the following polynomial
    \begin{equation}
        U_d(x):=2P(x, x)^{(d+1)!}+L_d\left(2x\cdot P(x, x)^{(d+1)!}+x\right).
    \end{equation}
    Let $M:=\Size(S,s,t)$, and $N:=P(M,M)^{(d+1)!}$. For any run $\pi$ almost unbounded by $U_d(M)$, by definition, there exists a configuration $q(\Vec{w})$ such that $\pi$ can be splited into $s \xrightarrow{\pi_1}q(\Vec{w}) \xrightarrow{\pi_2} t$, where $|\UBCounters(\pi_i, U_d(M))|\ge d- 1$ holds for $i=1,2$. Note that $U_d(M)\ge N \ge P(\norm{s}, \Size(S))^{(d+1)!}$. Applying \autoref{thm:reach-pumpable-of-ever-unbounded-d-1} to $\pi_1$, one can obtain a run $\rho_1$ with $|\rho_1|\le P(\norm{s}, \Size(S))^{(d+1)!}\le N$ such that $s \xrightarrow{\rho_1} q_1(\Vec{x}_1)$ for some forward pumpable configuration $q_1(\Vec{x})$ in $S$. Moreover, since $\rho$ can be obtained from $\pi$ by removing some self-loops, we have $q_1(\Vec{x}_1) \xrightarrow{*}_{\mathbb{Z}} q(\Vec{w})$. A symmetric argument to $\pi^{\text{rev}}$ yields that $q(\Vec{w})\xrightarrow{*}_{\mathbb{Z}} q_2(\Vec{x}_2)\xrightarrow{\rho_2}t$, where $|\rho_2|\le N$ and $q_2(\Vec{x}_2)$ is backward pumpable in $S$. Then we can merge these two $\mathbb{Z}$-runs as
    \begin{equation}
        s\xrightarrow{\rho_1} q_1(\Vec{x}_1)\xrightarrow{*}_{\mathbb{Z}} q_2(\Vec{x}_2) \xrightarrow{\rho_2} t.
    \end{equation}
    The sequential VAS $S'$ induced by $q_1(\Vec{x}_1)\xrightarrow{*}_{\mathbb{Z}}q_2(\Vec{x}_2)$ has size at most $M + 2NM$. The configuration $p(\Vec{x})$ is still forward pumpable in $S'$, and $q(\Vec{y})$ is still backward pumpable in $S'$. Applying \autoref{lem:lift-Z-run-by-pumpable} to $S'$ and this $\mathbb{Z}$-run, we have $q_1(\Vec{x}_1)\xrightarrow{\rho'}q_2(\Vec{x}_2)$ with $|\rho'|\le L_d(M + 2NM)$. Let $\rho:=\rho_1\rho'\rho_2$. Then $s\xrightarrow{\rho}t$ with $|\rho|\le 2N+L_d(M + 2NM)= U_d(M)$.
\end{proof}

\subsubsection{Almost Bounded Runs}
For $k$-bounded runs, it is standard to encode the $k$ bounded coordinates into states so that they can be captured by a new VASS with lower (geometric) dimension.

\begin{restatable}{lemma}{reductionForBounded}
    \label{lem:reduction-for-bounded}
    Let $B\in\mathbb{N}$ and $(V,s,t)$ be a $d$-VASS with $s\xrightarrow{\pi}t$ in $V$, which is $k$-bounded by $B$. Then $(V, s, t)$ is length-bounded by a $d$-VASS $(V',s',t')$ with $f$-amplication, where $\dimcom(V')\le d - k$ and $f(x) = B^k \cdot x$.
\end{restatable}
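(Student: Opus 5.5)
The idea is the standard state-encoding of bounded counters. Since $\pi$ is $k$-bounded by $B$, there is a set $I\subseteq[d]$ with $|I|=k$ such that every configuration on $\pi$ has value $<B$ on each coordinate of $I$. Fix such an $I$; we may take $I$ to consist of the $k$ coordinates not in $\UBCounters(\pi,B)$, padding arbitrarily if that set is smaller. The new VASS $V'$ tracks the coordinates in $I$ inside the finite control:
\begin{itemize}
\item its states are pairs $(p,\Vec{c})$ with $p\in Q$ and $\Vec{c}\in[0,B-1]^I$;
\item for each transition $(p,\Vec{a},q)$ of $V$ and each $\Vec{c}$ such that $\Vec{c}+\Vec{a}|_I\in[0,B-1]^I$, it has a transition $((p,\Vec{c}),\Vec{a}',(q,\Vec{c}+\Vec{a}|_I))$, where $\Vec{a}'$ equals $\Vec{a}$ on $[d]\setminus I$ and $0$ on $I$;
\item $s'$ and $t'$ are $s,t$ with their $I$-parts moved into the state and zeroed.
\end{itemize}
We keep the ambient dimension $d$ with zeros on $I$, as the statement allows a $d$-VASS. (If $s$ or $t$ had an $I$-coordinate $\ge B$ then $\pi$ would violate boundedness, so $s'$ and $t'$ are well defined.)

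\textbf{Correctness.} Every run of $V'$ from $s'$ to $t'$ projects to a run of $V$ from $s$ to $t$ of the same length. On $I$ the values are exactly the state components, which are nonnegative, and on the other coordinates the values coincide. Conversely, $\pi$ lifts to a run of $V'$, because its $I$-values always stay in $[0,B-1]$. Hence $s\xrightarrow{*}t$ in $V$ iff $s'\xrightarrow{*}t'$ in $V'$, and $\min\Len(V,s,t)\le\min\Len(V',s',t')$. The inequality holds because every $V'$-run is a $V$-run of the same length, and the set of $V'$-run lengths is nonempty since $\pi$ lifts.

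\textbf{Dimension.} Every cycle of $V'$ returns to the same state $(p,\Vec{c})$, so its effect is $0$ on $I$. The cycle space of every SCC therefore lies in the subspace $\{\Vec{x}:\Vec{x}(i)=0,\ i\in I\}$, which has dimension $d-k$. Hence $\dimcom(V')\le\dimcyc(V')\le d-k$.

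\textbf{Size.} We have $|Q'|\le B^k|Q|$ and $|T'|\le B^k|T|$, with $\norm{T'}\le\norm{T}$. The norms $\norm{s'}\le\norm{s}$ and $\norm{t'}\le\norm{t}$ do not grow, and the dimension stays $d$. Thus $\Size(V',s',t')\le B^k\Size(V,s,t)$, which is exactly the amplification $f(x)=B^k x$.

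The only delicate point is this bookkeeping. The stated amplification is tight only because each state and each transition is multiplied by at most $B^k$ while the remaining terms are untouched, so one has to check that the $d\cdot(\norm{s}+\norm{t}+1)$ term does not increase, which holds. The rest is routine.
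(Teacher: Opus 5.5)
Your proof is correct and is essentially the paper's construction: you encode the $k$ bounded coordinates into states $Q\times[0,B-1]^I$, note that every cycle has zero effect on $I$, and count $|Q'|\le B^k|Q|$ and $|T'|\le B^k|T|$. The only difference is that you zero the encoded coordinates in the counters, whereas the paper keeps them (its map is $\llbracket p(\Vec{u})\rrbracket=(p,\Vec{u}|_I)(\Vec{u})$), which lets it later glue two such VASSes by zero-effect transitions in \autoref{lem:dimension-reduction-for-almost-bounded-runs}; one small slip is that $k$-boundedness makes $[d]\setminus\UBCounters(\pi,B)$ have \emph{at least} $k$ elements, so you should pick a $k$-subset when it is larger rather than ``pad'', since padding would put an unbounded coordinate into $I$.
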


However, for an almost bounded run $\pi$ considered here, this dimension reduction technique cannot be applied directly, since $\pi$ consists of two bounded segments $\pi_1$ and $\pi_2$. Instead, we construct two new VASSes according to $\pi_1$ and $\pi_2$, respectively, and then concatenate them. The resulting VASS captures $\pi$ with a strictly smaller component-dimension.

\begin{restatable}{lemma}{dimensionReductionForAlmostBoundedRuns}
    \label{lem:dimension-reduction-for-almost-bounded-runs}
    For each $d \in \mathbb{N}$, there is a polynomial $A_d$ such that, for every $B\in\mathbb{N}$ and $(d+2)$-VASS $(V,s,t)$ with $s\xrightarrow{\pi}t$ in $V$, if $\pi$ is almost bounded by $B$, then $(V, s, t)$ is length-bounded by a $(d+2)$-VASS $(V', s', t')$ with $A_d$-amplification, where $\dimcom (V')\le d$.
\end{restatable}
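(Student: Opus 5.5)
We construct $V'$ by gluing two copies of $V$ in sequence: the first copy simulates $\pi_1$ with two bounded coordinates stored in its control state, the second simulates $\pi_2$ likewise.

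Fix a split $\pi=\pi_1\pi_2$ with $s\xrightarrow{\pi_1}m\xrightarrow{\pi_2}t$, where $m=q(\Vec{w})$ and both segments are $2$-bounded by $B$. Then there are two-element sets $I_1,I_2\subseteq[d+2]$ such that every coordinate in $I_j$ stays below $B$ along $\pi_j$. Since $\pi$ is given, $I_1,I_2$ and the state $q$ are fixed, and we may use them in the construction. The intermediate vector $\Vec{w}$ is not fixed, and we must avoid mentioning it, since its norm is not controlled.

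Following \autoref{lem:reduction-for-bounded}, we build $V_1$ on states $Q\times[0,B-1]^{I_1}$. A transition $(p,\Vec{a},p')$ of $V$ becomes $((p,\Vec{u}),\Vec{a}',(p',\Vec{u}+\Vec{a}|_{I_1}))$ whenever both $\Vec{u}$ and $\Vec{u}+\Vec{a}|_{I_1}$ lie in $[0,B-1]^{I_1}$, where $\Vec{a}'$ is $\Vec{a}$ with the $I_1$-coordinates set to $0$. Define $V_2$ analogously with $I_2$. In each $V_j$ the coordinates in $I_j$ are fixed at $0$, so every cycle effect lies in a subspace of dimension $\le d$; the same holds for each SCC of $V_j$. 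Hence $\dimcom(V_j)\le d$.

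Next we connect $V_1$ to $V_2$ through a gadget that moves the exact values between the two encodings. For every $\Vec{u}\in[0,B-1]^{I_1}$ and $\Vec{v}\in[0,B-1]^{I_2}$, add a path from $(q,\Vec{u})$ in $V_1$ to $(q,\Vec{v})$ in $V_2$. Along this path, the coordinates in $I_1\setminus I_2$ receive $+\Vec{u}$ and the coordinates in $I_2\setminus I_1$ receive $-\Vec{v}$. For shared coordinates we require $\Vec{u}(i)=\Vec{v}(i)$. We realize this with one bridge transition (or a path of transitions of norm $\le B$), placed acyclically between the two components. The gadget is acyclic and sits between the components, so it creates no new cycles and the SCCs remain those of $V_1$ and $V_2$. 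Therefore $\dimcom(V')\le d$.

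Set $s'=(p_s,s|_{I_1})$ with the $I_1$-coordinates zeroed, and $t'=(p_t,t|_{I_2})$ with the $I_2$-coordinates zeroed. A run of $V'$ from $s'$ to $t'$ projects to a run of $V$ from $s$ to $t$ of at most the same length once the gadget transitions are dropped. Conversely, $\pi$ lifts to a run of $V'$ of length $|\pi|+O(1)$. That extra additive constant is a nuisance, since length-boundedness asks for $\min\Len(V,s,t)\le\min\Len(V',s',t')$. The inequality goes in the needed direction because projection only shortens runs, and reachability is equivalent in both directions. Nonnegativity is preserved: $I_1$-values are carried faithfully in states, and the gadget's intermediate values lie between the old and new encodings.

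The size is $O(B^4\cdot\Size(V,s,t))$, which gives the polynomial amplification $A_d$ in $B$ and the size. The lemma's $A_d$ must absorb the dependence on $B$. This is fine in the intended use, where $B$ is polynomial in the size, though strictly the amplification also depends on $B$, as in \autoref{lem:reduction-for-bounded}.

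The main obstacle is the gadget. The recoding must not create cycles or enlarge any SCC's cycle space, and it must not need knowledge of $\Vec{w}$. Making the gadget an acyclic one-way bridge between two disjoint state sets, indexed by all $(\Vec{u},\Vec{v})$, resolves both issues.
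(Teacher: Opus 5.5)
Your construction is correct, but it differs from the paper's at the junction.

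\textbf{The paper's route.} Reusing \autoref{lem:reduction-for-bounded} verbatim, the paper does \emph{not} zero the encoded coordinates. $V_1$ and $V_2$ keep the full effect $\Vec{a}$ on all $d+2$ counters and only record $\Vec{u}|_{I_j}$ redundantly in the state; a transition is allowed only if the record stays in $[0,B-1]^{I_j}$. Cycles still satisfy $\Delta(\theta)|_{I_j}=\Vec{0}$ because the record must return, which gives $\dimcom\le d$. The two copies are joined by zero-effect transitions $((p^1,\Vec{u}_1),\Vec{0},(p^2,\Vec{u}_2))$ for all $p\in Q$ and all $\Vec{u}_1,\Vec{u}_2$ agreeing on $I_1\cap I_2$. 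Soundness is then immediate: the real counters were never removed, so forgetting the state annotations maps any run of $V'$ from $s'$ to $t'$ to a run of $V$ one step shorter. No transfer of values has to be checked.

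\textbf{Your route.} Because you zero the $I_j$-counters, you need a value-transfer bridge, and its correctness rests on a point you leave implicit. The bridge only checks that the counters on $I_2\setminus I_1$ are at least $\Vec{v}$, not that they equal it. Exactness holds because these counters are frozen inside $V_2$ and $t'$ requires them to be $0$. This forces $\Vec{v}$ to agree with $\Vec{w}$ on $I_2\setminus I_1$. You should state this explicitly, since your remark about intermediate values only addresses nonnegativity.

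Also, the bridge has norm up to $B-1$, so it enlarges $\norm{T'}$. Your $O(B^4\cdot\Size(V,s,t))$ estimate therefore gains roughly another factor of $B$, which is still polynomial.

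\textbf{Trade-off.} Your encoded counters are literally constant inside each component, which is closer to the classical ``encode and delete'' reduction. The paper's version is simpler and needs no junction argument.

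Your observation that $A_d$ must also take $B$ as an argument matches the paper: its proof sets $A_d(x,y):=4x^4y^2$, with $x$ playing the role of $B$.
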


The proofs of both lemmas are quite standard. We defer them to the appendix for space limitations.
So far, we have analyzed both cases of (almost) boundedness. We now return to the proof of \autoref{thm:len-d-plus-2-vas-le-len-vass-max-dim-d}.

\begin{proof}[Proof of \autoref{thm:len-d-plus-2-vas-le-len-vass-max-dim-d}]
    Let $W_d$, $U_d$ and $A_d$ be the polynomials defined in \autoref{lem:non-wide-to-family-of-wide-seq}, \autoref{lem:poly-bound-for-almost-unbounded}, and \autoref{lem:dimension-reduction-for-almost-bounded-runs}, respectively.
    Note that these polynomials are also nondecreasing with respect to $d$. We can always adjust $A_d$ such that $A_d(x,y)\ge x+1$ and then define 
    \begin{equation}
        P_d(x):=A_{d}(U_{d+2}(W_{d+2}(x))+W_{d+2}(x), W_{d+2}(x)).
    \end{equation}
    Let {$(V,\Vec{s}, \Vec{t})$ be a $(d+2)$-VASS} with $\Vec{s}\xrightarrow{*}\Vec{t}$ and $M:=\Size(V,\Vec{s},\Vec{t})$. By \autoref{lem:non-wide-to-family-of-wide-seq}, {$(V, \Vec{s}, \Vec{t})$ is length-captured by a family $\mathcal{S}$ of wide sequential VASes with $W_d$-amplification. Since $\Len(V,\Vec{s},\Vec{t}) \neq \emptyset$,} there exists a wide sequential $d_S$-VAS {$(S,s,t) \in \mathcal{S}$} with $d_S\le d+2$ such that $\emptyset\ne \Len(S,s, t)\subseteq \Len(V,\Vec{s},\Vec{t})$, and $\Size(S,s, t)\le W:=W_{d+2}(M)$. Assume that $s\xrightarrow{*}t$ in $S$ is witnessed by $\pi$. Let $U:=U_{d_S+2}(W)$. Applying \autoref{lem:almost-bounded-or-unbounded} to $\pi$, there are two cases.
    
    \textbf{The run $\pi$ is almost unbounded by $U$.} Since $S$ is wide and sequential, by \autoref{lem:poly-bound-for-almost-unbounded}, there exists a run $s\xrightarrow{\rho}t$ of length $|\rho|\le U$. We can always construct a $0$-VASS $V'$ by concatenating $U+1$ states with source $s'$ and target $t'$. Then $\Size(V',s',t')\le U+1\le A_{d}(U+\Size(S), W)$ and $\min \Len(V',s',t')=U\ge \min \Len(S,s,t) \ge \min \Len(V,\Vec{s},\Vec{t})$.
        
    \textbf{The run $\pi$ is almost bounded by} $U+\Size(S)$. By \autoref{lem:dimension-reduction-for-almost-bounded-runs}, there is a VASS {$(V',s',t')$} with $\dimcom(V')\le d$ such that {$s'\xrightarrow{*}t'$ in $V'$,} $\min\Len(V,\Vec{s},\Vec{t})\le \min \Len(S,s,t)\le \min\Len(V's',t')$ and $\Size(V',s',t')\le A_{d}(U+\Size(S), W)$.
        
    In both cases, the size does not exceed $P_d(M)$, which justifies that $(V, \Vec{s},\Vec{t})$ is length-bounded by a VASS of component dimension at most $d$ with $P_d$-amplification.
\end{proof}

\paragraph*{Notes on Geometrically \texorpdfstring{$d$}{d}-Dimensional VASes}
So far, we have established \autoref{thm:len-d-plus-2-vas-le-len-vass-max-dim-d}. A natural extension is to ask what the corresponding result would be if we fix the geometric dimension $g$ of the original VAS rather than its (standard) dimension $d$. In fact, by taking a closer look at the above proofs, we can easily derive the following conclusion.

\begin{restatable}{lemma}{lenGPlusOneVasLeLenVassMaxDimG}
    \label{lem:len-g-plus-1-vas-le-len-vass-max-dim-g}
    For every $g \in \mathbb{N}$, there is a polynomial $P_g$ such that, every $d$-VAS $(V,\Vec{s},\Vec{t})$ with $\dimcyc(V) = g + 1$ is length-bounded by a $d$-VASS $(V',s',t')$ with $f$-amplification, where $\dimcom(V') \le g$ and $f(x) = P_g(x)^{(d+1)!}$.
\end{restatable}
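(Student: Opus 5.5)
We follow the proof of \autoref{thm:len-d-plus-2-vas-le-len-vass-max-dim-d}, replacing the ambient dimension by the geometric dimension throughout. Let $(V,\Vec{s},\Vec{t})$ be a $d$-VAS with $\dimcyc(V)=g+1$, and assume that $\Vec{s}\xrightarrow{*}\Vec{t}$.

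\textbf{Step 1: pass to a wide sequential VAS of small cycle-dimension.} If $V$ is wide we keep $S=V$. Otherwise \autoref{lem:non-wide-to-family-of-wide-seq} produces a reachable wide sequential VAS $(S,s,t)$ with $\dimcyc(S)\le g$. Its size is at most $W_d(M)$, where $M=\Size(V,\Vec{s},\Vec{t})$. The dependence on $d$ here is why the amplification may involve $d$. In the non-wide case, we apply \autoref{lem:reduction-for-bounded} with $k=0$: a sequential VAS already has component-dimension $\dimcyc(V_0)\le g$, since each of its SCCs is a single state carrying the self-loops of $V_0$. So we are done, with amplification $W_d$.

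\textbf{Step 2: a change of coordinates in the wide case.} Now $S$ is wide and $\dimcyc(S)=g+1$. The key observation is that all steps of the earlier argument only use coordinates through the lattice of transition effects. Choose $g+1$ coordinates on which the projection of $\Span(T)$ is injective. Then every run is determined, up to an affine map, by its projection onto these coordinates. The remaining $d-g-1$ coordinates are affine functions of the chosen ones along any run from $s$, with rational coefficients of bounded height. Boundedness of a chosen coordinate therefore forces boundedness of a full-rank set of directions. I would rather not project, however, since nonnegativity of the other coordinates must still be checked. Instead I would reprove the dichotomy directly in terms of the \emph{cycle space}. Call a run $2$-bounded if its configurations lie in a set whose projection to $\CycleSpace(S)$ has at most $g-1$ unbounded directions.

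\textbf{Step 3: the dichotomy.} Apply \autoref{lem:almost-bounded-or-unbounded}. In the almost unbounded case, \autoref{thm:reach-pumpable-of-ever-unbounded-d-1} and \autoref{lem:lift-Z-run-by-pumpable} give a run of length bounded by a polynomial in the size of $(S,s,t)$, with exponent $(d+1)!$. This is exactly the source of the factor $P_g(x)^{(d+1)!}$ in the statement. In the almost bounded case, the two segments each have two coordinates bounded by $B=U+\Size(S)$. We encode their values into states, as in \autoref{lem:reduction-for-bounded} and \autoref{lem:dimension-reduction-for-almost-bounded-runs}. The SCCs of the resulting VASS have cycle effects lying in $\Span(T)\cap\{\Vec{x}:\Vec{x}(i)=\Vec{x}(j)=0\}$.

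\textbf{Main obstacle.} The hard step is showing that this intersection has dimension at most $g$, and not merely at most $g+1$. A hyperplane $\Vec{x}(i)=0$ may contain $\Span(T)$ only if $i$ is fixed, which \autoref{lem:exists-sequential-vas-without-fixed-coordinate} excludes. Hence a single bounded coordinate already cuts the dimension down to $g$, and a second one may or may not cut it further. Since we only need $\dimcom(V')\le g$, one bounded coordinate per segment suffices. So the required conclusion should follow even from the $1$-bounded version.

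To make this work I would adjust the dichotomy as follows. A segment with $|\UBCounters|\le d-1$ is treated as $1$-bounded, which gives a dimension drop to $g$. Otherwise every coordinate becomes large in both segments, and the pumping argument applies. The polynomial $P_g$ then collects $W_d$, $U_d$ and $A_g$. Since these bounds depend on $d$ only through the exponent $(d+1)!$, we obtain amplification $P_g(x)^{(d+1)!}$.
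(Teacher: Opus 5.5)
Your proof is correct. In the wide case it is the paper's argument. The paper also applies the dichotomy of \autoref{lem:almost-bounded-or-unbounded} and settles the almost unbounded case with \autoref{lem:poly-bound-for-almost-unbounded}. In the almost bounded case it uses exactly your key observation: a non-fixed coordinate $i$ gives $\Span(T)\not\subseteq\{\Vec{x}:\Vec{x}(i)=0\}$. Hence encoding even one bounded coordinate per segment already lowers the cycle-dimension of $V_1$ and $V_2$ by at least one.

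The genuine difference is your handling of the non-wide case. The paper does not treat it separately. It runs every wide sequential $S$ from \autoref{lem:non-wide-to-family-of-wide-seq} through the dichotomy, using only $\dimcyc(S)\le g+1$ and the drop by one. You instead note that $S=V_0[\Vec{a}_1,\ldots,\Vec{a}_k]$ already satisfies $\dimcom(S)=\dim\Span(T_0)=\dimcyc(S)\le g$. This holds because bridges only go forward, so every SCC is a single state carrying the self-loops $T_0$. So $S$ itself is the required witness, with only $W_d$-amplification. This is not really an application of \autoref{lem:reduction-for-bounded} with $k=0$, just this observation. The Rackoff/pumping machinery is then needed only when $V$ is wide, i.e.\ when $S=V$ is a one-state VAS. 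There you could even drop the ``almost'' split, as in \autoref{lem:len-wide-vas}. This is a small but real simplification.

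Your Step~2 is never used and can be deleted. Switching to a $1$-bounded dichotomy is also unnecessary: the almost bounded case of the existing lemma already gives a bounded coordinate in each segment, and that suffices. If you do want the $1$-bounded version, its complementary case (some split in which every coordinate becomes large in both segments) does not follow by simply negating ``almost $1$-bounded''. You would have to redo the longest-prefix argument of \autoref{lem:almost-bounded-or-unbounded} with $d-1$ in place of $d-2$, which shifts the threshold to $B+\Size(V)$. After that, the fully unbounded case trivially meets the hypothesis of \autoref{thm:reach-pumpable-of-ever-unbounded-d-1}.
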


\begin{proof}[Proof Sketch]
    The analysis is analogous to that in \autoref{thm:len-d-plus-2-vas-le-len-vass-max-dim-d}. For the size bound of the new VASS, the $(d+1)!$ term in the exponent originates from Rackoff’s extraction (\autoref{lem:reach-unbounded-of-ever-unbounded}). For the component-dimension of the new VASS, it suffices to show that the dimension reduction in \autoref{lem:reduction-for-bounded} decreases the cycle-dimension of the sequential VAS $S$ by at least one. Note that after encoding any $i \in [d]$ into the state, every cycle $\theta$ in the resulting VASS, denoted by $V'$, satisfies $\Delta(\theta)(i)=0$. In contrast, the original sequential VAS $S$ contains no fixed coordinates, i.e., there exists a self-loop $u$ with $\Delta(u)(i)\ne 0$. Consequently, we have $\CycleSpace(V') \subsetneq \CycleSpace(S)$, which implies that $\dimcyc(V')\le \dimcyc(S) -1 \le  g$.
\end{proof}

The existence of a $2$-bounded run allows us to remove two coordinates of the VAS, which, however, might only decrease the geometric dimension by one. As an example, consider a $3$-VAS $(V,\Vec{s},\Vec{t})$ for which $\Vec{u}(1)+\Vec{u}(2)=0$ holds for every transition vector $\Vec{u}$ in $V$. It is straightforward that these two coordinates are always bounded by $\Size(V,\Vec{s},\Vec{t})^{O(1)}$ in every run in $V$. However, removing these two coordinates can only reduce the component-dimension by one, as they are collinear.

Finally, we can combine \autoref{thm:d-vass-in-fd} and \autoref{lem:len-g-plus-1-vas-le-len-vass-max-dim-g} to derive the following corollary.

\begin{corollary}
    Reachability in geometrically $(g+1)$-dimensional VAS is in $\mathsf{F}_g$ for $g\ge 3$.
\end{corollary}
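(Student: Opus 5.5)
The corollary follows by chaining \autoref{lem:len-g-plus-1-vas-le-len-vass-max-dim-g} with \autoref{thm:d-vass-in-fd}, in the same way \autoref{thm:reach-d-plus-2-in-Fd} follows from \autoref{thm:len-d-plus-2-vas-le-len-vass-max-dim-d}. We first dispose of the degenerate case. If the input $d$-VAS $(V,\Vec{s},\Vec{t})$ has $\dimcyc(V)\le g$, then trivially $\dimcom(V)=\dimcyc(V)\le g$, since every transition is a self-loop. In that case \autoref{thm:d-vass-in-fd} applies directly with $m=g\ge 3$. For the case $\dimcom(V) < 3$ we may simply regard the bound for $m=3$ as a weaker bound, since the ranking argument behind that theorem is monotone in the component-dimension. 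We can therefore assume $\dimcyc(V)=g+1$ and apply \autoref{lem:len-g-plus-1-vas-le-len-vass-max-dim-g}. This yields a VASS $(V',s',t')$ with $\dimcom(V')\le g$ and $\Size(V',s',t')\le P_g(n)^{(d+1)!}$, where $n:=\Size(V,\Vec{s},\Vec{t})$. The lemma also gives $\min\Len(V,\Vec{s},\Vec{t})\le\min\Len(V',s',t')$ and equireachability.

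By \autoref{thm:d-vass-in-fd} applied to $(V',s',t')$ with $m=g$, if $\Vec{s}\xrightarrow{*}\Vec{t}$ then there is a witness run of length at most $F_g(g'(P_g(n)^{(d+1)!}))$, where $g'$ is the elementary function from that theorem. The composed function $h(n):=g'(P_g(n)^{(d+1)!})$ must lie in $\mathscr{F}_{g-1}$. This is the main point to check, because the dimension $d$ is \emph{not} fixed: only the geometric dimension is. The exponent $(d+1)!$ satisfies $d\le n$, so $P_g(n)^{(d+1)!}\le P_g(n)^{(n+1)!}$. This quantity is bounded by a fixed number of iterations of $F_2$, hence it is elementary, and elementary functions lie in $\mathscr{F}_2\subseteq\mathscr{F}_{g-1}$ because $g\ge 3$. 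Composing with the elementary $g'$ keeps $h$ elementary. Thus the length bound has the form $F_g(p(n))$ with $p\in\mathscr{F}_{g-1}$.

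Finally, the decision procedure enumerates all candidate runs of $V$ from $\Vec{s}$ of length at most $F_g(h(n))$ and checks whether one ends in $\Vec{t}$. This can be done nondeterministically in space polynomial in $F_g(h(n))$ (counter values stay below $n\cdot F_g(h(n))$), and hence deterministically in time $2^{O(F_g(h(n))^2)}$. It remains to absorb this outer exponential into $F_g(p(n))$ for a slightly larger $p\in\mathscr{F}_{g-1}$. That step is standard: for $g\ge 3$ we have $2^{F_g(x)^2}\le F_g(x+c)$ for a constant $c$, since $F_g(x+1)\ge F_2(F_g(x))$. This places the problem in $\mathsf{F}_g$. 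I expect this complexity bookkeeping, and in particular the care needed because $d$ appears in the exponent, to be the only delicate point; everything else is direct invocation of the preceding results.
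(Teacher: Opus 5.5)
Your proposal is correct and matches the paper's argument, which simply chains \autoref{lem:len-g-plus-1-vas-le-len-vass-max-dim-g} with \autoref{thm:d-vass-in-fd}. Your extra bookkeeping makes explicit what the paper leaves implicit: you handle $\dimcyc(V)\le g$ directly, you observe that $P_g(n)^{(d+1)!}$ stays elementary because $d\le n$ even though $d$ is unfixed, and you absorb the exponential cost of the search into $F_g$.
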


\section{Fine-Grained Characterization of VAS Runs}
\label{sec:low-dim-vas-boundedness}

\autoref{thm:len-d-plus-2-vas-le-len-vass-max-dim-d} gives a nice characterization of shortest runs in $d$-dimensional VAS using VAS of component-dimension $d-2$. This suffices to prove the $\mathsf{F}_{d-2}$ upper-bound for fixed dimension $d \ge 5$. However, some good properties of this characterization are hidden beneath the technical proofs. In this section, we give a finer characterization of runs in $d$-VAS, relating them to $(d-2)$-VASS or VASS of component-dimension $d - 3$, which is formulated as follows.

\begin{proposition}
    \label{lem:len-d-vas-le-cases}
    For each $d \in \mathbb{N}$ where $d \ge 2$, there is a polynomial $P_d$ such that
    every $d$-VAS $(V,\Vec{s},\Vec{t})$ is length-bounded by a VASS $(V',s',t')$ with $P_d$-amplification. Moreover,
    \begin{itemize}
        \item either $V'$ is a $(d-2)$-VASS with $\dimcyc(V') \le \dimcyc(V) - 1$,
        \item or $V'$ is comprised of two $d$-VASSes $V_1, V_2$ connected by a single transition, where 
        \begin{equation}
            \dimcyc(V_1), \dimcyc(V_2) \le \max\{0, \dimcyc(V) - 3\}.
        \end{equation}
        In particular, it holds that $\dimcom(V') \le \max\{0, \dimcyc(V) - 3\}$.
    \end{itemize}
\end{proposition}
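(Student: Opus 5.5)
The plan follows the proof of \autoref{thm:len-d-plus-2-vas-le-len-vass-max-dim-d}, but tracks the cycle-dimension more carefully. First I apply \autoref{lem:non-wide-to-family-of-wide-seq} to $(V,\Vec{s},\Vec{t})$. This gives a witnessing wide sequential VAS $(S,s,t)$ of polynomial size with $\emptyset\ne\Len(S,s,t)\subseteq\Len(V,\Vec{s},\Vec{t})$ and $\dimcyc(S)\le\dimcyc(V)$, with strict inequality when $V$ is non-wide. I then remove fixed coordinates using \autoref{lem:exists-sequential-vas-without-fixed-coordinate}, so $S$ is a $d_S$-VAS with $d_S\le d$ and no fixed coordinates. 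Next I fix a run $s\xrightarrow{\pi}t$ and apply \autoref{lem:almost-bounded-or-unbounded} with $B:=U_{d_S}(\Size(S,s,t))$.

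\textbf{The run $\pi$ is almost unbounded.} By \autoref{lem:poly-bound-for-almost-unbounded} there is a run of polynomial length. I encode it as a chain of states with zero-effect transitions, i.e.\ a VASS whose cycle-dimension is $0$. This instance fits the second bullet, taking $V_1$ to be the chain and $V_2$ a single state. Reachability is preserved because both instances are reachable.

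\textbf{The run $\pi$ is almost bounded.} Here $\pi=\pi_1\pi_2$ with each $\pi_j$ $2$-bounded by $B+\Size(S)$. I split into subcases according to whether the two segments leave the \emph{same} two coordinates bounded.
\begin{itemize}
\item If both segments keep the same pair $\{i,j\}$ bounded, then $\pi$ is $2$-bounded. In that case I apply \autoref{lem:reduction-for-bounded}, encoding coordinates $i,j$ into states, which yields a $(d_S-2)$-VASS. Padding with fixed zero coordinates if needed gives a $(d-2)$-VASS. As in the proof of \autoref{lem:len-g-plus-1-vas-le-len-vass-max-dim-g}, coordinate $i$ is not fixed in $S$, so some self-loop changes it. Every cycle of the new VASS has zero effect on $i$, so the cycle space strictly shrinks and $\dimcyc(V')\le\dimcyc(S)-1\le\dimcyc(V)-1$. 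This is the first bullet.
\item Otherwise $\pi_1$ bounds a pair $I_1$ and $\pi_2$ bounds a pair $I_2\ne I_1$. I build $V_j$ by encoding the coordinates of $I_j$ into states of $S$ (values up to $B+\Size(S)$). I then connect $V_1$ to $V_2$ by a single transition from the state reached at the splitting configuration $q(\Vec{w})$ of $\pi$. Since $q(\Vec{w})$ is unknown, I guess it; this produces a family of instances, and I choose the witnessing one.
\end{itemize}

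The main obstacle is the second subcase: I must show $\dimcyc(V_j)\le\dimcyc(V)-3$, whereas encoding two coordinates naively only removes one dimension (recall the collinear example). My first attempt would exploit the case split. If $V$ is non-wide, we already gained one dimension from \autoref{lem:non-wide-to-family-of-wide-seq}. The remaining two would then have to come from the two bounded coordinates being linearly independent modulo $\CycleSpace(S)$. If instead they are dependent, e.g.\ $\Vec{u}(i)=c\,\Vec{u}(j)$ for all transitions, then the bound on one coordinate controls the other. In that situation I expect to fall back on the first bullet, by re-choosing which coordinates to encode so that one pair works for the whole run. If $V$ is wide, the cycle space is the whole span. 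In that case I would argue with \autoref{thm:reach-pumpable-of-ever-unbounded-d-1} that each segment must in fact bound enough coordinates (up to the dependency issue above) to cut three dimensions, or else reduce to the first bullet. The dimension count when $\dimcyc(V)\le 3$ is covered by the $\max\{0,\cdot\}$ and handled trivially.

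Finally, I collect the polynomial amplifications from \autoref{lem:non-wide-to-family-of-wide-seq}, \autoref{lem:poly-bound-for-almost-unbounded}, and \autoref{lem:reduction-for-bounded} into $P_d$. This is routine, since the exponent $(d+1)!$ is a constant for fixed $d$.
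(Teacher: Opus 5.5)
You handle the almost-unbounded case and the subcase where one pair stays bounded along all of $\pi$ correctly. You also correctly identify that the second bullet needs the two encoded coordinates of each segment to be non-collinear on $\CycleSpace(S)$.

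\textbf{The collinear case.} Your plan for collinear pairs does not work. Negatively collinear pairs are harmless, but for a run-independent reason. If $\Vec{u}(i)=\alpha\,\Vec{u}(j)$ with $\alpha<0$ for all transitions, then $\Vec{w}(i)+|\alpha|\,\Vec{w}(j)$ is invariant under transitions and changes only at the bridges. Both coordinates are therefore polynomially bounded on \emph{every} run, and encoding them once gives the first bullet (\autoref{lem:len-of-neg-col-vas-le-d-minus-2-vass}).

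Positively collinear pairs cannot be fixed by re-choosing coordinates. Suppose $\pi_1$ leaves exactly two coordinates $i,j$ below the threshold, with $c>0$ in $\Vec{u}(i)=c\,\Vec{u}(j)$. Encoding them yields only $\dimcyc(V_1)\le\dimcyc(V)-2$. Nothing forces $i,j$, or any other pair, to stay bounded during $\pi_2$, so you reach neither bullet.

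The missing idea is that such a segment gives pumpability rather than dimension reduction. By wideness some transition has positive effect on $i$, and by positive collinearity also on $j$. Hence the argument of \autoref{lem:tgt-pumpable-of-unbounded-d-1} goes through whenever all sub-threshold coordinates are pairwise positively collinear (\autoref{lem:cfg-pumpable-of-bounded-pos-col}). This forces you to replace the almost-bounded/almost-unbounded dichotomy by the quasi-bounded/quasi-unbounded one (\autoref{lem:quasi-bounded-or-unbounded}), which splits $\pi$ according to collinearity of the bounded sets. Quasi-unbounded runs then have polynomial length (\autoref{lem:poly-bound-for-quasi-unbounded}). In a quasi-bounded run each segment has a non-collinear bounded pair, so $\dimcyc(V_j)\le\dimcyc(S)-2\le\dimcyc(V)-3$.

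\textbf{The wide case.} If $V$ is wide, then $\dimcyc(S)=\dimcyc(V)$. In your different-pairs subcase, encoding two coordinates can never reach $\dimcyc(V)-3$. You also do not say how to get the first bullet or a short run instead.

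The paper drops the almost-bounded split here entirely, using that $V$ has a single state. If $\pi$ is not $2$-bounded, \autoref{thm:reach-pumpable-of-ever-unbounded-d-1} gives a pumpable configuration reachable from $\Vec{s}$. Appending its pumping cycle enough times, all in the same state, shows that $\Vec{s}$ itself is pumpable. Symmetrically, $\Vec{t}$ is backward pumpable, and \autoref{lem:lift-Z-run-by-pumpable} gives a polynomial-length run. Otherwise $\pi$ is $2$-bounded and you get the first bullet (\autoref{lem:len-wide-vas}).
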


This proposition is particularly favorable for low-dimensional VAS compared to \autoref{thm:len-d-plus-2-vas-le-len-vass-max-dim-d}. For example, \autoref{thm:len-d-plus-2-vas-le-len-vass-max-dim-d} bounds the length of shortest runs in 5-VAS by VASS of component-dimension $3$, for which we know nothing better than \autoref{thm:d-vass-in-fd}. On the other hand, \autoref{lem:len-d-vas-le-cases} allows us to rely on sharper results for 3-VASS \cite{DBLP:conf/icalp/CzerwinskiJ0O25} and VASS of component-dimension 2 \cite{DBLP:conf/icalp/FuYZ24}.

Before proving \autoref{lem:len-d-vas-le-cases}, we introduce the notion of \emph{collinear coordinates} in VASes. Let $V = (Q, T)$ be a $d$-VAS and $i, j \in [d]$. We say $i$ and $j$ are \emph{collinear (in $V$)} if there exists $\alpha \in \mathbb{Q}$ such that for every transition $\Vec{u} \in T$, we have $\Vec{u}(i) = \alpha \cdot\Vec{u}(j)$. Furthermore, $i$ and $j$ are \emph{positively collinear} if $\alpha > 0$ and \emph{negatively collinear} if $\alpha < 0$. We remark that by \autoref{lem:exists-sequential-vas-without-fixed-coordinate} we may assume all coordinates are not fixed, so we cannot have $\alpha = 0$ in the case where $i$ and $j$ are collinear. For a sequential VAS $S = V[\Vec{a}_1, \ldots, \Vec{a}_k]$, collinearity is defined with respect to $V$. We observe that negatively collinear coordinates are always bounded, and can be removed by encoding into states. Formally, we have the following lemma, whose proof is provided in the appendix.

\begin{restatable}{lemma}{lenOfNegColVAS}
    \label{lem:len-of-neg-col-vas-le-d-minus-2-vass}
    There is a polynomial $P$ such that every sequential $d$-VAS $(S, s, t)$, which contains a pair of negatively collinear coordinates, is length-bounded by a $(d-2)$-VASS $(V',s',t')$ with $P$-amplification, where $\dimcyc(V') \le \dimcyc(S)-1$.
\end{restatable}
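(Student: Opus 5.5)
Let $S = V[\Vec{a}_1,\ldots,\Vec{a}_k]$ with source $q_0(\Vec{x}_0)$ and target $q_k(\Vec{y}_0)$, and let $i,j$ be negatively collinear. Then there is $\alpha<0$ with $\Vec{u}(i)=\alpha\Vec{u}(j)$ for every transition $\Vec{u}$ of $V$. Write $\alpha=-a/b$ with $a,b$ positive integers. Since $|\Vec{u}(i)|,|\Vec{u}(j)|\le\norm{V}$ and neither coordinate is fixed, we can take $a,b\le\Size(S)$.

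The plan is to use the invariant $\phi(\Vec{x}):=b\Vec{x}(i)+a\Vec{x}(j)$. It is unchanged by every self-loop of $S$ and changes only along the bridges. Hence on any run, inside the state $q_\ell$ the value of $\phi$ is constant, equal to
\begin{equation}
\phi(\Vec{x}_0)+\textstyle\sum_{m\le \ell}\phi(\Vec{a}_m),
\end{equation}
and this is at most $2\Size(S)(\norm{\Vec{x}_0}+\Size(S))$. Since both coordinates are non-negative and $a,b\ge1$, every configuration on any run satisfies $\Vec{x}(i),\Vec{x}(j)\le B:=2\Size(S)(\norm{\Vec{x}_0}+\Size(S))$. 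This is polynomial in $\Size(S,s,t)$. So every run is $2$-bounded by $B+1$, uniformly, not just some witness.

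Next, apply the standard encoding as in Lemma~\ref{lem:reduction-for-bounded}, but built directly so the dimension drops by exactly two. The new states are pairs $(q_\ell,(c_i,c_j))$ with $c_i,c_j\in[0,B]$, and all other coordinates are kept. Each transition $(q_\ell,\Vec{u},q_{\ell'})$ of $S$ becomes a transition from $(q_\ell,(c_i,c_j))$ to $(q_{\ell'},(c_i+\Vec{u}(i),c_j+\Vec{u}(j)))$, provided both new values lie in $[0,B]$. Its effect is $\Vec{u}$ restricted to the other $d-2$ coordinates. Source and target are the encoded $s,t$. Runs correspond bijectively with equal lengths, so $\Len$ is preserved: the minimum bound holds, and so does the reachability equivalence. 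The size is $O(B^2\cdot\Size(S))$, which is polynomial; this gives $P$.

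The main point needing care is the bound $\dimcyc(V')\le\dimcyc(S)-1$. Consider any cycle $\theta$ in $V'$. Because the bridges in $S$ strictly advance $q_\ell$ to $q_{\ell+1}$, a cycle in $V'$ stays within a single $q_\ell$-layer. It therefore projects to a sequence of self-loops of $V$ that returns $(c_i,c_j)$ to its start. So $\theta$ lifts to a cycle of $S$ whose effect $\Vec{w}\in\CycleSpace(S)$ has $\Vec{w}(i)=0$, and its $V'$-effect is the projection of $\Vec{w}$. Thus $\CycleSpace(V')$ is the projection of a subspace $W\subseteq\CycleSpace(S)\cap\{\Vec{w}:\Vec{w}(i)=0\}$. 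Every vector of $\CycleSpace(S)=\Span(V)$ satisfies $\Vec{w}(i)=\alpha\Vec{w}(j)$, so vectors with $\Vec{w}(i)=0$ also have $\Vec{w}(j)=0$, and the projection is injective on $W$. Moreover $W\subsetneq\CycleSpace(S)$, because some transition has nonzero $i$-th entry (no fixed coordinates). Hence $\dimcyc(V')=\dim W\le\dimcyc(S)-1$.
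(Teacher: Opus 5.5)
Your proof is correct and follows essentially the same route as the paper. Both show that the two negatively collinear coordinates stay polynomially bounded along every run, encode them into the states while deleting them, and use the absence of fixed coordinates to get the strict drop in cycle-dimension; your layer-wise invariant $b\Vec{x}(i)+a\Vec{x}(j)$ is a direct version of the paper's contradiction argument via $\Vec{w}-\Vec{x}-\Vec{a}\in\Cone(V)$ (and correctly uses only the prefix of bridges), while your injectivity-of-projection step makes precise the paper's informal claim $\CycleSpace(V')\subsetneq\CycleSpace(S)$.
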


The core idea of proving \autoref{lem:len-d-vas-le-cases} is to improve the pumping condition \autoref{lem:tgt-pumpable-of-unbounded-d-1}. It states that in wide sequential VASes, pumpability always holds in a target configuration where at most one coordinate is bounded. This condition can be weakened to the point that all bounded coordinates are pairwise positively collinear.

\begin{restatable}{lemma}{cfgPumpableOfBoundedPosCol}
    \label{lem:cfg-pumpable-of-bounded-pos-col}
    Let $B$ be the polynomial given by \autoref{lem:tgt-pumpable-of-unbounded-d-1}, $S = V[\Vec{a}_1, \ldots, \Vec{a}_k]$ be a wide sequential $d$-VAS (containing no fixed coordinate) with source configuration $q_0(\Vec{x}_0)$. If there is a run $q_0(\Vec{x}_0) \xrightarrow{*} q(\Vec{x})$ for some state $q$ such that the coordinates in the set $\{i \in [d] \mid \Vec{x}(i) < B(\norm{\Vec{x}_0}, \Size(S))\}$ are pairwise positively collinear, then $q(\Vec{x})$ is pumpable in $S$.
\end{restatable}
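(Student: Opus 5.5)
The plan is to follow the proof of \autoref{lem:tgt-pumpable-of-unbounded-d-1} almost verbatim. The only change is in the first step, where a single self-loop raising the one possibly small coordinate is needed. Let $I := \{i \in [d] \mid \Vec{x}(i) < B(\norm{\Vec{x}_0}, \Size(S))\}$. If $I = \emptyset$ there is nothing new to do, so assume $I \ne \emptyset$ and fix some $j \in I$. Since the coordinates in $I$ are pairwise positively collinear, for every $i \in I$ there is $\alpha_i > 0$ with $\Vec{u}(i) = \alpha_i \Vec{u}(j)$ for every transition $\Vec{u}$ of $V$.

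First we find a transition $\Vec{t}$ of $V$ with $\Vec{t}(j) > 0$. Coordinate $j$ is not fixed, so some transition $\Vec{u}$ has $\Vec{u}(j) \ne 0$. Wideness gives $\Cone(V) = \CycleSpace(V)$, so $-\Vec{u} \in \Cone(V)$. Hence $\Vec{u}$ and $-\Vec{u}$ are both nonnegative combinations of transitions, because in a sequential VAS the cycles of $V$ are just multisets of self-loops. Therefore some transition $\Vec{t}$ has $\Vec{t}(j) > 0$. By positive collinearity, $\Vec{t}(i) = \alpha_i \Vec{t}(j) > 0$ for all $i \in I$, so $\Vec{t}(i) \ge 1$ for all $i \in I$ since entries are integers. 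This is the key point: one self-loop raises \emph{all} bounded coordinates simultaneously, which is exactly what positive collinearity buys us. (Negative collinearity would make this impossible.)

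Then we redo the computation of \autoref{lem:tgt-pumpable-of-unbounded-d-1} with $H := \norm{\Vec{x}_0} + 1 + \Size(S)$. We fire $\Vec{t}^H$ from $q(\Vec{x})$, which is legal because all intermediate points are convex combinations of the endpoints. For coordinates $i \notin I$ we have $\Vec{x}(i) \ge B = H(1+\Size(S))$, so $\Vec{x}(i) + H\Vec{t}(i) \ge H$. For $i \in I$ we have $\Vec{x}(i) + H\Vec{t}(i) \ge H$ because $\Vec{t}(i) \ge 1$. So $\Vec{x}' := \Vec{x} + H\Vec{t} \ge H\cdot\Vec{1}$.

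Next, as in that proof, we remove the bridges from the run $q_0(\Vec{x}_0) \xrightarrow{*} q(\Vec{x})$ to obtain a path $\rho' = \pi_0\cdots\pi_\ell$ of self-loops at $q$. We use two bounds from that proof. The first is $\Delta(\Vec{t}^H\rho') \ge \Vec{x}' - \Vec{x}_0 - \Size(S)\cdot\Vec{1} \ge \Vec{1}$. The second is $\Drop(\rho') \le (\norm{\Vec{x}_0} + \Size(S))\cdot\Vec{1} \le \Vec{x}'$. Together these show that the cycle $\Vec{t}^H\rho'$ is fireable from $q(\Vec{x})$ and has effect at least $\Vec{1}$. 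Hence $q(\Vec{x})$ is pumpable.

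The only step needing care is deriving a positive transition in coordinate $j$ from wideness and non-fixedness; the rest is verbatim from the earlier lemma.
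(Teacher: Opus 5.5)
Your proposal is correct and follows the paper's own argument. Like the paper, you obtain a self-loop that is positive in one bounded coordinate, use positive collinearity to make it positive on the whole bounded set, and then replay the computation of \autoref{lem:tgt-pumpable-of-unbounded-d-1}; your explicit derivation of that self-loop from non-fixedness together with $\Cone(V)=\CycleSpace(V)$ fills in the step the paper's sketch only asserts.
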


\begin{proof}[Proof Sketch]
    Let $K := \{i \in [d] \mid \Vec{x}(i) < B(\norm{\Vec{x}_0}, \Size(S))\}$. The proof is almost the same as that of \autoref{lem:tgt-pumpable-of-unbounded-d-1} with one difference: we need to find a transition $\Vec{t}$ in $V$ such that $\Vec{t}(i) > 0$ for all $i \in K$ in order to pump up coordinates in $K$. Since $V$ is wide, there must be a transition $\Vec{t}$ with $\Vec{t}(i) > 0$ for some $i \in K$. As every pair of coordinates in $K$ is positively collinear, the effect of $\Vec{t}$ on these coordinates must be positive simultaneously.
\end{proof}

In the following, we prove \autoref{lem:len-d-vas-le-cases} depending on whether the VAS is wide or not.

\subsection{Wide VAS}

First, we deal with wide VASes. Recall that \autoref{thm:len-d-plus-2-vas-le-len-vass-max-dim-d} was established for wide sequential VASes. There, we introduced the notions of almost boundedness and almost unboundedness to handle non-trivial bridges between VAS components. Since a wide VAS consists of only a single state, it suffices to distinguish runs based on whether they are $2$-bounded. This simplification yields the result that the shortest run length in a wide $d$-VAS can be bounded by that of a $(d-2)$-VASS, rather than a VASS of component-dimension $d-2$.

\begin{restatable}{lemma}{lenWideVAS}
    \label{lem:len-wide-vas}
    For each $d\in\mathbb{N}$ where $d\ge2$, there is a polynomial $M_d$ such that every wide $d$-VAS $(V,\Vec{s}, \Vec{t})$ with $\dimcyc(V)\ge1$ is length-bounded by a $(d-2)$-VASS $(V', s',t')$ with $M_d$-amplification, where $\dimcyc(V')\le \dimcyc(V)-1$.
\end{restatable}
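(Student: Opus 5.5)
We follow the template of the proof of \autoref{thm:len-d-plus-2-vas-le-len-vass-max-dim-d}, using that a wide VAS is a wide sequential VAS without bridges, i.e.\ with a single state. First we apply \autoref{lem:exists-sequential-vas-without-fixed-coordinate} to remove fixed coordinates. This lowers the dimension to some $d' \le d$ but does not change the cycle space. If $d' < d$, we pad with dummy coordinates at the end. Next, if $V$ contains a pair of negatively collinear coordinates, we invoke \autoref{lem:len-of-neg-col-vas-le-d-minus-2-vass} directly. It yields a $(d-2)$-VASS with $\dimcyc(V') \le \dimcyc(V) - 1$ and polynomial amplification, so we are done. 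Hence we may assume there is no negatively collinear pair.

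Fix a run $\Vec{s}\xrightarrow{\pi}\Vec{t}$ and set $U := U_d(\Size(V,\Vec{s},\Vec{t}))$, where $U_d$ is the polynomial of \autoref{lem:poly-bound-for-almost-unbounded}. Since $V$ has a single state, we split according to whether $\pi$ is $2$-bounded by $U$.

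\textbf{Case 1: $|\UBCounters(\pi,U)|\le d-2$.} Then $\pi$ is $2$-bounded. \autoref{lem:reduction-for-bounded} encodes the two bounded coordinates into states with amplification $U^2\cdot x$. The remaining counters form a $(d-2)$-VASS $V'$, not merely one of component-dimension $d-2$, because we delete the two bounded coordinates outright. For the cycle dimension, we argue as in the proof sketch of \autoref{lem:len-g-plus-1-vas-le-len-vass-max-dim-g}. Every cycle of $V'$ has zero net effect on each bounded coordinate $i$, viewed inside $\mathbb{Q}^d$, whereas some transition of $V$ moves $i$ because $i$ is not fixed. Hence the image of $\CycleSpace(V')$ is a proper subspace of $\CycleSpace(V)$. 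We must make sure that dropping the coordinates does not break this inclusion, so we compare the cycle spaces before projecting.

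\textbf{Case 2: $|\UBCounters(\pi,U)|\ge d-1$.} We want a short run, which is then trivially length-bounded by a $0$-dimensional chain VASS of size $U+1$. This chain is a $(d-2)$-VASS with zero effects, so $\dimcyc(V')=0\le\dimcyc(V)-1$ because $\dimcyc(V)\ge 1$. The main obstacle is that a single segment with $d-1$ large coordinates does not immediately make $\pi$ almost unbounded. Almost unboundedness needs a split point $\pi=\pi_1\pi_2$ with $d-1$ coordinates large on each side.

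To get both sides, we apply \autoref{thm:reach-pumpable-of-ever-unbounded-d-1} to $\pi$ forward, obtaining a short prefix to a forward pumpable configuration. We then apply it to $\pi^{\text{rev}}$ in $V^{\text{rev}}$, obtaining a short suffix from a backward pumpable configuration. The reversed VAS is still wide and has the same large coordinates along the run, so $\UBCounters(\pi^{\text{rev}},U)=\UBCounters(\pi,U)$. Equivalently, we take the trivial split $\pi_1=\pi$ and $\pi_2$ empty, but apply the theorem to the whole run in both directions. The two intermediate configurations are connected by a $\mathbb{Z}$-run, since both extractions only remove self-loops. \autoref{lem:lift-Z-run-by-pumpable} then turns this $\mathbb{Z}$-run into a polynomially short run. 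This mirrors the proof of \autoref{lem:poly-bound-for-almost-unbounded}, and the single state removes any issue about where the split occurs. Combining the polynomials gives $M_d$.
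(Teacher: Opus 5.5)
Your Case~1 is the paper's argument: encode two coordinates that stay below the threshold into the state, drop them, and observe that the cycle space shrinks because neither coordinate is fixed. Two minor points. The negatively-collinear detour is harmless but never used afterwards. The padding for $d'<d$ must be applied to the output $V'$, not to $V$, because re-adding fixed coordinates would break both the ``$i$ is not fixed'' step and pumpability.

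The genuine gap is in Case~2, in the claim that $\Vec{x}_1$ and $\Vec{x}_2$ are joined by a $\mathbb{Z}$-run ``since both extractions only remove self-loops''. Extracting from $\pi$ gives $\Vec{s}\xrightarrow{\rho_1}\Vec{x}_1\xrightarrow{*}_{\mathbb{Z}}\Vec{t}$, and extracting from $\pi^{\text{rev}}$ gives $\Vec{s}\xrightarrow{*}_{\mathbb{Z}}\Vec{x}_2\xrightarrow{\rho_2}\Vec{t}$. These do not compose. A $\mathbb{Z}$-run from $\Vec{x}_1$ to $\Vec{x}_2$ needs $\Vec{x}_2-\Vec{x}_1=\Delta(\pi)-\Delta(\rho_1)-\Delta(\rho_2)$ to be a non-negative integer combination of transitions. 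However, $\rho_1$ and $\rho_2$ are both subwords of the same $\pi$ and can share transition occurrences. If both retain most of $\pi$, you essentially need a $\mathbb{Z}$-run from $\Vec{t}$ back to $\Vec{s}$. In \autoref{lem:poly-bound-for-almost-unbounded} the gluing worked only because the two extractions act on disjoint pieces $\pi_1,\pi_2$ meeting at $q(\Vec{w})$. Extracting in both directions from all of $\pi$ loses this, and having a single state does not by itself restore it.

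The missing observation, which is what the paper uses, is that in a one-state VAS pumpability transfers along runs. Suppose $\Vec{s}\xrightarrow{\rho_1}\Vec{x}_1$ and a cycle $\theta$ with $\Delta(\theta)\ge\Vec{1}$ is fireable from $\Vec{x}_1$. Then $\rho_1\theta^{n}$ with $n>\norm{\Vec{s}}$ shows that $\Vec{s}$ itself is forward pumpable, and symmetrically $\Vec{t}$ is backward pumpable. Now \autoref{lem:lift-Z-run-by-pumpable} applies directly to $\pi$, viewed as a $\mathbb{Z}$-run from $\Vec{s}$ to $\Vec{t}$. This gives a run of length at most $L_d(\Size(V,\Vec{s},\Vec{t}))$ with no prefix, suffix or gluing.

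Alternatively, you can rescue your own version via wideness. Let $T$ be the transition set of $V$. For each $\Vec{a}\in T$ we have $-\Vec{a}\in\Cone(T)$, so clearing denominators writes $-\Vec{a}$ as a non-negative integer combination of transitions. Hence every integer combination of transitions, in particular $\Vec{x}_2-\Vec{x}_1$, is the effect of some $\mathbb{Z}$-run.
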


\begin{proof}[Proof Sketch]
    Let $P$ be the polynomial in \autoref{thm:reach-pumpable-of-ever-unbounded-d-1}, $M := \Size(V, \Vec{s}, \Vec{t})$. Pumpability is guaranteed if there is at most one coordinate bounded by $U := P(M, M)^{(d+1)!}$ along $\pi$. In this case, \autoref{lem:lift-Z-run-by-pumpable} bounds the length of shortest runs by a polynomial depending on $d$. On the other hand, if there are at least two coordinates bounded by $U$, we may encode these coordinates into states to obtain a $(d-2)$-VASS capturing this run. 
\end{proof}

\subsection{Non-Wide VAS}
In the case where a given $d$-VAS $V$ is non-wide, we still apply \autoref{lem:non-wide-to-family-of-wide-seq} to obtain a family of wide sequential VASes. We emphasize again the fact that the cycle-dimension of newly obtained sequential VASes is at most $\dimcyc(V)-1$. Compared with the result we want for \autoref{lem:len-d-vas-le-cases}, we still need to eliminate two coordinates or reduce the component-dimension by at least $2$. Built upon \autoref{lem:len-of-neg-col-vas-le-d-minus-2-vass}, we focus on those sequential VASes containing no negatively collinear pair of coordinates here, which leads to the following lemma.

\begin{restatable}{lemma}{lenWideSequentialVASFiner}
    \label{lem:len-wide-sequential-vas-finer}
    For each $d \in \mathbb{N}$, there is a polynomial $N_d$ such that
    every wide sequential $d$-VAS $(S, s, t)$ containing no negatively collinear pairs is length-bounded by a $d$-VASS $(V',s',t')$ with $N_d$-amplification, where $V'$ is comprised of two $d$-VASSes $V_1, V_2$ connected by a single transition, where $\dimcyc(V_1), \dimcyc(V_2) \le \max\{0, \dimcyc(S) - 2\}$.
\end{restatable}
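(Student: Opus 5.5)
We follow the structure of the proof of \autoref{thm:len-d-plus-2-vas-le-len-vass-max-dim-d}, but replace the pumping criterion of \autoref{lem:tgt-pumpable-of-unbounded-d-1} by the collinearity-based criterion of \autoref{lem:cfg-pumpable-of-bounded-pos-col}. Fix a run $s\xrightarrow{\pi}t$ in $S$ and put $M:=\Size(S,s,t)$. Since $S$ is wide and has no negatively collinear pairs, collinearity is an equivalence relation on $[d]$ whose classes consist of pairwise positively collinear coordinates. For a threshold $U$ (a polynomial in $M$, of the form $P(M,M)^{(d+1)!}$ as in \autoref{thm:reach-pumpable-of-ever-unbounded-d-1}), call a segment of $\pi$ \emph{class-unbounded} if the coordinates outside $\UBCounters(\cdot,U)$ lie in a single collinearity class. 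Otherwise it is \emph{class-bounded}: two non-collinear coordinates stay below $U$.

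\textbf{Step 1: extraction.} We first establish the analogue of \autoref{thm:reach-pumpable-of-ever-unbounded-d-1}. If a segment from the source is class-unbounded, then \autoref{lem:reach-unbounded-of-ever-unbounded} applied with target value $B(\norm{\Vec{x}_0},\Size(S))$ yields a short run, obtained by removing self-loops, to a configuration whose small coordinates lie in one positively collinear class. By \autoref{lem:cfg-pumpable-of-bounded-pos-col}, that configuration is pumpable. The reverse VAS is again wide and free of negatively collinear pairs, so the backward version follows symmetrically.

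\textbf{Step 2: dichotomy.} We prove an analogue of \autoref{lem:almost-bounded-or-unbounded}. Take the longest prefix $\pi_1$ that is class-bounded, so that appending one more step makes it class-unbounded, and split $\pi=\pi_1\pi_2$ there. One of two cases holds. In the first, both a prefix and the complementary suffix are class-unbounded; then Step 1 gives forward and backward pumpable configurations connected by a $\mathbb{Z}$-run, and \autoref{lem:lift-Z-run-by-pumpable} produces a polynomial-length run, exactly as in \autoref{lem:poly-bound-for-almost-unbounded}, which we realise as a trivial chain VASS. In the second, $\pi=\pi_1\pi_2$ with both parts class-bounded by $U+M$, using the single-step argument of \autoref{lem:almost-bounded-or-unbounded}.

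\textbf{Step 3: dimension reduction.} In the class-bounded case, we encode the two bounded non-collinear coordinates of $\pi_1$ into states, as in \autoref{lem:reduction-for-bounded}, to obtain $V_1$. We do the same for $\pi_2$ to obtain $V_2$, and join them by a single transition realising the switch point, as in \autoref{lem:dimension-reduction-for-almost-bounded-runs}. Every cycle of $V_1$ is then a cycle of $S$ with zero effect in coordinates $i$ and $j$. Because $i,j$ are not fixed and not collinear, the functionals $\Vec{e}_i$ and $\Vec{e}_j$ restricted to $\CycleSpace(S)$ are linearly independent. Hence the cycle space of $V_1$ lies in a subspace of codimension $2$, which gives $\dimcyc(V_1)\le\dimcyc(S)-2$; the same holds for $V_2$. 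The case $\dimcyc(S)\le 1$ is covered by the $\max\{0,\cdot\}$.

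\textbf{Main obstacle.} The main obstacle is Step 2. The old dichotomy used counts of unbounded coordinates, whereas ``class-unbounded'' is not monotone in a simple numeric way. I would handle this by applying the argument of \autoref{lem:almost-bounded-or-unbounded} to the greedy split point and checking that one extra step raises coordinates by at most $M$. Size bounds are polynomial throughout, since $U$ is polynomial for fixed $d$ and at most two coordinates are encoded per part.
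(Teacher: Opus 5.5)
Your proposal is correct and follows the paper's own route. Your Steps 1 and 2 are exactly the quasi-unbounded/quasi-bounded dichotomy (\autoref{lem:quasi-bounded-or-unbounded}) and the polynomial bound for quasi-unbounded runs (\autoref{lem:poly-bound-for-quasi-unbounded}) via \autoref{lem:cfg-pumpable-of-bounded-pos-col} and \autoref{lem:lift-Z-run-by-pumpable}, and your linear independence of $\Vec{e}_i,\Vec{e}_j$ on $\CycleSpace(S)$ is the paper's rank-$2$ argument for the submatrix $R$.

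The obstacle you flag in Step 2 disappears once you note two facts. The set of bounded coordinates only shrinks when a segment is extended and only grows when the threshold is raised. Moreover, ``contains a non-collinear pair'' is upward-closed and ``pairwise positively collinear'' is downward-closed under inclusion, which is exactly how the paper's greedy split goes through.
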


The proof resembles that of \autoref{thm:len-d-plus-2-vas-le-len-vass-max-dim-d} with one major difference: we need to make sure that the bounded coordinates in a bounded run are not collinear so that eliminating them can reduce the component-dimension by at least $2$. To begin with, we introduce two new notions. Let $B\in \mathbb{N}$. A run $\pi$ is said to be \emph{quasi-unbounded by $B$} if it can be splited into $\pi=\pi_1\pi_2$, where the coordinates in $[d]\setminus\UBCounters(\pi_i,B)$ are pairwise positively collinear for each $i=1,2$, and $\pi$ is \emph{quasi-bounded by $B$} if $\pi=\pi_1\pi_2$, where $[d]\setminus\UBCounters(\pi_i,B)$ contains at least one pair of non-collinear coordinates for each $i=1,2$. We have the following dichotomy lemma.

\begin{restatable}{lemma}{quasiBoundedOrUnbounded}
    \label{lem:quasi-bounded-or-unbounded}
    Let $(V,s,t)$ be a VASS with $s\xrightarrow{\pi}t$ and $B \in \mathbb{N}$. If $V$ contains no negatively collinear pairs, then $\pi$ is either quasi-unbounded by $B$, or quasi-bounded by $B+\Size(V)$.
\end{restatable}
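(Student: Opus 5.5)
The plan is to follow the proof pattern of \autoref{lem:almost-bounded-or-unbounded}: scan for the first prefix of $\pi$ that is ``good'' and split $\pi$ there.

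\textbf{Step 1 (collinearity is an equivalence relation).} Since $V$ has no fixed coordinates, collinearity is reflexive and symmetric. It is also transitive: if $\Vec{u}(i)=\alpha\Vec{u}(j)$ and $\Vec{u}(j)=\beta\Vec{u}(k)$ for all transitions $\Vec{u}$, then $\Vec{u}(i)=\alpha\beta\,\Vec{u}(k)$. Because there are no negatively collinear pairs, all such $\alpha$ are positive. Hence, for a set $K\subseteq[d]$, the statement ``the coordinates in $K$ are pairwise positively collinear'' is equivalent to ``$K$ is contained in a single collinearity class.'' The negation is ``$K$ contains a non-collinear pair.''

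\textbf{Step 2 (monotonicity).} For a run segment $\sigma$ and threshold $C$, write $K(\sigma,C):=[d]\setminus\UBCounters(\sigma,C)$, and call $\sigma$ \emph{$C$-good} if $K(\sigma,C)$ lies in one class. Two facts drive the argument.
\begin{itemize}
\item Enlarging a segment enlarges $\UBCounters$ and so shrinks $K$. Hence goodness is inherited by super-segments, and badness by sub-segments.
\item Raising the threshold from $C$ to $C'\ge C$ enlarges $K$. Hence $C$-badness implies $C'$-badness.
\end{itemize}

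\textbf{Step 3 (choosing the split).} Write $\pi$ with configurations $c_0,\dots,c_n$. Let $\pi_{\le j}$ be the prefix ending at $c_j$ and $\pi_{\ge j}$ the suffix starting at $c_j$, so the two share $c_j$. Three cases arise.
\begin{itemize}
\item \emph{No prefix is $B$-good.} Then $\pi$ itself is $B$-bad. Split $\pi=\pi\cdot\varepsilon$, where the empty run $\varepsilon$ sits at $t$ and is a sub-segment of $\pi$, so it is also bad. Raising the threshold, both parts are $(B+\Size(V))$-bad, and $\pi$ is quasi-bounded.
\item \emph{Some prefix is $B$-good.} Let $j^*$ be the least index with $\pi_{\le j^*}$ $B$-good, and split $\pi=\pi_{\le j^*}\pi_{\ge j^*}$. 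If $\pi_{\ge j^*}$ is $B$-good, $\pi$ is quasi-unbounded by $B$.
\item \emph{Otherwise $\pi_{\ge j^*}$ is $B$-bad.} Then $j^*\ge 1$: if $j^*=0$, the suffix $\pi_{\ge 0}=\pi$ contains the good prefix $\pi_{\le 0}$ and would be good by monotonicity. The suffix is also $(B+\Size(V))$-bad. For the prefix, any coordinate reaching at least $B+\Size(V)$ on $\pi_{\le j^*}$ reached at least $B$ on $\pi_{\le j^*-1}$, since one transition changes each coordinate by at most $\norm{T}\le\Size(V)$. So
\begin{equation}
K(\pi_{\le j^*},B+\Size(V))\supseteq K(\pi_{\le j^*-1},B).
\end{equation}
The right-hand side contains a non-collinear pair by minimality of $j^*$. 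Hence both parts are $(B+\Size(V))$-bad, and $\pi$ is quasi-bounded by $B+\Size(V)$.
\end{itemize}

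\textbf{Main obstacle.} The only delicate point is Step 1. ``Non-collinear pair'' is exactly the negation of ``contained in one positively collinear class'' only because collinearity is transitive, all coefficients are positive (no negatively collinear pairs), and none are zero (no fixed coordinates). Once that is in place, the argument is the same one-step threshold-shift used for \autoref{lem:almost-bounded-or-unbounded}.
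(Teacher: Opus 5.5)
Your proof is correct and is essentially the paper's argument. The paper assumes $\pi$ is not quasi-unbounded, takes the longest prefix $\pi_1$ whose $B$-bounded coordinates contain a non-collinear pair, and splits as $(\pi_1 t)\pi_2$ — exactly your split at the shortest $B$-good prefix $\pi_{\le j^*}$, with the same one-transition shift from $B$ to $B+\Size(V)$. Your Step~1 (transitivity) is not actually needed: the negation of ``pairwise positively collinear'' is directly ``some pair is not positively collinear'', which, with no fixed coordinates and no negatively collinear pairs, already means a non-collinear pair, and your monotonicity facts only use that subsets of a pairwise positively collinear set stay pairwise positively collinear.
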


With the help of \autoref{lem:cfg-pumpable-of-bounded-pos-col}, we can extract pumpability from quasi-unbounded runs. Hence, the following polynomial length bound on quasi-unbounded runs can be obtained. Its proof is just a replay of that of \autoref{lem:poly-bound-for-almost-unbounded}.

\begin{restatable}{lemma}{polyBoundForQuasiUnbounded}
    \label{lem:poly-bound-for-quasi-unbounded}
    For every $d \in \mathbb{N}$, there exists a polynomial $U_d$ such that, given a wide sequential $d$-VAS $(S,s,t)$ with $s \xrightarrow{\pi}t$ in $S$, if $\pi$ is quasi-unbounded by $U_d(\Size(S, s, t))$, then there exists a run $s \xrightarrow{\rho} t$ of length $|\rho|\le U_d(\Size(S, s, t))$.
\end{restatable}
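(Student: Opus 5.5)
The plan is to replay the proof of \autoref{lem:poly-bound-for-almost-unbounded} essentially verbatim, with \autoref{thm:reach-pumpable-of-ever-unbounded-d-1} replaced by a variant whose hypothesis is quasi-unboundedness. Let $M := \Size(S,s,t)$ and $\pi = \pi_1\pi_2$ with $s \xrightarrow{\pi_1} q(\Vec{w}) \xrightarrow{\pi_2} t$ the split witnessing quasi-unboundedness by $U_d(M)$. In each segment, the coordinates never exceeding the threshold are pairwise positively collinear.

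\textbf{Step 1: collinear variant of \autoref{thm:reach-pumpable-of-ever-unbounded-d-1}.} Define $P(\norm{\Vec{x}_0},\Size(S)) := R(B(\norm{\Vec{x}_0},\Size(S)),\Size(S))$ exactly as before. Apply \autoref{lem:reach-unbounded-of-ever-unbounded} to $\pi_1$ with $U := B(\norm{s},\Size(S))$. This produces a run $\rho_1$, obtained from $\pi_1$ by deleting cycles (hence self-loops, as $S$ is sequential), with $|\rho_1| \le N := P(M,M)^{(d+1)!}$. Its target $q_1(\Vec{x}_1)$ satisfies $\Vec{x}_1(i) \ge B$ for every $i \in \UBCounters(\pi_1, N)$. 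Since $U_d(M) \ge N$, we have $\UBCounters(\pi_1, U_d(M)) \subseteq \UBCounters(\pi_1, N)$. Hence the coordinates of $\Vec{x}_1$ below $B$ lie in $[d]\setminus\UBCounters(\pi_1,U_d(M))$, and a subset of a pairwise positively collinear set is still pairwise positively collinear. \autoref{lem:cfg-pumpable-of-bounded-pos-col} then shows that $q_1(\Vec{x}_1)$ is forward pumpable, and removing self-loops yields a $\mathbb{Z}$-run $q_1(\Vec{x}_1) \xrightarrow{*}_{\mathbb{Z}} q(\Vec{w})$.

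\textbf{Step 2: the symmetric statement for $\pi_2$.} Apply the same argument to $\pi_2^{\text{rev}}$ in $S^{\text{rev}}$. The reverse of a wide sequential VAS is again wide and sequential: the cone of $-T$ is the negation of the cone of $T$, so equality with the span is preserved. Collinearity of coordinates, including its sign, is also unchanged under negating all transitions. This gives $q(\Vec{w}) \xrightarrow{*}_{\mathbb{Z}} q_2(\Vec{x}_2) \xrightarrow{\rho_2} t$ with $|\rho_2|\le N$ and $q_2(\Vec{x}_2)$ backward pumpable.

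\textbf{Step 3: bridging.} Concatenate the two $\mathbb{Z}$-runs and restrict to the sequential sub-VAS $S'$ between states $q_1$ and $q_2$. Its size is at most $M + 2NM$, since the norms of $\Vec{x}_1$ and $\Vec{x}_2$ are bounded by $M + NM$. Forward and backward pumpability are preserved in $S'$, because the pumping runs use only self-loops at $q_1$ and $q_2$ respectively. Now \autoref{lem:lift-Z-run-by-pumpable} gives a genuine run of length at most $L_d(M+2NM)$. Taking $U_d(x) := 2P(x,x)^{(d+1)!} + L_d(2xP(x,x)^{(d+1)!}+x)$ finishes the proof.

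The only point needing care is Step 1. The threshold used in the quasi-unboundedness hypothesis ($U_d(M)$) must dominate the threshold at which Rackoff's extraction guarantees coverage ($N$). Otherwise some coordinate could be "unbounded" in the hypothesis yet fail to be covered in $\Vec{x}_1$, breaking collinearity of the low coordinates. The monotonicity of $\UBCounters$ in its threshold resolves this, and the rest is routine.
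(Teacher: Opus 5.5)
Your proposal is correct and follows essentially the same route as the paper. It applies Rackoff's extraction (\autoref{lem:reach-unbounded-of-ever-unbounded}) together with \autoref{lem:cfg-pumpable-of-bounded-pos-col} to each half of the split, runs the reversed argument for $\pi_2$, and closes with \autoref{lem:lift-Z-run-by-pumpable} using the same polynomial $U_d$; your explicit handling of threshold monotonicity, $\UBCounters(\pi_1,U_d(M))\subseteq\UBCounters(\pi_1,N)$, is in fact stated more carefully than in the paper's write-up.
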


As for the quasi-bounded case, one can repeat the dimension reduction technique in \autoref{lem:dimension-reduction-for-almost-bounded-runs}, which captures the two parts $\pi = \pi_1\pi_2$ separately by encoding into states the two bounded coordinates. Quasi-boundedness ensures the existence of two bounded coordinates that are not collinear. Therefore, the cycle-dimension drops by $2$ after the encoding. 

Now we can establish the finer characterization in \autoref{lem:len-d-vas-le-cases}.
\begin{proof}[Proof of \autoref{lem:len-d-vas-le-cases}]
    Consider a $d$-VAS $(V,\Vec{s},\Vec{t})$ with $\Vec{s}\xrightarrow{*}\Vec{t}$ and let $M:=\Size(V,\Vec{s},\Vec{t})$. 
    We may abbreviate the two cases in the lemma as the ``first characterization'' ($(V, \Vec{s}, \Vec{t})$ is length-bounded by a $(d-2)$-VASS $(V', s', t')$ with $\dimcyc(V') \le \dimcyc(V) - 1$) and the ``second characterization'' ($(V, \Vec{s}, \Vec{t})$ is length-bounded by a $d$-VASS $(V', s', t')$ such that $V'$ is comprised of two $d$-VASSes $V_1, V_2$ connected by a single transition, where $\dimcyc(V_1), \dimcyc(V_2) \le \max\{0, \dimcyc(V) - 3\}$), respectively. 
    
    The case where $\dimcyc(V)=0$ falls naturally into the second characterization. Now assume that $\dimcyc(V)\ge 1$. If $V$ is wide itself, then by \autoref{lem:len-wide-vas}, we have the first characterization. Otherwise, applying \autoref{lem:non-wide-to-family-of-wide-seq}, we obtain a wide sequential $d'$-VAS $(S,s,t)$ where $d'\le d$ and $\dimcyc(S)\le \dimcyc(V)-1$, such that $\Size(S,s,t)=W_d(M)$, $s\xrightarrow{*}t$ in $S$, and $\min\Len(V,\Vec{s},\Vec{t})\le \min\Len(S,s,t)$. If $S$ contains a pair of negatively collinear coordinates, the first characterization follows from \autoref{lem:len-of-neg-col-vas-le-d-minus-2-vass}. Now we may assume that $S$ contains no negatively collinear pairs. Then by \autoref{lem:len-wide-sequential-vas-finer}, there exists a $d'$-VASS $(V',s',t')$ with $\Size(V', s', t') \le N_d(W_d(M))$ such that $s'\xrightarrow{*}t'$ and $\min \Len(V, \Vec{s}, \Vec{t}) \le \min \Len(V, \Vec{s}, \Vec{t})$, where $N_d$ is given in \autoref{lem:len-wide-sequential-vas-finer}. Moreover, 
    $V'$ is comprised of two $d$-VASSes $V_1, V_2$ connected by a single transition, where $\dimcyc(V_1), \dimcyc(V_2) \le \max\{0, \dimcyc(S) - 2\} \le \max\{0, \dimcyc(V) - 3\}$
    . Thus, the second characterization holds. The polynomial $P_d$ is determined implicitly in the above argument.
\end{proof}

\section{Low Dimensional VAS}
\label{sec:low-dim-vas}

In this section, we apply the fine-grained characterization of VAS runs (\autoref{lem:len-d-vas-le-cases}) to prove better complexity bounds for 4- and 5-VAS:

\begin{theorem}\label{thm:5-vas-4-vas-upper-bound}
    Reachability in 5-VAS is in $\ELEM$. Reachability in 4-VAS is in $\PSPACE$ under binary encoding and is $\NL$-complete under unary encoding.
\end{theorem}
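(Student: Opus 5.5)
We will derive the theorem from \autoref{lem:len-d-vas-le-cases}, treating separately the two shapes of the VASS $(V',s',t')$ it produces. Since $(V',s',t')$ has polynomial size in $\Size(V,\Vec{s},\Vec{t})$ and length-bounds the original instance, it suffices to bound $\min\Len(V',s',t')$ or decide reachability in $V'$ within the target complexity class. Reachability is preserved by the length-bounding definition, so deciding reachability in $V'$ decides the original instance.

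\textbf{5-VAS.} With $d=5$ we have $\dimcyc(V)\le 5$. In the first case $V'$ is a $3$-VASS, and the result of \cite{DBLP:conf/icalp/CzerwinskiJ0O25} gives an elementary bound on shortest runs. In the second case $V'$ consists of two VASSes $V_1,V_2$, each of cycle-dimension at most $2$, joined by one transition. We guess the intermediate configurations $u$ before the bridge and $u'$ after it, and solve $s'\to u$ in $V_1$ and $u'\to t'$ in $V_2$. For geometrically $2$-dimensional VASS, \cite{DBLP:conf/icalp/FuYZ24,DBLP:conf/concur/Zheng25} give an elementary bound on shortest runs. The point needing care is the norm of $u$: we must ensure some witness has an intermediate configuration of elementary norm. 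This follows because the whole run of $V'$ is a run in a VASS of component-dimension $\le 2$, and \autoref{thm:d-vass-in-fd}'s ranking argument (or the results above applied to the concatenation) bounds the total length elementarily. Alternatively, we bound $\norm{u}$ by $\norm{s'}$ plus $\norm{T}$ times the length of the first segment. An elementary length bound then yields an elementary-space enumeration.

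\textbf{4-VAS.} With $d=4$, the first case gives a $2$-VASS, whose reachability is \PSPACE{} in binary and \NL{} in unary by \cite{DBLP:journals/jacm/BlondinEFGHLMT21}, with shortest runs of exponential, respectively polynomial, length. In the second case, $\dimcyc(V_i)\le 1$. We prove a polynomial (unary) or exponential (binary) run-length bound for geometrically $1$-dimensional components, ideally via the projection technique this paper develops for geometrically $2$-dimensional VASS, which reduces such VASSes to $2$-VASS (indeed $\le 1$-dimensional) of comparable size. We expect this projection step to be the main obstacle: it must preserve reachability, keep the size polynomial, and work under both encodings.

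Under unary encoding the amplification $P_4$ is polynomial, so every relevant run is polynomially long in unary size. We then simulate it nondeterministically in logspace, storing counters of polynomial value with logarithmically many bits. Membership in \NL{} is not immediate, however: the reduction of \autoref{lem:len-d-vas-le-cases} is not itself logspace. We therefore argue directly that a $4$-VAS instance has a witness of polynomial length and guess it on the fly. \NL{}-hardness follows from graph reachability, encoded already in low-dimensional VAS. Under binary encoding we use the unary bound after inflating sizes exponentially: this gives exponential length bounds and hence a \PSPACE{} procedure that guesses the run step by step with polynomially many bits per counter.
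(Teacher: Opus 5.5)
Your overall plan (apply \autoref{lem:len-d-vas-le-cases}, bound shortest runs in each case, then search by brute force in the original VAS) is the paper's. However, the second case, where $V'$ consists of $V_1,V_2$ joined by a single transition, is not actually handled. Length bounds for the two pieces do not compose: a bound on $\min\Len(V_1,s',u)$ grows with $\norm{u}$, and nothing a priori bounds the intermediate configuration of some witness. In the 4-VAS part you bound each geometrically 1-dimensional piece separately and then simply assert that every relevant run is polynomially long. The missing observation is that no cycle of $V'$ crosses the connecting transition. Hence $\CycleSpace(V')=\CycleSpace(V_1)+\CycleSpace(V_2)$ and $\dimcyc(V')\le 1+1=2$, so $V'$ as a whole is geometrically 2-dimensional. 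The paper applies \autoref{lem:geo-2d-to-2d-poly} to all of $V'$ and obtains a polynomial bound on $\min\Len(V',s',t')$ from \autoref{cor:min-len-le-poly-of-geo-2d}, with no intermediate configuration to control. Since the projection is only used to derive a length bound in terms of the unary size, the encoding concerns you raise about it do not arise.

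In the 5-VAS part you do identify the problem, but none of your three fixes works as written. First, \autoref{thm:d-vass-in-fd} requires $m\ge 3$ and at best gives $F_3$, which is not elementary; extracting an elementary bound for component-dimension $2$ from \cite{DBLP:conf/icalp/FuYZ24} is exactly the argument that must be supplied. Second, the results for geometrically 2-dimensional VASS cannot be applied to the concatenation, whose cycle-dimension can exceed $2$. Third, bounding $\norm{u}$ via the length of the first segment is circular, since the available length bound for that segment is itself a function of $\norm{u}$. What is needed is a relational characterization rather than per-piece length bounds. The paper reduces $V'$ to a chain of strongly connected geometrically 2-dimensional pieces and describes reachability in each by a polynomial-size integer linear system, via the linear path schemes of \cite[Theorem 3.4, Lemma 4.2]{DBLP:conf/icalp/FuYZ24}. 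It then conjoins these systems, which links the intermediate configurations, and applies Pottier's bound \cite{pottier1991minimal} to get a run of length exponential in $\Size(V,\Vec{s},\Vec{t})$. A minor point: \NL{}-hardness needs a logspace encoding of graph reachability into a stateless VAS of fixed dimension. The paper obtains this from the Hopcroft--Pansiot simulation of a $0$-VASS by a $3$-VAS \cite{DBLP:journals/tcs/HopcroftP79}.
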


The $\ELEM$ upper bound of 5-VAS is almost immediate. With the help of \autoref{lem:len-d-vas-le-cases}, the length of the shortest runs in a 5-VAS is bounded by that in a 3-VASS or a sequence of two geometrically 2-dimensional VASSes of polynomial size amplification. For the 3-VASS, we have a recent triply-exponential bound \cite[Lemma 2]{DBLP:conf/icalp/CzerwinskiJ0O25}; While for the sequence of two geometrically 2-dimensional VASSes, we can convert each one into linear-path schemes \cite[Theorem 3.4]{DBLP:conf/icalp/FuYZ24}. The shortest runs in linear-path schemes are characterized by systems of linear inequalities and thus enjoy an exponential length bound (in terms of unary-encoded size). These analyses sum up to the following lemma, whose proof is deferred to the appendix.

\begin{restatable}{lemma}{lenFourVASLeThreeEXP}
    \label{lem:len-4-vas-le-3-exp}
    In a 5-VAS $(V,\Vec{s},\Vec{t})$, if $\Vec{s}\xrightarrow{*}\Vec{t}$, then $\Vec{s}\xrightarrow{\rho}\Vec{t}$ for some $\rho$ with $|\rho|\le 2^{2^{2^{\Size(V,\Vec{s},\Vec{t})^{O(1)}}}}$.
\end{restatable}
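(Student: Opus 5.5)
We apply \autoref{lem:len-d-vas-le-cases} with $d=5$ to $(V,\Vec{s},\Vec{t})$, assuming $\Vec{s}\xrightarrow{*}\Vec{t}$. This yields a VASS $(V',s',t')$ with $\Size(V',s',t')\le P_5(\Size(V,\Vec{s},\Vec{t}))$, $s'\xrightarrow{*}t'$, and $\min\Len(V,\Vec{s},\Vec{t})\le\min\Len(V',s',t')$. It therefore suffices to bound $\min\Len(V',s',t')$ by a triple exponential in $\Size(V',s',t')^{O(1)}$. Since $P_5$ is a polynomial, composing it with such a bound keeps the same shape. We handle the two alternatives of the proposition separately.

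\textbf{First alternative: $V'$ is a $3$-VASS.} Here we invoke the triply-exponential length bound for shortest reachability witnesses in 3-VASS from \cite[Lemma 2]{DBLP:conf/icalp/CzerwinskiJ0O25}. It gives $\min\Len(V',s',t')\le 2^{2^{2^{\Size(V',s',t')^{O(1)}}}}$. The only care needed is to check that this lemma is stated for unary-encoded size, or that its binary version is still triply exponential in our $\Size$. Either way the bound is of the required form.

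\textbf{Second alternative: two geometrically $2$-dimensional pieces.} Now $V'$ consists of $d$-VASSes $V_1,V_2$ joined by a single transition $\tau$, with $\dimcyc(V_i)\le 2$. Any run from $s'$ to $t'$ splits as $s'\xrightarrow{\pi_1}u\xrightarrow{\tau}u'\xrightarrow{\pi_2}t'$, where $\pi_1$ lies in $V_1$ and $\pi_2$ in $V_2$.
\begin{enumerate}
\item By \cite[Theorem 3.4]{DBLP:conf/icalp/FuYZ24}, each geometrically 2-dimensional VASS $V_i$ has its reachability relation captured by finitely many linear path schemes. Each scheme has size at most exponential in $\Size(V_i)$, possibly with larger exponents.
\item We pick a scheme $\Lambda_1$ for $V_1$ covering $\pi_1$ and a scheme $\Lambda_2$ for $V_2$ covering $\pi_2$. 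Concatenating them with $\tau$ gives one linear path scheme $\Lambda_1\tau\Lambda_2$ for $V'$ that contains a run from $s'$ to $t'$.
\item Runs of a linear path scheme $\alpha_0\beta_1^{*}\alpha_1\cdots\beta_m^{*}\alpha_m$ from $s'$ to $t'$ are described by a system of linear inequalities in the iteration counts $n_1,\dots,n_m$. Non-negativity only needs checking at the start and end of each loop block, by convexity.
\item Standard small-solution bounds for integer linear programs (Pottier, or Papadimitriou) give a solution with counts at most exponential in the size of the system, hence at most doubly exponential in $\Size(V',s',t')$. This is comfortably within the triple-exponential bound.
\end{enumerate}

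The main obstacle is bookkeeping the size blow-ups of the cited results. First, how large the linear path schemes of \cite{DBLP:conf/icalp/FuYZ24} are in terms of the unary size of $V_i$. Second, whether intermediate configurations such as $u,u'$ must be quantified over, which is avoided because the concatenated scheme is solved as a single linear system. I would first restate the exact size guarantees of the two cited lemmas. If they are in unary size, the composition above gives at most a triple exponential in $\Size(V,\Vec{s},\Vec{t})^{O(1)}$, matching the lemma.
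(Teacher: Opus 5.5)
Your proposal is correct and follows essentially the same route as the paper. You reduce via \autoref{lem:len-d-vas-le-cases}, invoke \cite[Lemma 2]{DBLP:conf/icalp/CzerwinskiJ0O25} in the 3-VASS case, and otherwise combine the linear-path-scheme/integer-programming characterization of \cite{DBLP:conf/icalp/FuYZ24} for the geometrically 2-dimensional pieces into a single system solved with a Pottier-type small-solution bound. The only difference is bookkeeping: the paper reads off from \cite[Theorem 3.4, Lemma 4.2]{DBLP:conf/icalp/FuYZ24} a polynomial-size system over the strongly connected components, giving a single-exponential bound there, whereas your more pessimistic size estimate gives a doubly exponential one, which is still well within the required triple exponential.
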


\begin{proof}[Proof of \autoref{thm:5-vas-4-vas-upper-bound}, for 5-VAS]
    Enumerating all paths of triply-exponential length (in terms of unary-encoded size) can be done in $3$-\textsf{EXPSPACE} (in terms of binary-encoded size). Therefore, we obtain the \ELEM{} upper bound for 5-VAS.
\end{proof}

As for the reachability problem in $4$-VAS, we can also apply \autoref{lem:len-wide-sequential-vas-finer} to capture the shortest runs by either a $2$-VAS or a sequence of two geometrically 1-dimensional VASSes. 
On the one hand, 2-VAS inherits the $\PSPACE$ upper bound from 2-VASS \cite{DBLP:journals/jacm/BlondinEFGHLMT21}. On the other hand, we are not able to apply the linear-path schemes characterization to the sequence of geometrically 1-dimensional VASSes, as it yields merely a doubly-exponential bound in terms of binary-encoded size. Fortunately, the sequence of two geometrically 1-dimensional VASSes form a geometrically $2$-dimensional one as a whole. In \autoref{sec:low-dim-vas-geo-2d-vass}, we will show the following conversion from a geometrically $2$-dimensional VASS to a real 2-VASS:

\begin{restatable}{lemma}{lemGeoTwoDToTwoDPoly}
    \label{lem:geo-2d-to-2d-poly}
    There is a polynomial $H$ such that, for every geometrically 2-dimensional $d$-VASS $(V, s, t)$, there exists a family $\mathcal{V}$ of 2-VASSes $(V', s', t')$ such that $(2d+1) \cdot \Len(V, s, t) = \bigcup_{(V', s', t') \in \mathcal{V}} \Len(V', s', t')$, and $\Size(V', s', t') \le H(\Size(V, s, t))$.
\end{restatable}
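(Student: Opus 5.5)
The plan is to reduce to a single strongly connected component, then project each component onto a plane in which only two nonnegativity constraints need to be checked by counters; all other constraints are either automatically satisfied or handled by a bounded amount of information kept in the control state.

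\textbf{Step 1 (splitting into components).} First I would enumerate the sequence of strongly connected components visited by a run, together with the connecting transitions. There are at most $|Q|$ components on such a sequence. Each choice contributes one member of the family $\mathcal{V}$, and the components of the 2-VASS will be glued by single transitions.

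\textbf{Step 2 (affine coordinates on each component).} Inside one component $C$, any two paths $p\to q$ differ in effect by an element of $\CycleSpace(C)$, since a path $q\to p$ closes them into cycles. Fix a base point at a state $p_C$ and a short path from $p_C$ to every state $q$ of $C$, with effect $\Vec{c}_q$. Then every configuration $q(\Vec{x})$ reachable inside $C$ from the entry configuration satisfies $\Vec{x}-\Vec{x}_{\text{in}}-\Vec{c}_q\in \CycleSpace(C)\cap\mathbb{Z}^d$. This set is a lattice of rank at most $2$.

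Choose two coordinates $i,j$ on which the projection of $\CycleSpace(C)$ is injective. Then every other coordinate $\ell$ is an affine function $\Vec{x}(\ell)=\alpha_\ell\Vec{x}(i)+\beta_\ell\Vec{x}(j)+\gamma_{\ell,q}$, where the rational coefficients have polynomially bounded numerators and denominators. The lower-dimensional cases ($\dimcyc(C)\le 1$) are degenerate versions of the same argument.

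The region $\{\Vec{x}\ge\Vec 0\}$ restricted to this affine plane is a polygon cut out by $d$ half-planes. Away from a bounded core, it coincides with a cone bounded by two extreme constraints. I would choose $i,j$ to be exactly these two extreme coordinates, rather than arbitrary ones. Then, once $(\Vec{x}(i),\Vec{x}(j))$ is large, every other coordinate is automatically nonnegative. The points where some other constraint may be violated lie in strips along the two boundary rays, near the edges $\Vec{x}(i)=0$ and $\Vec{x}(j)=0$. Along such a strip, the offending coordinate differs from $\Vec{x}(i)$ (resp.\ $\Vec{x}(j)$) by a bounded amount that depends only on the other counter, provided that counter is small.

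\textbf{Step 3 (building the 2-VASS and padding).} The 2-VASS keeps $\Vec{x}(i),\Vec{x}(j)$ as counters. In the state it keeps the current state of $C$, together with residues and small offsets modulo the denominators. Whenever a counter is below a polynomial threshold, it also keeps the finitely many possibly-violated coordinates, exactly as in \autoref{lem:reduction-for-bounded}. This keeps the size polynomial under unary encoding.

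To make the length relation exact, each original transition is simulated by a gadget of exactly $2d+1$ steps. The gadget performs the update of the two counters, and uses one pair of steps per coordinate $\ell\in[d]$ to check or refresh its nonnegativity status. This is done either by a zero-effect step when $\ell$ is implied, or by a decrement-then-increment test of size $1$ on the relevant counter, whose decrement forces the counter to be positive when the constraint is tight. The remaining step carries the state change. This yields $(2d+1)\cdot\Len(V,s,t)=\bigcup\Len(V',s',t')$. For the reverse inclusion, every run of a 2-VASS in the family is decoded back into a run of $V$, using the invariant that the recovered $d$-vector is nonnegative.

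\textbf{Main obstacle.} The hard step is the second half of Step 2: showing that the checks of all other coordinates can be implemented with only polynomially much finite-state information, uniformly over the whole polygon. This needs a careful bound on the width of the boundary strips in terms of $\Size(V)$, and an argument that the threshold switching between the "bounded, tracked in the state" regime and the "implied" regime happens without loss of runs. For that argument I would first try the affine relations together with the extreme-ray choice of $i,j$.
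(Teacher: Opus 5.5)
\textbf{There is a gap: your Step~2 only works when $\CycleSpace(V)$ is projective, and the non-projective case is not handled.} You claim that the region $\{\Vec{x}\ge\Vec 0\}$ in the affine plane is, away from a bounded core, a two-dimensional cone bounded by two coordinate constraints. That is equivalent to the cone $\{\Vec c\in\CycleSpace(V)\mid \Vec c\ge\Vec 0\}$ being two-dimensional, which by \autoref{lem:2d-space-projective-iff} is exactly projectivity. In the non-projective case this cone is a ray or $\{\Vec 0\}$, and your choice of $i,j$ breaks down.

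Take the VAS with transitions $\pm(1,0,0)$ and $\pm(0,1,-1)$.
\begin{itemize}
\item Here $\Vec x(2)+\Vec x(3)=N$ is invariant, so the feasible region is a half-strip.
\item Its two unbounded edges come from $\Vec x(2)\ge 0$ and $\Vec x(3)\ge 0$. Projecting onto $\{2,3\}$ kills $(1,0,0)$, so the one unbounded counter is lost.
\item With an injective pair such as $\{1,2\}$ instead, the constraint $\Vec x(3)\ge 0$ becomes the upper bound $\Vec x(2)\le N$. No decrement-then-increment test can check an upper bound.
\end{itemize}

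The missing idea is how the paper treats this case. A non-projective two-dimensional space has a nonzero normal vector $\Vec n\ge\Vec 0$ with $|\Supp(\Vec n)|\le 3$ and $\norm{\Vec n}\le 2\Size(V)^2$ (\autoref{lem:normal-vec-of-non-projective}). For every run $p(\Vec x)\xrightarrow{*}q(\Vec y)$ one has $\Inner{\Vec n,\Vec y}=\Inner{\Vec n,\Vec x+\Vec z}$, where $\Vec z$ is the effect of a simple path. Since $\Vec n,\Vec y\ge\Vec 0$, the coordinates in $\Supp(\Vec n)$ are therefore polynomially bounded on \emph{every} run from $s$, not merely near a boundary strip. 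They can be encoded into the states globally, which drops the cycle dimension. An isolated dummy self-loop then restores a projective two-dimensional cycle space.

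\textbf{Even in the projective case, your ``main obstacle'' is left open, and the threshold mechanism you sketch does not work as described.} A VASS cannot detect when a counter falls back below a threshold. Size-$1$ tests alone cannot certify constraints such as $\Vec x(i)+\Vec x(j)\ge 5$.

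The paper avoids thresholds and strips entirely:
\begin{itemize}
\item A sign-reflecting pair $I=\{i,j\}$ yields canonical axes $\overline{\Vec u},\overline{\Vec v}\ge\Vec 0$.
\item So each remaining constraint $\overline{\Vec u}(\ell)\Vec x(i)+\overline{\Vec v}(\ell)\Vec x(j)+\delta(\ell)\ge 0$ is upward closed in $\mathbb{N}^2$. It has at most $|\delta(\ell)|+1$ minimal points, each of norm at most $|\delta(\ell)|$.
\item It is therefore tested exactly by a two-step gadget that subtracts a minimal point and adds it back.
\item The shift $\delta$ is polynomially bounded and, on a fixed run, depends only on the state. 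Hence the family $\mathcal V$ is indexed by functions $f:Q\to[-6\Size(V)^3,6\Size(V)^3]^d$.
\end{itemize}

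This also makes your SCC split unnecessary. It is in fact harmful if different components choose different pairs $i,j$. Gluing them would require a linear change of counters, which a 2-VASS cannot perform, let alone in a fixed number of steps as the exact $(2d+1)$ length relation demands.
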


Hence, for both the case of a 2-VAS and the case of a geometrically 2-VASS, we are able to bound the length of their shortest runs using known results for 2-VASS \cite{DBLP:journals/jacm/BlondinEFGHLMT21}. We then establish the length bound for 4-VAS in the following lemma.

\begin{lemma}
    \label{thm:len-4-vas-poly}
    For any $4$-VAS $(V,\Vec{s}, \Vec{t})$ with $\Vec{s} \xrightarrow{*} \Vec{t}$, it holds that $\min \Len(V, \Vec{s}, \Vec{t}) \le \Size(V,\Vec{s},\Vec{t})^{O(1)}$.
\end{lemma}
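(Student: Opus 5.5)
Apply \autoref{lem:len-d-vas-le-cases} with $d=4$ to the $4$-VAS $(V,\Vec{s},\Vec{t})$, writing $M:=\Size(V,\Vec{s},\Vec{t})$. The $4$-VAS is then length-bounded, with $P_4$-amplification, by a VASS $(V',s',t')$ with $\Size(V',s',t')\le P_4(M)=M^{O(1)}$ and $s'\xrightarrow{*}t'$ in $V'$. Since $\min\Len(V,\Vec{s},\Vec{t})\le\min\Len(V',s',t')$, it suffices to show that $\min\Len(V',s',t')$ is polynomial in $\Size(V',s',t')$. The proposition gives two possible forms for $V'$, and I will treat them separately.

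\textbf{First form.} Here $V'$ is a $2$-VASS. The result of \cite{DBLP:journals/jacm/BlondinEFGHLMT21} says that reachable instances of $2$-VASS have witnessing runs of length polynomial in the unary size. This gives $\min\Len(V',s',t')\le \Size(V',s',t')^{O(1)}=M^{O(1)}$.

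\textbf{Second form.} Here $V'$ consists of two $4$-VASSes $V_1,V_2$ joined by a single transition, with $\dimcyc(V_1),\dimcyc(V_2)\le\max\{0,\dimcyc(V)-3\}\le 1$. The whole VASS then has $\dimcyc(V')\le 2$. The point is that every simple cycle of $V'$ lies entirely inside $V_1$ or inside $V_2$, because the connecting transition cannot occur on any cycle. Hence $\CycleSpace(V')=\CycleSpace(V_1)+\CycleSpace(V_2)$, which has dimension at most $2$.

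We then apply \autoref{lem:geo-2d-to-2d-poly} with $d=4$. This produces a family $\mathcal{V}$ of $2$-VASSes, each of size at most $H(\Size(V',s',t'))=M^{O(1)}$, with $9\cdot\Len(V',s',t')=\bigcup_{\mathcal{V}}\Len(V'',s'',t'')$. Since $\Len(V',s',t')\ne\emptyset$, some member $(V'',s'',t'')$ has a run. By \cite{DBLP:journals/jacm/BlondinEFGHLMT21} its shortest run has length $\ell\le M^{O(1)}$. The equality of length sets shows $\ell=9m$ for some $m\in\Len(V',s',t')$, so $\min\Len(V',s',t')\le \ell/9\le M^{O(1)}$.

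\textbf{Main obstacle.} The main difficulty is making sure that the polynomial bound for $2$-VASS from \cite{DBLP:journals/jacm/BlondinEFGHLMT21} is stated for shortest reachability witnesses in terms of unary size, and that it applies to every member of the family $\mathcal{V}$. Everything else is composing polynomials: $P_4$, $H$, and the $2$-VASS bound. Because the dimension is fixed, the exponents are constants, so the final bound is $M^{O(1)}$.
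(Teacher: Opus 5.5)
Your proposal is correct and follows the paper's proof: you apply \autoref{lem:len-d-vas-le-cases} with $d=4$, and in the second case you note that no cycle can use the connecting transition, so the two geometrically $1$-dimensional parts make $V'$ geometrically $2$-dimensional; you then pass to $2$-VASSes via \autoref{lem:geo-2d-to-2d-poly} and the polynomial $2$-VASS length bound. The only differences are cosmetic: you handle the $2$-VASS case directly, and you write out the argument of \autoref{cor:min-len-le-poly-of-geo-2d} (including the division by $2d+1=9$) instead of citing that corollary.
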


\begin{proof}
    By \autoref{lem:len-d-vas-le-cases}, the 4-VAS $(V, \Vec{s}, \Vec{t})$ is length-bounded by a VASS $(V',s',t')$ with polynomial amplification, where, $V'$ is either a $2$-VASS or a $4$-VASS of component-dimension at most $1$. For the latter case, it should be noted that $V'$ consists of two geometrically $1$-dimensional sub-VASSes, and hence, it is geometrically $2$-dimensional as a whole. Applying \cref{lem:geo-2d-to-2d-poly}, $V'$ can be converted into a 2-VASS of polynomial size. We conclude the proof with \autoref{cor:min-len-le-poly-of-geo-2d}.
\end{proof}

For any $4$-VAS $(V,\Vec{s}, \Vec{t})$ with $\Vec{s} \xrightarrow{*} \Vec{t}$ in $V$, we deduce that there is a short witness whose length is bounded by $\Size(V,\Vec{s},\Vec{t})^{O(1)}$. A brute-force algorithm searching all such runs can be completed in polynomial space when the input is encoded in binary, and in logarithmic space when the input is encoded in unary. Finally, we are able to finish the proof of \autoref{thm:5-vas-4-vas-upper-bound}.

\begin{proof}[Proof of \autoref{thm:5-vas-4-vas-upper-bound}, for 4-VAS]
    The upper bounds follow from \autoref{thm:len-4-vas-poly}. It is shown in \cite{DBLP:journals/tcs/HopcroftP79} that a $0$-VASS can be simulated by a $3$-VAS with a polynomial overhead. Moreover, the conversion from $0$-VASS to unary $3$-VAS can be done in logarithmic space. Since the reachability in directed graphs is \NL-complete, we conclude that reachability in unary $3$- and $4$-VAS is \NL-complete.
\end{proof}

In the sequel, we primarily focus on proving \autoref{lem:geo-2d-to-2d-poly} for geometrically 2-dimensional VASS.
We also study the reachability problem of geometrically 2-dimensional VASS on its own interest and prove the following characterization:

\begin{restatable}{theorem}{complexityGeoTwoVASS}\label{thm:geo-2-vass}
    Reachability in geometrically 2-dimensional $d$-VASS is:
    \begin{itemize}
        \item \PSPACE{}-complete under binary encoding;
        \item \NP{}-complete under unary encoding, with $d$ not fixed;
        \item \NL{}-complete under unary encoding, with $d$ fixed.
    \end{itemize}
\end{restatable}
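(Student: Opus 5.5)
The theorem has six claims: three upper bounds and three lower bounds. For the upper bounds I will go through \autoref{lem:geo-2d-to-2d-poly}. Given a geometrically 2-dimensional $d$-VASS $(V,s,t)$, that lemma yields a family of 2-VASSes of size at most $H(\Size(V,s,t))$ whose length sets jointly equal $(2d+1)\cdot\Len(V,s,t)$. Since reachability holds in $V$ if and only if it holds in some member of the family, the plan is to guess a member nondeterministically and solve 2-VASS reachability there.

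The upper bounds then proceed as follows.
\begin{enumerate}
\item \emph{Binary encoding.} I read \autoref{lem:geo-2d-to-2d-poly} with $\Size$ measured in binary. The construction should only add polynomially many states and keep numbers of polynomially many bits; I will check this against the proof in \autoref{sec:low-dim-vas-geo-2d-vass}. Then \PSPACE{} membership follows from \PSPACE{} membership of 2-VASS reachability \cite{DBLP:journals/jacm/BlondinEFGHLMT21}, together with $\NP\subseteq\PSPACE$ to absorb the guess of the family member.
\item \emph{Unary encoding, $d$ not fixed.} Shortest runs in a unary 2-VASS have polynomial length \cite{DBLP:journals/jacm/BlondinEFGHLMT21}. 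So if $s\xrightarrow{*}t$, there is a run of $V$ whose length is polynomial in $\Size(V,s,t)$: divide the polynomial bound on the 2-VASS run by $2d+1$. Such a run can be guessed and verified in polynomial time, giving \NP.
\item \emph{Unary encoding, $d$ fixed.} The same polynomial witness can be guessed transition by transition. We store only the current state and the current counter vector, which has $d$ entries of logarithmic bit-length when $d$ is fixed. This gives \NL.
\end{enumerate}

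For the lower bounds:
\begin{enumerate}
\item \emph{\NL-hardness.} Reduce graph reachability to a 0-VASS. Its cycle-dimension is $0$, so it is geometrically 2-dimensional.
\item \emph{\PSPACE-hardness.} Reachability in binary 2-VASS is already \PSPACE-hard \cite{DBLP:journals/jacm/BlondinEFGHLMT21}, and every 2-VASS is geometrically 2-dimensional.
\item \emph{\NP-hardness (unary, $d$ not fixed).} Reduce from subset sum or exact cover using one counter per constraint, all controlled by a finite automaton. The difficulty is to keep $\dimcyc\le 2$. I will make the control graph acyclic, so there are no cycles at all, except possibly for trivial ones. The automaton picks each item in sequence along a DAG whose transitions add unary vectors recording element coverage; then $\dimcyc=0$. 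For example, for 3-dimensional matching or exact cover, use counters indexed by elements. Each choice of a set adds $\Vec{e}_j$ for its elements, and the target is the all-ones vector. This is a DAG of polynomial size with unary entries, and it is \NP-hard since exact cover is.
\end{enumerate}

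The main obstacle is making sure \autoref{lem:geo-2d-to-2d-poly} really gives \emph{polynomial} size under binary encoding. I also need to check that the unary 2-VASS polynomial length bound (\autoref{cor:min-len-le-poly-of-geo-2d}) transfers back through the factor $2d+1$ without hidden blowups in $d$. I would first verify that $H$ is independent of $d$ beyond polynomial dependence, and that all numbers in the construction come from bounded combinations of transition effects.
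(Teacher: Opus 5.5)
\textbf{The gap: the binary-encoding upper bound.} Here \autoref{lem:geo-2d-to-2d-poly} does not produce 2-VASSes of size polynomial in the binary-encoded input, so the check you plan against \autoref{sec:low-dim-vas-geo-2d-vass} will come out negative. The polynomial $H$ bounds the \emph{unary} size $\Size(V',s',t')$, which includes the number of control states, in terms of the \emph{unary} size $\Size(V,s,t)$. The construction genuinely spends states on magnitudes of numbers, in two places:
\begin{itemize}
\item The non-negativity gadget $V_{m,n;b}$ of \autoref{lem:exists-2-vass-testing-inequ} has one state per minimal point of $\{(x,y)\in\mathbb{N}^2 \mid mx+ny+b\ge 0\}$, i.e.\ up to $|b|+1$ states. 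Here $b=\delta^f_r(\ell)$ grows with $\norm{\Vec{x}}$ and $\norm{T}$. For instance, the 3-VAS with transitions $(1,0,1),(0,1,1)$ and source $(2^n,0,0)$ has projection $I=\{1,2\}$ and source shift $(0,0,-2^n)$. It therefore needs a gadget testing $x+y-2^n\ge 0$, which has $2^n+1$ minimal points.
\item In the non-projective case, \autoref{lem:projective-geo-2d-from-non-projective-geo-2d} stores every value in $[0,6\Size(V,s,t)^3]^{|K|}$ in the control state.
\end{itemize}
Under binary encoding both are exponential in the input length. Running the \PSPACE{} algorithm for 2-VASS on a guessed family member therefore only gives exponential space.

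\textbf{The repair.} It is the argument you already use for unary encoding, and it is exactly the paper's. \autoref{cor:min-len-le-poly-of-geo-2d} gives $\min\Len(V,s,t)\le\Size(V,s,t)^{O(1)}$, where $\Size$ is the unary size. Under binary encoding this is at most exponential in the input length. Hence some witnessing run of $V$ itself has exponential length, and all counters along it are exponentially bounded. You can guess this run transition by transition, storing only the current state, the current counter vector and a binary step counter, all of polynomial bit-length. This is nondeterministic polynomial space, hence \PSPACE{} by Savitch's theorem, and it never needs the family members to be small.

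\textbf{The remaining parts.} The unary \NP{} and \NL{} upper bounds, \NL{}-hardness via graph reachability, and \PSPACE{}-hardness inherited from 2-VASS all coincide with the paper. Your \NP{}-hardness reduction takes a different but valid route. The paper reduces from \textsc{Subset-Sum}: it writes sums in binary across the $d$ counters and inserts a carry-normalising gadget with transitions $-2\Vec{e}_i+\Vec{e}_{i+1}$ after each addition. Your Exact Cover reduction instead uses one counter per element, non-negative $0/1$ effects, target $\Vec{1}$, and an acyclic control graph, so $\dimcyc=0$. It is correct and simpler, since no carry handling is needed.
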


We remark that the \PSPACE{}-completeness under binary encoding was already proved in \cite{DBLP:conf/concur/Zheng25}, while results for unary encoding are novel and cannot be derived by techniques in \cite{DBLP:conf/concur/Zheng25}. 
Notice that for geometrically 2-dimensional VASS under unary encoding with dimension not fixed, we do not have an \NL{} upper bound as in the case for 2-VASS. Indeed, in this case, enumerating configurations cannot fit in logarithmic space (there is an $O(d)$ factor). On the other hand, if the VASS is a VAS, we can actually retain the \NL{} upper bound, whose proof is deferred to \autoref{sec:geo-2d-vas}.

\begin{restatable}{theorem}{complexityGeo2VAS}\label{thm:geo-2d-vas}
    Reachability in geometrically 2-dimensional VAS is in $\NL$ under unary encoding.
\end{restatable}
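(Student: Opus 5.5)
The plan is to combine a polynomial bound on the length of shortest runs with a logspace simulation that avoids storing all $d$ coordinates. Storing the full configuration is exactly what fails for general geometrically 2-dimensional VASS: it needs $O(d\log n)$ bits. For a VAS this can be avoided, because the current configuration is determined by its projection onto two well-chosen coordinates.

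\textbf{Step 1 (short witnesses).} Let $(V,\Vec{s},\Vec{t})$ be a geometrically 2-dimensional $d$-VAS with $\Vec{s}\xrightarrow{*}\Vec{t}$. I apply \autoref{lem:geo-2d-to-2d-poly} to get a 2-VASS $(V',s',t')$ of size at most $H(\Size(V,\Vec{s},\Vec{t}))$ with $\Len(V',s',t')\subseteq(2d+1)\cdot\Len(V,\Vec{s},\Vec{t})$ and $\Len(V',s',t')\neq\emptyset$. The polynomial length bound for 2-VASS in unary size \cite{DBLP:journals/jacm/BlondinEFGHLMT21} (the corollary invoked for \autoref{thm:len-4-vas-poly}) then yields a run $\Vec{s}\xrightarrow{\rho}\Vec{t}$ with $|\rho|\le\Size(V,\Vec{s},\Vec{t})^{O(1)}$. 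Consequently every configuration on $\rho$ has norm polynomial in the unary size $n$.

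\textbf{Step 2 (two coordinates determine the configuration).} Every configuration reachable from $\Vec{s}$ lies in $\Vec{s}+\Span(T)$, and $\dim\Span(T)\le 2$. Hence there are coordinates $i,j$ such that the projection $\Vec{v}\mapsto(\Vec{v}(i),\Vec{v}(j))$ is injective on $\Span(T)$; if the dimension is smaller, one coordinate or none suffices. Concretely, I take two transitions $\Vec{u},\Vec{w}$ forming a basis of $\Span(T)$, together with $i,j$ such that the $2\times 2$ minor $\delta$ of $(\Vec{u},\Vec{w})$ on rows $i,j$ is non-zero. For every $k$,
\[
  \Vec{x}(k)-\Vec{s}(k)=\tfrac{1}{\delta}\bigl(\alpha_k(\Vec{x}(i)-\Vec{s}(i))+\beta_k(\Vec{x}(j)-\Vec{s}(j))\bigr),
\]
where $\alpha_k,\beta_k,\delta$ are $2\times 2$ minors with entries of absolute value at most $\norm{T}$. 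These minors are computable in logspace. Finding $\Vec{u},\Vec{w},i,j$ amounts to iterating over indices and checking whether a minor is non-zero, which is also logspace.

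\textbf{Step 3 (NL algorithm).} The algorithm stores only the pair $(\Vec{x}(i),\Vec{x}(j))$, which needs $O(\log n)$ bits by Step 1, plus a step counter bounded by the polynomial from Step 1. At each step it guesses a transition $\Vec{a}\in T$ and updates the pair. It then checks non-negativity of the new configuration by looping over $k\in[d]$ and evaluating $\Vec{x}(k)$ from the displayed formula, using logarithmically many bits. At the end it accepts if $\Vec{x}(k)=\Vec{t}(k)$ for all $k$, checked in the same way. Correctness follows from injectivity: the stored pair together with the invariant $\Vec{x}\in\Vec{s}+\Span(T)$ determines $\Vec{x}$ uniquely. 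Membership in $\Vec{s}+\Span(T)$ is automatic, since $\Vec{x}$ is obtained from $\Vec{s}$ by adding transitions.

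\textbf{Main obstacle.} I expect the obstacle to be Step 1 rather than the simulation. One must confirm that the conversion of \autoref{lem:geo-2d-to-2d-poly} combined with the 2-VASS results gives a length bound polynomial in the \emph{unary} size, without hidden exponential dependence on $d$; the factor $2d+1$ is harmless. If that route is problematic, I would fall back on the projection itself. Because $V$ is a single-state system, the run is a walk in a 2-dimensional lattice confined by $d$ half-planes $\Vec{s}(k)+\ell_k(\cdot)\ge 0$. I would then argue the polynomial bound directly with the pumping arguments of \autoref{lem:lift-Z-run-by-pumpable} for the unbounded case and bounded-coordinate encoding otherwise.
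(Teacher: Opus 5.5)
Your proof is correct, but its final step differs from the paper's. Both arguments rest on the same polynomial length bound, \autoref{cor:min-len-le-poly-of-geo-2d}. Your worry about hidden dependence on $d$ is unfounded: the polynomial $H$ in \autoref{lem:geo-2d-to-2d-poly} is uniform in $d$ (and $d\le\Size(V)$). The paper relies on exactly this uniformity for the \NP{} bound with $d$ not fixed, so your fallback is unnecessary.

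The two proofs differ in how they avoid storing all $d$ counters. The paper gives a logspace many-one reduction (\autoref{lem:geo-2-vas-to-4-vass}) to reachability in a unary geometrically 2-dimensional $4$-VASS. It then invokes the fixed-dimension case of \autoref{thm:geo-2-vass}. In that reduction, nonnegativity of the dropped coordinates must be tested by VASS gadgets (\autoref{lem:exists-2-vass-testing-inequ}). These gadgets only handle inequalities $mx+ny+b\ge 0$ with $m,n\ge 0$. The paper therefore needs \autoref{prop:cone-generated-by-4-vec} to choose up to four coordinates such that every coordinate is a \emph{nonnegative} combination of two of them on the cycle space.

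You instead run the nondeterministic search directly on the VAS and store only two coordinates. You recompute every other coordinate by Cramer's rule, with coefficients of arbitrary sign. This is legitimate because the check is done by the logspace machine, not encoded into transitions. Both approaches use the same key fact: with a single state, $\Vec{x}-\Vec{s}\in\Span(T)$ for every reachable $\Vec{x}$ (the Claim inside the proof of \autoref{lem:geo-2-vas-to-4-vass}).

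Your route is more elementary and avoids the cone and gadget machinery entirely. The paper's route additionally yields a reusable structural statement: a logspace reduction to a fixed-dimensional VASS that preserves geometric dimension.
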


\subsection{Geometrically 2-Dimensional VASS}
\label{sec:low-dim-vas-geo-2d-vass}

Several results are known for geometrically 2-dimensional VASS. In \cite[Theorem 3.4]{DBLP:conf/icalp/FuYZ24} a linear path scheme characterization for geometrically 2-dimensional VASS was proved, which implies the reachability relation can be modeled by systems of integer linear programming. But the number of variables in the linear programming can merely be bounded by an exponential\footnote{in terms of binary-encoded size} function. It is also proved in \cite{DBLP:conf/concur/Zheng25} that reachability in geometrically 2-dimensional VASS is in \PSPACE{} under binary encoding. However, the length bound of the shortest runs still has exponential dependence on the number of states. In this section, we give a much simplified conversion from geometrically 2-dimensional VASS to 2-VASS of polynomial size, from which a polynomial length bound on the shortest runs can be derived.

\lemGeoTwoDToTwoDPoly*

Notice that for a geometrically 2-dimensional VASS $V$ we may always assume that $\dimcyc(V) = 2$ as we can add dummy cycles to enlarge the cycle space if $\dimcyc(V) < 2$. In the following we adopt this assumption tacitly.

\subsubsection{Sign Reflecting Projection}
We will show that a geometrically 2-VASS can be ``projected'' onto two coordinates to get an equivalent 2-VASS. Given a subset $I \subseteq [d]$ of indices, for a vector $\Vec{x} \in \mathbb{Q}^d$ we write $\Vec{x}|_I$ for the vector (or function) in $\mathbb{Q}^I$ such that $\Vec{x}|_I(i) = \Vec{x}(i)$ for all $i \in I$. We shall recognize functions in $\mathbb{Q}^I$ as vectors in $\mathbb{Q}^{|I|}$.

\begin{definition}
    Let $X \subseteq \mathbb{Q}^d$ be a vector space. We say $I \subseteq [d]$ is a \emph{sign-reflecting projection} of $X$ if $|I| = \dim(X)$ and for all $\Vec{x} \in X$, $\Vec{x}|_I \ge \Vec{0}$ implies $\Vec{x} \ge \Vec{0}$.
\end{definition}

In \cite[Theorem 3.4]{DBLP:conf/icalp/FuYZ24}, it is proved that sign-reflecting projections exist for 2-dimensional spaces satisfying certain conditions. Here, we introduce a much weaker condition.

\begin{definition}
    \label{def:2d-space-projective-wrt}
    Let $X = \Span\{\Vec{u}, \Vec{v}\} \subseteq \mathbb{Q}^d$ be a 2-dimensional vector space. Denote $\Vec{r}_\ell := (\Vec{u}(\ell), \Vec{v}(\ell))$ for $\ell \in [d]$. We say $X$ is \emph{projective} (w.r.t.\ the base $\Vec{u}, \Vec{v}$) if $\Cone(\{\Vec{r}_1, \ldots, \Vec{r}_d\}) = \Cone(\{\Vec{r}_i, \Vec{r}_j\})$ for some $i, j \in [d]$.
\end{definition}

Firstly we verify that \autoref{def:2d-space-projective-wrt} does not depend on the choice of base vectors. The following lemma gives an equivalent definition to projectiveness.

\begin{restatable}{lemma}{twoDSpaceProjectiveIff}
    \label{lem:2d-space-projective-iff}
    Let $X = \Span\{\Vec{u}, \Vec{v}\} \subseteq \mathbb{Q}^d$ be a 2-dimensional vector space. Denote $\Vec{r}_\ell := (\Vec{u}(\ell), \Vec{v}(\ell))$ for $\ell \in [d]$. Then $X$ is projective w.r.t.\ $\Vec{u}, \Vec{v}$ if and only if there exists a vector $\Vec{n} \in \mathbb{Q}^2$ such that $\Inner{\Vec{n}, \Vec{r}_\ell} > 0$ for all $\ell \in [d]$ where $\Vec{r}_\ell \ne \Vec{0}$.
\end{restatable}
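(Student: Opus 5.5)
The statement is elementary planar geometry. I would prove the two directions separately, after first dispatching the degenerate situation in which some $\Vec{r}_\ell$ vanish: zero vectors contribute nothing to the cone, so only the nonzero $\Vec{r}_\ell$ matter on either side. A first observation is that, because $\Vec{u},\Vec{v}$ are linearly independent, the vectors $\Vec{r}_1,\ldots,\Vec{r}_d$ span $\mathbb{Q}^2$. Otherwise there would be a nonzero $(\alpha,\beta)$ with $\alpha\Vec{u}(\ell)+\beta\Vec{v}(\ell)=0$ for every $\ell$, that is, $\alpha\Vec{u}+\beta\Vec{v}=\Vec{0}$, which is impossible.

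\emph{($\Leftarrow$)}. Suppose $\Vec{n}$ satisfies $\Inner{\Vec{n},\Vec{r}_\ell}>0$ for all nonzero $\Vec{r}_\ell$. Then the cone $C:=\Cone(\{\Vec{r}_1,\ldots,\Vec{r}_d\})$ lies in the open half-plane $\{\Vec{x}\mid\Inner{\Vec{n},\Vec{x}}>0\}$ together with the origin. So $C$ is pointed and contains no line, and it is two-dimensional because the $\Vec{r}_\ell$ span $\mathbb{Q}^2$. I would parametrise directions by angle. Every nonzero $\Vec{r}_\ell$ lies in the open half-plane, so its angle relative to the direction $\Vec{n}$ falls in the open interval $(-\pi/2,\pi/2)$. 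Let $\Vec{r}_i$ be a vector of minimal angle and $\Vec{r}_j$ one of maximal angle. For rational vectors I can avoid angles altogether: order the nonzero $\Vec{r}_\ell$ by the sign of the determinant $\det(\Vec{r}_a,\Vec{r}_b)$, which defines a total preorder on the half-plane. Any $\Vec{r}_\ell$ lies angularly between $\Vec{r}_i$ and $\Vec{r}_j$, and the angle between $\Vec{r}_i$ and $\Vec{r}_j$ is less than $\pi$. Hence $\Vec{r}_\ell$ is a nonnegative combination of $\Vec{r}_i$ and $\Vec{r}_j$, obtained explicitly by Cramer's rule, with coefficients $\det(\Vec{r}_\ell,\Vec{r}_j)/\det(\Vec{r}_i,\Vec{r}_j)$ and so on, whose signs are fixed by the ordering. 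Therefore $C=\Cone(\{\Vec{r}_i,\Vec{r}_j\})$.

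\emph{($\Rightarrow$)}. Suppose $C=\Cone(\{\Vec{r}_i,\Vec{r}_j\})$. Since $C$ spans $\mathbb{Q}^2$, the vectors $\Vec{r}_i$ and $\Vec{r}_j$ are linearly independent. I would take $\Vec{n}$ to be the dual vector satisfying $\Inner{\Vec{n},\Vec{r}_i}=\Inner{\Vec{n},\Vec{r}_j}=1$, which is possible because the $2\times 2$ system is invertible and yields a rational $\Vec{n}$. Every nonzero $\Vec{r}_\ell$ can be written as $\lambda\Vec{r}_i+\mu\Vec{r}_j$ with $\lambda,\mu\ge 0$ not both zero, so $\Inner{\Vec{n},\Vec{r}_\ell}=\lambda+\mu>0$.

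The main obstacle is the ($\Leftarrow$) direction, specifically making the statement ``extreme rays of a pointed planar cone are given by two of the generators'' rigorous. I would handle it with the determinant ordering inside the half-plane $\Inner{\Vec{n},\cdot}>0$. On that half-plane, $\det(\Vec{a},\Vec{b})>0$ is a strict total order on directions, since the half-plane contains no antipodal pairs. This gives transitivity, from which the Cramer coefficients come out nonnegative. Finally, the lemma implies that projectiveness is independent of the base. A change of base acts on all the $\Vec{r}_\ell$ by the same invertible linear map $M$, and the half-plane condition transforms via $\Vec{n}\mapsto M^{-\top}\Vec{n}$.
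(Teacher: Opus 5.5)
Your proof is correct, but it takes a different route from the paper in both directions.

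\textbf{($\Rightarrow$).} The paper applies its Farkas-based normal-vector lemma (\autoref{lem:exists-normal-vector-with-wide-ortho}) to the pair $\{\Vec{r}_i,\Vec{r}_j\}$. It then rules out a nonempty $X_0$: a singleton cannot satisfy $\Cone(X_0)=\Span(X_0)$, and $X_0=\{\Vec{r}_i,\Vec{r}_j\}$ would put $-\Vec{r}_i$ into $\Cone(\{\Vec{r}_i,\Vec{r}_j\})$, contradicting linear independence. Your dual-basis choice $\Inner{\Vec{n},\Vec{r}_i}=\Inner{\Vec{n},\Vec{r}_j}=1$ reaches the same conclusion in one line. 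It avoids the Farkas--Minkowski--Weyl machinery altogether and is strictly simpler.

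\textbf{($\Leftarrow$).} The paper argues by minimality. It takes an inclusion-minimal $R$ with $\Cone(R)=\Cone(\{\Vec{r}_1,\ldots,\Vec{r}_d\})$. If $|R|\ge 3$, a linear dependence among three elements, after renaming, writes one of them as $\lambda\Vec{r}_j+\mu\Vec{r}_k$ with $\lambda,\mu$ of equal sign. Nonnegative signs contradict minimality, and nonpositive signs contradict $\Inner{\Vec{n},\cdot}>0$. Your argument is instead constructive. It identifies the two generators as the extreme elements of the angular order inside the half-plane, and it gives explicit nonnegative coefficients via Cramer's rule. The paper's route is shorter. Yours says exactly which two indices work, at the cost of proving that the order is transitive.

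\textbf{A correction to your transitivity argument.} ``The half-plane contains no antipodal pairs'' is not the right reason. Three directions at mutual angles $2\pi/3$ contain no antipodal pair, yet $\det(\Vec{a},\Vec{b})>0$ is cyclic on them. What you actually use is containment in the open half-plane, which your angle parametrisation already provides. To stay over $\mathbb{Q}$, set $\Vec{n}^{\perp}:=(-\Vec{n}(2),\Vec{n}(1))$ and write each $\Vec{a}$ in the half-plane as $\alpha_a\Vec{n}+\beta_a\Vec{n}^{\perp}$ with $\alpha_a>0$. Then $\det(\Vec{a},\Vec{b})=(\alpha_a\beta_b-\alpha_b\beta_a)\cdot\det(\Vec{n},\Vec{n}^{\perp})$, and $\det(\Vec{n},\Vec{n}^{\perp})>0$. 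So your order is just the order of the rational parameter $\beta_a/\alpha_a$, and transitivity together with the sign claims in your Cramer step follow immediately.
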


Based on this equivalent definition, it is easy to show that \autoref{def:2d-space-projective-wrt} is independent of base vectors.

\begin{restatable}{lemma}{twoDSpaceProjectiveIndependentBase}
    \label{lem:lem:2d-space-projective-independent-base}
    Let $X = \Span\{\Vec{u}, \Vec{v}\} = \Span\{\Vec{x}, \Vec{y}\} \subseteq \mathbb{Q}^d$ be a 2-dimensional vector space. Then $X$ is projective w.r.t.\ $\Vec{u}, \Vec{v}$ if and only if $X$ is projective w.r.t.\ $\Vec{x}, \Vec{y}$.
\end{restatable}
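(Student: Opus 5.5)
Write $\{\Vec{x},\Vec{y}\}$ as a second base of $X$. The plan is to deduce \autoref{lem:lem:2d-space-projective-independent-base} from the characterization in \autoref{lem:2d-space-projective-iff}. Since both pairs span $X$, there is an invertible $2\times 2$ rational matrix $M=\begin{pmatrix} a & b\\ c & e\end{pmatrix}$ with $\Vec{x}=a\Vec{u}+c\Vec{v}$ and $\Vec{y}=b\Vec{u}+e\Vec{v}$.

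First, look coordinate-wise at the row vectors. Put $\Vec{r}_\ell := (\Vec{u}(\ell),\Vec{v}(\ell))$ and $\Vec{r}'_\ell := (\Vec{x}(\ell),\Vec{y}(\ell))$. Then $\Vec{r}'_\ell = M^{\top}\Vec{r}_\ell$ for every $\ell\in[d]$, because $\Vec{x}(\ell)=a\Vec{u}(\ell)+c\Vec{v}(\ell)$ and $\Vec{y}(\ell)=b\Vec{u}(\ell)+e\Vec{v}(\ell)$. Since $M^{\top}$ is invertible, $\Vec{r}'_\ell=\Vec{0}$ holds if and only if $\Vec{r}_\ell=\Vec{0}$. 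Hence both characterizations quantify over the same set of indices $\ell$.

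Next, suppose $X$ is projective w.r.t.\ $\Vec{u},\Vec{v}$. By \autoref{lem:2d-space-projective-iff} there is $\Vec{n}\in\mathbb{Q}^2$ with $\Inner{\Vec{n},\Vec{r}_\ell}>0$ for all $\ell$ with $\Vec{r}_\ell\ne\Vec{0}$. Define $\Vec{n}' := M^{-1}\Vec{n}\in\mathbb{Q}^2$. For any such $\ell$,
\begin{equation}
\Inner{\Vec{n}',\Vec{r}'_\ell} = \Inner{M^{-1}\Vec{n}, M^{\top}\Vec{r}_\ell} = \Inner{M M^{-1}\Vec{n}, \Vec{r}_\ell} = \Inner{\Vec{n},\Vec{r}_\ell} > 0 .
\end{equation}
Applying \autoref{lem:2d-space-projective-iff} in the other direction, $X$ is projective w.r.t.\ $\Vec{x},\Vec{y}$. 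The converse follows by symmetry: swap the roles of the two bases and replace $M$ with $M^{-1}$.

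The only real obstacle is bookkeeping: one must check that the change-of-base matrix acts on the row vectors $\Vec{r}_\ell$ through its transpose, so that the dual vector transforms by $M^{-1}$ and the inner product is preserved. The argument also requires $M$ to be invertible. This holds because both pairs are bases of the same 2-dimensional space.
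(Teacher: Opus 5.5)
Your proof is correct and takes essentially the same route as the paper. Both arguments reduce to the characterization in \autoref{lem:2d-space-projective-iff}, carry the separating vector $\Vec{n}$ through the invertible change-of-basis matrix, and note that zero rows correspond to zero rows. The only difference is notation: the paper writes $R_{uv} = R_{xy}M$, so its new vector is $M\Vec{n}$ (after scaling so that $R_{uv}\Vec{n}\ge\Vec{b}$), whereas your opposite convention yields $M^{-1}\Vec{n}$.
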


Now we can say $X$ is projective or not without referring to some base of $X$. The following lemma justifies the existence of a sign-reflecting projection for projective spaces.

\begin{restatable}{lemma}{srpOfProjective}
    \label{lem:srp-of-projective}
    Let $X \subseteq \mathbb{Q}^d$ be a projective 2-dimensional vector space. Then there is a sign-reflecting projection of $X$.
\end{restatable}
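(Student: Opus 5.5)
Fix a base $\Vec{u},\Vec{v}$ of $X$ and write $\Vec{r}_\ell := (\Vec{u}(\ell),\Vec{v}(\ell))\in\mathbb{Q}^2$. Projectiveness (\autoref{def:2d-space-projective-wrt}) gives indices $i,j$ with $\Cone(\{\Vec{r}_1,\ldots,\Vec{r}_d\}) = \Cone(\{\Vec{r}_i,\Vec{r}_j\})$. I propose taking $I := \{i,j\}$ as the sign-reflecting projection. We have $|I| = 2 = \dim(X)$, so it remains to show that for every $\Vec{x}\in X$, $\Vec{x}|_I\ge\Vec{0}$ implies $\Vec{x}\ge\Vec{0}$.

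The key observation is that every $\Vec{x}\in X$ has the form $\Vec{x} = \alpha\Vec{u}+\beta\Vec{v}$, so $\Vec{x}(\ell) = \Inner{(\alpha,\beta),\Vec{r}_\ell}$ for each $\ell\in[d]$. Thus $\Vec{x}$ is encoded by the functional $\Vec{w} := (\alpha,\beta)$ evaluated on the rows $\Vec{r}_\ell$. Suppose $\Vec{x}(i),\Vec{x}(j)\ge 0$, i.e.\ $\Inner{\Vec{w},\Vec{r}_i}\ge 0$ and $\Inner{\Vec{w},\Vec{r}_j}\ge 0$. For any $\ell$, we have $\Vec{r}_\ell\in\Cone(\{\Vec{r}_i,\Vec{r}_j\})$, so $\Vec{r}_\ell = \lambda\Vec{r}_i+\mu\Vec{r}_j$ with $\lambda,\mu\ge 0$. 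Linearity then gives $\Vec{x}(\ell) = \lambda\Vec{x}(i)+\mu\Vec{x}(j)\ge 0$.

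The main point to check is that $\Vec{r}_i,\Vec{r}_j$ are linearly independent, i.e.\ that the cone condition is not degenerate. Since $\Vec{u},\Vec{v}$ are independent, the $d\times 2$ matrix with rows $\Vec{r}_\ell$ has rank $2$, so the $\Vec{r}_\ell$ span $\mathbb{Q}^2$. If $\Vec{r}_i,\Vec{r}_j$ were dependent, their cone would lie in a line while still containing all $\Vec{r}_\ell$, which is a contradiction. Hence $I$ is a genuine coordinate projection that is injective on $X$, although the sign-reflecting argument above does not even require injectivity. The only genuine work is the cone-containment step, which is immediate from the definition, so I expect the proof to be short.
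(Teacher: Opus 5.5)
Your proposal is correct and matches the paper's proof: take $I=\{i,j\}$ from the projectiveness definition and use $\Vec{x}(\ell)=\lambda\Vec{x}(i)+\mu\Vec{x}(j)$ with $\lambda,\mu\ge 0$. Your extra check that $\Vec{r}_i,\Vec{r}_j$ are linearly independent (so $i\ne j$ and $|I|=2$) is a detail the paper leaves implicit, and it is worth keeping.
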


In the following, we prove \autoref{lem:geo-2d-to-2d-poly} based on whether the cycle space is projective or not.

\subsubsection{Projective Case}

\begin{restatable}{lemma}{geoTwoDtoTwoDProjective}
    \label{lem:geo-2d-to-2d-projective}
    There is a polynomial $H_1$ such that, for every geometrically 2-dimensional $d$-VASS $(V, s, t)$, where $\CycleSpace(V)$ is projective, there exists a family $\mathcal{V}$ of 2-VASSes $(V', s', t')$ such that $(2d+1) \cdot \Len(V, s, t) = \bigcup_{(V', s', t') \in \mathcal{V}} \Len(V', s', t')$, and $\Size(V', s', t') \le H_1(\Size(V, s, t))$.
\end{restatable}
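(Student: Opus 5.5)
The plan is to fix a sign-reflecting projection $I=\{i,j\}$ of $X:=\CycleSpace(V)$, which exists by \autoref{lem:srp-of-projective}, and to simulate $V$ by a 2-VASS that stores only the counters in $I$. The other $d-2$ counters will be recovered through linear algebra together with finite state information.

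\textbf{Step 1 (reconstruction map).} Restriction to $I$ is injective on $X$: if $\Vec{x}\in X$ and $\Vec{x}|_I=\Vec{0}$, then sign-reflection applied to $\Vec{x}$ and to $-\Vec{x}$ gives $\Vec{x}=\Vec{0}$. Since $|I|=\dim X=2$, restriction is a bijection $X\to\mathbb{Q}^I$. Let $L:\mathbb{Q}^I\to X$ be its inverse. Define the residual $r(\Vec{y}):=\Vec{y}-L(\Vec{y}|_I)$. This is linear, vanishes on $X$ and on the coordinates in $I$, and its entries have denominators bounded by a $2\times 2$ minor, so polynomially in $\Size(V)$. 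Any path decomposes, as a multiset of transitions, into a simple path plus simple cycles. Hence $r(\Delta(\pi))=r(\Delta(\pi_0))$ for some simple path $\pi_0$ with the same endpoints.

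\textbf{Step 2 (residuals are determined by SCC choices).} Within one SCC, if two paths $p\to q$ both start at the entry state $p$, then closing each with a fixed path $q\to p$ gives cycles. These cycles have residual $0$, so the residual of a path from the entry state to $q$ depends only on $q$. A run of $V$ visits a chain of SCCs $C_1,\ldots,C_m$ linked by single transitions, with $m\le|Q|$. The family $\mathcal{V}$ enumerates such a chain of components, the entry states, and the linking transitions; this is why the statement uses a family rather than a single 2-VASS. For a fixed choice, each state $q$ of the chain carries a well-defined residual $\Vec{c}_q:=r(s)+r(\Delta(\text{path to }q))$ of polynomial norm. Every configuration $q(\Vec{x})$ on a compatible run then satisfies $\Vec{x}=\Vec{c}_q+L(\Vec{x}|_I)$. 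So $\Vec{x}$ is determined by $q$ and the two counters $\Vec{x}|_I$.

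\textbf{Step 3 (enforcing non-negativity with a 2-VASS).} Multiply the two counters by a common denominator $D$ so that $L$ becomes integral, with polynomial blow-up. By sign-reflection each coordinate $\ell\notin I$ satisfies $L(\Vec{z})(\ell)\ge 0$ whenever $\Vec{z}\ge\Vec{0}$. The only danger is therefore the bounded additive shift $\Vec{c}_q(\ell)$. For each $\ell$, the region $\{\Vec{z}\ge\Vec{0}: \Vec{c}_q(\ell)+L(\Vec{z})(\ell)<0\}$ lies in a strip near the boundary that is bounded in the direction of the generator of $\Cone(\{\Vec{r}_i,\Vec{r}_j\})$. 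My first attempt to enforce the constraint is as follows. Each transition $t$ of $V$ becomes a gadget of $2d+1$ steps in the 2-VASS: one step applying $D\cdot\Delta(t)|_I$, followed for each $\ell\in[d]$ by a test pair (decrement then increment) of a suitable vector. The vector is chosen so that the decrement is fireable iff the shifted counters certify $\Vec{x}(\ell)\ge 0$, using the fact that for a projective space the constraint $\Vec{c}_q(\ell)+L(\Vec{z})(\ell)\ge 0$ is implied by a lower bound on one counter, or on a positive combination supported on a ray of the cone. Where a single test does not capture the constraint, I will pad with neutral steps so that every gadget has length exactly $2d+1$. This gives $\Len(V')=(2d+1)\cdot\Len$ exactly, and $s',t'$ are the scaled projections of $s,t$.

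\textbf{Main obstacle.} I expect the hard part to be Step 3: realizing each half-plane constraint $\Vec{c}_q(\ell)+L(\Vec{z})(\ell)\ge 0$, which may have an oblique normal, by genuine 2-VASS non-negativity tests. If direct tests fail, I will fall back on a bounded-region argument. Near the boundary the relevant counter is polynomially bounded and can be moved into the state, with the family handling the case split. Projectivity, via \autoref{lem:2d-space-projective-iff}, should make every offending region lie within polynomial distance of the axes, which keeps this state encoding polynomial.
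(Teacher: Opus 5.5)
\textbf{There is a genuine gap, and it is in Step 3.} Your Steps 1 and 2 are sound. Your residual $r$ is the paper's shift $\delta$ divided by $c$. Enumerating SCC chains with linking transitions is a legitimate alternative to the paper's enumeration of labelings $f\colon Q\to[-6\Size(V)^3,6\Size(V)^3]^d$, which keeps only transitions $(r,\Vec{a},r')$ with $\delta(\Vec{a})=f(r')-f(r)$; both give families whose members have polynomial size.

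Step 3 is the crux, and it fails as proposed. A single decrement/increment pair with one fixed vector $\Vec{w}$ tests exactly $\Vec{z}\ge\Vec{w}$, i.e.\ membership in an up-set with a \emph{single} minimal element. The constraint to be tested has the form $\alpha\Vec{z}(1)+\beta\Vec{z}(2)+b\ge 0$ with $\alpha,\beta\in\mathbb{N}$ and $b\in\mathbb{Z}$. When $\alpha,\beta>0$ and $b<0$ it typically has many incomparable minimal points. For instance, with $\alpha=\beta=1$ and $b=-3$ they are $(0,3),(1,2),(2,1),(3,0)$.

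A lower bound on one counter is only a \emph{sufficient} condition. Using it as the test discards genuine runs of $V$ passing through points with, say, small $\Vec{z}(1)$ and large $\Vec{z}(2)$, so the inclusion $\supseteq$ in the length equality breaks. Any single weaker test instead admits spurious runs and breaks $\subseteq$. Padding with neutral steps does nothing for correctness. The fallback of moving a counter into the state ``near the boundary'' is not a construction: a run may enter and leave that strip arbitrarily often, so neither a family member nor a fixed state encoding can carry this case split. It would also destroy the exact factor $2d+1$.

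The missing idea is simply to branch nondeterministically inside the test. You already observed that $L(\Vec{z})(\ell)\ge 0$ for $\Vec{z}\ge\Vec{0}$, i.e.\ the canonical axes satisfy $\overline{\Vec{u}},\overline{\Vec{v}}\ge\Vec{0}$. Hence the set $\{\Vec{z}\in\mathbb{N}^2 \mid \alpha\Vec{z}(1)+\beta\Vec{z}(2)+b\ge 0\}$ is upward closed, with at most $|b|+1$ minimal elements, each of norm at most $|b|$.

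Build the gadget as follows. Add one intermediate state $s_{\Vec{m}}$ per minimal element $\Vec{m}$, with a transition of effect $-\Vec{m}$ into $s_{\Vec{m}}$ and one of effect $+\Vec{m}$ out of it. Every successful passage then has length exactly $2$. The gadget has size $O(|b|^2)$, which is polynomial because residuals are polynomially bounded; this is \autoref{lem:exists-2-vass-testing-inequ}.

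Chain $d$ such gadgets after each projected transition. This yields exactly $2d+1$ steps per original transition, with no padding. It also needs no rescaling by $D$: you can test $c\cdot\Vec{x}(\ell)=\overline{\Vec{u}}(\ell)\Vec{z}(1)+\overline{\Vec{v}}(\ell)\Vec{z}(2)+\delta(\Vec{x})(\ell)\ge 0$ directly on the unscaled counters.
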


Intuitively, we would like to project $V$ onto the two coordinates which form the sign-reflecting projection of $\CycleSpace(V)$. To make sure the resulting VASS faithfully reflects the behavior of $V$, we need to design a gadget to recover the omitted coordinates and test if they have non-negative values. From now on, let's fix a geometrically 2-dimensional $d$-VASS $(V, s, t)$, such that $\CycleSpace(V)$ is projective. Let $I = \{i, j\} \subseteq [d]$ be a sign-reflecting projection of $\CycleSpace(V)$. Let $\Vec{u}, \Vec{v} \in \mathbb{Z}^d$ be two vectors that span $\CycleSpace(V)$. We may assume that $\Vec{u}, \Vec{v}$ are effects of two simple cycles in $V$, hence $\norm{\Vec{u}}, \norm{\Vec{v}} \le \Size(V)$.

\begin{restatable}{proposition}{canonicalAxesOfProjective}
    \label{prop:canonical-axes-of-projective}
    There exists $\overline{\Vec{u}}, \overline{\Vec{v}} \in \CycleSpace(V) \cap \mathbb{N}^d$ such that $\overline{\Vec{u}}|_I = (c, 0)$ and $\overline{\Vec{v}}|_I = (0, c)$ for some $c > 0$, and $\norm{\overline{\Vec{u}}}, \norm{\overline{\Vec{v}}} \le 2\Size(V)^2$.
\end{restatable}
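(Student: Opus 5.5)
The plan is to construct $\overline{\Vec{u}}$ and $\overline{\Vec{v}}$ as integer combinations of $\Vec{u}$ and $\Vec{v}$, using Cramer's rule on the $2\times 2$ matrix obtained by restricting to $I$, and then to use the sign-reflecting property of $I$ to get non-negativity of all coordinates.

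First, the restriction map $\Vec{x}\mapsto \Vec{x}|_I$ is injective on $\CycleSpace(V)$. Indeed, if $\Vec{x}|_I=\Vec{0}$ for some $\Vec{x}\in\CycleSpace(V)$, then both $\Vec{x}|_I\ge\Vec{0}$ and $(-\Vec{x})|_I\ge \Vec{0}$. Sign reflection gives $\Vec{x}\ge\Vec{0}$ and $-\Vec{x}\ge\Vec{0}$, hence $\Vec{x}=\Vec{0}$. Since $\dim\CycleSpace(V)=2=|I|$, this restriction is an isomorphism onto $\mathbb{Q}^2$. Consequently the matrix
\[
M=\begin{pmatrix}\Vec{u}(i) & \Vec{v}(i)\\ \Vec{u}(j) & \Vec{v}(j)\end{pmatrix}
\]
is invertible. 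Let $c:=|\det M|>0$, which satisfies $c\le 2\Size(V)^2$.

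Next, define integer combinations via the adjugate. Set $\overline{\Vec{u}}:=\pm(\Vec{v}(j)\Vec{u}-\Vec{u}(j)\Vec{v})$ and $\overline{\Vec{v}}:=\pm(-\Vec{v}(i)\Vec{u}+\Vec{u}(i)\Vec{v})$, choosing each sign so that $\det M$ becomes $c$. Direct computation gives $\overline{\Vec{u}}|_I=(c,0)$ and $\overline{\Vec{v}}|_I=(0,c)$. Both vectors are integer, lie in $\CycleSpace(V)$, and have norm at most $2\norm{\Vec{u}}\norm{\Vec{v}}\le 2\Size(V)^2$. Since their restrictions to $I$ are non-negative, sign reflection yields $\overline{\Vec{u}},\overline{\Vec{v}}\ge\Vec{0}$, so both lie in $\mathbb{N}^d$.

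The only delicate point is the injectivity argument, which is what makes $\det M\neq 0$ and $c>0$. The remaining steps are routine arithmetic.
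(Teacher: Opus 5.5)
Your proposal is correct and follows essentially the same route as the paper. The paper likewise derives invertibility of the restricted $2\times 2$ matrix from sign reflection together with linear independence of $\Vec{u},\Vec{v}$. It then sets $\begin{pmatrix}\overline{\Vec{u}} & \overline{\Vec{v}}\end{pmatrix} := \begin{pmatrix}\Vec{u} & \Vec{v}\end{pmatrix}\cdot(cA^{-1})$, which is exactly your adjugate combinations, and obtains non-negativity by applying sign reflection once more.
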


The two vectors $\overline{\Vec{u}}, \overline{\Vec{v}}$ given by \autoref{prop:canonical-axes-of-projective} will be called the \emph{canonical axes} of the cycle space of $V$, with respect to the base $\Vec{u}$ and $\Vec{v}$. Since configurations in a VASS run might not always stay in the cycle space, we introduce the notion of \emph{shifts} of vectors to measure their deviation from the cycle space.

\begin{definition}
    Given $\Vec{x} \in \mathbb{Q}^d$, the \emph{shift} of $\Vec{x}$ from $\CycleSpace(V)$ (w.r.t.\ the canonical axes $\overline{\Vec{u}}, \overline{\Vec{v}}$) is defined as 
    \begin{align}
        \delta(\Vec{x}) := c \cdot \left( \Vec{x} - \frac{\Vec{x}(i) \cdot \overline{\Vec{u}} +\Vec{x}(j) \cdot \overline{\Vec{v}} }{c} \right),
        \label{eqn:def-of-shift}
    \end{align}
    where we recall that $i, j$ are the indices from the sign-reflecting projection $I$ of $\CycleSpace(V)$, and $c = \overline{\Vec{u}}(i) = \overline{\Vec{v}}(j)$.
\end{definition}

Note that $\delta(\cdot)$ is a linear mapping from $\mathbb{Q}^d$ to $\mathbb{Q}^d$. As $\overline{\Vec{u}}, \overline{\Vec{v}}$ spans $\CycleSpace(V)$, using the fact that $\overline{\Vec{u}}|_I = (c, 0)$ and $\overline{\Vec{v}}|_I = (0, c)$, it is routine to verify that $\delta(\Vec{x}) = \Vec{0}$ if and only if $\Vec{x} \in \CycleSpace(V)$. The definition also ensures that $\delta(\Vec{x})$ is integral whenever $\Vec{x}$ is integral. 

We define the shift of a configuration $c = p(\Vec{x})$ as $\delta(c) := \delta(\Vec{x})$. The following proposition implies that starting from a fixed source, configurations on any run have bounded shifts. 

\begin{restatable}{proposition}{shiftSubShiftSrcNormLE}
    \label{prop:shift-sub-shift-src-norm-le}
    For any configurations $e, e'$ in $V$ with $e \xrightarrow{*} e'$, we have $\norm{\delta(e') - \delta(e)} \le 6 \Size(V)^3$.
\end{restatable}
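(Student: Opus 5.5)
The plan is to reduce the claim to a per-step analysis by exploiting linearity of $\delta$. Write the run $e \xrightarrow{*} e'$ as $e=p_0(\Vec{x}_0)\xrightarrow{t_1}\cdots\xrightarrow{t_n}p_n(\Vec{x}_n)=e'$. Then
\[
\delta(e')-\delta(e)=\delta(\Vec{x}_n-\Vec{x}_0)=\delta(\Delta(t_1\cdots t_n)).
\]
So it suffices to bound $\norm{\delta(\Delta(\pi))}$ for an arbitrary path $\pi$ in $V$ from $p$ to $q$. Note that non-negativity of the run is not needed here, so the statement actually holds for $\mathbb{Z}$-runs.

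The key step is the standard cycle decomposition of a path. Repeatedly removing simple cycles from $\pi$ leaves a simple path $\sigma$ from $p$ to $q$, and the multiset of removed simple cycles $\theta_1,\dots,\theta_m$ satisfies
\[
\Delta(\pi)=\Delta(\sigma)+\sum_{l}\Delta(\theta_l).
\]
Each $\Delta(\theta_l)$ lies in $\CycleSpace(V)$, and $\delta$ vanishes exactly on $\CycleSpace(V)$. Hence $\delta(\Delta(\pi))=\delta(\Delta(\sigma))$. Since $\sigma$ is simple, it has length less than $|Q|$, so $\norm{\Delta(\sigma)}\le |Q|\cdot\norm{T}\le \Size(V)$.

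It remains to bound the operator $\delta$ on a vector $\Vec{w}$ with $\norm{\Vec{w}}\le\Size(V)$. From the definition,
\[
\delta(\Vec{w})=c\,\Vec{w}-\Vec{w}(i)\,\overline{\Vec{u}}-\Vec{w}(j)\,\overline{\Vec{v}}.
\]
By \autoref{prop:canonical-axes-of-projective}, $c=\overline{\Vec{u}}(i)\le\norm{\overline{\Vec{u}}}\le 2\Size(V)^2$ and $\norm{\overline{\Vec{u}}},\norm{\overline{\Vec{v}}}\le 2\Size(V)^2$. Therefore
\[
\norm{\delta(\Vec{w})}\le 2\Size(V)^2\cdot\Size(V)+\Size(V)\cdot 2\Size(V)^2+\Size(V)\cdot 2\Size(V)^2=6\Size(V)^3.
\]
This gives exactly the claimed constant.

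The only step needing care is the cycle-removal argument, and specifically the claim that the residual simple path has effect bounded by $\Size(V)$. Removing a simple cycle (a repeated state segment) from a path yields a path with the same endpoints, and the process terminates in a path with no repeated states. Its length is therefore at most $|Q|-1$, and $|Q|\cdot\norm{T}\le\Size(V)$ by definition of $\Size$. Everything else is linearity.
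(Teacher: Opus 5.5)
Your proof is correct and is essentially the paper's argument: linearity of $\delta$, exhaustive removal of simple cycles (whose effects lie in $\CycleSpace(V)$ and are killed by $\delta$), and the operator bound $\norm{\delta(\Vec{w})}\le 6\Size(V)^2\norm{\Vec{w}}$ applied to the effect of the residual simple path. One small slip: $|Q|\cdot\norm{T}\le\Size(V)$ need not hold (e.g.\ when there are many isolated states). The bound $\norm{\Delta(\sigma)}\le\Size(V)$ that you actually need is still true, because a simple path uses pairwise distinct transitions, so $\norm{\Delta(\sigma)}\le |T|\cdot\norm{T}\le\Size(V)$.
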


The idea of converting $(V, s, t)$ to a 2-VASS is to drop all coordinates other than $i, j$ and keep record of the shifts using the states. However, using \autoref{prop:shift-sub-shift-src-norm-le} alone, we have $O(6 \Size(V)^3)^d$ possible shifts that could occur on a run from the source $s$, which is exponential in $d$. The next proposition says that in a fixed run, the shift of a configuration is determined by its state.

\begin{restatable}{proposition}{shiftFromStateOfFixedRun}
    \label{prop:shift-from-state-of-fixed-run}
    Let $e \xrightarrow{\pi} e'$ be a run in $V$, let $Q_\pi$ be the set of states occurring on $\pi$. There exists a function $f_{\pi} : Q_\pi \to \mathbb{Z}^d$ such that for any configuration $q(\Vec{x})$ on $\pi$, we have $\delta(\Vec{x}) = f_\pi(q)$.
\end{restatable}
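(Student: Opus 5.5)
The plan is to show that along a fixed run, any two visits to the same state carry the same shift, and then to set $f_\pi(q)$ to be that common value. First, $\delta$ is linear and vanishes exactly on $\CycleSpace(V)$, as observed right after the definition of the shift. Second, I will use that if a path $\sigma$ is a cycle in $V$ (it starts and ends in the same state), then $\Delta(\sigma) \in \CycleSpace(V)$. This is standard: a cycle, viewed as a closed walk, decomposes as a multiset into simple cycles, so its effect is a nonnegative integer combination of simple-cycle effects.

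Now let $q(\Vec{x})$ and $q(\Vec{x}')$ be two configurations on $\pi$ with the same state $q$, say $q(\Vec{x})$ occurring no later than $q(\Vec{x}')$. The segment $\sigma$ of $\pi$ between them is a path from $q$ to $q$, hence a cycle (possibly empty), and $\Vec{x}' = \Vec{x} + \Delta(\sigma)$. By linearity,
\begin{equation}
\delta(\Vec{x}') = \delta(\Vec{x}) + \delta(\Delta(\sigma)) = \delta(\Vec{x}),
\end{equation}
since $\Delta(\sigma) \in \CycleSpace(V)$. Therefore the map $f_\pi(q) := \delta(\Vec{x})$, where $q(\Vec{x})$ is any configuration on $\pi$ with state $q$, is well defined on $Q_\pi$.

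Integrality of $f_\pi(q)$ follows from the remark that $\delta(\Vec{x})$ is integral whenever $\Vec{x}$ is. This holds because the factor $c$ clears the denominators and $\overline{\Vec{u}}, \overline{\Vec{v}}$ are integral, so $\delta(\Vec{x}) = c\Vec{x} - \Vec{x}(i)\overline{\Vec{u}} - \Vec{x}(j)\overline{\Vec{v}} \in \mathbb{Z}^d$.

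The only step that needs any care is the decomposition of a closed walk into simple cycles. I would prove it by induction on length: take the first repeated state in the walk, extract the simple cycle between its two occurrences, and what remains is a shorter closed walk. Everything else is immediate.
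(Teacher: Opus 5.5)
Your proof is correct and follows the paper's argument. The paper sets $f_\pi(q)$ to be the shift of the first visit to $q$ and observes that any later visit differs from it by an element of $\CycleSpace(V)$, on which the linear map $\delta$ vanishes. You fill in two details the paper leaves implicit: the decomposition of the closed segment into simple cycles, and the integrality of $\delta(\Vec{x})$.
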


\paragraph*{Nonnegativity Testing}

Using \autoref{eqn:def-of-shift}, a $\mathbb{Z}$-configuration $q(\Vec{x})$ can be recovered from $q, \Vec{x}|_I$ and $\delta(\Vec{x})$. We will design a 2-VASS gedget to test whether $\Vec{x} \ge \Vec{0}$ holds at the projected configuration $q(\Vec{x}|_I)$ with $\delta(\Vec{x})$ in hand. By \autoref{eqn:def-of-shift}, for each $\ell \in [d]$ we have
\begin{align}
    c \cdot \Vec{x}(\ell) = \overline{\Vec{u}}(\ell) \cdot \Vec{x}(i) + \overline{\Vec{v}}(\ell) \cdot \Vec{x}(j) + \delta(\Vec{x})(\ell).
    \label{eqn:cx-eq-xij-plus-delta}
\end{align}
Recall that $\overline{\Vec{u}}$, $\overline{\Vec{v}}$ are non-negative. The following lemma shows that a 2-VASS can test if the right-hand side of \autoref{eqn:cx-eq-xij-plus-delta} is non-negative.

\begin{restatable}{lemma}{existsTwoVASSTestingInequ}
    \label{lem:exists-2-vass-testing-inequ}
    Let $m, n \in \mathbb{N}$ and $b \in \mathbb{Z}$. There is a 2-VASS $V_{m,n;b}$ with two states $p, q$ such that $\Size(V_{m,n;b}) \le O(|b|^2)$ and for all $x, y \in \mathbb{N}$, we have $p(x, y) \xrightarrow{\pi} q(x', y')$ for some run $\pi$ and $x', y' \in \mathbb{N}$ if and only if $mx + ny + b \ge 0$ and $(x', y') = (x, y)$ and $|\pi| = 2$.
\end{restatable}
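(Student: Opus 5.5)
The plan is to build $V_{m,n;b}$ as a family of ``decrement-then-restore'' gadgets running in parallel from $p$ to $q$, one gadget per \emph{minimal witness} of the inequality. The statement only names the source $p$ and target $q$; I take it that auxiliary states are allowed, so I will add one fresh intermediate state per gadget. If $b \ge 0$, the inequality $mx+ny+b\ge 0$ always holds because $m,n\ge 0$ and $x,y\ge 0$. In that case I use a single intermediate state $r$ with transitions $p\xrightarrow{(0,0)}r\xrightarrow{(0,0)}q$. So assume $b<0$ and write $\beta:=|b|$.

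The construction is as follows. For each $a\in[0,\beta]$, let $b'_a$ be the least $b'\in[0,\beta]$ with $ma+nb'\ge \beta$, if such a $b'$ exists. Let $W$ be the set of pairs $(a,b'_a)$ obtained this way; clearly $|W|\le \beta+1$. For each $(a,b')\in W$ I add a fresh state $r_{a,b'}$ and two transitions
\[
p\xrightarrow{(-a,-b')}r_{a,b'}\xrightarrow{(a,b')}q .
\]
There are no other transitions. The transitions then number $O(\beta)$, each of norm at most $\beta$, and there are $O(\beta)$ states. Hence $\Size(V_{m,n;b})=O(\beta^2)=O(|b|^2)$. Restricting to the minimal $b'$ for each $a$, rather than taking all pairs in $[0,\beta]^2$, is exactly what keeps the size quadratic rather than cubic.

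Correctness has two directions.

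\emph{Runs force the inequality and the identity.} Every path from $p$ to $q$ has the form $p\to r_{a,b'}\to q$. So it has length $2$ and total effect $(0,0)$, which gives $(x',y')=(x,y)$. It is fireable from $p(x,y)$ exactly when $x\ge a$ and $y\ge b'$. In that case $mx+ny\ge ma+nb'\ge\beta$, using $m,n\ge 0$, so $mx+ny+b\ge 0$.

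\emph{The inequality yields a run.} Suppose $mx+ny\ge\beta$. Put $a:=\min(x,\beta)$. I claim some $b'\le\min(y,\beta)$ satisfies $ma+nb'\ge\beta$. If $x\ge\beta$ and $m\ge 1$, then $ma\ge\beta$ and $b'=0$ works. Otherwise $a=x$, or $m=0$ in which case $ma=mx=0$; either way $ma=mx$. Then $b'=\min(y,\beta)$ works: if $y\ge\beta$, we need $n\ge 1$, since $n=0$ would force $mx\ge\beta$ and the earlier choice would have applied; if $y<\beta$, then $b'=y$ and the inequality is just the hypothesis. Hence $b'_a$ exists and $b'_a\le \min(y,\beta)\le y$, so the gadget $r_{a,b'_a}$ fires from $p(x,y)$.

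The only delicate point is this case analysis for the degenerate coefficients $m=0$ or $n=0$ in the second direction. I expect it to be the main place where care is needed; the rest is bookkeeping.
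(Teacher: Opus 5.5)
Your proposal is correct and is essentially the paper's proof. Both constructions use one fresh intermediate state per witness point $\Vec{w}$ of the upward-closed set $\{(x,y)\in\mathbb{N}^2\mid mx+ny+b\ge 0\}$, with transitions $p\xrightarrow{-\Vec{w}}r_{\Vec{w}}\xrightarrow{+\Vec{w}}q$; your ``least $b'$ for each $a\in[0,\beta]$'' parametrization replaces the paper's explicit case split by ceilings, and by staying inside $[0,\beta]^2$ it also avoids the paper's point with $i=\lceil -b/m\rceil$, whose second coordinate can come out negative.

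There is one slip in the subcase $y\ge\beta$, $n=0$. The first case need not have applied there (e.g.\ $x<\beta\le mx$), but $b'=\min(y,\beta)$ still works because then $ma=mx\ge\beta$.
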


\begin{proof}[Proof Sketch]
    As $m, n \ge 0$, the set $\{(x, y) \in \mathbb{N}^2 \mid mx + ny + b\}$ can be described as the region above finitely many minimal points. A VASS may test non-deterministically whether the current configuration is above one of those minimal points by subtracting it and then adding it back.
\end{proof}

Now we have enough tools to convert a geometrically 2-dimensional VASS whose cycle space is projective to a family of 2-VASSes. We sketch the proof of \autoref{lem:geo-2d-to-2d-projective} below.

\begin{proof}[Proof sketch of \autoref{lem:geo-2d-to-2d-projective}]
    Let $(V, s, t)$ be a geometrically 2-dimensional VASS where $\CycleSpace(V)$ is projective. Let $I = \{i, j\}$ be a sign-reflecting projection of $\CycleSpace(V)$. By \autoref{prop:shift-from-state-of-fixed-run}, a function $f: Q \to [-6\Size(V)^3, 6\Size(V)]$ will assign the shift $f(q) + \delta(s)$ on a run from $s$ to $t$ when encountering the state $q$. Fixing such a function $f$, we drop all coordinates except $i$ and $j$ from the set of transitions. Moreover, from states $p$ to $q$ we shall keep only those transitions $u = (p, \Vec{a}, q)$ such that $f(q) = f(p) + \delta(a)$, i.e.\ the shift is updated correctly. After each transition, we add $d$ gadgets given by \autoref{lem:exists-2-vass-testing-inequ} to test whether the $d$ coordinates of the recovered configuration are non-negative. We repeat the above construction for each function $f: Q \to [-6\Size(V)^3, 6\Size(V)]$, and collect all of the resulting 2-VASSes to form the desired family $\mathcal{V}$.
\end{proof}

\subsubsection{Non-Projective Case}\label{subsec:non-proj}

Now we move the focus to geometrically 2-dimensional VASS whose cycle space is not projective. We will show that such a VASS can be converted into a geometrically 1-dimensional VASS, which can be reduced to a projective geometrically 2-dimensional VASS by simply adding an isolated self-loop whose effect is properly chosen. 

\begin{restatable}{lemma}{projectiveGeoTwoDFromNonProjGeoTwod}
    \label{lem:projective-geo-2d-from-non-projective-geo-2d}
    There is a polynomial $H_2$ such that, for any geometrically 2-dimensional VASS $(V, s, t)$ where $\CycleSpace(V)$ is not projective, there exists a geometrically 2-dimensional VASS $(V', s', t')$ such that $\CycleSpace(V')$ is projective, and that $\Len(V, s, t) = \Len(V', s', t')$, $\Size(V', s', t') \le H_2(\Size(V, s, t))$.
\end{restatable}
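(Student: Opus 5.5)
We use the characterization of \autoref{lem:2d-space-projective-iff}. Fix a base $\Vec{u},\Vec{v}$ of $X:=\CycleSpace(V)$, given by effects of simple cycles, and put $\Vec{r}_\ell=(\Vec{u}(\ell),\Vec{v}(\ell))$. Non-projectivity says that no open half-plane of $\mathbb{Q}^2$ contains all non-zero $\Vec{r}_\ell$. Hence their cone is either all of $\mathbb{Q}^2$, or a line $\mathbb{Q}\Vec{r}$ containing both $\Vec{r}$ and $-\Vec{r}$ among the rows.

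\textbf{Step 1: there is a non-zero $\Vec{n}$ with $\Inner{\Vec{n},\Vec{w}}$ bounded on all vectors of the run.} In either case there are non-negative integer weights $\lambda_\ell$, not all zero on non-zero rows, with $\sum_\ell \lambda_\ell \Vec{r}_\ell=\Vec{0}$. Take $\Vec{n}:=\sum_\ell \lambda_\ell\Vec{e}_\ell\ge\Vec{0}$. Then $\Inner{\Vec{n},\Vec{x}}=0$ for every $\Vec{x}\in X$, and a Carathéodory/Cramer argument lets us choose $\norm{\Vec{n}}\le\Size(V)^{O(1)}$. Moreover we can arrange that $\Vec{n}$ is non-zero on at least two coordinates $\ell,\ell'$ with $\Vec{r}_\ell,\Vec{r}_{\ell'}$ non-collinear, or with opposite directions.

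Consider any run from $s$ to $t$. The configuration is the sum of $s$, a bounded acyclic part of the path (a simple path, of length at most $|Q|$), and a combination of cycle effects lying in $X$. So $\Inner{\Vec{n},\Vec{x}}$ takes polynomially bounded values, as in \autoref{prop:shift-sub-shift-src-norm-le}. Since $\Vec{n}\ge\Vec{0}$ and $\Vec{x}\ge\Vec{0}$, every coordinate $\ell$ in $\Supp(\Vec{n})$ satisfies $\Vec{x}(\ell)\le\Inner{\Vec{n},\Vec{x}}\le\Size(V,s,t)^{O(1)}$ on the whole run.

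\textbf{Step 2: encode the bounded coordinates into states.} Encoding the coordinates in $\Supp(\Vec{n})$ into the finite control, as in \autoref{lem:reduction-for-bounded}, gives a VASS $V''$ of polynomial size with $\Len(V'',s'',t'')=\Len(V,s,t)$. Cycles of $V''$ have zero effect on these coordinates. Since $\Supp(\Vec{n})$ contains a coordinate on which $X$ is non-zero, $\CycleSpace(V'')$ is a proper subspace of $X$, so $\dimcyc(V'')\le1$. The dropped coordinates are kept as fixed zero entries, and lengths are preserved exactly.

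\textbf{Step 3: enlarge the cycle space to a projective one.} Add a fresh state $z$ unreachable from $s''$, with a self-loop of effect $\Vec{w}$. Then $\Len$ is unchanged and the cycle space becomes $\Span(\CycleSpace(V'')\cup\{\Vec{w}\})$. Choose $\Vec{w}$ with small entries so that this space is 2-dimensional and its rows lie in an open half-plane; then it is projective by \autoref{lem:2d-space-projective-iff}.

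Let $\Vec{y}$ span $\CycleSpace(V'')$ (if $\CycleSpace(V'')=\{\Vec{0}\}$, take $\Vec{y}:=\Vec{e}_1$ and use any $\Vec{w}$ independent of it). The rows are $(\Vec{y}(\ell),\Vec{w}(\ell))$. Take $\Vec{w}(\ell)=1$ wherever $\Vec{y}(\ell)=0$, and $\Vec{w}(\ell)=0$ elsewhere, adjusted if needed to keep $\Vec{w}$ independent of $\Vec{y}$. Then $\Vec{n}'=(\varepsilon,1)$ works for a large enough scaling. Precisely, we need $\varepsilon\,\Vec{y}(\ell)+\Vec{w}(\ell)>0$ for every non-zero row, so first flip $\Vec{w}(\ell)$ to $1$ on some coordinates to cancel rows where $\Vec{y}(\ell)<0$ and $\Vec{w}(\ell)=0$. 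Setting $\Vec{w}(\ell)=1$ on every coordinate and $\Vec{n}'=(0,1)$ gives all rows positive second component, provided $\Vec{w}\notin\Span\{\Vec{y}\}$. If $\Vec{y}$ is parallel to $\Vec{1}$, use $\Vec{w}=\Vec{1}+\Vec{e}_1$ instead. The size increase is polynomial, which gives $H_2$.

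\textbf{Main obstacle.} The delicate step is Step 1. We must ensure that the bounded coordinates actually kill a direction of $X$, and that the bound on $\Inner{\Vec{n},\Vec{x}}$ is polynomial, independent of how states recur. This is handled by decomposing the run into a simple path plus cycles, as in \autoref{prop:shift-sub-shift-src-norm-le}.
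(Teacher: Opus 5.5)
Your proposal is correct and follows essentially the paper's proof: a non-negative normal vector $\Vec{n}$ of $\CycleSpace(V)$ with polynomial norm and support of size at most $3$ (the paper's \autoref{lem:normal-vec-of-non-projective}), then polynomial bounds on the coordinates in $\Supp(\Vec{n})$ via the simple-path-plus-cycles decomposition (\autoref{prop:bounded-counters-of-non-projective}). These coordinates are encoded into the states so the cycle space becomes a proper subspace, and finally an isolated self-loop of effect $\Vec{1}$ (or a small variant when the remaining cycle direction is parallel to $\Vec{1}$) restores a projective 2-dimensional cycle space.

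There are two small slips to fix. First, your dichotomy is incomplete: the cone of a non-projective row configuration can also be a closed half-plane (e.g.\ rows $(1,0),(-1,0),(0,1)$). So derive the positive dependence directly from Gordan's alternative instead. Second, state explicitly that the Carath\'eodory step gives $|\Supp(\Vec{n})|\le 3$. That bound, not only the bound on $\norm{\Vec{n}}$, is what keeps the state encoding in Step~2 polynomial.
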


From now on, we fix a geometrically 2-dimensional VASS $(V,s,t)$ as in the assumption of \autoref{lem:projective-geo-2d-from-non-projective-geo-2d}, so $\CycleSpace(V)$ is not projective. For a vector $\Vec{x} \in \mathbb{Q}^d$ we write $\Supp(\Vec{x}) = \{i \in [d] \mid \Vec{x}(i) \ne 0\}$ for the set of non-zero coordinates of $\Vec{x}$. The notation is extended to sets $X \subseteq \mathbb{Q}^d$ as $\Supp(X) = \bigcup_{\Vec{x} \in X}\Supp(\Vec{x})$. For a non-projective 2-dimensional space, we can find a non-negative normal vector with small support.

\begin{restatable}{lemma}{normalVecOfNonProjective}
    \label{lem:normal-vec-of-non-projective}
    Let $X = \Span(\Vec{u}, \Vec{v}) \subseteq \mathbb{Q}^d$ be a 2-dimensional vector space that is not projective, where $\Vec{u}, \Vec{v} \in \mathbb{Z}^d$. Let $M := \max\{\norm{\Vec{u}}, \norm{\Vec{v}}\}$. Then there exists a non-zero vector $\Vec{n} \in \mathbb{N}^d$ such that $\Inner{\Vec{x}, \Vec{n}} = 0$ for all $\Vec{x} \in X$, and $\norm{\Vec{n}} \le 2M^2$, $|\Supp(\Vec{n})| \le 3$ and $\Supp(\Vec{n}) \subseteq \Supp(X)$.
\end{restatable}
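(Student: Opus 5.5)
We pass to the row vectors $\Vec{r}_\ell := (\Vec{u}(\ell), \Vec{v}(\ell)) \in \mathbb{Z}^2$, $\ell \in [d]$, and use \autoref{lem:2d-space-projective-iff}. For $\Vec{w} = (\alpha,\beta)\in\mathbb{Q}^2$, the vector $\alpha\Vec{u}+\beta\Vec{v}\in X$ has $\ell$-th entry $\Inner{\Vec{w},\Vec{r}_\ell}$. So a vector $\Vec{n}\in\mathbb{N}^d$ is orthogonal to all of $X$ exactly when $\sum_\ell \Vec{n}(\ell)\Vec{r}_\ell = \Vec{0}$. Hence we need a non-trivial non-negative integer dependency among the rows, using at most three rows and only rows with $\Vec{r}_\ell\neq\Vec{0}$ (these are exactly the indices in $\Supp(X)$). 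This reduces the statement to a planar Carathéodory-type argument.

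\textbf{Step 1 (separation fails).} Since $X$ is not projective, \autoref{lem:2d-space-projective-iff} says there is no $\Vec{w}\in\mathbb{Q}^2$ with $\Inner{\Vec{w},\Vec{r}_\ell}>0$ for all nonzero rows. Let $R$ be the set of nonzero rows, which is nonempty because $\dim X=2$. By Gordan's theorem (a Farkas variant), $\Vec{0}$ lies in the convex hull of $R$. Concretely, $\sum_{\ell}\lambda_\ell\Vec{r}_\ell=\Vec{0}$ for some $\lambda\ge 0$ that is not identically zero and is supported on rows in $R$.

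\textbf{Step 2 (small support).} By Carathéodory's theorem in $\mathbb{R}^2$, $\Vec{0}$ is a convex combination of at most $3$ points of $R$. Choose a minimal such subset. It has one of two forms:
\begin{itemize}
\item two opposite rows $\Vec{r}_a$, $\Vec{r}_b=-\mu\Vec{r}_a$ with $\mu>0$;
\item three rows $\Vec{r}_a,\Vec{r}_b,\Vec{r}_c$, any two of which are linearly independent, with $\Vec{0}$ in the interior of their triangle.
\end{itemize}
A single nonzero row cannot occur, since no point of $R$ equals $\Vec{0}$.

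\textbf{Step 3 (integral coefficients and the norm bound).} We write explicit integer coefficients.
\begin{itemize}
\item Opposite case: if $\Vec{r}_b=-\mu\Vec{r}_a$, pick a coordinate $k$ with $\Vec{r}_a(k)\ne0$. Set $\Vec{n}(a)=|\Vec{r}_b(k)|$ and $\Vec{n}(b)=|\Vec{r}_a(k)|$. Then $\Vec{n}(a)\Vec{r}_a+\Vec{n}(b)\Vec{r}_b=\Vec{0}$, both entries are positive, and $\norm{\Vec{n}}\le M$.
\item Three-row case: use Cramer's rule. Set $\Vec{n}(a)=|\det(\Vec{r}_b,\Vec{r}_c)|$, $\Vec{n}(b)=|\det(\Vec{r}_c,\Vec{r}_a)|$, $\Vec{n}(c)=|\det(\Vec{r}_a,\Vec{r}_b)|$. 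The identity $\det(\Vec{r}_b,\Vec{r}_c)\Vec{r}_a+\det(\Vec{r}_c,\Vec{r}_a)\Vec{r}_b+\det(\Vec{r}_a,\Vec{r}_b)\Vec{r}_c=\Vec{0}$ holds for any three vectors in $\mathbb{Q}^2$. Because $\Vec{0}$ is a positive combination of the three rows, the three determinants have the same sign, so taking absolute values preserves the relation. Each $2\times2$ determinant has absolute value at most $2M^2$.
\end{itemize}
In both cases $\Vec{n}\neq\Vec{0}$, $\Vec{n}\in\mathbb{N}^d$, $|\Supp(\Vec{n})|\le3$, and $\Supp(\Vec{n})$ consists of indices of nonzero rows, so $\Supp(\Vec{n})\subseteq\Supp(X)$.

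The main obstacle is the sign-consistency claim in Step 3. I would prove it by minimality of the Carathéodory subset: the positive combination $\lambda_a\Vec{r}_a+\lambda_b\Vec{r}_b+\lambda_c\Vec{r}_c=\Vec{0}$ has all $\lambda>0$. Since the three rows span only a 2-dimensional space and are pairwise independent, the dependency space is one-dimensional. Therefore $\lambda$ is proportional to the cofactor vector $(\det(\Vec{r}_b,\Vec{r}_c),\det(\Vec{r}_c,\Vec{r}_a),\det(\Vec{r}_a,\Vec{r}_b))$, so all three cofactors have the sign of the proportionality constant.
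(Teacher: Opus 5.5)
Your proof is correct, and it ends exactly where the paper does: a non-negative dependency among at most three nonzero rows, with $2\times 2$-determinant (Cramer) coefficients bounded by $2M^2$. The difference is in how you obtain that small-support dependency.

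\textbf{The paper's route.} It works directly from the definition of projectivity. It takes a minimal subset $R$ of rows generating the same cone as all the rows, and non-projectivity forces $|R|\ge 3$. It then picks three rows of $R$ with two of them linearly independent and writes the third as $\alpha\Vec{r}_i+\beta\Vec{r}_j$. Minimality of $R$, after a renaming, forces $\alpha,\beta\le 0$. This is already the non-negative relation, so the signs of the final coefficients come for free.

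\textbf{Your route.} You go through the dual side. \autoref{lem:2d-space-projective-iff} together with Gordan's alternative yields some non-negative dependency. Carath\'eodory plus minimality then cuts its support down to either two opposite rows or three pairwise independent rows. In the three-row case you need the extra observation that the kernel is one-dimensional, so that all three cofactors share a sign.

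\textbf{Trade-offs.} The paper's argument is more elementary and self-contained, since it needs no theorem of the alternative. Yours is more modular in three ways:
\begin{itemize}
\item it isolates the degenerate opposite-pair case explicitly, where you even get the better bound $M$;
\item it makes the ``rename so that $\alpha,\beta$ have the same sign'' step unnecessary;
\item it extends verbatim to rows spanning a $k$-dimensional space, giving support at most $k+1$.
\end{itemize}

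One small point to state explicitly: Gordan's theorem is usually formulated over $\mathbb{R}$, while \autoref{lem:2d-space-projective-iff} quantifies over $\mathbb{Q}^2$. This is harmless, because a strict real separator can be perturbed to a rational one.
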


We remark that this lemma is essentially \cite[Lemma 5.4]{DBLP:journals/jacm/BlondinEFGHLMT21}.
Suppose $\CycleSpace(V) = \Span(\Vec{u}, \Vec{v})$ where $\Vec{u}, \Vec{v}$ are effects of two simple cycles. Applying \autoref{lem:normal-vec-of-non-projective} we get a normal vector $\Vec{n} \ge \Vec{0}$ of $\CycleSpace(V)$ with $\norm{\Vec{n}} \le 2\Size(V)^2$. Let $K := \Supp(\Vec{n})$, it can be shown that on a run in $V$ starting from a fixed source, configurations have bounded values in coordinates in $K$.

\begin{restatable}{proposition}{boundedCountersOfNonProjective}
    \label{prop:bounded-counters-of-non-projective}
    Suppose $p(\Vec{x}) \xrightarrow{*} q(\Vec{y})$ in $V$. Then $\norm{ \Vec{y}|_K } \le 6 \Size(V, p(\Vec{x}), q(\Vec{y}))^3$.
\end{restatable}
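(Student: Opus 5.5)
The plan is to exploit that $\Vec{n}$ is a non-negative normal vector of $\CycleSpace(V)$: the quantity $\Inner{\Vec{n}, \Vec{x}}$ is invariant along cycles, so it can change only along the acyclic parts of a run. Non-negativity of both $\Vec{n}$ and the configuration vector then bounds every coordinate in $K$ separately.

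\textbf{Step 1: decompose the run.} Write the run $p(\Vec{x}) \xrightarrow{\pi} q(\Vec{y})$ and repeatedly cut out simple cycles from the path $\pi$ as a word, in the standard way. The effect then decomposes as $\Delta(\pi) = \Delta(\sigma) + \sum_j \Delta(\theta_j)$, where $\sigma$ is a simple path, so $|\sigma| \le |Q|$, and each $\theta_j$ is a simple cycle of $V$. Only the effect identity is needed, not fireability of the pieces.

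\textbf{Step 2: invariance of the normal functional.} Each $\Delta(\theta_j) \in \CycleSpace(V)$, so $\Inner{\Vec{n}, \Delta(\theta_j)} = 0$. Hence
\begin{equation}
\Inner{\Vec{n}, \Vec{y}} = \Inner{\Vec{n}, \Vec{x}} + \Inner{\Vec{n}, \Delta(\sigma)} \le \Inner{\Vec{n}, \Vec{x}} + |\Supp(\Vec{n})|\cdot\norm{\Vec{n}}\cdot |Q|\cdot\norm{T}.
\end{equation}
By \autoref{lem:normal-vec-of-non-projective}, $|\Supp(\Vec{n})| \le 3$ and $\norm{\Vec{n}} \le 2\Size(V)^2$, where $M \le \Size(V)$ since $\Vec{u}, \Vec{v}$ are effects of simple cycles. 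This gives $\Inner{\Vec{n}, \Vec{x}} \le 3\cdot 2\Size(V)^2\norm{\Vec{x}}$, and the path term is at most $3\cdot 2\Size(V)^2\cdot |Q|\norm{T}$.

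\textbf{Step 3: bound each coordinate in $K$.} For $k \in K$ we have $\Vec{n}(k) \ge 1$ and $\Vec{y} \ge \Vec{0}$, so
\begin{equation}
\Vec{y}(k) \le \Inner{\Vec{n}, \Vec{y}} \le 6\Size(V)^2\bigl(\norm{\Vec{x}} + |Q|\norm{T}\bigr).
\end{equation}

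\textbf{Main obstacle.} The only delicate point is matching the stated constant $6\Size(V,p(\Vec{x}),q(\Vec{y}))^3$. This uses $|Q|\norm{T} \le \Size(V)$ and $\norm{\Vec{x}} \le \Size(V, p(\Vec{x}), q(\Vec{y}))$. With these, the sum $\norm{\Vec{x}} + |Q|\norm{T}$ fits inside one factor of $\Size(V,p(\Vec{x}),q(\Vec{y}))$, because the size counts $\Size(V)$ plus $d(\norm{\Vec{x}}+\norm{\Vec{y}}+1)$. The bound on $\norm{\Vec{n}}$ supplies the remaining squared factor.
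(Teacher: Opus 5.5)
Your argument is correct and is essentially the paper's proof: split $\Delta(\pi)$ into a cycle-space part, which is annihilated by $\Vec{n}$, plus the effect of a simple path; bound $\Inner{\Vec{n}, \Vec{y}}$ by $|\Supp(\Vec{n})|\cdot\norm{\Vec{n}}\cdot(\norm{\Vec{x}}+\norm{\Delta(\sigma)})$; and use $\Vec{n},\Vec{y}\ge\Vec{0}$ to bound each coordinate in $K$. One small fix: $|Q|\cdot\norm{T}\le\Size(V)$ can fail because isolated states inflate $|Q|$, but a simple path or simple cycle uses pairwise distinct transitions, so its effect has norm at most $|T|\cdot\norm{T}\le\Size(V)$, which is all you need (the paper's ``at most $|Q|$ transitions'' remark glosses over the same point).
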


The idea of \autoref{lem:projective-geo-2d-from-non-projective-geo-2d} is to encode into states all possible values in the coordinates in $K$ that could occur on a run from the source $s$. As $K = \Supp(\Vec{n}) \subseteq \Supp(\CycleSpace(V))$, after this encoding, the dimension of the cycle space must drop. We may add some isolated cycles to make its cycle space projective without affecting its reachability relation. Complete proof of \autoref{lem:projective-geo-2d-from-non-projective-geo-2d} can be found in the appendix.

Finally, we are able to prove \autoref{lem:geo-2d-to-2d-poly}.

\begin{proof}[Proof of \autoref{lem:geo-2d-to-2d-poly}]
    Let $(V, s, t)$ be a geometrically 2-dimensional VASS. We first apply \autoref{lem:projective-geo-2d-from-non-projective-geo-2d} if $\CycleSpace(V)$ is not projective. Applying \autoref{lem:geo-2d-to-2d-projective} then finishes the proof.
\end{proof}

\subsubsection{Complexity Bounds for Geometrically 2-Dimensional VASS}

In this section, we discuss the reachability problem in geometrically 2-dimensional VASS on its own interest. Since \autoref{lem:geo-2d-to-2d-poly} converts a geometrically 2-dimensional VASS to a 2-VASS of polynomial size, we are able to apply the well-known length bounds for shortest runs in 2-VASS.

\begin{theorem}[{\cite[Theorem 3.2]{DBLP:journals/jacm/BlondinEFGHLMT21}}]
    \label{thm:min-len-le-poly-of-2d}
    Let $(V, s, t)$ be a 2-VASS. If $s \xrightarrow{*} t$ then $\min \Len(V, s, t) \le \Size(V, s, t)^{O(1)}$.
\end{theorem}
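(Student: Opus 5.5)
Since this is the 2-VASS length bound of Blondin et al., I would reconstruct their argument along the lines of the present paper: a boundedness/pumpability case split, with linear path schemes used for the hardest case. Throughout, let $M := \Size(V, s, t)$ and fix a run $s \xrightarrow{\pi} t$.

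\emph{Step 1 (bounded case).} Suppose some coordinate, say the first, stays below a polynomial threshold $B = M^{c}$ along the whole of $\pi$. Then I encode its value into the states. This is \autoref{lem:reduction-for-bounded} with $k=1$. The result is a 1-VASS of size $B\cdot M$. For 1-VASS, polynomial length bounds are classical. In a 1-VASS, a shortest run either keeps its counter polynomially bounded, which is just graph reachability, or it can be cut into a polynomial prefix, a single pumped simple cycle, and a polynomial suffix. In both cases we get a bound of $M^{O(1)}$.

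\emph{Step 2 (both coordinates get large).} Here I would first handle the sub-case in which $\pi$ is almost unbounded. I split $\pi=\pi_1\pi_2$ and apply Rackoff extraction (\autoref{lem:reach-unbounded-of-ever-unbounded}) to both halves, once forward and once in $V^{\text{rev}}$. Since $d=2$, the exponent $(d+1)!$ is a constant, so this yields polynomially long runs $s\xrightarrow{\rho_1} p(\Vec{x})$ and $q(\Vec{y})\xrightarrow{\rho_2} t$. Here $p(\Vec{x})$ is forward pumpable and $q(\Vec{y})$ is backward pumpable, and they are connected by a $\mathbb{Z}$-run inside one SCC.

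The middle part is then handled by an analogue of \autoref{lem:lift-Z-run-by-pumpable} for strongly connected 2-VASS. First, I take a short $\mathbb{Z}$-run with the same Parikh-type constraints. Such a run exists with polynomial length by a small-solution bound for an integer program in constant dimension, where the Euler-flow condition is added for connectivity. Second, I lift this $\mathbb{Z}$-run using the pumping cycles at both ends. I go up via the forward pump, traverse, and come back down via the backward pump. Their effects must be compensated, which is done by adding cycles whose combined effect cancels them. In dimension 2 this is where non-wideness bites.

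\emph{Step 3 (main obstacle: non-wide SCCs and sign patterns).} The hardest part is when the cone of cycle effects of an SCC is a proper half-plane, or a line. Then pumping up and pumping down cannot be balanced. For this case I would switch to the linear path scheme argument. I would show that every run of a 2-VASS can be reordered into the form $\alpha_0\beta_1^{n_1}\alpha_1\cdots\beta_k^{n_k}\alpha_k$ with $k$, all $|\alpha_i|$ and all $|\beta_i|$ bounded by $M^{O(1)}$. In 2D this is possible because, within each region where both counters are large, the order of cycles can be sorted by the angle of their effects; these are the ``zigzag-free'' runs. Boundary phases, where one counter is small, reduce to Step 1.

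Given such a scheme, reachability along it is an integer linear system in $n_1,\dots,n_k$. Non-negativity only needs to be checked at the endpoints of each block $\beta_i^{n_i}$, since counter values along an iterated cycle are bounded below by the values at the first and last iteration minus the drop. Because each row of the system has entries bounded by $M$ and there are only two counters, I would argue that the solution can be trimmed. The idea is that at most two of the cycles need large iteration counts: any three effect vectors in $\mathbb{Z}^2$ are linearly dependent, so surplus iterations can be exchanged. This yields a polynomial small solution. Summing the lengths then gives $\min\Len(V,s,t)\le M^{O(1)}$.

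I expect the flattening step, namely proving the polynomial-size linear path scheme for 2-VASS, to be the genuine difficulty. The remaining steps are routine adaptations of results stated earlier in the paper.
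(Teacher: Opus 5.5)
The paper does not prove this statement. It imports it as a black box from \cite{DBLP:journals/jacm/BlondinEFGHLMT21} and uses it only in \autoref{cor:min-len-le-poly-of-geo-2d} and \autoref{thm:len-4-vas-poly}. A self-contained proof must therefore redo the core of that work, and your proposal leaves exactly that core open. The flattening claim in Step~3, that every run can be rearranged into a linear path scheme with polynomially many, polynomially short $\alpha_i,\beta_i$, is the main theorem of Blondin et al. ``Sort cycles by angle while both counters are large, and treat boundary phases as 1-VASS'' is only the starting intuition. You do not bound the number of alternations between large-counter phases and boundary phases, nor show that the rearranged run stays non-negative.

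Even granting flattening, the trimming step fails. Suppose $\lambda_a\Delta(\beta_a)+\lambda_b\Delta(\beta_b)+\lambda_c\Delta(\beta_c)=\Vec{0}$ with loop positions $a<b<c$. Shifting the iteration counts along $(\lambda_a,\lambda_b,\lambda_c)$ preserves the final configuration. But it moves every configuration between blocks $a$ and $b$ by $\pm\lambda_a\Delta(\beta_a)$, and every configuration between $b$ and $c$ by $\mp\lambda_c\Delta(\beta_c)$. So the non-negativity constraints at block boundaries, which are the only nontrivial part of the system, can break. Without further structure, a linear path scheme with polynomially many loops yields only an exponential bound via Pottier- or von-zur-Gathen-type small-solution theorems. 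This is exactly what the present paper obtains, and remarks on, in the proof of \autoref{lem:len-4-vas-le-3-exp}.

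Step~2 also borrows machinery that does not transfer from wide sequential VASes to arbitrary 2-VASSes. \autoref{lem:reach-unbounded-of-ever-unbounded} only produces large counter values. Turning them into pumpability (\autoref{lem:tgt-pumpable-of-unbounded-d-1}) and lifting $\mathbb{Z}$-runs (\autoref{lem:lift-Z-run-by-pumpable}) both rely on every state of a sequential VAS carrying the same self-loops. That is what lets a pumping cycle at any state be assembled from the run itself, and makes the same compensating cycles available everywhere. For $d=2$, almost unboundedness only guarantees one large counter per half, which in a general 2-VASS gives no pumpability at all; no cycle at that state need increase the other counter. Even with both counters large, pumpability is SCC-local, since you must return to the same state. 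Meanwhile $p(\Vec{x})$ and $q(\Vec{y})$ generally lie in different SCCs with different cycle cones, so the claim that they are linked by a $\mathbb{Z}$-run inside one SCC is unjustified. Handling chains of SCCs, including non-wide ones mid-run, is precisely what forces the more intricate analysis of Blondin et al.

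A minor point: a shortest 1-VASS run may need both a pumped increasing cycle and a pumped decreasing cycle, not a single one. The polynomial bound you invoke there is nonetheless classical.
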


\begin{restatable}{corollary}{minLenLEPolyOfGeoTwoD}
    \label{cor:min-len-le-poly-of-geo-2d}
    Let $(V, s, t)$ be a geometrically 2-dimensional VASS. If $s \xrightarrow{*} t$ then $\min \Len(V, s, t) \le \Size(V, s, t)^{O(1)}$.
\end{restatable}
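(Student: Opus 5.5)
The plan is to chain \autoref{lem:geo-2d-to-2d-poly} with \autoref{thm:min-len-le-poly-of-2d}. Let $(V,s,t)$ be geometrically 2-dimensional, a $d$-VASS, with $s \xrightarrow{*} t$, and put $n := \Size(V,s,t)$. By \autoref{lem:geo-2d-to-2d-poly} there is a family $\mathcal{V}$ of 2-VASSes such that
\begin{equation}
  (2d+1)\cdot \Len(V,s,t) = \bigcup_{(V',s',t')\in\mathcal{V}} \Len(V',s',t'),
\end{equation}
and every member has size at most $H(n)$. Since $\Len(V,s,t)\neq\emptyset$, the right-hand side is nonempty. Hence some $(V',s',t')\in\mathcal{V}$ has $\Len(V',s',t')\neq\emptyset$, that is, $s'\xrightarrow{*}t'$ in $V'$.

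Next, apply \autoref{thm:min-len-le-poly-of-2d} to this $(V',s',t')$. It yields a length $\ell\in\Len(V',s',t')$ with $\ell\le \Size(V',s',t')^{O(1)}\le H(n)^{O(1)}$. Because $\ell$ lies in the union, the displayed equality gives $\ell=(2d+1)\ell_0$ for some $\ell_0\in\Len(V,s,t)$. Then
\begin{equation}
  \min\Len(V,s,t)\le \ell_0 \le \ell \le H(n)^{O(1)} = n^{O(1)},
\end{equation}
where the last step uses that $H$ is a fixed polynomial and the $O(1)$ in \autoref{thm:min-len-le-poly-of-2d} is an absolute constant. The bound is therefore uniform and does not depend on $d$ or on the number of states.

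The only delicate point is the direction of the set equality. We need it in the form ``every length realised in the family is $(2d+1)$ times a length realised in $V$'', so that a short run in some $V'$ yields a short run in $V$. The reverse inclusion is used only to ensure that some family member is reachable. Both directions are part of the statement of \autoref{lem:geo-2d-to-2d-poly}, so no further obstacle arises. The factor $2d+1$ only helps, since $\ell_0\le\ell$.
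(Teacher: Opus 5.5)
Your proposal is correct and follows essentially the same route as the paper: apply Lemma~\ref{lem:geo-2d-to-2d-poly} to obtain polynomial-size 2-VASSes realising $(2d+1)\cdot\Len(V,s,t)$, then apply Theorem~\ref{thm:min-len-le-poly-of-2d} to a reachable member. The only difference is cosmetic: the paper divides by $2d+1$, whereas you use $\ell_0 \le \ell$ directly.
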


From this polynomial length bound, we deduce the complexity result of reachability problems in geometrically 2-dimensional VASS as stated in \autoref{thm:geo-2-vass}.

\begin{proof}[Proof of \autoref{thm:geo-2-vass}]
    Consider a reachability instance $(V, s, t)$. Under binary encoding, the length bound in \autoref{cor:min-len-le-poly-of-geo-2d} becomes exponential in the input size. But polynomial space is enough to enumerate every run of exponential length. This confirms the \PSPACE{} membership, while hardness is inherited from that of 2-VASS \cite{DBLP:journals/jacm/BlondinEFGHLMT21}.

    Under unary encoding, we may decide reachability by searching for a run of polynomial length, which can be done in \NP{}. When the actual dimension $d$ is a fixed constant in the problem, the searching can be improved to \NL{} since in the searching we may only keep record of the number of steps and the current configuration whose size is bounded by $\log|Q| + d\cdot \log(\Size(V, s, t)^{O(1)}) \le d \cdot O(\log\Size(V, s, t))$. Notice, however, that when $d$ is not fixed, storing a single configuration requires polynomial space.

    The \NL{}-hardness follows from the trivial reduction from the graph reachability problem. For \NP{}-hardness, we reduce from the \textsc{Subset-Sum} problem, viewing the $d$-coordinates as a $d$-bit binary representation. Details are deferred to the appendix. We remark that a similar reduction can be found in \cite[Lemma 2]{DBLP:conf/apn/Leroux21}.
\end{proof}

\subsection{Unary Geometrically 2-Dimensional VAS}
\label{sec:geo-2d-vas}

From the perspective of algorithms, a major reason why \autoref{lem:geo-2d-to-2d-poly} does not yield an \NL{} upper bound for geometrically 2-dimensional VASS under unary encoding is that it only establishes an equivalence between a geometrically 2-dimensional VASS and a family of polynomial-sized 2-VASSes. The family might contain exponentially many VASSes, however. There exists no trivial nondeterministic algorithm capable of checking the reachability of exponentially many unary 2-VASSes in logarithmic space. However, geometrically 2-dimensional VASes, being a special class of VASSes with only a single state, are expected to admit more space-efficient algorithms when using the unary encoding scheme. Indeed, we give the following reduction for unary geometrically 2-dimensional VASes.

\begin{lemma}\label{lem:geo-2-vas-to-4-vass}
    Reachability in unary geometrically $2$-dimensional VAS can be reduced in logarithmic space to reachability in unary geometrically $2$-dimensional $4$-VASS.
\end{lemma}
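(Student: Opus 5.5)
The plan is to build, in logarithmic space, a 4-VASS from the input VAS $(V,\Vec{s},\Vec{t})$ with $\dimcyc(V)\le 2$. It keeps two ``projected'' coordinates plus two auxiliary coordinates, and its states range over only polynomially many objects, so that log-space output is possible. The source of the exponential blow-up in \autoref{lem:geo-2d-to-2d-projective} is the guessed function $f:Q\to\mathbb{Z}^d$. For a VAS, however, $Q$ is a single state. By \autoref{prop:shift-from-state-of-fixed-run}, every configuration on a run from $\Vec{s}$ has the same shift $\delta(\Vec{s})$, so nothing needs to be guessed. The shift is just the fixed vector $\delta(\Vec{s})$, and every transition $\Vec{a}$ of $V$ satisfies $\delta(\Vec{a})=\Vec{0}$, since each transition is a self-loop whose effect lies in $\CycleSpace(V)$.

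First I treat the non-projective case as \autoref{lem:projective-geo-2d-from-non-projective-geo-2d} does, adapted to log space. I compute a normal vector $\Vec{n}\ge\Vec{0}$ with support $K$, $|K|\le 3$ (\autoref{lem:normal-vec-of-non-projective}). The coordinates in $K$ are polynomially bounded (\autoref{prop:bounded-counters-of-non-projective}), so they become states. Only three coordinates are involved, so a state fits in $O(\log)$ bits and the state space is polynomial. The result is a VASS of geometric dimension at most $1$ on the remaining coordinates. I then add an isolated self-loop to make the cycle space projective. Finding $\Vec{n}$ requires searching over triples of coordinates and small integer vectors and checking orthogonality, which is log-space for unary input. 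This step already introduces states, so the target is a VASS rather than a VAS. This is harmless, because the shifts remain controlled: after the encoding, the shift at a state depends only on the $K$-values held in that state.

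Next, in the projective case, I pick a sign-reflecting projection $I=\{i,j\}$ (\autoref{lem:srp-of-projective}) and canonical axes $\overline{\Vec{u}},\overline{\Vec{v}}$ (\autoref{prop:canonical-axes-of-projective}). Both can be found by a log-space enumeration over pairs of coordinates and small candidate vectors. The 4-VASS keeps coordinates $i,j$ and applies the projected transitions. After each transition it must check \autoref{eqn:cx-eq-xij-plus-delta} for every $\ell\in[d]$. Inserting $d$ copies of the gadget of \autoref{lem:exists-2-vass-testing-inequ} sequentially costs polynomially many states, and the gadgets can be emitted in log space because each depends only on $(\overline{\Vec{u}}(\ell),\overline{\Vec{v}}(\ell),\delta(\cdot)(\ell))$. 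The two extra coordinates of the 4-VASS serve as the lengths bookkeeping and preserve geometric dimension 2. More concretely, I would use them as zero-effect or dummy coordinates so that the target counts as a 4-VASS with $\dimcyc\le 2$. This is consistent with the fact that the gadgets' cycles have zero effect.

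I expect the main obstacle to be the interaction of the two cases. After encoding $K$, the shift of the residual coordinates is determined by the state, but I must check that, for each state, that value is computable in log space and that the gadget tests can be built per state without guessing a global function. A second concern is ensuring that the number of remaining tests and the gadget sizes stay polynomial in unary size, using $|b|=O(\Size^3)$. I would handle the first point by defining the shift at the state $(k\text{-values})$ by an explicit linear formula in those values and in $\Vec{s}$, and verifying that a transition between encoded states preserves this formula exactly.
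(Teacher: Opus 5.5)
Your route differs from the paper's, and it has a genuine gap at exactly the point you flagged as the main obstacle.

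\textbf{The gap.} After encoding $K$, two configurations reachable from $\Vec{s}$ in the same encoded state $\Vec{m}$ differ by an element of $Z := \{\Vec{c} \in \CycleSpace(V) \mid \Vec{c}|_K = \Vec{0}\}$. Hence a per-state shift $g(\Vec{m})$ is well defined only if the plane $Y$ used to define $\delta$ contains $Z$. The same condition is what makes the consistency requirement $g(\Vec{m}+\Vec{a}|_K) - g(\Vec{m}) = \delta(\Vec{a})$ hold for every transition $\Vec{a}$. You build $Y$ as in \autoref{lem:projective-geo-2d-from-non-projective-geo-2d}, by completing $\CycleSpace(V'')$ to a projective plane. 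But $\CycleSpace(V'')$ can be strictly smaller than $Z$, because the encoding may destroy cycles.

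Here is a concrete failure. Take transitions $\Vec{a} = (1,-1,1,2)$ and $\Vec{a}' = (1,-1,0,1)$.
\begin{itemize}
\item The rows are $(1,1),(-1,-1),(1,0),(2,1)$, so the cycle space is non-projective, and \autoref{lem:normal-vec-of-non-projective} gives $K=\{1,2\}$.
\item Every transition adds $(1,-1)$ to the encoded state, so $V''$ is acyclic. The recipe therefore allows adding loops $\Vec{e}_1,\Vec{e}_2$.
\item From $\Vec{s}=(0,2,0,0)$, the runs $\Vec{a}\Vec{a}'$ and $\Vec{a}'\Vec{a}'$ both end in state $(2,0)$, at $(2,0,1,3)$ and $(2,0,0,2)$ respectively.
\item These two configurations have different shifts with respect to $\Span\{\Vec{e}_1,\Vec{e}_2\}$.
\end{itemize}
So no per-state formula exists, and the projected counters (on $\{1,2\}$, already stored in the state) carry no information about coordinates $3,4$. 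The claim that ``the shift at a state depends only on the $K$-values'' is not automatic, and the verification you postpone would fail.

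\textbf{The repair.} Choose the isolated loops so that the projective plane $Y$ contains $Z$. Since $Z$ is at most $1$-dimensional, add a loop spanning $Z$, plus $\Vec{1}$ or $\Vec{e}_i-\Vec{e}_j$ as in the paper. Only $|K|=2$ (a negatively collinear pair) needs this: for $|K|=3$, $K$ contains two coordinates with independent rows, so $Z=\{\Vec{0}\}$. Note also that two loops, not one, are needed when $\dimcyc(V'')=0$.

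With $Z\subseteq Y$, set $g(\Vec{m}) := \delta(\Vec{s}+\Vec{c}_{\Vec{m}})$ for any $\Vec{c}_{\Vec{m}}\in\CycleSpace(V)$ with $\Vec{c}_{\Vec{m}}|_K = \Vec{m}-\Vec{s}|_K$. This is consistent, because $\Vec{c}_{\Vec{m}}+\Vec{a}-\Vec{c}_{\Vec{m}+\Vec{a}|_K}\in Z$. After routine checks it gives, in logarithmic space, a single 2-VASS with polynomially many states. The two extra coordinates are pure padding, not ``length bookkeeping''.

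\textbf{Comparison with the paper.} The paper avoids all of this. By \autoref{prop:cone-generated-by-4-vec} it keeps up to four coordinates $I$ such that every row $\Vec{r}_\ell$ is a nonnegative combination of two of them. Every coordinate is then recovered with nonnegative coefficients from one constant shift $\Vec{s}-f(\Vec{s}|_I)$. There is no projective/non-projective split, no state encoding, and only two main states plus gadgets.
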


This is the key to \autoref{thm:geo-2d-vas}.

\begin{proof}[Proof of \autoref{thm:geo-2d-vas}]
    \autoref{thm:geo-2-vass} gives the \NL{} upper bound for the reachability problem in geometrically $2$-dimensional $4$-VASS. Combined with the logarithmic space reduction given in \autoref{lem:geo-2-vas-to-4-vass}, the \NL{} membership of reachability in unary geometrically $2$-dimensional VAS follows immediately.
\end{proof}

Then we proceed to prove \autoref{lem:geo-2-vas-to-4-vass}. Intuitively, in the case where the cycle space is projective, having a single state ensures that the shift of any configuration remains constant on a run. The resulting family of $2$-VASSes obtained from \autoref{lem:geo-2d-to-2d-projective} should contain only one VASS.

The non-projective case of geometrically $2$-dimensional VAS is more challenging. In \autoref{sec:low-dim-vas-geo-2d-vass}, our strategy for dealing with geometrically $2$-dimensional VASS whose cycle space is non-projective was to transform them into a projective VASS. However, this approach increases the number of states, and we meet the problem of getting an exponentially sized family again. This calls for a new projection scheme that bypasses the projective/non-projective classification altogether. Now we fix a geometrically $2$-dimensional $d$-VAS $(V,\Vec{s},\Vec{t})$. Let $\Vec{u},\Vec{v}\in \mathbb{Z}^d$ be two transitions in $V$ such that $\CycleSpace(V)=\Span\{\Vec{u},\Vec{v}\}$. Then $\norm{\Vec{u}}, \norm{\Vec{v}} \le \Size(V)$. Note that we do not assume $\dimcyc(V) = 2$ exactly, and $\Vec{u}$, $\Vec{v}$ need not be linearly independent. Denote $\Vec{r}_\ell := (\Vec{u}(\ell), \Vec{v}(\ell))$ for $\ell \in [d]$. The following proposition shows that any vector in the set $\{\Vec{r}_\ell \mid \ell \in [d]\}$ can be generated by a nonnegative linear combination of at most four fixed vectors among them.

\begin{restatable}{proposition}{coneGeneratedByFourVec}
    \label{prop:cone-generated-by-4-vec}
    There exists a subset $I\subseteq[d]$ such that $|I|\le 4$ and $\Cone(\{\Vec{r}_1,\dots,\Vec{r}_d\})=\Cone(\{\Vec{r}_i\mid r\in I\})$. 
\end{restatable}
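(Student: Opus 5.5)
The statement is a planar convex-geometry fact: a finite set of vectors $R=\{\Vec{r}_1,\dots,\Vec{r}_d\}\subseteq\mathbb{Q}^2$ generates a cone that is already generated by at most four of its members. I will prove it by a case split on the shape of $C:=\Cone(R)$. Zero vectors contribute nothing and may be discarded; if all $\Vec{r}_\ell=\Vec{0}$, take $I=\emptyset$. Every cone in $\mathbb{Q}^2$ generated by finitely many vectors is one of the following: a ray, a line, a pointed sector with angle $<\pi$, a closed half-plane, or the whole plane.

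\textbf{Pointed cones.} Suppose $C$ contains no line. Then there is a vector $\Vec{n}$ with $\Inner{\Vec{n},\Vec{r}_\ell}>0$ for all non-zero $\Vec{r}_\ell$ (the same condition as in \autoref{lem:2d-space-projective-iff}). The non-zero generators then all lie in an open half-plane, so they can be ordered by angle. The cone $C$ is generated by the two extreme ones, $\Vec{r}_i$ and $\Vec{r}_j$, and $|I|\le 2$. To justify this, note that each generator lies angularly between them, and within an angle less than $\pi$ it is a non-negative combination of the two extreme rays.

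\textbf{Cones containing a line.} Suppose $C$ contains a line, so its lineality space $L=C\cap(-C)$ is non-trivial.

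If $L=\mathbb{Q}^2$, then $C$ is the whole plane. I will pick a small positively spanning subset. By Carathéodory-type reasoning, $\Vec{0}$ lies in the interior of the convex hull of the normalised generators, and Steinitz's theorem in dimension $2$ yields at most $2\cdot 2=4$ generators whose cone is the plane. To keep this elementary, I will argue directly:
\begin{itemize}
\item pick any non-zero $\Vec{r}_a$;
\item since $-\Vec{r}_a\in C$, some generator $\Vec{r}_b$ lies strictly on one side of the line $\mathbb{Q}\Vec{r}_a$, and some $\Vec{r}_c$ strictly on the other side;
\item then $\Cone\{\Vec{r}_a,\Vec{r}_b,\Vec{r}_c\}$ either already equals the plane, or misses a sector, which some fourth generator $\Vec{r}_e$ covers.
\end{itemize}
The cleanest formulation is to take generators with extreme angles on each side and verify that $\Vec{0}$ is an interior point of their cone.

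If $L$ is a line $\mathbb{Q}\Vec{w}$, then $C$ is either that line or a half-plane bounded by it. Choose a generator $\Vec{r}_a$ with $\Inner{\Vec{n}',\Vec{r}_a}>0$ and one $\Vec{r}_b$ with $\Inner{\Vec{n}',\Vec{r}_b}<0$, where $\Vec{n}'\perp\Vec{w}$. Because $\pm\Vec{w}\in C$, I can select generators realising the two boundary directions. If $\Vec{w}$ itself is not a generator, I approach it via a pair of generators on opposite sides of the line, and three or four generators suffice. A uniform way to cover all these cases is to pass to the quotient $\mathbb{Q}^2/L$.

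The main obstacle is the lineal case, especially the full-plane case. There I must show that four generators suffice and produce them explicitly. I expect to handle it by sorting the non-zero generators by angle in $[0,2\pi)$ and using the fact that $C=\mathbb{Q}^2$ if and only if every angular gap between consecutive generators is less than $\pi$. I then choose greedily: start at $\Vec{r}_a$, jump to the farthest generator within angle $<\pi$, and repeat. Each greedy jump covers more than a gap of size $\pi/\!$ beyond half the previous, so the process closes up after at most four picks, since three consecutive jumps sum to more than $2\pi$ minus a wasted margin. A simpler explicit choice is the generators nearest to $\pm$ the two coordinate axes, taken within the half-planes, which gives $|I|\le 4$.
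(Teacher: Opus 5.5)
Your pointed case is fine, and it is exactly the ingredient the paper uses (via \autoref{lem:2d-space-projective-iff}). The gap is the lineal case. You flag it as the main obstacle but never settle it, and several of the steps you offer there fail.

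When $C$ is a line or a half-plane bounded by $\mathbb{Q}\Vec{w}$, you cannot ``choose $\Vec{r}_b$ with $\Inner{\Vec{n}',\Vec{r}_b}<0$''. Every generator lies in $C$, hence on one closed side of $\mathbb{Q}\Vec{w}$, so there are no generators on opposite sides of the line to approach $\Vec{w}$ with. What is true is this: $\Inner{\Vec{n}',\pm\Vec{w}}=0$ forces every generator in a non-negative representation of $\pm\Vec{w}$ to lie on $\mathbb{Q}\Vec{w}$. So positive multiples of $\Vec{w}$ and of $-\Vec{w}$ are themselves generators, and one further generator off the line gives $|I|\le 3$.

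In the full-plane case your elementary procedure is false as stated. The vectors $(1,0),(10,1),(10,-1),(-1,2),(-1,-2)$ generate $\mathbb{Q}^2$. Yet after picking $\Vec{r}_a=(1,0)$, $\Vec{r}_b=(10,1)$, $\Vec{r}_c=(10,-1)$, adding either remaining generator still leaves everything in a closed half-plane (with normal $(2,1)$, resp.\ $(2,-1)$). The greedy claim is true, but the reason you give is not an argument. The needed fact is that two consecutive greedy jumps together span an angle of at least $\pi$, so the angle from the fourth pick back to the first is less than $\pi$. As written, only your appeal to Steinitz's theorem actually proves the full-plane case.

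The idea that makes the case analysis unnecessary, and which the paper uses, is to split along a line that contains no generator. Take $\Vec{n}$ with $\Inner{\Vec{n},\Vec{r}_\ell}\ne 0$ for every non-zero $\Vec{r}_\ell$; since all entries are integers of absolute value at most $\Size(V)$, $\Vec{n}=(\Size(V)+1,1)$ works. Apply your pointed-case argument separately to the generators with $\Inner{\Vec{n},\Vec{r}_\ell}>0$ and to those with $\Inner{\Vec{n},\Vec{r}_\ell}<0$. This gives two generators per side, and every non-zero generator lies on one of the two sides. So these four generate $C$, whatever its shape.

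Your closing remark about generators ``nearest to $\pm$ the two coordinate axes, taken within the half-planes'' is this argument with a coordinate axis as the splitting line. It breaks exactly when generators lie on that axis. Take $\pm\Vec{e}_1,\pm\Vec{e}_2$ split along the first axis. The open half-planes lose $\pm\Vec{e}_1$. In the closed half-planes the angularly extreme generators on both sides are $\pm\Vec{e}_1$, so $\pm\Vec{e}_2$ are lost. That is why the splitting direction must be perturbed.
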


We need an effective version of this proposition in order to construct the logspace reduction.

\begin{restatable}{lemma}{existsSubsetCardFour}
    \label{cor:exists-subset-card-4}
    Given as input $(V,\Vec{s},\Vec{t})$, one can compute in logarithmic space an index set $I$ with $|I|\le 4$ and a sequence of tuples $(c_\ell, \alpha_\ell, \beta_\ell, i_\ell, j_\ell)_{\ell\in [d]}$ where $c_\ell \in \mathbb{N}_{>0}$, $\alpha_\ell, \beta_\ell \in \mathbb{N}$, $i_\ell, j_\ell \in I$ such that $c_\ell\cdot \Vec{x}(\ell)=\alpha_\ell \cdot \Vec{x}(i_\ell)+\beta_\ell\cdot\Vec{x}(j_\ell)$ holds for every $\Vec{x}\in \CycleSpace(V)$.
\end{restatable}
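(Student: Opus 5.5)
Here $\Vec{u},\Vec{v}$ are two transitions spanning $\CycleSpace(V)$, so every $\Vec{x}\in\CycleSpace(V)$ has the form $\lambda\Vec{u}+\mu\Vec{v}$ and satisfies $\Vec{x}(\ell)=\Inner{(\lambda,\mu),\Vec{r}_\ell}$. Hence any relation $c_\ell\Vec{r}_\ell=\alpha_\ell\Vec{r}_{i_\ell}+\beta_\ell\Vec{r}_{j_\ell}$ in $\mathbb{Z}^2$ yields the required identity on $\CycleSpace(V)$ by linearity. The plan is therefore to make Proposition~\ref{prop:cone-generated-by-4-vec} constructive. We choose $I$ by a logspace procedure, and then for each $\ell$ express $\Vec{r}_\ell$ as a nonnegative integer combination of at most two generators in $I$, with small coefficients given by Cramer's rule.

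\textbf{Choosing the index set.} We iterate over all pairs $(\Vec{u},\Vec{v})$ of transitions and all tuples of candidate indices; each tuple is just $O(\log)$ bits, which is fine in logspace. The vectors $\Vec{r}_\ell$ are read off the unary input with entries bounded by $\Size(V)$. We pick $I$ by cases on the geometry of the finitely many nonzero directions $\Vec{r}_\ell$ in the plane.
\begin{itemize}
\item If all nonzero $\Vec{r}_\ell$ lie in an open half-plane, the cone is pointed and generated by its two extreme rays. We find these as the indices $i,j$ such that every $\Vec{r}_\ell$ lies weakly on the correct side of both. The test uses signs of $2\times 2$ determinants $\det(\Vec{r}_i,\Vec{r}_\ell)$, computable in logspace.
\item If the nonzero $\Vec{r}_\ell$ span only a line, or a closed half-plane, we pick at most one index per extreme direction plus one interior index, giving at most 3 indices.
\item If the cone is all of $\mathbb{Q}^2$, some set of at most 4 vectors has cone $\mathbb{Q}^2$, for instance two opposite-ish pairs; in fact 3 suffice generically. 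We search over all $\le4$-subsets and test the condition ``every $\Vec{r}_\ell$ lies in the cone of the subset'' via determinant signs.
\end{itemize}
The simplest uniform approach is this: enumerate all subsets $I$ with $|I|\le4$ in lexicographic order and output the first one with $\Vec{r}_\ell\in\Cone(\{\Vec{r}_i: i\in I\})$ for all $\ell$. Proposition~\ref{prop:cone-generated-by-4-vec} guarantees that such a subset exists.

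\textbf{Membership test and coefficients.} For a fixed $\ell$ and $I$, $\Vec{r}_\ell$ lies in the cone iff it lies in $\Cone(\{\Vec{r}_i,\Vec{r}_j\})$ for some $i,j\in I$. This is Carathéodory in dimension 2. For a pair $i,j$ there are two cases.
\begin{itemize}
\item If $\Vec{r}_i,\Vec{r}_j$ are independent, set $D=\det(\Vec{r}_i,\Vec{r}_j)$, $\alpha=\det(\Vec{r}_\ell,\Vec{r}_j)$, $\beta=\det(\Vec{r}_i,\Vec{r}_\ell)$. Membership holds iff $\alpha,\beta$ have the same weak sign as $D$. Then $c_\ell=|D|$, $\alpha_\ell=|\alpha|$, $\beta_\ell=|\beta|$, all bounded by $2\Size(V)^2$.
\item If they are dependent, use $i_\ell=j_\ell=i$ with a positive-multiple check. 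The case $\Vec{r}_\ell=\Vec{0}$ is handled by $c_\ell=1$, $\alpha_\ell=\beta_\ell=0$.
\end{itemize}
All numbers have logarithmically many bits and every test is a constant number of multiplications of such numbers, so the enumeration and output run in logspace.

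\textbf{Main obstacle.} The delicate point is that $\Vec{u},\Vec{v}$ may be linearly dependent, and that the identity must hold on the whole of $\CycleSpace(V)$, not just on the $\Vec{r}$'s. Linearity over $(\lambda,\mu)$ settles the second issue regardless of dependence, since the identity holds coefficientwise. I would also double-check that Proposition~\ref{prop:cone-generated-by-4-vec} indeed covers the full-plane case with 4 vectors; a 3-vector positive basis may not exist among the given $\Vec{r}_\ell$, and then 4 are needed.
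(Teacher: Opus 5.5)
Your proposal is correct and follows the paper's proof: enumerate the $O(d^4)$ subsets $I$ with $|I|\le 4$ (one exists by Proposition~\ref{prop:cone-generated-by-4-vec}), test membership of each $\Vec{r}_\ell$ in the cone via two-dimensional Carath\'eodory, bound the coefficients by Cramer's rule, and transfer $c_\ell\Vec{r}_\ell=\alpha_\ell\Vec{r}_{i_\ell}+\beta_\ell\Vec{r}_{j_\ell}$ to all of $\CycleSpace(V)$ by linearity in $(\lambda,\mu)$. Your explicit determinant formulas make the paper's brute-force coefficient search concrete, the preliminary geometric case split is unnecessary, and your doubt about the full-plane case is settled by the paper's proof of that proposition, which splits the nonzero $\Vec{r}_\ell$ by a vector $\Vec{n}$ orthogonal to none of them into two open half-planes, each needing only two generators.
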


We will project transitions of $V$ onto $\mathbb{Z}^{I}$ to obtain a $4$-VASS, where the index set $I$ is determined by \autoref{cor:exists-subset-card-4}. We reuse the nonnegativity testing gadget in \autoref{lem:exists-2-vass-testing-inequ} to ensure that the new VASS captures the reachability of $(V,\Vec{s},\Vec{t})$.

\begin{proof}[Proof of \autoref{lem:geo-2-vas-to-4-vass}]
    Let $(V,\Vec{s},\Vec{t})$ be a geometrically $2$-dimensional VAS with transition set $T\subseteq \mathbb{Z}^d$. By \autoref{cor:exists-subset-card-4}, one can compute a subset $I\subseteq[d]$ with $|I|\le 4$ and a linear map $f:\mathbb{Q}^{I} \to \mathbb{Q}^{d}$, where $f(\Vec{x})(\ell):=c_\ell^{-1}\cdot(\alpha_\ell \cdot \Vec{x}(i_\ell)+\beta_\ell\cdot\Vec{x}(j_\ell))$ for each $\Vec{x}\in \mathbb{Q}^I$ and $\ell \in [d]$ ensures that $f(\Vec{x})$ is the unique vector in $\CycleSpace(V)$ such that $f(\Vec{x})|_I = \Vec{x}$. Therefore, the following claim holds.
    \begin{claim}
        If $\Vec{s}\xrightarrow{*}\Vec{w}$ in $V$, then $\Vec{w}=\Vec{s}-f(\Vec{s}|_I)+f(\Vec{w}|_I)$.
    \end{claim}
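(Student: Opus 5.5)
The claim follows from two facts. First, every reachable $\Vec{w}$ differs from $\Vec{s}$ by an element of $\CycleSpace(V)$. Second, $f$ is a linear map that reconstructs the unique element of $\CycleSpace(V)$ with prescribed $I$-coordinates.

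\emph{Step 1: the displacement lies in the cycle space.} If $\Vec{s}\xrightarrow{\pi}\Vec{w}$ in $V$, then $\Vec{w}-\Vec{s}=\Delta(\pi)$. Since $V$ is a VAS, every transition is a self-loop and hence a simple cycle. Thus $\Delta(\pi)$ is a nonnegative integer combination of simple-cycle effects, and $\Vec{w}-\Vec{s}\in\CycleSpace(V)$.

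\emph{Step 2: $f$ inverts the restriction to $I$ on $\CycleSpace(V)$.} For every $\Vec{x}\in\CycleSpace(V)$ we have $f(\Vec{x}|_I)=\Vec{x}$. This follows coordinatewise from \autoref{cor:exists-subset-card-4}: for each $\ell\in[d]$,
\begin{equation}
c_\ell\Vec{x}(\ell)=\alpha_\ell\Vec{x}(i_\ell)+\beta_\ell\Vec{x}(j_\ell)=c_\ell\cdot f(\Vec{x}|_I)(\ell),
\end{equation}
and we divide by $c_\ell>0$. Note that $f$ is linear by its definition.

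\emph{Step 3: combine.} Apply Step 2 to $\Vec{x}:=\Vec{w}-\Vec{s}$, which is allowed by Step 1. Using linearity of $f$ and of restriction,
\begin{equation}
\Vec{w}-\Vec{s}=f\bigl((\Vec{w}-\Vec{s})|_I\bigr)=f(\Vec{w}|_I)-f(\Vec{s}|_I).
\end{equation}
Rearranging gives $\Vec{w}=\Vec{s}-f(\Vec{s}|_I)+f(\Vec{w}|_I)$.

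No step is a real obstacle; the only point to check is Step 1. The identity from \autoref{cor:exists-subset-card-4} holds only on $\CycleSpace(V)$, so it must be applied to the difference $\Vec{w}-\Vec{s}$ and not to $\Vec{w}$ itself. Note also that $\Vec{u}$ and $\Vec{v}$ need not be independent. This does not matter, because the corollary's identity holds for all of $\CycleSpace(V)$ regardless of how that space is spanned.
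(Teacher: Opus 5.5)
Your proof is correct and matches the paper's argument, which is just the observation that $\Vec{w}-\Vec{s}\in\CycleSpace(V)$ combined with the fact that $f$ recovers every cycle-space vector from its $I$-coordinates. Your Step 2 spells out that only the identity $f(\Vec{x}|_I)=\Vec{x}$ for $\Vec{x}\in\CycleSpace(V)$ is needed, and that identity follows directly from \autoref{cor:exists-subset-card-4}.
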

    \begin{claimproof}
        Just observe that $\Vec{w} - \Vec{s} \in \CycleSpace(V)$.
    \end{claimproof}
    Let $\Vec{\delta}:=\Vec{s}-f(\Vec{s}|_I)$. We construct a $|I|$-VASS $(V', s', t')$ in a manner similar to that in the proof of \autoref{lem:geo-2d-to-2d-projective}:
    \begin{itemize}
        \item We add two states $\overline{r},\underline{r}$ to $V'$. Intuitively, both $\overline{r}(\Vec{x})$ and $\underline{r}(\Vec{x})$ encodes the $\mathbb{Z}$-configuration $\Vec{\delta}+f(\Vec{x})$ in $V$, while we shall guarantee that $\underline{r}(\Vec{x})$ is reachable only if $\Vec{\delta}+f(\Vec{x})\ge \Vec{0}$.
        \item For each transition $\Vec{a}\in T$, we add $(\underline{r}, \Vec{a}|_I, \overline{r})$ into $V'$.
        \item Then we add non-negativity testing gadgets to connect $\overline{r}$ back to $\underline{r}$. For each $\ell \in [d]$, we can compute the number $\delta_\ell:=c_\ell\cdot\Vec{\delta}(\ell)\in \mathbb{Z}$ and construct the $2$-VASS gadget $V_{\alpha_\ell,\beta_\ell;\delta_\ell}$ given by \autoref{lem:exists-2-vass-testing-inequ}. The transitions in $V_{\alpha_\ell,\beta_\ell;\delta_\ell}$ are given explicitly in the proof of \autoref{lem:exists-2-vass-testing-inequ}, and one can verify that this gadget can be constructed in logarithmic space. We then extend this VASS to be an $|I|$-VASS such that the original two coordinates align with $i_\ell, j_\ell \in I$. Finally, we concatenate the $d$ gadgets and put them between $\overline{r}$ and $\underline{r}$.
    \end{itemize}
    Let $s':=\underline{r}(\Vec{s}|_I)$ and $t':=\underline{r}(\Vec{t}|_I)$. Suppose $s'\xrightarrow{*}\underline{r}(\Vec{x})$ in $V'$. Then, by our construction, it is routine to check that 
    \begin{equation}
        (\Vec{\delta}+f(\Vec{x}))(\ell)=c_\ell^{-1}\cdot(\delta_i+\alpha_\ell \cdot \Vec{x}(i_\ell)+\beta_\ell\cdot\Vec{x}(j_\ell)) \in \mathbb{N}
    \end{equation}
    holds for each $\ell \in [d]$, and hence $\Vec{s}\xrightarrow{*}\Vec{\delta}+f(\Vec{x})$ in $V$. Since $\Vec{t}=\Vec{\delta}+f(\Vec{t}|_I)$, it follows that $\Vec{s}\xrightarrow{*}\Vec{t}$ in $V$ if and only if $s'\xrightarrow{*}t'$ in $V'$. 
    
    Considering the geometric dimension, note that each simple cycle $\theta$ in $V'$ contains one transition in the form of $(\underline{r},\Vec{a}|_I,\overline{r})$ for some $\Vec{a}\in T$, and several transitions from the nonnegativity gadges whose effects sum to $\Vec{0}$. We obtain that $\Delta(\theta)=\Vec{a}|_I$. Therefore, $\dimcyc(V')\le \dimcyc(V)\le 2$.
\end{proof}

\section{Future Work}
\label{sec:conclusion}

Several intriguing questions remain at the boundary of our current understanding. A fundamental open problem is whether the length of runs in $d$-VAS can be effectively captured by those in $(d-3)$-VASS. Regarding VAS in fixed dimensions, numerous challenges remain unresolved. For binary 2-VAS and 3-VASS, a complexity gap exists as the problems currently reside between an \NP{}-hard lower bound and a \PSPACE{} upper bound. Furthermore, the lower bound for the unary reachability problem in 2-VAS, and more generally, in geometrically 2-dimensional systems where $d$ is not fixed, remains elusive; specifically, it is yet to be determined whether these problems are \NL{}-hard. The case of 5-VAS presents another significant challenge. While the reachability sets of 5-VAS are known to be effectively semilinear, the current 3-\textsf{EXPSPACE} upper bound, derived from the length-boundedness of 3-VASS, appears to be loose. We suggest that a more refined analysis of the geometric constraints in 5-dimensional systems might yield tighter complexity bounds, making the reachability problem in 5-VAS a compelling subject for further exploration.

\bibliography{ref}

\appendix

\section{Proof of Lemma \ref{lem:exists-sequential-vas-without-fixed-coordinate}}

\existsSequentialVasWithoutFixedCoordinate*

\begin{proof}
    Assume that $S=V[\Vec{a}_1,\dots,\Vec{a}_k]$ and $s=p(\Vec{x})$. If $t$ is not reachable from $s$, we are done by taking $(V',s',t')$ to be a trivial unreachable VAS. Otherwise, it is straightforward that $\Vec{x}(j)+\sum_{\iota\in [i]}\Vec{a}_\iota(j)\ge 0$ holds for every $i \in [k]$ and $j$ fixed in $S$. In this case, we can drop all the fixed coordinates to construct a new sequential $d'$-VAS $(V',s',t')$ without increasing the size and dimension. Moreover, the winesses of $(V,s,t)$ one-to-one correspond to those of $(V',s',t')$ with the same length, which implies that $\Len(V',s',t')=\Len(V,s,t)$.
\end{proof}

\section{Proof of Lemma~\ref{lem:reach-unbounded-of-ever-unbounded}}

\lemReachUnboundedofEverUnbounded*

\begin{proof}
    Let $U, M \in \mathbb{N}$, we define two sequences of numbers $L_i, H_i$ depending on $U$ and $M$ for $i = 0, 1, \ldots$ as follows:
    \begin{align}
        H_0 &= 0, \quad &L_0 &= M,\\
        H_{i + 1} &= U + L_i \cdot M, \quad &L_{i + 1} &= M \cdot (H_{i+1})^{i+1} + L_i.
    \end{align}
    We shall prove the following claim by induction on $d$:
    \begin{claim}
        Let $V$ be $d$-VASS with $\Size(V) \le M$. For any run $\pi$ in $V$, there is a run $\rho$ starting from the same source as $\pi$ with $|\rho| \le L_d$ such that its target $q(\Vec{x})$ satisfies $\Vec{x}(i) \ge U$ for all $i \in \UBCounters(\pi, H_d)$. Moreover, $\rho$ can be obtained from $\pi$ by removing some cycles.
    \end{claim}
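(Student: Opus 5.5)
We prove the claim by induction on $d$, following the classical Rackoff argument. The base case $d=0$ is immediate: $\UBCounters(\pi,H_0)$ is empty, so we take $\rho$ to be $\pi$ with all cycles removed. What remains is a simple path of length at most $|Q|\le M=L_0$, and it is a legal run because there are no counters.

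For the inductive step, fix a $d$-VASS $V$ with $\Size(V)\le M$ and a run $\pi$ from $q_0(\Vec{x}_0)$. We look for the first configuration on $\pi$ at which some coordinate $i\in\UBCounters(\pi,H_{d})$ reaches a value at least $H_d$. If no coordinate ever does, we take $\rho=\pi$ shortened as follows. Let $\pi'$ be the prefix of $\pi$ up to the first time some coordinate reaches $H_d$; if there is no such time, $\pi'=\pi$ and $\UBCounters$ is empty. Along $\pi'$ (excluding the last step) every coordinate lies in $[0,H_d)$, so there are at most $|Q|\cdot H_d^{d}\le M H_d^d$ distinct configurations. Removing the segments between repeated configurations (these are cycles with zero effect, so the result is still a legal run) yields a run $\pi''$ of length at most $M\cdot H_d^d$ that reaches a configuration $q_1(\Vec{y})$ with $\Vec{y}(i)\ge H_d$ for some coordinate $i$.

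From $q_1(\Vec{y})$ we apply the induction hypothesis to the suffix of $\pi$, viewed in the $(d-1)$-VASS obtained by deleting coordinate $i$. Its size is at most $M$, it yields a run of length at most $L_{d-1}$, and it reaches value at least $U$ in every coordinate of $\UBCounters(\text{suffix},H_{d-1})\setminus\{i\}$. Since coordinates outside $\UBCounters(\pi,H_d)$ need nothing, and any coordinate unbounded beyond $H_d$ on the suffix is also unbounded beyond $H_{d-1}$ (as $H_{d-1}\le H_d$), the required coordinates are covered. This lower-dimensional run ignores coordinate $i$. However, it has length at most $L_{d-1}$ and each step changes coordinate $i$ by at most $M$, so coordinate $i$ stays at least $H_d-L_{d-1}M=U\ge0$ throughout and ends at least $U$. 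Hence the run is legal in $V$.

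The total length is at most $M H_d^d+L_{d-1}$. To match the recurrence for $L_d$ exactly, the indexing must be aligned; this can be done by using exponent $d$ in place of $i+1$, or by shifting indices. The ``moreover'' part holds because both phases only delete cycles from the corresponding portions of $\pi$. The induction hypothesis is applied to the suffix with the same set of removable cycles, so the deleted pieces are cycles of $V$ since they are cycles in the projected VASS, which has the same states and transitions.

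Finally, we unfold the recurrence. We have $H_d\le (UM)^{c^{d}}$, and more crudely $L_d\le (U+M+1)^{O((d+1)!)}$, because each level raises the previous bound to a power of roughly $d+1$. This gives a polynomial $R$ with $H_d,L_d\le R(U,M)^{(d+1)!}$. Monotonicity of $\UBCounters$ in the threshold then converts the claim into the lemma. The main obstacle is the bookkeeping that shows the lower-dimensional run stays legal in the dropped coordinate, which is exactly the reason for choosing $H_d=U+L_{d-1}M$, together with verifying the $(d+1)!$ growth.
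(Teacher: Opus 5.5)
Your proof of the claim is correct and follows the paper's argument. Both proceed by induction on $d$: you shorten the prefix up to the first configuration where some coordinate reaches $H_d$ by cutting out repeated configurations (length at most $MH_d^d$), then apply the induction hypothesis to the suffix with that coordinate projected away, and use $H_d = U + L_{d-1}M$ to keep the dropped coordinate above $U$. This yields exactly $L_d = MH_d^d + L_{d-1}$, so no index realignment is needed. The one slip is in your side remark on unfolding the recurrence, which is not part of this claim: $H_d \le (UM)^{c^d}$ fails for any fixed $c$, because the exponent grows factorially in $d$. The paper's bounds are $H_d \le (4UM^2)^{d!}$ and $L_d \le (4UM^2)^{(d+1)!}$.
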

    \begin{claimproof}
        The case $d = 0$ is easy: we take $\rho$ to be obtained from $\pi$ by removing all cycles exhaustively. Now, suppose the claim holds for all $(d-1)$-VASS, we shall show the claim for $d$-VASS $V$. Consider a run $\pi$ in $V$ with source $q_0(\Vec{x}_0)$. Notice that if $\UBCounters(\pi, H_d) = \emptyset$ then every coordinate is less than $H_d$ along $\pi$. In this case, we simply let $\rho$ be obtained from $\pi$ by removing the cycles exhaustively connecting repeated configurations, whose length is bounded by $\Size(V) \cdot (H_d)^d \le L_d$. Next we assume $\UBCounters(\pi, H_d) \ne \emptyset$.
        On the run $\pi$ we pick the first configuration $q_1(\Vec{x}_1)$ with $\Vec{x}_1(j) \ge H_d$ for some $j \in \UBCounters(\pi, H_d)$. We factor $\pi$ as 
        \begin{equation}
            q_0(\Vec{x}_0) \xrightarrow{\pi_1} q_1(\Vec{x}_1) \xrightarrow{\pi_2} q_2(\Vec{x}_2).
        \end{equation}
        Notice that for any configuration $p(\Vec{y})$ visited on $\pi_1$ before $q_1(\Vec{x}_1)$, we have $\norm{\Vec{y}} < H_d$. Let $\rho_1$ be obtained from $\pi_1$ by removing cycles connecting repeated configurations, we have $|\rho_1| \le \Size(V) \cdot (H_d)^d \le M \cdot (H_d)^d$. Let $V^{-j}$ be the $(d-1)$-VASS obtained from $V$ by dropping the $j$-th coordinate from every transition. Then $\pi_2$ can be mapped to a run $\pi_2^{-j}$ in $V^{-j}$ from $q_1(\Vec{x}_1^{-j})$ to $q_2(\Vec{x}_2^{-j})$, where we write $\Vec{x}^{-j}$ for the vector obtained from $\Vec{x}$ by dropping the $j$-th coordinate. Observe that $\UBCounters(\pi_2, H_d) \setminus\{j\} \subseteq \UBCounters(\pi_2^{-j}, H_{d-1})$. By induction hypothesis there is a run $\rho_2^{-j}$ with $|\rho_2^{-j}| \le L_{d-1}$ such that $q_1(\Vec{x}_1^{-j}) \xrightarrow{\rho_2^{-j}} q(\Vec{x}')$ where $\Vec{x}'(i) \ge U$ for all $i \in \UBCounters(\pi_2^{-j}, H_{d-1})$. Moreover, we may assume $\rho_2^{-j}$ is obtained from $\pi_2^{-j}$ by removing cycles. Then $\rho_2^{-j}$ is a subsequence of $\pi_2^{-j}$. With the $j$-th coordinate added back, we take $\rho_2$ to be the same subsequence of $\pi_2$. Then $q_0(\Vec{x}_0)\xrightarrow{\rho_1} q_1(\Vec{x}_1) \xrightarrow{\rho_2}_{\mathbb{Z}} q(\Vec{x})$ for some $\Vec{x}$ with $\Vec{x}^{-j} = \Vec{x}'$. In particular, for all $i \in \UBCounters(\pi_2, H_d) \setminus\{j\}$, we have $\Vec{x}(i) \ge U$. For the $j$-th coordinate, notice that $\Vec{x}_1(j) \ge H_d \ge U + L_{d-1}\cdot \Size(V)$ and $\Delta(\rho_2)(j) \le L_{d-1}\cdot \Size(V)$. Therefore $\Vec{x}(j) \ge U$ as well. Also verify that $\rho_2$ is indeed fireable from $q_1(\Vec{x}_1)$ with $j$-th coordinate included, as $\Drop(\rho_2)(j) \le L_{d-1}\cdot \Size(V) \le \Vec{x}_1(j)$. We are done by taking $\rho := \rho_1\rho_2$.
    \end{claimproof}
    It remains to verify that $H_d, L_d \le R(U, M)^{(d+1)!}$ for some polynomial $R$. It suffices to establish the following claim for $U,M \ge 1$, whose proof is delegated to the appendix.
    \begin{restatable}{claim}{HdLdLeqPolyPowFac}
        \label{claim:h_d-l_d-le-poly_pow_fac}
        $H_d \le (4UM^2)^{d!}$ and $L_d\le (4UM^2)^{(d+1)!}$.
    \end{restatable}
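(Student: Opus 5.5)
We prove the claim by induction on $d$, writing $C := 4UM^2$ and using $U, M \ge 1$. The stated bounds are too weak to carry through the recursion directly. Already $H_{d+1} = U + M L_d$ would only give $H_{d+1} \le U + M\cdot C^{(d+1)!}$, which exceeds $C^{(d+1)!}$. So the plan is to prove a slightly stronger statement that absorbs the factor $M$:
\begin{equation}
  H_d \le \tfrac12 C^{d!} \quad\text{and}\quad L_d \le \frac{C^{(d+1)!}}{4M} \qquad \text{for all } d \ge 1,
\end{equation}
together with the trivial cases $H_0 = 0$ and $L_0 = M \le C$. This implies the claim for every $d$.

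\textbf{Base cases.} We check $d = 0$ and $d = 1$ by hand.
\begin{itemize}
  \item $H_1 = U + M^2 \le 2UM^2 = C/2$.
  \item $L_1 = M H_1 + L_0 = M(U + M^2 + 1) \le 3UM^3$.
  \item $C^2/(4M) = 4U^2M^3 \ge 3UM^3$, so the bound on $L_1$ holds.
\end{itemize}
The index $d = 0$ is treated separately because the bound $C/(4M)$ for $L_0$ may fail when $M > U$.

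\textbf{Inductive step, $d \ge 1$.} First, for $H_{d+1}$:
\begin{equation}
  H_{d+1} = U + M L_d \le U + \tfrac14 C^{(d+1)!} \le \tfrac12 C^{(d+1)!},
\end{equation}
since $C^{(d+1)!} \ge C \ge 4U$.

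Next, for $L_{d+1}$. We use $(d+1)!\,(d+1) = (d+2)! - (d+1)!$ and $C^{(d+1)!} \ge C \ge 4M^2$. Then
\begin{equation}
  M H_{d+1}^{d+1} \le \frac{M}{2^{d+1}}\, C^{(d+2)! - (d+1)!} \le \frac{M}{2\,C}\, C^{(d+2)!} \le \frac{C^{(d+2)!}}{8M}.
\end{equation}
The remaining term is much smaller:
\begin{equation}
  L_d \le \frac{C^{(d+1)!}}{4M} \le \frac{C^{(d+2)!}}{8M}.
\end{equation}
This holds because $C^{(d+2)! - (d+1)!} \ge C \ge 2$. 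Summing the two bounds gives $L_{d+1} \le C^{(d+2)!}/(4M)$, which closes the induction.

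The only delicate point is choosing this strengthened invariant. Once the constants $\tfrac12$ and $\tfrac{1}{4M}$ are fixed, every step is a one-line estimate. The final polynomial in \autoref{lem:reach-unbounded-of-ever-unbounded} is then $R(U, M) := 4UM^2$, taking $M := \Size(V)$.
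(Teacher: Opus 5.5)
Your proof is correct, but it is organized differently from the paper's.

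The paper never carries a bound on $L$ through the induction. It observes that $H_{d-1} = U + M L_{d-2} \ge L_{d-2}$, hence $L_{d-1} \le M H_{d-1}^{d-1} + H_{d-1} \le 2M H_{d-1}^{d-1}$. This collapses the coupled recursion into the single inequality $H_d \le 4UM^2\cdot H_{d-1}^{d-1}$. Inducting only on $H$ then gives $H_d \le C^{(d-1)(d-1)!+1} \le C^{d!}$. The bound $L_d \le 2M H_d^{d} \le C^{d\cdot d!+1} \le C^{(d+1)!}$ follows as a corollary. The additive $U$ and the factor $2M^2$ are absorbed into one factor $C$, which fits in the spare exponent $d! - (d-1)(d-1)! = (d-1)! \ge 1$.

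You instead keep both sequences and store the slack as explicit constants $1/2$ and $1/(4M)$ in a strengthened joint invariant. A factor $C$ freed by the analogous exponent gap $(d+2)! - (d+1)(d+1)! = (d+1)!$ then absorbs the $4M^2$. The paper's route is shorter and needs only the trivial base case $d=0$. Yours makes explicit why the naive joint induction fails and exactly which strengthening repairs it.

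One small correction: the separate base case $d=1$ is unnecessary, and the reason you give for it is wrong. Since $C/(4M) = UM \ge M = L_0$ whenever $U \ge 1$, your strengthened invariant already holds at $d=0$. Your inductive step goes through from there, since it only needs $2^{d+1} \ge 2$.

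Also, your closing choice $R(U,M) = 4UM^2$ relies on $U, M \ge 1$. The paper takes $4(U+1)(M+1)^2$ so that the enclosing lemma also covers $U = 0$.
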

    \begin{claimproof}
        The case $d=0$ is trivial.
        Now assume that $H_{d-1} \le (4 U M^2)^{(d-1)!}$ for $d \ge 1$. Note that $L_{d-1} \le M\cdot (H_{d-1})^{d-1}+H_{d-1} \le 2 M \cdot (H_{d-1})^{d-1}$.
        By definition, we have
        \begin{align}
        \begin{split}
            H_d &\le U + (2M\cdot (H_{d-1})^{d-1})\cdot M \\
            &\le (4UM^2)\cdot (H_{d-1})^{d-1} \\
            &\le (4UM^2)^{(d-1)!\cdot (d-1)+1} \\
            &\le (4UM^2)^{d!}
        \end{split}
        \end{align}
        Finally, $L_d\le 2M \cdot (H_d)^d\le (4UM^2)^{1 + d!\cdot d}\le (4UM^2)^{(d+1)!}$.
    \end{claimproof}
    Let $R(U,M):=4(U+1)\cdot(M+1)^2$, and we are done.
\end{proof}

\section{Proofs Omitted from Section \ref{sec:d-vas}}

\subsection{Proof of Lemma \ref{lem:exists-normal-vector-with-wide-ortho}}

\existsNormalWideOrtho* 

\begin{proof}
    According to the Farkas-Minkowski-Weyl Theorem (see e.g.\ \cite[Corollary 7.1a]{DBLP:books/daglib/0090562}), there is a matrix $A \in \mathbb{Z}^{k \times d}$ for some $k$ such that $\Cone(X) = \{ \Vec{x} \in \mathbb{Q}^d \mid A\Vec{x} \ge \Vec{0} \}$. Let $\Vec{a}_1, \ldots, \Vec{a}_k \in \mathbb{Z}^d$ be the rows of $A$, we may take $\Vec{n}_0 := \sum_{i = 1}^k \Vec{a}_i$ as a candidate of $\Vec{n}$. 
    \begin{enumerate}
        \item Indeed, $\Inner{\Vec{a}_i, \Vec{x}} \ge 0$ for any $\Vec{x} \in \Cone(X)$, so $\Inner{\Vec{n}_0, \Vec{x}} \ge 0$ as well. 
        \item Now let $X_0 := \{ \Vec{x} \in X \mid \Inner{\Vec{n}_0, \Vec{x}} = 0 \}$. For all $\Vec{x} \in X_0$, we must have $\Inner{\Vec{a}_i, \Vec{x}} = 0$ for each $i \in [k]$ as $0 = \Inner{\Vec{n}_0, \Vec{x}} = \sum_{i = 1}^k \Inner{\Vec{a}_i, \Vec{x}}$ is a sum of non-negative values. Take $\Vec{y} \in \Span(X_0)$, then $\Inner{\Vec{a}_i, \Vec{y}} = 0$ for each $i \in [k]$. Therefore, $A\Vec{y} = 0$ which implies $\Vec{y} \in \Cone(X)$. Notice that in the expression of $\Vec{y}$ as a non-negative linear combination of vectors of $X$ we cannot have non-zero coefficients for any $\Vec{x}_+ \in X \setminus X_0$ as they have positive inner products with $\Vec{n}_0$. Thus $\Vec{y} \in \Cone(X_0)$. This justifies $\Cone(X_0) = \Span(X_0)$.
    \end{enumerate}
    
    We cannot derive a nice bound on $\norm{\Vec{n}_0}$ directly from the Farkas-Minkowski-Weyl Theorem. Fortunately, we can express candidates of the vector $\Vec{n}$ by the following system:
    \begin{align}
        \begin{split}
            R_0 \cdot \Vec{n} &= \Vec{0},\\
            R_1 \cdot \Vec{n} & \ge \Vec{1},
        \end{split}
        \label{eqn:normal-vector-ilp}
    \end{align}
    where $R_0$ is the matrix whose rows are exactly the vectors in $X_0$, and $R_1$ is the matrix whose rows are exactly the vectors in $X \setminus X_0$. Observe that any integral solution $\Vec{n}$ to (\ref{eqn:normal-vector-ilp}) satisfies conditions (1) and (2), even if $\Vec{n}$ is not parallel to $\Vec{n}_0$. Indeed, (\ref{eqn:normal-vector-ilp}) ensures that $\{ \Vec{x} \in X \mid \Inner{\Vec{n}, \Vec{x}} = 0 \}$ is exactly $X_0$, for which we have justified $\Cone(X_0) = \Span(X_0)$.
    
    We can now bound the norm of a minimal solution to (\ref{eqn:normal-vector-ilp}) using standard results in integer programming. Let $r := \dim(\Span(X))$, then any $(r\times r)$-subdeterminant of the matrix $\begin{pmatrix} R_0 & \Vec{0} \\ R_1 & \Vec{1} \end{pmatrix}$ is bounded in absolute value by $\norm{X}^r \cdot r! \le (r\norm{X})^r$. By \cite[Theorem 1]{vonzurGathen1978} there exists an integer solution $\Vec{n}$ to (\ref{eqn:normal-vector-ilp}) such that $\norm{\Vec{n}} \le (d + 1)\cdot (r\norm{X})^r$. Since $\Cone(X_0) = \Span(X_0) \subseteq \Span(X) \ne \Cone(X)$, we have $X_0 \subsetneq X$, which impplies that $\Vec{n} \ne \Vec{0}$.
\end{proof}

\subsection{Proof of Lemma \ref{lem:almost-bounded-or-unbounded}}

\almostBoundedOrUnbounded*

\begin{proof}
    We begin with a run $\pi$, which is not almost unbounded by $B$ in $V$. We claim that the empty run $s\xrightarrow{\varepsilon}s$ is a prefix of $\pi$ that satisfies $|\UBCounters(\varepsilon,B)| \le d-2$, that is, $s$ contains at least 2 coordinates bounded by $B$. Otherwise, $d-1$ coordinates of the initial configuration $s$ are at least $B$. Splitting $\pi$ as $\varepsilon\pi$, it then follows that $\pi$ is almost unbounded by $B$ in $V$, which yields a contradiction. Now we take the longest prefix $\pi_1$ of $\pi$ satisfying $|\UBCounters(\pi_1,B)| \le d-2$. If $\pi_1=\pi$, then take $\pi_2$ to be the empty string, and we show that $\pi$ is almost bounded by $B$. Otherwise, one may split $\pi$ into $\pi_1 t \pi_2$, where $t$ is a single transition of $V$. By the maximality of $\pi_1$, it must holds that $|\UBCounters(\pi_1t, B)|\ge d-1$. However, since $\pi$ is not almost unbounded by $B$, one has that $|\UBCounters(\pi_2
    , B+\Size(V))|\le |\UBCounters(\pi_2, B)|\le d-2$. Note that $t$ can raise the value of each coordinate by at most $\Size(S)$. We have $|\UBCounters(\pi_1t, B+\Size(V))|\le |\UBCounters(\pi_1, B)|\le d-2$. Therefore, $\pi$ is almost bounded by $B+\Size(V)$.
\end{proof}

\subsection{Proof of Lemma \ref{lem:lift-Z-run-by-pumpable}}

\liftZRunByPumpable*

\begin{proof}
    Assume that $V$ has transition set $T \subseteq \mathbb{Z}^d$. Consider any $\mathbb{Z}$-run $p(\Vec{x})\xrightarrow{*}_{\mathbb{Z}}q(\Vec{y})$, where $p(\Vec{x})$ is forward pumpable and $q(\Vec{y})$ is backward pumpable. Let $M:=\Size(V, p(\Vec{x}), q(\Vec{y}))$ and $r:=\dimcyc(S)\le d$. First, we claim that this $\mathbb{Z}$-run can be replaced by a short path.
    \begin{claim}\label{claim:short-Z-run}
        $p(\Vec{x})\xrightarrow{\varsigma}_{\mathbb{Z}} q(\Vec{y})$ for some path $\varsigma$ with $|\varsigma|\le2\cdot (M+1)^{r+2}$.
    \end{claim}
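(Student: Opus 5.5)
The plan is to read the given $\mathbb{Z}$-run as a Parikh vector and shrink it with an integer programming bound. Write the run as $\varsigma_0 \Vec{a}_1 \varsigma_1 \cdots \Vec{a}_k \varsigma_k$, where each $\varsigma_\ell$ is a word over $T$ fired in state $q_\ell$. Every transition of $V$ is a self-loop and $\mathbb{Z}$-runs impose no nonnegativity constraints, so the order of transitions inside each $\varsigma_\ell$ does not matter. In fact all $V$-transitions can be moved into one block, say before the first bridge. Hence $p(\Vec{x})\xrightarrow{*}_{\mathbb{Z}} q(\Vec{y})$ holds exactly when there is a vector $\Vec{m}\in\mathbb{N}^{T}$ with
\begin{equation*}
\sum_{\Vec{u}\in T}\Vec{m}(\Vec{u})\,\Vec{u} \;=\; \Vec{y}-\Vec{x}-\sum_{\ell}\Vec{a}_\ell =: \Vec{b}.
\end{equation*}
The right-hand side has norm at most about $M$ (recall $\norm{\Vec{a}_1+\cdots+\Vec{a}_k}\le\Size(S)\le M$). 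By hypothesis this system has a solution in $\mathbb{N}^T$.

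Next I would apply a standard small-solution bound for such systems, e.g.\ the Carath\'eodory-type argument (Eisenbrand--Shmonin) or the von zur Gathen--Sieveking bound already used in the proof of \autoref{lem:exists-normal-vector-with-wide-ortho}. The constraint matrix has columns in $T$ and rank $r=\dimcyc(S)$. Its relevant subdeterminants are therefore at most $(rM)^r$, and the right-hand side is at most $M$. Such a bound yields a solution $\Vec{m}$ whose total size $\sum_{\Vec{u}}\Vec{m}(\Vec{u})$ is at most about $|T|\cdot(r+1)\cdot M\cdot (rM)^r$.

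To reach the exact form $2(M+1)^{r+2}$, I would reduce the solution directly. Take a basic solution that uses only $r$ linearly independent columns, plus small integer corrections coming from the finitely many columns outside that basis. Any term whose coefficient is large can be cancelled against a Cramer-rule identity among $r+1$ columns. The resulting path is $\varsigma:=\Vec{u}_1^{\Vec{m}(\Vec{u}_1)}\cdots\Vec{u}_{|T|}^{\Vec{m}(\Vec{u}_{|T|})}\Vec{a}_1\cdots\Vec{a}_k$, taken in the order that respects the state sequence. It is a valid $\mathbb{Z}$-path of the required length up to polynomial factors. If the precise constant $2(M+1)^{r+2}$ turns out not to follow, a polynomial in $M$ of degree $O(r)$ is enough for the lemma.

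The main obstacle is the exact exponent bookkeeping, not the idea. There is one subtle point: with non-wide systems a $\mathbb{Z}$-solution need not give an $\mathbb{N}$-solution. Here, however, the hypothesis already gives a solution in $\mathbb{N}$, which is all the small-solution theorem needs.
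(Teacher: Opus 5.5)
Your first two paragraphs are the paper's proof. Each bridge fires exactly once and every $V$-transition is a self-loop available at every state, so the $\mathbb{Z}$-run amounts to a solution $\Vec{\lambda}\in\mathbb{N}^{|T|}$ of $A\Vec{\lambda}=\Vec{y}-\Vec{x}-\Vec{a}$. The von zur Gathen--Sieveking bound then gives a small solution, and the path is reassembled as the bridges plus one block of self-loops. The paper puts that block after $\Vec{a}_k$ rather than before $\Vec{a}_1$, which is immaterial.

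What falls short is the constant, and you have not proved the stated bound $2(M+1)^{r+2}$. Your own estimate $|T|\cdot(r+1)\cdot M\cdot(rM)^r$ exceeds it in general. Your third paragraph does not close this: the ``basic solution plus Cramer-rule cancellations'' step sketches how small-solution theorems are proved, but is not itself an argument.

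The fix is just sharper bookkeeping within your second paragraph:
\begin{itemize}
\item In the von zur Gathen--Sieveking bound $\norm{\Vec{\lambda}_0}\le(|T|+1)\cdot\Delta$, the quantity $\Delta$ is the largest $r\times r$ minor of the augmented matrix $\begin{pmatrix}A & \Vec{y}-\Vec{x}-\Vec{a}\end{pmatrix}$. So the right-hand side enters through the minors, not as a separate factor $M$.
\item $\Delta\le r!\cdot(\norm{T}+\norm{\Vec{x}}+\norm{\Vec{y}}+\norm{\Vec{a}})^r\le M^r$, because $\Size$ already multiplies these norms by $d\ge r$. So the factor $r^r$ in your $(rM)^r$ is not lost.
\item Hence $\norm{\Vec{\lambda}_0}\le(|T|+1)M^r\le(M+1)^{r+1}$, and $|\varsigma|\le k+|T|\cdot\norm{\Vec{\lambda}_0}\le 2(M+1)^{r+2}$, using $k,|T|\le M$.
\end{itemize}

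Your fallback that a degree-$O(r)$ polynomial suffices is true for \autoref{lem:lift-Z-run-by-pumpable}. The claim as stated, however, needs this sharper count.
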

    \begin{claimproof}
        Since $V$ is sequential, for each $i \in [k]$, the transition $\Vec{a}_i$ appears exactly once in any $\mathbb{Z}$-run from $p(\Vec{x})$ to $q(\Vec{y})$. Let $\Vec{a}:=\sum_{i \in [k]}\Vec{a}_k$, $A \in  \mathbb{Z}^{d\times |T|}$ be the matrix whose column vectors are exactly those in $T$, and $\Vec{\lambda}\in \mathbb{N}^{|T|}$ be a variable vector. We may formalize the $\mathbb{Z}$-reachability relation as the following integer programming equation
        \begin{equation}\label{eqn:z-reach}
            A\Vec{\lambda}=\Vec{y}-\Vec{x}-\Vec{a},
        \end{equation}
        which has at least one nonnegative integer solution by assumption. Note that $r=\dim(\mathrm{rank}(A))$ and that the absolution value of any $(r\times r)$-subdeterminant of the matrix $\begin{pmatrix} A &\Vec{y}-\Vec{x}-\Vec{a}\end{pmatrix}$ is bounded by $r!\cdot(\norm{T}+\norm{\Vec x}+\norm{\Vec y}+\norm{\Vec a})^r\le M^r$. Again by \cite[Corollary 1]{vonzurGathen1978}, there exists a nonnegative integer solution $\Vec{\lambda}_0$ to \autoref{eqn:z-reach} such that $\norm{\Vec{\lambda}_0}\le (|T|+1)\cdot M^r \le (M+1)^{r+1}$.
        Moreover, $\Vec{\lambda}_0$ induces a cycle $\theta$ in the last SCC of $S$. Let $\varsigma:=\Vec{a}_1\dots\Vec{a}_k\theta$. Then 
        \begin{equation}
            |\varsigma|\le k+|T|\cdot \norm{\Vec{\lambda}_0}\le 2 \cdot (M+1)^{r+2}.
        \end{equation}
        Note that $p(\Vec{x})\xrightarrow{\varsigma}q(\Vec{y})$ is a $\mathbb{Z}$-run, which completes the proof.
    \end{claimproof}
    Since $p(\Vec{x})$ is pumpable in $S$, there exists a a cycle $\theta_1$ such that $\Delta(\theta_1)\ge \Vec{1}$ and $p(\Vec{x})\xrightarrow{\theta_1^n} p(\Vec{x}+n\cdot \Delta(\theta_1))$ holds for every $n \in \mathbb{N}$. Similarly, there exists a cycle $\theta_2$ such that $-\Delta(\theta_2)\ge \Vec{1}$ and $q(\Vec{y}+n\cdot \Delta(\theta_2))\xrightarrow{\theta_1^n} q(\Vec{y})$ holds for every $n \in \mathbb{N}$. 
    Moreover, by the upper bound of Rackoff~\cite[Lemma 3.4]{DBLP:journals/tcs/Rackoff78}, there exists a polynomial $P_d$ such that the length of such cycles in $S$ is at most $P_d(M)$. Hence, we may assume that $|\theta_1|, |\theta_2|\le P_d(M)$. 
    
    Since $\Delta(\theta_1), \Delta(\theta_2) \in \Cone(V)$, by the wideness of $S$, $-\Delta(\theta_2)-\Delta(\theta_1)\in\Cone(V)$. By the fundamental theorem of linear inequalities~\cite[Theorem 7.1]{DBLP:books/daglib/0090562}, it can be written as a non-negative linear combination of linearly independent vectors $\Vec{e}_1,\dots,\Vec{e}_r$ from $T$. Hence, there exist non-negative integers $\{\lambda_i\}_{i \in [0,r]}$ such that
    \begin{equation}
        \lambda_0 \cdot (-\Delta(\theta_2)-\Delta(\theta_1))=\sum_{i \in [r]} \lambda_i \cdot \Vec{e}_i,
    \end{equation}
    where $\lambda_0>0$ holds in particular. By Pottier's Theorem~\cite[Theorem 1]{pottier1991minimal}, we may further assume that $|\lambda_i|\le \Lambda:= (1+r\norm{T}+2P_d(M)\norm{T})^r$ for each $i \in [0,r]$. Let $o$ be the cycle induced by $\Vec{e}_1^{\lambda_1}\dots \Vec{e}_r^{\lambda_r}$ in the first SCC of $S$ and let $n_1:=\norm{\Drop(o)}\le r\norm{T}\cdot \Lambda$.
    \begin{claim}
        The following is a legal run:
        \begin{align} 
        p(\Vec{x})\xrightarrow{\theta_1^{n_1\lambda_0}} p(\Vec{x}+n_1\lambda_0\cdot \Delta(\theta_1)) \xrightarrow{o^{n_1}} p(\Vec{x}-n_1\lambda_0 \cdot \Delta(\theta_2)).
        \end{align}
    \end{claim}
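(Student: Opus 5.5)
The plan is to verify the two segments of the displayed run separately. The first segment needs no work. The second segment follows from a convexity argument combined with the choice $n_1 = \norm{\Drop(o)}$.

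\emph{First segment.} Since $p(\Vec{x})$ is forward pumpable via $\theta_1$, we already know that $p(\Vec{x})\xrightarrow{\theta_1^{n}} p(\Vec{x}+n\cdot\Delta(\theta_1))$ is a legal run for every $n\in\mathbb{N}$. We apply this with $n=n_1\lambda_0$. Write $\Vec{y}_0 := \Vec{x}+n_1\lambda_0\cdot\Delta(\theta_1)$.

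\emph{Effect of the second segment.} The cycle $o$ consists of self-loops $\Vec{e}_1,\dots,\Vec{e}_r\in T$. These are available at every state of the sequential VAS, in particular at $p$, so $o^{n_1}$ is a path from $p$ to $p$. By the choice of the $\lambda_i$,
\begin{equation*}
\Delta(o)=\sum_{i\in[r]}\lambda_i\Vec{e}_i=\lambda_0\cdot(-\Delta(\theta_2)-\Delta(\theta_1)).
\end{equation*}
Hence $\Vec{y}_0+n_1\cdot\Delta(o)=\Vec{x}-n_1\lambda_0\cdot\Delta(\theta_2)=:\Vec{y}_1$, which matches the claimed target.

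\emph{Legality of the second segment.} First I bound the endpoints. Since $\Vec{x}\ge\Vec{0}$, $\Delta(\theta_1)\ge\Vec{1}$, $-\Delta(\theta_2)\ge\Vec{1}$ and $\lambda_0\ge 1$, both $\Vec{y}_0\ge n_1\lambda_0\cdot\Vec{1}\ge n_1\cdot\Vec{1}$ and $\Vec{y}_1\ge n_1\cdot\Vec{1}$.

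Next I handle the configurations reached after $m$ complete copies of $o$, for $m\in[0,n_1]$. Such a configuration is $\Vec{y}_0+m\cdot\Delta(o)=(1-\tfrac{m}{n_1})\Vec{y}_0+\tfrac{m}{n_1}\Vec{y}_1$, a convex combination of the two endpoints. It is therefore also $\ge n_1\cdot\Vec{1}$. (If $n_1=0$, then $o$ has no drop at all and the argument is trivial.)

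Finally, every intermediate configuration lies inside the $(m+1)$-st copy of $o$ for some $m<n_1$. It equals such a convex-combination point plus the effect of a prefix of $o$. That prefix effect is bounded below by $-\Drop(o)\ge -n_1\cdot\Vec{1}$, by definition of $n_1=\norm{\Drop(o)}$. Hence every intermediate vector is non-negative, and the run is legal.

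The only delicate point is making sure the drop of $o$ is absorbed. This is precisely why $o$ is iterated $n_1$ times and $\theta_1$ is iterated $n_1\lambda_0$ times: the lifting then scales with $n_1$ while the drop of a single copy of $o$ stays fixed at $n_1$. No further obstacle is expected. The symmetric step afterwards, reaching $q(\Vec{y})$ through $\theta_2^{n_1\lambda_0}$ and the short $\mathbb{Z}$-path $\varsigma$, is not part of this claim.
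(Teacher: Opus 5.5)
Your proof is correct and follows essentially the paper's argument. Both rest on the two endpoints $\Vec{x}+n_1\lambda_0\Delta(\theta_1)$ and $\Vec{x}-n_1\lambda_0\Delta(\theta_2)$ dominating $n_1\cdot\Vec{1}\ge\Drop(o)$, combined with convexity along $o^{n_1}$. The only difference is in packaging. The paper checks the first copy of $o$ forward and the last copy via $o^{\text{rev}}$ (asserted only by ``similarly''), then invokes the standard first-and-last-iteration principle. You apply convexity directly to the cycle-boundary configurations, which is slightly more self-contained because it never needs to bound the drop of $o^{\text{rev}}$.
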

    \begin{claimproof}
        Note that $\Delta(o^{n_1})=n_1\sum_{i \in [r]}\lambda_i\cdot\Vec{e}_i=-n_1\lambda_0\cdot\Delta(\theta_2)-n_1\lambda_0\cdot\Delta(\theta_1)$, which implies that this is a $\mathbb{Z}$-run. It is straightforward that $\theta_1^{n_1\lambda_0}$ is a run. Since $\Drop(o)\le \norm{\Drop(o)}\cdot \Vec{1} = n_1 \cdot \Vec{1} \le \Vec{x}+n_1\lambda_0 \cdot \Delta(\theta_1)$, $o$ is fireable from $p(\Vec{x}+n_1\lambda_0\cdot \Delta(\theta_1))$. Similarly, $o^{\text{rev}}$ is also fireable from $p(\Vec{x}-n_1\lambda_0 \cdot \Delta(\theta_2))$ in $S^{\text{rev}}$. Therefore, both the first and the last iteration of $o^{n_1}$ are runs, which implies that the entire segment is a run.
    \end{claimproof}
    Let $\theta_1':=\theta_1^{n_1\lambda_0}o^{n_1}$. Note that $\Delta(\theta_1') = -n_1\lambda_0 \cdot \Delta(\theta_2) \ge \Vec{1}$. Hence, we can replace $\theta_1$ by $\theta_1'$ as a new pumping cycle. Let $\varsigma$ be the short path obtained by \autoref{claim:short-Z-run} and let $n_2:=|\varsigma|\cdot \norm{T}$. Construct a new path $\rho:=(\theta_1')^{n_2}\varsigma\theta_2^{n_1n_2\lambda_0}$. We claim that $p(\Vec{x})\xrightarrow{\rho} q(\Vec{y})$ is a run. It suffices to show that the infix $\varsigma$ is a legal run:
    \begin{equation}\label{eqn:pumped-z-run}
        p(\Vec{x}+n_2\cdot\Delta(\theta_1')) = p(\Vec{x}-n_2n_1\lambda_0\cdot \Delta(\theta_2)) \xrightarrow{\varsigma} q(\Vec{y}-n_1n_2\lambda_0\cdot\Delta(\theta_2)).
    \end{equation}
    Since $\Vec{x}-n_2n_1\lambda_0\cdot \Delta(\theta_2) \ge n_2\cdot \Vec{1} \ge \Drop(\varsigma)$, \autoref{eqn:pumped-z-run} follows immediately. Finally, we bound the length of $\rho$ as follows
    \begin{equation}
        \begin{aligned}
            |\rho|&\le n_1n_2\cdot\Lambda\cdot(|\theta_1|+|\theta_2|+r)+|\varsigma|\\
            &\le 4(r\norm{T})^2\cdot\Lambda^2\cdot|\varsigma|\cdot P_d(M)\\
            &\le 3^{2r+2}\cdot P_d(M) ^{2r+1} \cdot (M+1)^{3r+4}.
        \end{aligned}
    \end{equation}
    We are done by taking $L_d(x):=3^{2d+2}\cdot P_d(x)^{2d+1}\cdot (x+1)^{3d+4}$.
\end{proof}

\subsection{Proof of Lemma~\ref{lem:reduction-for-bounded}}

\reductionForBounded*

\begin{proof}
    Assume that $V=(Q,T)$. Let $I$ be the set of $k$ coordinates bounded by $B$ along $\pi$. For any vector $\Vec{z} \in \mathbb{Q}^d$, we write $\Vec{z}|_I$ for the projected vector in $\mathbb{Q}^I$ such that $\Vec{z}|_I(i) = \Vec{z}(i)$ for all $i \in I$. We construct a VASS $V':=(Q',T')$, where $Q':=Q\times [0,B-1]^{I}$ and
    \begin{align}
    \begin{split}
        T':=\bigl\{ ((p,\Vec{u}), \Vec{a}, (q, \Vec{v}))\mid{} & \Vec{u}, \Vec{v}\in [0,B-1]^{I}, (p,\Vec{a},q)\in T, \text{and }\Vec{u}+\Vec{a}|_I=\Vec{v} \bigr\}.
    \end{split}
    \end{align}
    For any configuration $p(\Vec{u})$ in $V$, let $\llbracket p(\Vec{u})\rrbracket :=(p,\Vec{u}|_I)(\Vec{u})$. Clearly, if $\Vec{u}(i)<B$ for all $i \in I$, then $\llbracket p(\Vec{u})\rrbracket $ is a configuration of $V'$. Let $s':=\llbracket s\rrbracket $ and $t':=\llbracket t\rrbracket $. It is straightforward that $\pi$ induces a run $s'\xrightarrow{\pi'}t'$ with the same length by applying $\llbracket\cdot\rrbracket $ pointwise to each configuration. Note that the $i$-th coordinates of the configurations in $\pi$ are less than $B$ for all $i\in I$. We deduce that $\pi'$ is indeed a run in $V'$. Hence, $|\pi'|=|\pi|\in \Len(V',s',t')$. By induction on the length of runs, one can easily verify the following claim.
    \begin{claim}\label{claim:exists-original-run}
        If $s'\xrightarrow{\rho'}c'$ for some $\rho'$ and $c'$ in $V'$, then there exists a run $s\xrightarrow{\rho}c$ in $V$ with $|\rho|=|\rho'|$, where $c$ is the unique configuration satisfying $\llbracket c\rrbracket =c'$.
    \end{claim}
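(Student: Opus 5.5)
We argue by induction on the length $n:=|\rho'|$ of the run in $V'$, proving the slightly stronger statement that for every run $s'\xrightarrow{\rho'}c'$ in $V'$ there is a run $s\xrightarrow{\rho}c$ in $V$ with $|\rho|=|\rho'|$ and $\llbracket c\rrbracket=c'$. The configuration $c$ is pinned down because the map $\llbracket\cdot\rrbracket$ is injective on configurations of $V$ whose $I$-coordinates lie in $[0,B-1]$. Its inverse sends $(p,\Vec{v})(\Vec{x})$ to $p(\Vec{x})$, and we must check that the state component $\Vec{v}$ agrees with $\Vec{x}|_I$. So the real content is an invariant: every configuration $(p,\Vec{v})(\Vec{x})$ reachable from $s'$ in $V'$ satisfies $\Vec{v}=\Vec{x}|_I$.

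The base case $n=0$ is immediate. We have $c'=s'=\llbracket s\rrbracket=(p_s,\Vec{s}|_I)(\Vec{s})$, where $s=p_s(\Vec{s})$, so the invariant holds and $c=s$ works with the empty run.

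For the inductive step, write $\rho'=\sigma' t'$ with $s'\xrightarrow{\sigma'}(p,\Vec{u})(\Vec{x})\xrightarrow{t'}(q,\Vec{v})(\Vec{y})=c'$. By the induction hypothesis we get $s\xrightarrow{\sigma}p(\Vec{x})$ in $V$ with $|\sigma|=|\sigma'|$ and $\Vec{u}=\Vec{x}|_I$. By construction of $T'$, the transition $t'$ has the form $((p,\Vec{u}),\Vec{a},(q,\Vec{v}))$ with $(p,\Vec{a},q)\in T$ and $\Vec{v}=\Vec{u}+\Vec{a}|_I$. Since $t'$ fires in $V'$ with the same effect $\Vec{a}$ on the full $d$-dimensional vector, we have $\Vec{y}=\Vec{x}+\Vec{a}\in\mathbb{N}^d$. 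Hence $p(\Vec{x})\xrightarrow{(p,\Vec{a},q)}q(\Vec{y})$ is a legal step in $V$, and we set $\rho:=\sigma\,(p,\Vec{a},q)$. The invariant propagates because $\Vec{v}=\Vec{x}|_I+\Vec{a}|_I=\Vec{y}|_I$, which gives $\llbracket q(\Vec{y})\rrbracket=c'$ and $|\rho|=|\rho'|$.

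There is no real obstacle. The only point needing care is that $V'$ keeps all $d$ coordinates rather than discarding those in $I$; this is what makes non-negativity in $V'$ transfer directly to $V$. The state-encoded copy stays consistent with the actual vector precisely because of the invariant carried through the induction.
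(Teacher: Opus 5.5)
Your proof is correct and follows the same route as the paper, which only asserts that the claim follows by induction on the length of runs. The two facts that make that routine induction work are exactly the ones you make explicit: the state component stays equal to $\Vec{x}|_I$ of the current vector, and $V'$ keeps all $d$ coordinates, so non-negativity in $V'$ transfers directly to $V$.
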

    Therefore, any run $s'\xrightarrow{\rho'}t'$ in $V'$ induces a run $s\xrightarrow{\rho}t$ in $V$ by dropping the second components of all the states for each configuration, which implies $\Len(V',s',t') \subseteq \Len(V,s,t)$. 
    Considering the geometric dimension, any cycle $\theta$ in $V'$ must satisfy $\Delta(\theta)|_I=\Vec{0}$. Hence, it holds that $\dimcom(V')\le d - |I| = d - k$. Finally, since $|Q'|\le B^{|I|} \cdot |Q|$, $\norm{T'}\le \norm{T}$, and $|T'| \le B^{|I|}\cdot |T|$, we have $\Size(V',s',t')\le B^{k}\cdot \Size(V,s,t)$.
\end{proof}

\subsection{Proof of Lemma~\ref{lem:dimension-reduction-for-almost-bounded-runs}}

\dimensionReductionForAlmostBoundedRuns*

\begin{proof}
    By assumption, we can partition $\pi$ into two segments $s\xrightarrow{\pi_1}q(\Vec{w})\xrightarrow{\pi_2}t$ such that $|\UBCounters(\pi_1, B)|\le d$ and $|\UBCounters(\pi_2, B)|\le d$. Let $I_1:=[d+2]\setminus \UBCounters(\pi_1,B)$ and $I_2:=[d+2]\setminus \UBCounters(\pi_2, B)$. By \autoref{lem:reduction-for-bounded}, one can construct a $(d+2)$-VASS $V_1=(Q_1,T_1)$ with configurations $s'$ such that $s'\xrightarrow{*}c_1:=(q,\Vec{w}|_{I_1})(\Vec{w})$, as well as a $(d+2)$-VASS $V_2=(Q_2,T_2)$ with configurations $t'$ such that $c_2:=(q,\Vec{w}|_{I_2})(\Vec{w})\xrightarrow{*}t'$. For clarity, for $i = 1, 2$, we rename every state $(p,\Vec{u})$ in $V'$ as $(p^i,\Vec{u})$, so that $Q_1\cap Q_2=\emptyset$. Define $V':=(Q',T')$, where $Q':=Q_1\cup Q_2$ and 
    \begin{align}
    \begin{split}
        T' := T_1\cup T_2 \cup \{ & ((p^1,\Vec{u}_1), \Vec{0},(p^2,\Vec{u}_2)) \mid p \in Q, \\
        & \Vec{u}_1\in [0,B-1]^{I_1}, \Vec{u}_2\in[0,B-1]^{I_2} \text{ and } \\ 
        & \exists \Vec{u} \in \mathbb{N}^d, \Vec{u}_1 = \Vec{u}|_{I_1}, \Vec{u}_2 = \Vec{u}|_{I_2}\}.
    \end{split}
    \end{align}
    Intuitively, we can view $V_1$ and $V_2$ as two sub-VASS sequentially connected by some dummy transitions. Note that $s'$ and $t'$ are still configurations in $V'$. Moreover, $c_1\xrightarrow{t}c_2$ holds, where $t:=((q^1, \Vec{w}|_{I_1}), \Vec{0}, (q^2, \Vec{w}|_{I_2}) \in T'$. Hence, $s'\xrightarrow{*}t'$ in $V'$.
    \begin{claim}
        $\min \Len(V,s,t)+1\le \min\Len(V',s',t')$.
    \end{claim}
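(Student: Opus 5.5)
The plan is to go from runs of $V'$ back to runs of $V$. I will show that every run $s'\xrightarrow{\rho'}t'$ in $V'$ yields a run $s\xrightarrow{\rho}t$ in $V$ with $|\rho| = |\rho'|-1$. Taking $\rho'$ of minimal length then gives $\min\Len(V,s,t)\le \min\Len(V',s',t')-1$. We already know $\Len(V',s',t')\neq\emptyset$, so both minima are defined.

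\textbf{Step 1 (shape of runs in $V'$).} The only transitions between $Q_1$ and $Q_2$ are the dummy bridges, which go from $Q_1$ to $Q_2$. Moreover $s'$ has its state in $Q_1$ and $t'$ has its state in $Q_2$. So $\rho'$ factors uniquely as $\rho_1'\,b\,\rho_2'$, where $\rho_1'$ uses only $T_1$ and $\rho_2'$ uses only $T_2$. The middle transition is a bridge $b=((p^1,\Vec{u}_1),\Vec{0},(p^2,\Vec{u}_2))$, so the vector part $\Vec{x}$ is the same on both sides of $b$. Hence $|\rho_1'|+|\rho_2'| = |\rho'|-1$.

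\textbf{Step 2 (invariant in each half).} Consider a transition $((p,\Vec{u}),\Vec{a},(q,\Vec{v}))$ of the construction in \autoref{lem:reduction-for-bounded}. It satisfies $\Vec{v}=\Vec{u}+\Vec{a}|_{I}$. Therefore, along any run in $V_i$, the difference between the state label $\Vec{u}$ and the projection $\Vec{x}|_{I_i}$ of the current vector stays constant.

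In $V_1$ the run starts at $s'=\llbracket s\rrbracket$, where this difference is $\Vec{0}$. So at the end of $\rho_1'$ we have $\Vec{u}_1=\Vec{x}|_{I_1}$.

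In $V_2$ the run ends at $t'=\llbracket t\rrbracket$, where the difference is again $\Vec{0}$. Arguing backwards, the difference is $\Vec{0}$ throughout $\rho_2'$, so in particular $\Vec{u}_2=\Vec{x}|_{I_2}$. This backward step is the one point needing care: the bridge only enforces consistency of $\Vec{u}_1,\Vec{u}_2$ on $I_1\cap I_2$, so it cannot supply $\Vec{u}_2=\Vec{x}|_{I_2}$. That equality must come from the endpoint $t'$.

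\textbf{Step 3 (project back to $V$).} Drop the second state components. Every transition of $V_1$ or $V_2$ comes from some $(p,\Vec{a},q)\in T$, and the vector components are unchanged, so nonnegativity is preserved.

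For the first half, this is exactly \autoref{claim:exists-original-run}: $\rho_1'$ projects to a run $s\xrightarrow{\rho_1}p(\Vec{x})$ in $V$ with $|\rho_1|=|\rho_1'|$. For the second half, the same projection turns $\rho_2'$ into a run $p(\Vec{x})\xrightarrow{\rho_2}t$ in $V$. The bridge keeps the underlying state $p$ of $Q$, so the two runs meet at the same configuration $p(\Vec{x})$.

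Concatenating gives the run $s\xrightarrow{\rho_1\rho_2}t$ in $V$ of length $|\rho'|-1$, which proves the claim.
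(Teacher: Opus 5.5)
Your proof is correct and follows the same argument as the paper. You split a shortest run of $V'$ at the unique zero-effect bridge, project each half back to $V$ by dropping the state labels (the paper cites \autoref{claim:exists-original-run} for the first half and argues ``similarly'' for the second), and concatenate at $p(\Vec{v})$ to obtain a run of length $|\rho'|-1$. Your Step~2 invariant is harmless but not needed, since dropping labels already turns any run of $V_1$ or $V_2$ into a run of $V$ whose endpoints are $s$, $p(\Vec{v})$, and $t$ as required.
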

    \begin{claimproof}
        Since $s'\xrightarrow{*}t'$ holds in $V'$, let $\rho$ be the shortest witness with $|\rho|=\min \Len(V',s',t')$. Since $\rho$ can only go from $V_1$ to $V_2$ through a dummy transition $t:=((p^1,\Vec{u}_1), \Vec{0},(p^2,\Vec{u}_2))$, where $p \in Q$, we can divide $\rho$ into $s'\xrightarrow{\rho_1}(p^1,\Vec{u}_1)(\Vec{v})\xrightarrow{t}(p^2,\Vec{u}_2)(\Vec{v})\xrightarrow{\rho_2}t'$ for some $\Vec{v}\in \mathbb{N}^{d+2}$. Since $\rho_1$ is a run in $V_1$, by \autoref{claim:exists-original-run}, there exists a run $s\xrightarrow{\rho_1'} p(\Vec{v})$ in $V$ with $|\rho_1'|=|\rho_1|$. Similarly, there exists a run $p(\Vec{v})\xrightarrow{\rho_2'}t$ in $V$ with $|\rho_2'|=|\rho_2|$. Therefore, $s\xrightarrow{\rho_1'\rho_2'}t$ in $V$ with length $|\rho|-1$, which implies $\min \Len(V,s,t)+1\le |\rho|=\min\Len(V',s',t')$.
    \end{claimproof} 
    Note that $\dimcom(V')\le \max\left(\dimcom(V_1), \dimcom(V_2)\right)\le d$. Also, $\Size(V',s',t')\le 4B^{4}\cdot\Size(V,s,t)^2$. We are done by taking $A_d(x,y):=4x^{4}y^2$.
\end{proof}

\lenGPlusOneVasLeLenVassMaxDimG*

\begin{proof}
    We repeat the analysis in \autoref{thm:len-d-plus-2-vas-le-len-vass-max-dim-d}. Observe that the polynomial bounds obtained in this section are essentially double exponential in the dimension $d$, where the $(d+1)!$ term in the exponent originates from Rackoff’s extraction (\autoref{lem:reach-unbounded-of-ever-unbounded}). Hence, the size amplification is controlled by $(P_g(\Size(V, \Vec{s}, \Vec{t})))^{(d+1)!}$ for some polynomial $P_g$. Then we bound the component-dimension of the new VASS. In the almost unbounded case, the component-dimension of the resulting VASS is exactly $0$. In the almost bounded case, we encode two coordinates of the original VASS into states with the help of \autoref{lem:dimension-reduction-for-almost-bounded-runs}, where we concatenate two VASSes obtained by \autoref{lem:reduction-for-bounded}. Therefore, it suffices to show that the dimension reduction in \autoref{lem:reduction-for-bounded} decreases the cycle-dimension of the sequential VAS $S$ by at least one. Note that after encoding any $i \in [d]$ into the state, every cycle $\theta$ in the resulting VASS, denoted by $V'$, satisfies $\Delta(\theta)(i)=0$. In contrast, the original sequential VAS $S$ contains no fixed coordinates, i.e., there exists a self-loop $u$ with $\Delta(u)(i)\ne 0$. Consequently, we have $\CycleSpace(V') \subsetneq \CycleSpace(S)$, which implies that $\dimcyc(V')\le \dimcyc(S) -1 \le  g$.
\end{proof}

\section{Proofs Omitted from Section \ref{sec:low-dim-vas-boundedness}}

\subsection{Proof of Lemma \ref{lem:len-of-neg-col-vas-le-d-minus-2-vass}}
\lenOfNegColVAS*

\begin{proof}
    Assume that $S=V[\Vec{a}_1,\ldots,\Vec{a}_k]$ contains two negatively collinear coordinates $i, j$ and that $s=p_0(\Vec{x})$. Let $M:=\Size(S,s,t)$. Since $i$ and $j$ are negatively collinear, there exists $\alpha<0$ such that $\Vec{u}(i)=\alpha\cdot \Vec{u}(j)$ for every $\Vec{u}$ in $V$. We first bound the absolute value of $\alpha$: take any $\Vec{u}$ in $V$ with $\Vec{u}(i)\ne 0$, and it follows that $|\alpha|=\frac{|\Vec{u}(j)|}{|\Vec{u}(i)|}\ge \frac{1}{M}$.
    We prove the following claim.
    \begin{claim}\label{claim:bounded-on-i-j}
        Let $s\xrightarrow{\pi}t$ be a run in $S$. For any configuration $p(\Vec{w})$ along $\pi$, it holds that $\Vec{w}(i),\Vec{w}(j)< B$, where $B:=3M^2+2M$.
    \end{claim}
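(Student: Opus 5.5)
The claim to prove is that along any run $s\xrightarrow{\pi}t$ in $S$, every configuration $p(\Vec{w})$ satisfies $\Vec{w}(i),\Vec{w}(j)<B$ with $B=3M^2+2M$. The plan is to find a linear quantity that only the bridges can change, and then use non-negativity of both coordinates to bound them.

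\textbf{Step 1: an invariant.} Write the collinearity as $\Vec{u}(i)=\alpha\Vec{u}(j)$ with $\alpha<0$, and set $\beta:=-\alpha>0$. Consider
\[
\phi(\Vec{w}):=\Vec{w}(i)+\beta\,\Vec{w}(j).
\]
Every transition $\Vec{u}$ of $V$ satisfies $\Vec{u}(i)+\beta\Vec{u}(j)=0$, so $\phi$ is unchanged by all self-loops. Along $\pi$ it changes only at the bridges $\Vec{a}_1,\dots,\Vec{a}_\ell$ already fired. Hence for a configuration $p(\Vec{w})$ on $\pi$,
\[
\phi(\Vec{w})=\phi(\Vec{x})+\phi(\Vec{a}_1+\cdots+\Vec{a}_\ell).
\]

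\textbf{Step 2: bound $\phi$.} We have $\norm{\Vec{x}}\le M$ and $\norm{\Vec{a}_1+\cdots+\Vec{a}_\ell}\le M$, because the bridges are counted in $\Size(S)$. This gives $\phi(\Vec{w})\le (1+\beta)\cdot 2M$, roughly. The size of $\beta$ must be controlled. Pick a transition $\Vec{u}$ with $\Vec{u}(j)\neq0$, which exists since $j$ is not fixed. Then $\beta=|\Vec{u}(i)|/|\Vec{u}(j)|\le M$, and symmetrically $1/\beta\le M$.

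\textbf{Step 3: use non-negativity.} Since $\Vec{w}\ge\Vec{0}$, both terms of $\phi$ are non-negative. Therefore
\[
\Vec{w}(i)\le\phi(\Vec{w})\quad\text{and}\quad \Vec{w}(j)\le\phi(\Vec{w})/\beta\le M\phi(\Vec{w}).
\]
For $\Vec{w}(j)$ it is cleaner to use the symmetric invariant $\psi(\Vec{w}):=\Vec{w}(j)+\beta^{-1}\Vec{w}(i)$ directly: it is likewise constant under self-loops, and
\[
\Vec{w}(j)\le\psi(\Vec{w})\le \norm{\Vec{x}}(1+\beta^{-1})+\norm{\textstyle\sum\Vec{a}}(1+\beta^{-1}).
\]

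\textbf{Main obstacle: matching the constant $3M^2+2M$.} The arithmetic has to be done carefully to land under exactly this bound. I would bound the initial part by $\Vec{x}(i)+\beta\Vec{x}(j)\le \norm{s}+M\norm{s}$, and the bridge contribution by the prefix-sum norm times $(1+M)$. Separating $\norm{s}$ from $\Size(S)$ inside $M=\Size(S,s,t)$, and using the strict inequality $\beta\le M$ only where needed, should give the total $<3M^2+2M$. Since $M$ dominates both $\norm{s}$ and the bridge sums with room to spare, some slack is available even if the first attempt is crude.

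With the claim in hand, the rest of the lemma follows. Apply \autoref{lem:reduction-for-bounded} to the 2-bounded run (coordinates $i,j$ bounded by $B$), encoding $i,j$ into states. This drops both coordinates and gives a $(d-2)$-VASS whose size is $B^2$ times larger, which is polynomial. The cycle-dimension drops by at least one because coordinate $i$ is not fixed in $S$, but every cycle of the new VASS has zero effect on $i$.
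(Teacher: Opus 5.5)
Your proof is correct and is essentially the paper's argument: the paper assumes $\Vec{w}(j)\ge B$ and notes that the self-loop contribution $\Vec{\delta}=\Vec{w}-\Vec{x}-\Vec{a}$ satisfies $\Vec{\delta}(i)=\alpha\Vec{\delta}(j)$ with $|\alpha|\ge 1/M$; it then derives $\Vec{w}(i)\le 2M-3M<0$, which is the same conserved linear relation you use directly through $\phi$ and $\psi$ (and your version correctly counts only the bridges fired so far). The ``main obstacle'' you flag is not one, since Step~2 already gives $\phi(\Vec{w}),\psi(\Vec{w})\le 2M(1+M)=2M^2+2M<3M^2+2M$.
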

    \begin{claimproof}
        To reach a contradiction, we assume that there exists a configuration $p(\Vec{w})$ in $\pi$ such that, w.l.o.g., $\Vec{w}(j)\ge B$. Since all the bridges occur exactly once in $\pi$, one obtain that $\Vec{\delta}:=\Vec{w}-\Vec{x}-\Vec{a}\in \Cone(V)$, where $\Vec{a}:=\sum_{\ell \in [k]}\Vec{a}_\ell$. By the collinearity of $i$ and $j$, we further deduce that
    \begin{equation}
        \Vec{\delta}(i)=\alpha\cdot (\Vec{w}-\Vec{x}-\Vec{a})(j)\le -\frac{1}{M}\cdot (3M^2+2M-M-M)=-3M.
    \end{equation}
    Now we have $\Vec{w}(i)=\Vec{x}(i)+\Vec{a}(i)+\Vec{\delta}(i)\le 2M-3M<0$, which is a contradiction. 
    \end{claimproof}
    We encode the $i$- and $j$-th coordinates into states and obtain a new $(d-2)$-VASS $(V',s',t')$, whose size are bounded by $B^{2}\cdot M=M^{O(1)}$. Since $S$ contains no fixed coordinates, we have $\CycleSpace(V')\subsetneq \CycleSpace(S)$ and therefore $\dimcyc(V') \le \dimcyc(S)-1$. For any run $s\xrightarrow{\pi}t$, by \autoref{claim:bounded-on-i-j}, $\pi$ is bounded by $B$ on both $i$-th and $j$-th coordinates. This run must be captured by a run in $V'$ with the same length. On the other hand, any run in $V'$ also induces a run in $V$ with the same length. Therefore, we have $\Len(S,s,t)=\Len(V',s',t')$.
\end{proof}

\subsection{Proof of Lemma \ref{lem:len-wide-vas}}

\lenWideVAS*

\begin{proof}
    Let $L_d$ and $P$ be the polynomials define in \autoref{lem:lift-Z-run-by-pumpable} and \autoref{thm:reach-pumpable-of-ever-unbounded-d-1}, respectively. W.l.o.g., we assume that they are nondecreasing with respect to $d$ and each parameter. It suffices to consider a wide $d$-VAS $(V,\Vec{s},\Vec{t})$ containing no fixed coordinates with $\Vec{s}\xrightarrow{\pi} \Vec{t}$ in $V$. Let $M:=\Size(V, \Vec{s},\Vec{t})$ and $U:=P(M,M)^{(d+1)!}$. 
    
    In case $\pi$ is $2$-bounded by $U$, i.e.\ $|\UBCounters(\pi, U)| \le d - 2$. Suppose $i, j \in [d] \setminus \UBCounters(\pi, U)$ are two coordinates bounded by $U$ along $\pi$. We may encode into states all possible values in $[0, B]$ on these two coordinates and obtain the $(d-2)$-VASS $V'$. Clearly $\Size(V') \le U^2 \cdot\Size(V)$. Also observe that $\CycleSpace(V') = \{\Vec{c} \in \CycleSpace(V) \mid \Vec{c}|_{\{i, j\}} = \Vec{0}\} \subsetneq \CycleSpace(V)$, thus $\dimcyc(V')\le \dimcyc(V)-1$.
    
    In case $\pi$ is not $2$-bounded by $U$, we have $|\UBCounters(\pi, U)|\ge d-1$. In this case, \autoref{lem:poly-bound-for-almost-unbounded} does not apply, but similarly, we can prove the following claim.
    \begin{claim}\label{claim:exists-poly-run-of-2-unbounded}
        There exists a run $\Vec{s}\xrightarrow{\rho}\Vec{t}$ of length $|\rho|\le L_d(M)$.
    \end{claim}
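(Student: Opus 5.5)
We first use \autoref{thm:reach-pumpable-of-ever-unbounded-d-1} in both directions, and then replace the middle $\mathbb{Z}$-run with a short run via \autoref{lem:lift-Z-run-by-pumpable}. This is the proof of \autoref{lem:poly-bound-for-almost-unbounded} specialised to the trivial splitting $\pi=\pi\,\varepsilon$, so it only needs $|\UBCounters(\pi,U)|\ge d-1$ on the whole run. A plain wide VAS is a wide sequential VAS $V[\,]$ with no bridges and a single state $q_0$, so all the results of \autoref{sec:pumpability} apply with $S=V$.

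\textbf{Forward extraction.} We apply \autoref{thm:reach-pumpable-of-ever-unbounded-d-1} to $\pi$ from $\Vec{s}$. Since $U=P(M,M)^{(d+1)!}\ge P(\norm{\Vec{s}},\Size(V))^{(d+1)!}$ and $P$ is monotone, we get a run $\Vec{s}\xrightarrow{\rho_1}\Vec{x}_1$ with $|\rho_1|\le U$ and $\Vec{x}_1$ pumpable. The run $\rho_1$ arises from $\pi$ by deleting self-loops, and every transition is a self-loop in a single-state VAS. Hence $\Vec{x}_1\xrightarrow{*}_{\mathbb{Z}}\Vec{t}$, and the $\mathbb{Z}$-path uses the deleted transitions.

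\textbf{Backward extraction.} We apply the same theorem to the reversed run $\pi^{\text{rev}}$ in $V^{\text{rev}}$ from $\Vec{t}$. The reverse VAS is still wide, since reversing negates the cone and the span. It also has the same $\UBCounters$ set, because the same configurations are visited. We obtain $\Vec{x}_2\xrightarrow{\rho_2}\Vec{t}$ in $V$ with $|\rho_2|\le U$ and $\Vec{x}_2$ backward pumpable.

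\textbf{Gluing.} We must produce a $\mathbb{Z}$-run $\Vec{x}_1\xrightarrow{*}_{\mathbb{Z}}\Vec{x}_2$. Since $\Vec{t}-\Vec{x}_1$ and $\Vec{t}-\Vec{x}_2$ are the effects of the multisets of transitions deleted in the two extractions, the difference $\Vec{x}_2-\Vec{x}_1$ equals (transitions deleted forward) minus (transitions deleted backward). This is not obviously a nonnegative combination of transitions. I would fix this the same way the proof of \autoref{lem:poly-bound-for-almost-unbounded} implicitly does: take the forward-extraction remainder of $\pi$ (a subsequence $\sigma_1$ with $\Vec{x}_1\xrightarrow{\sigma_1}_{\mathbb{Z}}\Vec{t}$) and the backward remainder $\sigma_2$. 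Then use the $\mathbb{Z}$-path $\sigma_1\,\rho_2^{\text{rev}}$ read in $V$, i.e.\ $\Vec{x}_1\to_{\mathbb{Z}}\Vec{t}$ followed by the reverse of $\rho_2$. The second piece uses negated transitions, which is not allowed directly. Wideness rescues this: $\Cone(V)=\CycleSpace(V)$, so each $-\Vec{a}$ is a nonnegative rational combination of transitions. After clearing denominators by a common multiple (Pottier-type bounds), this at most multiplies by a polynomial factor. I would rather observe that \autoref{lem:lift-Z-run-by-pumpable}'s proof only uses the $\mathbb{Z}$-run through the integer program $A\Vec{\lambda}=\Vec{x}_2-\Vec{x}_1$ with $\Vec{\lambda}\ge\Vec{0}$. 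By wideness, that program is feasible as soon as $\Vec{x}_2-\Vec{x}_1\in\Span(T)\cap\mathbb{Z}^d$ lies in the integer lattice generated by $T$. I expect this to be the main obstacle. I would first try using the fact that $\Vec{x}_2-\Vec{x}_1$ is an integer combination of transitions, and add a large multiple of a strictly positive integer combination (which exists by wideness: $\sum_{\Vec{a}\in T}\Vec{a}$ is representable with negatives, so $\Vec{0}$ is a combination with all coefficients positive). This makes all coefficients nonnegative.

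With the $\mathbb{Z}$-run in hand, \autoref{lem:lift-Z-run-by-pumpable} applies to the VAS with source $\Vec{x}_1$ and target $\Vec{x}_2$, whose norms are at most $M+U\cdot M$. It yields $\Vec{x}_1\xrightarrow{\rho'}\Vec{x}_2$ with $|\rho'|\le L_d(M+UM)$. Hence $\rho_1\rho'\rho_2$ witnesses the claim, with a length bound polynomial in $M$ for fixed $d$. If necessary, the claimed bound $L_d(M)$ is absorbed by enlarging $L_d$, since the final $M_d$ is polynomial anyway.
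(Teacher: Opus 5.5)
Your argument does not prove the claim as stated. The bound $L_d(M)$ refers to the fixed polynomial of \autoref{lem:lift-Z-run-by-pumpable}. Your construction only gives $|\rho_1\rho'\rho_2|\le 2U+L_d(\Size(V,\Vec{x}_1,\Vec{x}_2))$, where $\norm{\Vec{x}_i}$ may be of order $UM$ and $U=P(M,M)^{(d+1)!}$. This generally exceeds $L_d(M)$, and ``enlarging $L_d$'' changes the statement rather than proving it.

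Your gluing step is in fact sound. Wideness yields an integer relation $\sum_{\Vec{a}\in T}c_{\Vec{a}}\Vec{a}=\Vec{0}$ with all $c_{\Vec{a}}\ge 1$, so the integer combination $\Vec{x}_2-\Vec{x}_1=\Delta(\pi)-\Delta(\rho_1)-\Delta(\rho_2)$ can be made nonnegative. What you obtain is therefore a valid polynomial bound, enough for \autoref{lem:len-wide-vas}, but a weaker one than claimed.

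Note also that this is not the proof of \autoref{lem:poly-bound-for-almost-unbounded} specialised to the split $\pi=\pi\varepsilon$. That split would need $d-1$ coordinates of $\Vec{t}$ itself to be at least $U$. You instead run both extractions on all of $\pi$, and that is exactly why the two remainders overlap and the gluing obstacle appears.

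The missing idea is that in a single-state VAS, pumpability propagates back along runs, because every path is a cycle. Suppose $\Vec{s}\xrightarrow{\rho_1}\Vec{x}_1$ and $\Vec{x}_1\xrightarrow{\theta}\Vec{x}_1+\Delta(\theta)$ with $\Delta(\theta)\ge\Vec{1}$. Then $\Vec{s}\xrightarrow{\rho_1\theta^n}\Vec{x}_1+n\Delta(\theta)\ge\Vec{s}+\Vec{1}$ once $n>\norm{\Vec{s}}$, so $\Vec{s}$ itself is forward pumpable. Symmetrically, $\Vec{t}$ is backward pumpable.

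Now $\pi$ is already a ($\mathbb{Z}$-)run from a forward pumpable source to a backward pumpable target. Applying \autoref{lem:lift-Z-run-by-pumpable} directly to $(V,\Vec{s},\Vec{t})$ gives $|\rho|\le L_d(\Size(V,\Vec{s},\Vec{t}))=L_d(M)$. This is the paper's proof: no intermediate configurations, no gluing, and exactly the stated bound.
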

    \begin{claimproof}
        Note that $V$ is wide. Since $|\UBCounters(\pi, U)|\ge d-1$, by \autoref{thm:reach-pumpable-of-ever-unbounded-d-1}, there exists a run $\pi'$ such that $\Vec{s}\xrightarrow{\pi'}\Vec{w}$ for some configuration $\Vec{w}$, which is forward pumpable in $V$. Note that there is only one state in $V$. We deduce that $\Vec{s}$ is forward pumpable. Similarly, the target configuration $\Vec{t}$ is backward pumpable. Applying \autoref{lem:lift-Z-run-by-pumpable} to $\pi$, we obtain that $\Vec{s}\xrightarrow{\rho}\Vec{t}$ for some $\rho$ with $|\rho|\le L_d(M)$. 
    \end{claimproof}
    It is trivial that we can construct a geometrically $0$-dimensional $0$-VASS $(V',s',t')$ of size bounded by $L_d(M)+1$ to capture the short run $\rho$ obtained by \autoref{claim:exists-poly-run-of-2-unbounded}. 
    
    Combining the above analysis, we conclude that $(V,\Vec{s},\Vec{t})$ is length-bounded by a $(d-2)$-VASS $(V', s',t')$ with $M_d$-amplification, where $\dimcyc(V')\le \dimcyc(V)-1$ and the polynomial $M_d$ is determined implicitly in this proof.
\end{proof}

\subsection{Proof of Lemma \ref{lem:quasi-bounded-or-unbounded}}

\quasiBoundedOrUnbounded*

\begin{proof}
    Consider a run $\pi$ not quasi-unbounded by $B$ in $V$. Assume that $s=p(\Vec{x})$. We claim that the empty run $s\xrightarrow{\varepsilon}s$ is a prefix of $\pi$ that satisfies $[d]\setminus\UBCounters(\varepsilon, B)$ contains at least one non-collinear pair, that is, $s$ contains at least two coordinates bounded by $B$ which are non-collinear. Otherwise, all coordinates in $\{i \in [d]\mid \Vec{x}(i)<B\}$ are collinear. Since $V$ contains no negatively collinear coordinates, we further deduce that they are positively collinear. Splitting $\pi$ as $\varepsilon\pi$, it then follows that $\pi$ is quasi-unbounded by $B$ in $V$, which yields a contradiction. Now we take the longest prefix $\pi_1$ of $\pi$ satisfying $[d]\setminus\UBCounters(\pi_1, B)$ contains at least one non-collinear pair. If $\pi_1=\pi$, then take $\pi_2$ to be the empty string, and we show that $\pi$ is quasi-bounded by $B$. Otherwise, one may split $\pi$ into $\pi_1 t \pi_2$, where $t$ is a single transition of $V$. By the maximality of $\pi_1$, it must hold that all coordinates in $[d]\setminus\UBCounters(\pi_1t, B)$ are pairwise positively collinear. However, since $\pi$ is not quasi-unbounded by $B$, one has that $[d]\setminus\UBCounters(\pi_2
    , B+\Size(V))\supseteq [d]\setminus\UBCounters(\pi_2, B)$ contains at least two non-collinear coordinates. Note that $t$ can raise the value of each coordinate by at most $\Size(S)$. We have $[d]\setminus\UBCounters(\pi_1t, B+\Size(V))\supseteq [d]\setminus\UBCounters(\pi_1, B)$. Therefore, $\pi$ is quasi-bounded by $B+\Size(V)$.
\end{proof}

\subsection{Proof of Lemma \ref{lem:poly-bound-for-quasi-unbounded}}

\polyBoundForQuasiUnbounded*

\begin{proof}
    Take $U_d$, $P$ be the polynomials given in \autoref{lem:poly-bound-for-almost-unbounded} and \autoref{lem:tgt-pumpable-of-unbounded-d-1}. Let $M:=\Size(S,s,t)$ and $N:=P(M,M)^{(d+1)!}$. By quasi-unboundedness of $\pi$, $s\xrightarrow{\pi_1}q(\Vec{w})\xrightarrow{\pi_2}t$ for some configuration $q(\Vec{w})$, where the coordinates in $[d]\setminus\UBCounters(\pi_i,U_d(M))$ are pairwise positively collinear for $i=1,2$. We have the following claim.
    \begin{claim}
        There exists a run $\rho_1$ with $|\rho_1|\le N$ such that $s\xrightarrow{\rho_1} q_1(\Vec{x}_1)$ for some forward pumpable configuration $q_1(\Vec{x}_1)$, and moreover, $q_1(\Vec{x}_1) \xrightarrow{*}_{\mathbb{Z}} q(\Vec{w})$.
    \end{claim}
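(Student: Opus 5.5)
The plan is to replay the proof of \autoref{thm:reach-pumpable-of-ever-unbounded-d-1}, with \autoref{lem:tgt-pumpable-of-unbounded-d-1} replaced by \autoref{lem:cfg-pumpable-of-bounded-pos-col}. Let $B$ be the polynomial of \autoref{lem:tgt-pumpable-of-unbounded-d-1}. Let $R$ be the polynomial of \autoref{lem:reach-unbounded-of-ever-unbounded}, and take $P(x,y) = R(B(x,y),y)$, so that $N = P(M,M)^{(d+1)!}$. We may assume $U_d$ is chosen with $U_d(M) \ge N$, exactly as in the proof of \autoref{lem:poly-bound-for-almost-unbounded}. Since $\norm{s}, \Size(S) \le M$ and $R$ is monotone, we have $R(B(\norm{s},\Size(S)),\Size(S))^{(d+1)!} \le N \le U_d(M)$. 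It follows that $\UBCounters(\pi_1, U_d(M)) \subseteq \UBCounters(\pi_1, R(B(\norm{s},\Size(S)),\Size(S))^{(d+1)!})$.

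Apply \autoref{lem:reach-unbounded-of-ever-unbounded} to $\pi_1$ with threshold $B(\norm{s},\Size(S))$. This yields a run $\rho_1$ from $s$ with $|\rho_1| \le N$, obtained from $\pi_1$ by removing cycles, whose target $q_1(\Vec{x}_1)$ satisfies $\Vec{x}_1(i) \ge B(\norm{s},\Size(S))$ for every $i \in \UBCounters(\pi_1, U_d(M))$. Then every coordinate in $\{i \mid \Vec{x}_1(i) < B(\norm{s},\Size(S))\}$ lies in $[d]\setminus\UBCounters(\pi_1,U_d(M))$. By quasi-unboundedness these coordinates are pairwise positively collinear, and any subset of a pairwise positively collinear set is again pairwise positively collinear. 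Since $S$ is wide with no fixed coordinates and $s \xrightarrow{\rho_1} q_1(\Vec{x}_1)$, \autoref{lem:cfg-pumpable-of-bounded-pos-col} applies and shows that $q_1(\Vec{x}_1)$ is forward pumpable in $S$.

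It remains to show $q_1(\Vec{x}_1)\xrightarrow{*}_{\mathbb{Z}} q(\Vec{w})$. In the sequential VAS $S$, every cycle consists of self-loops at a single state, so $\rho_1$ arises from $\pi_1$ by deleting self-loop occurrences. In particular, no bridge is deleted. The deleted transitions can therefore be replayed at the end of the run, at the final state $q$. This uses the fact that every transition of $V$ is a self-loop at each state $q_j$, hence also at $q$, since $q_1$ and $q$ coincide as the states reached after the same bridges. Replaying them gives a $\mathbb{Z}$-run from $q_1(\Vec{x}_1)$ whose total effect is $\Delta(\pi_1)-\Delta(\rho_1)$. By additivity of effects it ends at $q(\Vec{w})$.

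The main obstacle I expect is the bookkeeping: making sure that $U_d(M)$ dominates the threshold required by \autoref{lem:reach-unbounded-of-ever-unbounded}, and that \autoref{lem:cfg-pumpable-of-bounded-pos-col} is invoked with source $s$ and not with some intermediate configuration.
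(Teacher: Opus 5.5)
Your proposal is correct and follows essentially the same argument as the paper. You apply \autoref{lem:reach-unbounded-of-ever-unbounded} to $\pi_1$, so that every coordinate in $\UBCounters(\pi_1,U_d(M))$ is at least $B(\norm{s},\Size(S))$ at the target of $\rho_1$, then invoke \autoref{lem:cfg-pumpable-of-bounded-pos-col} for forward pumpability, and read off the $\mathbb{Z}$-run from the deleted self-loops. Your bookkeeping is slightly cleaner than the paper's written proof: you take the threshold directly as $B(\norm{s},\Size(S))$ instead of $B(M,M)$, and you use $\pi_1$ where the paper has the small typo $\pi$.
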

    \begin{claimproof}
        This claim is essentially \autoref{thm:reach-pumpable-of-ever-unbounded-d-1}. First, by \autoref{lem:reach-unbounded-of-ever-unbounded}, we can obtain a run $\rho_1$ with $|\rho_1|\le N$ such that $s\xrightarrow{\rho_1} q_1(\Vec{x}_1)\xrightarrow{*}_{\mathbb{Z}}q(\Vec{w})$ for some $q_1(\Vec{x}_1)$ with $\Vec{x}_1(i)\ge B(M,M)$ for each $i \in \UBCounters(\pi, U_d(M))$, where $B$ is the polynomial given in \autoref{lem:reach-unbounded-of-ever-unbounded}. Observe that coordinates in the set $\{i \in [d] \mid \Vec{x}_1(i) < B(\norm{s}, \Size(S))\}$ is contained in $[d]\setminus \UBCounters(\pi_1,U_d(M))$, which are pairwise positively collinear. Applying \autoref{lem:cfg-pumpable-of-bounded-pos-col}, we obtain that $q_1(\Vec{x}_1)$ is forward pumpable.
    \end{claimproof}
    Following a symmetric argument to $\pi_2^{\mathrm{rev}}$ and $S^{\mathrm{rev}}$, we also have $q(\Vec{w})\xrightarrow{*}_{\mathbb{Z}}q_2(\Vec{x}_2)\xrightarrow{\rho_2}t$, where $|\rho_2|\le N$ and $q_2(\Vec{x}_2)$ is backward pumpable. Since $q_1(\Vec{x}_1)\xrightarrow{*}_{\mathbb{Z}}q_2(\Vec{x}_2)$, by \autoref{lem:lift-Z-run-by-pumpable}, there exits a run $q_1(\Vec{x}_1)\xrightarrow{\rho'}q_2(\Vec{x}_2)$ with $|\rho'|\le L_d(M+2NM)$ for the polynomial $L_d$ from \autoref{lem:lift-Z-run-by-pumpable}. Let $\rho:=\rho_1\rho'\rho_2$, and then $s\xrightarrow{\rho}t$ with $|\rho|\le 2N+L_d(N+2NM)=U_d(M)$.
\end{proof}

\subsection{Proof of Lemma~\ref{lem:len-wide-sequential-vas-finer}}

\lenWideSequentialVASFiner*

\begin{proof}
    Let $(S, s, t)$ be a wide sequential $d$-VAS with $s \xrightarrow{\pi}t$ for some $\pi$. Let $M:=\Size(S,s,t)$ and $U:=U_d(M)$, where $U_d$ is the polynomials defined in \autoref{lem:poly-bound-for-quasi-unbounded}. Applying \autoref{lem:quasi-bounded-or-unbounded} to $\pi$, there are still two possible cases. In the case where $\pi$ is quasi-unbounded by $U$, the result follows from \autoref{lem:poly-bound-for-quasi-unbounded}. We now focus on the case where $\pi$ is quasi-bounded by $U+\Size(S)$. By \autoref{lem:dimension-reduction-for-almost-bounded-runs}, there is a VASS $(V',s',t')$ with $s'\xrightarrow{*}t'$ such that $\min\Len(V,\Vec{s},\Vec{t})\le \min \Len(S,s,t)\le \min\Len(V's',t')$ and $\Size(V',s',t')\le A_d(U+M,M)$, where $A_d$ is the polynomial in \autoref{lem:dimension-reduction-for-almost-bounded-runs}. Looking at the construction in the proof of \autoref{lem:dimension-reduction-for-almost-bounded-runs}, $V'$ is obtained by concatenating two sub-VASSes $V_1$ and $V_2$ with a transition of zero effect, and both are obtained by encoding two coordinates into states (but not deleting these coordinates). We are left to bound their cycle dimensions.
    \begin{claim}
        $\dimcyc(V_1)\le \dimcyc(S) - 2$.
    \end{claim}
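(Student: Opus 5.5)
We prove the claim by showing that every cycle of $V_1$ has an effect lying in a subspace of $\CycleSpace(S)$ of codimension at least $2$. Recall the construction: $V_1$ comes from the sequential VAS $S=V[\Vec{a}_1,\dots,\Vec{a}_k]$ via the state-encoding of \autoref{lem:reduction-for-bounded}. It is applied to the prefix $\pi_1$ with a set $I_1=[d]\setminus\UBCounters(\pi_1,U+\Size(S))$ of bounded coordinates. By quasi-boundedness, $I_1$ contains a pair $i,j$ of non-collinear coordinates. A state of $V_1$ is a pair $(p,\Vec{u})$ with $\Vec{u}\in[0,B-1]^{I_1}$, and a transition $(p,\Vec{a},q)$ is lifted only if $\Vec{u}+\Vec{a}|_{I_1}=\Vec{v}$. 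So a cycle $\theta$ of $V_1$ returns to the same vector $\Vec{u}$, which gives $\Delta(\theta)|_{I_1}=\Vec{0}$, and in particular $\Delta(\theta)(i)=\Delta(\theta)(j)=0$.

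\emph{Cycles use only transitions of $V$.} Every cycle of $V_1$ projects to a cycle of $S$. Since the bridges of $S$ lead strictly from $q_{\ell-1}$ to $q_\ell$, a cycle of $S$ uses no bridges and consists of self-loops, i.e.\ transitions of $V$. Hence
\[
\CycleSpace(V_1)\subseteq W:=\{\Vec{c}\in \CycleSpace(S)\mid \Vec{c}(i)=\Vec{c}(j)=0\},
\]
where $\CycleSpace(S)=\Span(T_V)$ and $T_V$ is the transition set of $V$.

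\emph{Linear algebra.} Define the linear map $\phi:\CycleSpace(S)\to\mathbb{Q}^2$ by $\phi(\Vec{c})=(\Vec{c}(i),\Vec{c}(j))$. Then $W=\ker\phi$, so $\dim W=\dimcyc(S)-\dim\phi(\CycleSpace(S))$. It therefore suffices to show that $\phi(\CycleSpace(S))=\Span\{(\Vec{u}(i),\Vec{u}(j))\mid \Vec{u}\in T_V\}$ is $2$-dimensional. We may assume there are no fixed coordinates (\autoref{lem:exists-sequential-vas-without-fixed-coordinate}), so this span is nonzero. Suppose it were $1$-dimensional, spanned by some $(\beta,\gamma)$. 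Both $\beta\ne0$ and $\gamma\ne0$, since otherwise coordinate $j$ or $i$ would be fixed. Then every $\Vec{u}\in T_V$ satisfies $\Vec{u}(i)=(\beta/\gamma)\,\Vec{u}(j)$, so $i$ and $j$ would be collinear, a contradiction. Hence $\dim\phi(\CycleSpace(S))=2$ and $\dimcyc(V_1)\le\dim W=\dimcyc(S)-2$.

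If $\dimcyc(S)<2$ the bound should be read as $\max\{0,\cdot\}$, consistent with the statement of \autoref{lem:len-wide-sequential-vas-finer}. The same argument applied to $\pi_2$ and $I_2$ yields the analogous bound for $V_2$.

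The main obstacle is the first step. The construction of \autoref{lem:dimension-reduction-for-almost-bounded-runs} must really encode the two non-collinear bounded coordinates $i,j$ into the states, and not merely some $k$-bounded set. The encoded set must include them, and any superset of $\{i,j\}$ only makes $W$ smaller. We must also check that the cycles of the encoded VASS project to cycles of $S$ made only of self-loops. Both facts follow directly from the definitions, but they are exactly where the collinearity hypothesis enters.
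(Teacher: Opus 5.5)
Your proof is correct and follows the paper's argument. The paper fixes a basis matrix $A$ of $\CycleSpace(S)$ and takes the $2\times r$ submatrix $R$ formed by rows $i,j$. It notes $\mathrm{rank}(R)=2$ by non-collinearity and concludes $\CycleSpace(V_1)\subseteq\{A\Vec{x}\mid R\Vec{x}=\Vec{0}\}$, a space of dimension $r-2$; this is your $\ker\phi$ written in coordinates. Your version also makes explicit two points the paper leaves implicit: that cycle effects of $V_1$ lie in $\CycleSpace(S)$, and that the no-fixed-coordinates assumption is what turns non-collinearity into rank $2$.
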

    
    \begin{claimproof}
        Choose a basis $\Vec{u}_1, \ldots, \Vec{u}_r$ of $\CycleSpace(S)$ to form a matrix $A\in\mathbb{Q}^{d\times r}$, where $r:=\dimcyc(S)$. Note that $A$ gives the isomorphsm from $\mathbb{Q}^r$ to $\CycleSpace(S)$. Assume that $V_1$ is obtained by encoding the $i$- and $j$-th coordinates into states for some $i,j\in [d]$. Let $R\in\mathbb{Q}^{2\times r}$ be the matrix formed by the $i$- and $j$-th rows of $A$. We emphasize that the two bounded coordinates are not collinear and hence, $\text{rank}(R)=2$. Then $R$ determines a subspace $X:=\{A\Vec{x}\mid \Vec{x} \in \mathbb{Q}^r\text{ and } R\Vec{x}=\Vec{0}\}$ of $\CycleSpace(S)$. Consider any cycle in $V_1$ with effect $\Vec{u}$, and we have $\Vec{u}\in \CycleSpace(V_1)$, which implies that there exists a unique vector $\Vec{x}\in\mathbb{Q}^r$ such that $\Vec{u}=A\Vec{x}$. By construction, we have $\Vec{u}(i)=\Vec{u}(j)=0$, or equivalently, $R\Vec{x}=\Vec{0}$. Therefore, $\Vec{u}\in X$, which implies that $\CycleSpace(V_1)\subseteq X$. Now, since $A$ have full column rank, we have $\dimcyc(V_1)\le \dim(X)=\dim(\mathrm{null}(R))=r-2$.
    \end{claimproof}
    
    Similarly, we have $\dimcyc(V_2)\le \dimcyc(S)-2$, which completes the proof.
\end{proof}

\section{Proofs Omitted from Section \ref{sec:low-dim-vas}}

\subsection{Proof of Lemma \ref{lem:len-4-vas-le-3-exp}}

\lenFourVASLeThreeEXP*

\begin{proof}
    We take $\rho$ to be the shortest run witnessing $s\xrightarrow{*}t$. Let $M:=\Size(V,\Vec{s},\Vec{t})$. By \autoref{lem:len-d-vas-le-cases}, there exists a VASS $(V',s',t')$ with $\Size(V',s',t')=M^{O(1)}$ such that $s'\xrightarrow{*}t'$ and $|\rho|\le \min\Len(V',s',t')$. Moreover, $V'$ is either a $3$-VASS or a $5$-VASS with component-dimension at most $2$. 
    
    In the case that $V'$ is a $3$-VASS, we can apply \cite[Lemma 2]{DBLP:conf/icalp/CzerwinskiJ0O25} to obtain the bound $|\rho| \le \min\Len(V',s',t')\le M^{2^{2^{M^{O(1)}}}}$. 
    
    The case where $V'$ is a $5$-VASS with component-dimension at most $2$ requires more careful investigation of previous results in \cite{DBLP:conf/icalp/FuYZ24}. Let $\rho'$ be a run with minimum length in $V'$ from $s'$ to $t'$. For the purpose of bounding the length of $|\rho|$, we may remove the states in $V'$ not visited on $\rho'$ and assume that the form of $V'$ is a sequence of strongly connected geometrically 2-dimensional sub-VASSes. According to \cite[Theorem 3.4, Lemma 4.2]{DBLP:conf/icalp/FuYZ24}, the reachability relation of each geometrically 2-dimensional sub-VASSes $V_s$ is characterized by a system of integer linear programming $A_s \Vec{\lambda}_s \ge \Vec{b}_s$ where the numbers of variables and inequalities are bounded by $\Size(V_s)^{O(1)}$ and $\norm{A_s}, \norm{\Vec{b}_s} \le \Size(V_s)^{O(1)}$.
    Moreover, any solution $\Vec{\lambda}_s$ induces a run of length $\norm{\Vec{\lambda}_s} \cdot \Size(V_s)^{O(1)}$.
    The number of sub-VASSes in $V'$ is at most $\Size(V')$, and we take the conjunction $A \Vec{\lambda} \ge \Vec{b}$ of all their characterization systems of integer linear programming. Using the bound due to Pottier \cite{pottier1991minimal}, there is a solution $\Vec{\lambda}$ with $\norm{\Vec{\lambda}} \le \Size(V')^{\Size(V')^{O(1)}}$. Therefore, the length of the shortest runs from $s', t'$ is bounded by $\Size(V')^{\Size(V')^{O(1)}} \cdot \Size(V')^{O(1)} \le 2^{M^{O(1)}}$.
\end{proof}

\subsection{Proof of Lemma \ref{lem:2d-space-projective-iff}}

\twoDSpaceProjectiveIff*

\begin{proof}
    ($\implies$) Suppose $X$ is projective w.r.t.\ $\Vec{u}, \Vec{v}$. Then there exist $i, j \in [d]$ such that $\Cone(\{\Vec{r}_1, \ldots, \Vec{r}_d\}) = \Cone(\{\Vec{r}_i, \Vec{r}_j\})$. Applying \autoref{lem:exists-normal-vector-with-wide-ortho} to the set $X_{ij} := \{\Vec{r}_i, \Vec{r}_j\}$, we obtain a vector $\Vec{n}$ which has non-negative inner product with $\Vec{r}_i$ and $\Vec{r}_j$. Let $X_0 := \{\Vec{r} \in X_{ij} \mid \Inner{\Vec{n}, \Vec{r}} = 0\}$, we also have $\Cone(X_0) = \Span(X_0)$. We argue that $X_0 = \emptyset$. Clearly, $X_0$ cannot be a singleton set. Notice that $\Vec{r}_i$ and $\Vec{r}_j$ must be linearly independent as $X$ is 2-dimensional. If $X_0 = X_{ij}$ then $-\Vec{r}_i \in \Span(X_0) = \Cone(X_0)$ so $-\Vec{r}_i = \alpha \Vec{r}_i + \beta \Vec{r}_j$ for some $\alpha, \beta \ge 0$. But then $(\alpha + 1)\Vec{r}_i + \beta \Vec{r}_j = \Vec{0}$ which contradicts to the linear independence of $\Vec{r}_i$ and $\Vec{r}_j$. Therefore $X_0 = \emptyset$, which means that $\Inner{\Vec{n}, \Vec{r}_i} > 0$ and $\Inner{\Vec{n}, \Vec{r}_j} > 0$. As every non-zero $\Vec{r}_\ell$ is a non-negative linear combination of $\Vec{r}_i$ and $\Vec{r}_j$, we have $\Inner{\Vec{n}, \Vec{r}_\ell} > 0$ as well.
    
    ($\impliedby$) Suppose $\Vec{n} \in \mathbb{Q}^2$ is a vector such that $\Inner{\Vec{n}, \Vec{r}_\ell} > 0$ for all $\ell \in [d]$ where $\Vec{r}_\ell \ne \Vec{0}$. Let $R \subseteq \{\Vec{r}_1, \ldots, \Vec{r}_d\}$ be minimal such that $\Cone(\{\Vec{r}_1, \ldots, \Vec{r}_d\}) = \Cone(R)$, we need to show that $|R| \le 2$. Suppose otherwise $R$ contains 3 (non-zero) vectors $\Vec{r}_i, \Vec{r}_j, \Vec{r}_k$, they must be linearly dependent. We may write $\Vec{r}_i = \lambda \Vec{r}_j + \mu \Vec{r}_k$ for some $\lambda, \mu \in \mathbb{Q}$. By renaming these vectors we may further assume $\lambda$ and $\mu$ have the same sign. If $\lambda, \mu \ge 0$ then $\Vec{r}_i \in \Cone(\{\Vec{r}_j, \Vec{r}_k\})$ so $\Cone(R) = \Cone(R \setminus \{\Vec{r}_i\})$, which contradicts the minimality of $R$. If $\lambda, \mu \le 0$ then $\Inner{\Vec{n}, \Vec{r}_i} = \lambda \Inner{\Vec{n}, \Vec{r}_i} + \mu \Inner{\Vec{n}, \Vec{r}_j} < 0$, which contradicts the property of $\Vec{n}$. Thus, $R$ cannot contain more than 2 vectors.
\end{proof}

\subsection{Proof of Lemma \ref{lem:lem:2d-space-projective-independent-base}}

\twoDSpaceProjectiveIndependentBase*

\begin{proof}
    By scaling we may assume $\Vec{u}, \Vec{v}, \Vec{x}, \Vec{y}$ are integer vectors. Let $R_{uv} := \begin{pmatrix} \Vec{u} & \Vec{v} \end{pmatrix}$ be the $d \times 2$ matrix whose colums are $\Vec{u}$ and $\Vec{v}$. Similarly let $R_{xy} := \begin{pmatrix} \Vec{x} & \Vec{y} \end{pmatrix}$. Then there is an invertible matrix $M \in \mathbb{Q}^{2 \times 2}$ such that $R_{uv} = R_{xy} \cdot M$. Let $\Vec{b} \in \mathbb{Q}^d$ be such that  
    \begin{align}
        \Vec{b}(\ell) = \begin{cases}
            1 & (\Vec{u}(\ell), \Vec{v}(\ell)) \ne \Vec{0},\\
            0 & (\Vec{u}(\ell), \Vec{v}(\ell)) = \Vec{0}.
        \end{cases} \quad \forall \ell \in [d]
    \end{align}
    Suppose $X$ is projective w.r.t.\ $\Vec{u}, \Vec{v}$. By \autoref{lem:2d-space-projective-iff} there exists a vector $\Vec{n} \in \mathbb{Z}^2$ such that $R_{uv} \cdot \Vec{n} \ge \Vec{b}$. Thus $R_{xy} \cdot M\Vec{n} = (R_{xy} \cdot M) \cdot \Vec{n} = R_{uv} \cdot \Vec{n} \ge \Vec{b}$. Observe that as $M$ is invertible, $\Vec{b}(\ell) = 0$ if and only if $\Vec{x}(\ell) = \Vec{y}(\ell) = 0$. Therefore by \autoref{lem:2d-space-projective-iff}, $X$ is projective w.r.t.\ $\Vec{x},\Vec{y}$ as well.
\end{proof}

\subsection{Proof of Lemma \ref{lem:srp-of-projective}}

\srpOfProjective*

\begin{proof}
    Suppose $X = \Span\{\Vec{u}, \Vec{v}\}$. Denote $\Vec{r}_\ell := (\Vec{u}(\ell), \Vec{v}(\ell))$ for $\ell \in [d]$. By definition there exist $i, j \in [d]$ such that $\Vec{r}_\ell \in \Cone(\{\Vec{r}_i, \Vec{r}_j\})$ for all $\ell \in [d]$. We claim that $I := \{i, j\}$ is a sign-reflecting projection for $X$. Let $\Vec{x} \in X$ and assume $\Vec{x}(i), \Vec{x}(j) \ge 0$. We need to verify that $\Vec{x}(\ell) \ge 0$ for all $\ell \in [d]$. Suppose $\Vec{x} = \alpha \Vec{u} + \beta \Vec{v}$ for some $\alpha, \beta \in \mathbb{Q}$. Consider any $\ell \in [d]$, we have $\Vec{r}(\ell) = \lambda \Vec{r}_i + \mu \Vec{r}_j$ for some $\lambda, \mu \ge 0$. Thus 
    \begin{align}
    \begin{split}
        \Vec{x}(\ell) &= \alpha \Vec{u}(\ell) + \beta \Vec{v}(\ell) \\
        &= \alpha (\lambda \Vec{u}(i) + \mu \Vec{u}(j)) + \beta (\lambda \Vec{v}(i) + \mu \Vec{v}(j))\\
        &= \lambda (\alpha\Vec{u}(i) + \beta \Vec{v}(i)) + \mu (\alpha\Vec{u}(j) + \beta \Vec{v}(j))\\
        &= \lambda \Vec{x}(i) + \mu \Vec{x}(j)\\ 
        & \ge 0. \qedhere
    \end{split}
    \end{align}
\end{proof}

\subsection{Proof of Proposition \ref{prop:canonical-axes-of-projective}}

\canonicalAxesOfProjective*

\begin{proof}
    Let $A := \begin{pmatrix}
        \Vec{u}(i) & \Vec{v}(i) \\
        \Vec{u}(j) & \Vec{v}(j)
    \end{pmatrix}$. We claim that $A$ is invertible. Otherwise there exists $(\alpha, \beta) \ne \Vec{0}$ such that $A \cdot (\alpha, \beta)^T = \Vec{0}$, which means that $\Vec{z} := \alpha \Vec{u} + \beta \Vec{v}$ satisfies $\Vec{z}|_I = \Vec{0}$. As $I$ is a sign-reflecting projection for $\CycleSpace(V)$, from $\Vec{z}|_I \ge \Vec{0}$ and $-\Vec{z}|_I \ge \Vec{0}$ we deduce $\Vec{z} \ge \Vec{0}$ and $-\Vec{z} \ge 0$. Thus $\Vec{z} = 0$, but this is impossible as $\Vec{u}, \Vec{v}$ are linearly independent. Now we can express the inverse of $A$ as 
    \begin{align}
        A^{-1} = \frac{1}{\Vec{u}(i) \Vec{v}(j) - \Vec{u}(j)\Vec{v}(i)} \begin{pmatrix}
            \Vec{v}(j) & -\Vec{v}(i) \\
            -\Vec{u}(j) & \Vec{u}(i)
        \end{pmatrix}.
    \end{align}
    Let $c := \Vec{u}(i) \Vec{v}(j) - \Vec{u}(j)\Vec{v}(i)$, we may assume w.l.o.g.\ that $c > 0$. Notice that $c \le 2\Size(V)^2$. We put
    \begin{align}
        \begin{pmatrix}\overline{\Vec{u}} & \overline{\Vec{v}}\end{pmatrix} := 
        \begin{pmatrix}{\Vec{u}} & {\Vec{v}}\end{pmatrix} \cdot (cA^{-1}).
    \end{align}
    Then we have $\overline{\Vec{u}}|_I = (c, 0)$ and $\overline{\Vec{v}}|_I = (0, c)$ and $\norm{\overline{\Vec{u}}}, \norm{\overline{\Vec{v}}} \le 2\Size(V)^2$. Using the fact that $I$ is sign-reflecting, we finally deduce that $\overline{\Vec{u}}, \overline{\Vec{v}} \ge \Vec{0}$.
\end{proof}

\subsection{Proof of Proposition \ref{prop:shift-sub-shift-src-norm-le}}

\shiftSubShiftSrcNormLE*

We will need to following bound on shifts.

\begin{proposition}
    \label{prop:norm-delta-le}
    Let $\Vec{x} \in \mathbb{Z}^d$. Then $\norm{\delta(\Vec{x})} \le 6 \Size(V)^2 \cdot \norm{\Vec{x}}$.
\end{proposition}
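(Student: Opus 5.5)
The bound follows directly from the definition of the shift. Writing out the components of \autoref{eqn:def-of-shift}, for each $\ell \in [d]$ we have
\begin{equation*}
    \delta(\Vec{x})(\ell) = c\cdot \Vec{x}(\ell) - \overline{\Vec{u}}(\ell)\cdot \Vec{x}(i) - \overline{\Vec{v}}(\ell)\cdot \Vec{x}(j).
\end{equation*}
The plan is to bound each of the three terms separately in absolute value by a multiple of $\norm{\Vec{x}}$ and then apply the triangle inequality.

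For the first term, recall from the proof of \autoref{prop:canonical-axes-of-projective} that $c = \Vec{u}(i)\Vec{v}(j) - \Vec{u}(j)\Vec{v}(i)$. Here $\Vec{u},\Vec{v}$ are effects of simple cycles, so $\norm{\Vec{u}},\norm{\Vec{v}}\le\Size(V)$. This gives $0 < c \le 2\Size(V)^2$, and hence $|c\cdot\Vec{x}(\ell)| \le 2\Size(V)^2\norm{\Vec{x}}$.

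For the remaining two terms, \autoref{prop:canonical-axes-of-projective} gives $\norm{\overline{\Vec{u}}},\norm{\overline{\Vec{v}}} \le 2\Size(V)^2$. Therefore $|\overline{\Vec{u}}(\ell)\Vec{x}(i)| \le 2\Size(V)^2\norm{\Vec{x}}$, and the term with $\overline{\Vec{v}}$ obeys the same bound.

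Summing the three contributions yields $|\delta(\Vec{x})(\ell)| \le 6\Size(V)^2\norm{\Vec{x}}$ for every $\ell$, which is the claimed bound on $\norm{\delta(\Vec{x})}$. There is no real obstacle here. The only point to check is that the bound $c\le 2\Size(V)^2$, which is only stated inside the proof of \autoref{prop:canonical-axes-of-projective}, rests on $\Vec{u},\Vec{v}$ being simple-cycle effects with norm at most $\Size(V)$; this is exactly the assumption fixed before that proposition.
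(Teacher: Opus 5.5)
Your proof is correct and is the same as the paper's: expand $\delta(\Vec{x}) = c\Vec{x} - \Vec{x}(i)\overline{\Vec{u}} - \Vec{x}(j)\overline{\Vec{v}}$, bound each of $c$, $\norm{\overline{\Vec{u}}}$, $\norm{\overline{\Vec{v}}}$ by $2\Size(V)^2$, and apply the triangle inequality. You do not need to look inside the proof of \autoref{prop:canonical-axes-of-projective} for the bound on $c$, since its statement already gives $c = \overline{\Vec{u}}(i) \le \norm{\overline{\Vec{u}}} \le 2\Size(V)^2$.
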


\begin{proof}
    Notice $\norm{\delta(\Vec{x})} \le c\norm{\Vec{x}} + (\norm{\overline{\Vec{v}}} + \norm{\overline{\Vec{v}}}) \cdot \norm{\Vec{x}}$. From \autoref{prop:canonical-axes-of-projective} we have $c, \norm{\overline{\Vec{u}}}, \norm{\overline{\Vec{v}}} \le 2\Size(V)^2$. Hence $\norm{\delta(\Vec{x})} \le 6 \Size(V)^2 \cdot \norm{\Vec{x}}$.
\end{proof}

\begin{proof}[Proof of \autoref{prop:shift-sub-shift-src-norm-le}]
    Let $\pi$ be a path such that $e \xrightarrow{\pi} e'$. We have $\delta(e') - \delta(e) = \delta(\Delta(\pi))$. If we remove cycles exhaustively from $\pi$, we will get a simple path eventually. Therefore, we can write $\Delta(\pi) = \Vec{c} + \Vec{z}$ where $\Vec{c} \in \CycleSpace(V)$ is the sum of effects of cycles, and $\Vec{z}$ is the effect of a simple path in $V$. Now $\delta(\Delta(\pi)) = \delta(\Vec{c}) + \delta(\Vec{z}) = \delta(\Vec{z})$. Notice that $\norm{\Vec{z}} \le \Size(V)$ as a simple path has at most $|Q|$ transitions. So by \autoref{prop:norm-delta-le} we have $\norm{\delta(\Vec{z})} \le 6 \Size(V)^2 \cdot \Size(V) \le 6 \Size(V)^3$.
\end{proof}

\subsection{Proof of Proposition~\ref{prop:shift-from-state-of-fixed-run}}

\shiftFromStateOfFixedRun*

\begin{proof}
    For each $q \in Q_\pi$, let $q(\Vec{x}_q)$ be the first configuration on $\pi$ whose state is $q$. We put $f_\pi(q) := \delta(\Vec{x}_q)$. Then for any $q(\Vec{x})$ on $\pi$, notice that $\Vec{c} := \Vec{x} - \Vec{x}_q \in \CycleSpace(V)$. Thus $\delta(\Vec{x}) = \delta(\Vec{x}_q) + \delta(\Vec{c}) = \delta(\Vec{x}_q) = f_\pi(q)$.
\end{proof}

\subsection{Proof of Lemma~\ref{lem:exists-2-vass-testing-inequ}}

\existsTwoVASSTestingInequ*

\begin{proof}
    As $m, n \ge 0$, the set $\{(x, y) \in \mathbb{N}^2 \mid mx + ny + b\}$ can be described as the region above finitely many minimal points. To be explicit, we define $M \subseteq \mathbb{N}^2$ as the following set:
    \begin{align}\label{eq:nonneg-testing}
        M := \begin{cases}
            \{(0, 0)\} & b \ge 0 \\
            \left\{ \left( \lceil - b/m \rceil, 0 \right) \right\} & b < 0, n = 0, m > 0\\ 
            \left\{ \left( 0, \lceil - b/n \rceil \right) \right\} & b < 0, n > 0, m = 0\\
            \left\{ \left( i, \lceil - (b + mi)/n \rceil \right) \mid 0 \le i \le \lceil - b/m \rceil \right\} & b < 0, n > 0, m > 0\\
            \emptyset & b < 0, n = 0, m = 0
        \end{cases}
    \end{align}
    It is routine to verify that for all $x, y \in \mathbb{N}$, $mx + ny + b \ge 0$ if and only if $(x, y) \ge \Vec{m}$ for some $\Vec{m} \in M$. Thus we will construct $V_{m,n;b}$ to contain states $p, q$ and $s_{\Vec{m}}$ for each $\Vec{m} \in M$, and transitions $(p, -\Vec{m}, s_{\Vec{m}})$ and $(s_{\Vec{m}}, +\Vec{m}, q)$ for each $\Vec{m} \in M$. It is straightforward to see that $p(x, y) \xrightarrow{*} q(x', y')$ if and only if $mx + ny + b \ge 0$ and $(x', y') = (x, y)$, and the path contains exactly 2 transitions. Considering the size of $V_{m,n;b}$, notice that $\norm{M} \le |b|$ and $|M| \le |b| + 1$. Thus $\Size(V_{m,n;b}) = (2 + |M|) + 2  \cdot (2|M|) \cdot (\norm{M} + 1) \le O(|b|^2)$.
\end{proof}

\subsection{Proof of Lemma \ref{lem:geo-2d-to-2d-projective}}

\geoTwoDtoTwoDProjective*

\begin{proof}
    Let $(V, s, t)$ be a geometrically 2-dimensional VASS where $\CycleSpace(V)$ is projective. Suppose $s = p(\Vec{x})$ and $t = q(\Vec{y})$. Let $I = \{i, j\}$ be a sign-reflecting projection of $\CycleSpace(V)$ and $\overline{\Vec{u}}, \overline{\Vec{v}}$ be the canonical axes given by \autoref{prop:canonical-axes-of-projective} where $\overline{\Vec{u}}(i) = \overline{\Vec{v}}(j) = c$, based on which the shift function $\delta$ is defined. Let $Q$ be the states of $V$ and $T$ be the transitions of $V$. Define $\mathcal{F}$ to be the set of functions $f : Q \to [-6\Size(V)^3, 6\Size(V)^3]^d$ satisfying $f(p) = \Vec{0}$ and $f(q) = \delta(\Vec{y}) - \delta(\Vec{x})$. Intuitively, $f$ assigns each state $r \in Q$ the shift $\delta_r^f := f(r) + \delta(\Vec{x})$. For each function $f \in \mathcal{F}$ we will construct a 2-VASS $(V_f, s_f, t_f)$ and $V_f = (Q_f, T_f)$ is described as follows.
    \begin{itemize}
        \item For each state $r \in Q$ we add two states $\overline{r}$ and $\underline{r}$ into $Q_f$. Intuitively $\underline{r}(\Vec{m})$ will represent a configuration $r(\Vec{m}')$ of $V$ where $\Vec{m}' := (\overline{\Vec{u}} \cdot \Vec{m}(1) + \overline{\Vec{v}} \cdot \Vec{m}(2) + \delta_r^f) / c$ for which we are sure that $\Vec{m}' \ge \Vec{0}$, while $\overline{r}(\Vec{m})$ means we are not sure if $\Vec{m}' \ge \Vec{0}$.
        \item For all transition $u = (r, \Vec{a}, r') \in T$ such that $\delta(\Vec{a}) = f(r') - f(r)$, we add the transition $(\underline{r}, \Vec{a}|_I, \overline{r'})$ into $T_f$. We verify that after each transition the recovered configuration is always integral.
        \begin{claim}
            Suppose $\Vec{m} \in \mathbb{Z}^2$ renders true that $\Vec{m}' := (\overline{\Vec{u}} \cdot \Vec{m}(1) + \overline{\Vec{v}} \cdot \Vec{m}(2) + \delta_r^f) / c$ is an integer vector. If $\underline{r}(\Vec{m}) \xrightarrow{\Vec{a}|_I} \overline{r'}(\Vec{n})$ for a transition $(\underline{r}, \Vec{a}|_I, \overline{r'})$, then $\Vec{n}' := (\overline{\Vec{u}} \cdot \Vec{n}(1) + \overline{\Vec{v}} \cdot \Vec{n}(2) + \delta_{r'}^f) / c$ is also an integer vector.
        \end{claim}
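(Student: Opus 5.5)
The plan is a direct computation from the definition of the shift $\delta$ together with two facts: every transition kept in $T_f$ satisfies $\delta(\Vec{a}) = f(r') - f(r)$, and the $I$-coordinates of $\overline{\Vec{u}}, \overline{\Vec{v}}$ are $(c,0)$ and $(0,c)$. I will show that the candidate vector for $r'$ differs from the candidate vector for $r$ by exactly $\Vec{a}$. Integrality then follows at once, since $\Vec{a} \in \mathbb{Z}^d$ and $\Vec{m}'$ is assumed integral.

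First I record the identity behind \autoref{eqn:cx-eq-xij-plus-delta}. For every $\Vec{z} \in \mathbb{Q}^d$, the definition of $\delta$ gives
\begin{equation*}
c\,\Vec{z} = \overline{\Vec{u}}\cdot \Vec{z}(i) + \overline{\Vec{v}} \cdot \Vec{z}(j) + \delta(\Vec{z}).
\end{equation*}
Apply this with $\Vec{z} = \Vec{a}$, and note that $\Vec{n} = \Vec{m} + \Vec{a}|_I$, so $\Vec{n}(1) = \Vec{m}(1) + \Vec{a}(i)$ and $\Vec{n}(2) = \Vec{m}(2) + \Vec{a}(j)$. Also, $\delta_{r'}^f - \delta_r^f = f(r') - f(r) = \delta(\Vec{a})$ by the selection rule for transitions in $T_f$. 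Combining these,
\begin{equation*}
c\,\Vec{n}' - c\,\Vec{m}' = \overline{\Vec{u}}\cdot\Vec{a}(i) + \overline{\Vec{v}}\cdot \Vec{a}(j) + \delta(\Vec{a}) = c\,\Vec{a},
\end{equation*}
and hence $\Vec{n}' = \Vec{m}' + \Vec{a}$. Since $\Vec{m}'$ is integral by hypothesis and $\Vec{a}$ is integral, $\Vec{n}'$ is integral.

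There is no real obstacle. The only care needed is bookkeeping: the two coordinates of the 2-VASS configuration $\Vec{m}$ correspond to the indices $i$ and $j$ of $I$, in that order. A useful by-product, to be recorded for later use in the proof of \autoref{lem:geo-2d-to-2d-projective}, is that the recovered configuration $\Vec{n}'$ is exactly the $\mathbb{Z}$-successor $\Vec{m}' + \Vec{a}$ of $\Vec{m}'$ in $V$. This is what will let runs of $V_f$ simulate runs of $V$ faithfully once the nonnegativity gadgets of \autoref{lem:exists-2-vass-testing-inequ} are inserted between $\overline{r'}$ and $\underline{r'}$.
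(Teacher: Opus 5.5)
Your proposal is correct and matches the paper's proof. The paper likewise expands $\Vec{n}'$ using $\Vec{n} = \Vec{m} + \Vec{a}|_I$ and $\delta_{r'}^f - \delta_r^f = f(r') - f(r) = \delta(\Vec{a})$. It then applies the defining identity of $\delta$ to conclude $\Vec{n}' = \Vec{m}' + \Vec{a}$, which is integral.
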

        \begin{claimproof}
            Notice that 
            \begin{align}
            \begin{split}
                \Vec{n}' &= \bigl( \overline{\Vec{u}} \cdot (\Vec{m}(1) + \Vec{a}(i)) + \overline{\Vec{v}} \cdot (\Vec{m}(2) + \Vec{a}(j)) + \delta_{r}^f + (\delta_{r'}^f - \delta_r^f) \bigr)/c \\
                &= \Vec{m}' + (\overline{\Vec{u}} \cdot \Vec{a}(i) + \overline{\Vec{v}} \cdot \Vec{a}(j) + \delta(\Vec{a}))/c \\ 
                &= \Vec{m}' + \Vec{a}.
            \end{split}
            \end{align}
            Hence $\Vec{n}'$ is integral as long as $\Vec{m}'$ is.
        \end{claimproof}
        \item For each pair of states $\overline{r}$ and $\underline{r}$, we will add a gadget to test non-negativity of the recovered configuration. For any $\Vec{m} \in \mathbb{N}^2$ we need to test if $\Vec{m}' := (\overline{\Vec{u}} \cdot \Vec{m}(1) + \overline{\Vec{v}} \cdot \Vec{m}(2) + \delta_r^f) / c \ge \Vec{0}$. Recall that the coordinates $\overline{\Vec{u}}$ and $\overline{\Vec{v}}$ are nonnegative. For each coordinate $\ell \in [d]$, let $V_\ell := V_{\overline{\Vec{u}}(\ell), \overline{\Vec{v}}(\ell), \delta_r^f(\ell)}$ be the 2-VASS obtained from \autoref{lem:exists-2-vass-testing-inequ}. We concatenate all these gadgets $V_\ell$ and put them between the states $\overline{r}$ and $\underline{r}$. This finishes the construction of $V_f$.
    \end{itemize}
    The source and target shall naturally be $s_f := \underline{p}(\Vec{x}|_I)$ and $t_f := \underline{q}(\Vec{y}|_I)$. Now we take the demanded family of 2-VASSes to be $\mathcal{V} := \{(V_f, s_f, t_f) \mid f \in \mathcal{F}\}$.

    For any run $\pi$ from $s$ to $t$, let $f_\pi : Q \to \mathbb{Z}^d$ be the function obtained from \autoref{prop:shift-from-state-of-fixed-run} (notice that we may extend $f_\pi$ to all states, by imposing $f_\pi(r) := 0$ if $r$ is not visited on $\pi$). Let $f := r \mapsto f_\pi(\Vec{r}) - \delta(\Vec{x})$. Then by \autoref{prop:shift-sub-shift-src-norm-le} we have $f \in \mathcal{F}$. It is routine to verify that $\pi$ can be mapped to a run in $V_f$ from $s_f$ to $t_f$ whose length is $(2d + 1)|\pi|$, since after each ``real'' transition, there is a gadget testing non-negativity using $2d$ transitions.

    On the other hand, consider any run $\pi'$ in some $V_f$ from $s_f$ to $t_f$. The structure of $V_f$ guarantees that $\pi'$ is a concatenetion of segments $\pi' = \pi_1\pi_2\ldots \pi_k$, each $\pi_i = u_i' \rho_i$ consists of a transition $u_i'$ projected from some transition $u_i \in T$ and a path $\rho_i$ of length $2d$ in some non-negativity testing gadget. The choice of $s_f$ and $t_f$ ensures that $\pi := u_1u_2 \ldots u_k$ is a $\mathbb{Z}$-run from $s$ to $t$ in $V$. As all non-negativity tests are passed, every configuration visited on $\pi$ is non-negative. Thus $\pi$ is a run from $s$ to $t$. We conclude that $(2d+1) \cdot \Len(V, s, t) = \bigcup_{(V', s', t') \in \mathcal{V}} \Len(V', s', t')$.

    Considering the size of each $V_f$, notice that if we ignore all the non-negativity testing gadgets, then the size of $(V_f, s_f, t_f)$ is no larger than $2\Size(V, s, t)$. There are $|Q|$ gadgets, each consisting of $d$ gadgets from \autoref{lem:exists-2-vass-testing-inequ} whose sizes are bounded by $O(\norm{\delta_f^r})^2 \le O(\norm{\delta(\Vec{x})} + 6\Size(V)^3)^2 \le O(6\Size(V)^2 \cdot \norm{\Vec{x}} + 6\Size(V)^3)^2 \le O(\Size(V, s, t))^6$. Thus $\Size(V_f, s_f, t_f) \le H_1(\Size(V, s, t)$ for some polynomial $H_1$.
\end{proof}

\subsection{Proof of Lemma~\ref{lem:normal-vec-of-non-projective}}

\normalVecOfNonProjective*

\begin{proof}
    Denote $\Vec{r}_\ell := (\Vec{u}(\ell), \Vec{v}(\ell))$. Let $R \subseteq \{\Vec{r}_1, \ldots, \Vec{r}_d\}$ be minimal such that $\Cone(R) = \Cone\{\Vec{r}_1, \ldots, \Vec{r}_d\}$. As $X$ is not projective, $R$ contains at least 3 vectors $\Vec{r}_i, \Vec{r}_j, \Vec{r}_k$. These 3 vectors in $\mathbb{Z}^2$ must be linearly dependent. But two of them must be linearly independent, otherwise one vector is a positive multiple of another and can be removed from $R$. We assume $\Vec{r}_i$ and $\Vec{r}_j$ are linearly independent. Then there exist $\alpha, \beta \in \mathbb{Q}$ such that $\Vec{r}_k = \alpha \Vec{r}_i + \beta \Vec{r}_j$. We may assume $\alpha, \beta$ have the same sign by renaming  $\Vec{r}_i, \Vec{r}_j, \Vec{r}_k$ properly. We cannot have $\alpha, \beta \ge 0$ as otherwise $\Vec{r}_k$ can be removed from $R$, which contradicts the minimality of $R$. So $\alpha, \beta \le 0$, i.e.\ $\Vec{r}_k + (-\alpha) \Vec{r}_i + (-\beta) \Vec{r}_j = \Vec{0}$. We may pick $\Vec{n} := c\Vec{e}_k - c\alpha\Vec{e}_i - c\beta\Vec{e}_j$, where $\Vec{e}_\ell$ is the unit vector in the $\ell$-th coordinate, and $c \in \mathbb{N}$ is a number to make sure $\Vec{n} \in \mathbb{N}^d$. By Cramer's rule, it is enough to take $c = |\Vec{u}(i)\Vec{v}(j) - \Vec{u}(j)\Vec{v}(i)|$. Thus $\norm{\Vec{n}} \le 2M^2$, and clearly $|\Supp(\Vec{n})| \le 3$. Now we have $\Inner{\Vec{n}, \Vec{u}} = \Inner{\Vec{n}, \Vec{v}} = 0$. So for any $\Vec{x} \in X = \Span(\Vec{u}, \Vec{v})$ it holds $\Inner{\Vec{n}, \Vec{x}} = 0$. Notice that as none of $\Vec{r}_i, \Vec{r}_j, \Vec{r}_k$ can be zero vectors, we have $i, j, k \in \Supp(\{\Vec{u}, \Vec{v}\}) \subseteq \Supp(X)$. Thus $\Supp(\Vec{n}) \subseteq \Supp(X)$.
\end{proof}

\subsection{Proof of Proposition~\ref{prop:bounded-counters-of-non-projective}}

\boundedCountersOfNonProjective*

\begin{proof}
    Similar to the proof of \autoref{prop:shift-sub-shift-src-norm-le}, we may write $\Vec{y} - \Vec{x}$ as $\Vec{c} + \Vec{z}$ where $\Vec{c} \in \CycleSpace(V)$ and $\Vec{z}$ is the effect of a simple path, so $\norm{\Vec{z}}\le \Size(V)$. Now $\Inner{\Vec{n}, \Vec{y}} = \Inner{\Vec{n}, \Vec{c}} + \Inner{\Vec{n}, \Vec{z} + \Vec{x}} = \Inner{\Vec{n}, \Vec{z} + \Vec{x}} \le |\Supp(\Vec{n})| \cdot \norm{\Vec{n}} \cdot (\norm{\Vec{z}} + \norm{\Vec{x}}) \le 6 \Size(V, p(\Vec{x}), q(\Vec{y}))^3$. As $\Vec{n}, \Vec{y} \in \mathbb{N}^d$ are non-negative, for all $k \in K = \Supp(\Vec{n})$ we must have $\Vec{y}(k) \le 6 \Size(V, p(\Vec{x}), q(\Vec{y}))^3$. Thus $\norm{ \Vec{y}|_K } \le 6 \Size(V, p(\Vec{x}), q(\Vec{y}))^3$.
\end{proof}

\subsection{Proof of \autoref{lem:projective-geo-2d-from-non-projective-geo-2d}}

\projectiveGeoTwoDFromNonProjGeoTwod*

\begin{proof}
    Let $s =: p(\Vec{x})$ and $t =: q(\Vec{y})$. We assume that $t$ is indeed reachable from $s$, as otherwise we just let $(V', s', t')$ be a trivial unreachable VASS. Suppose $V =: (Q, T)$. Denote $M = [0, 6\Size(V, s, t)^3]^{|K|}$. We first construct a VASS $(V'', s', t')$ where $V'' = (Q'', T'')$ as follows. 
    \begin{itemize}
        \item $Q'' := Q \times M$. For a state $(q, \Vec{m}) \in Q''$ we write it as $q_{\Vec{m}}$.
        \item $T'' := \{(r_{\Vec{m}}, \Vec{a}, r'_{\Vec{m}'}) \mid (r, \Vec{a}, \Vec{r}') \in T, \Vec{a}|_K = \Vec{m}' - \Vec{m}\}$.
        \item $s' := p_{\Vec{m}}(\Vec{x})$ and $t' := q_{\Vec{n}}(\Vec{y})$, where $\Vec{m} := \Vec{x}|_K$ and $\Vec{n} := \Vec{y}|_K$. By \autoref{prop:bounded-counters-of-non-projective}, as $t$ is reachable from $s$, we have $\norm{\Vec{n}} \le 6\Size(V)^3$ thus $\Vec{n} \in M$.
    \end{itemize}
    Any run in $V''$ can be turned into a run in $V$ by simply ignoring the markings at the states. On the other hand, by \autoref{prop:bounded-counters-of-non-projective}, a run starting from $s$ in $V$ can also be mapped to a run in $V''$ with source $s'$. Hence $\Len(V'', s', t') = \Len(V, s, t)$. Notice that $|M| \le (6\Size(V, s, t)^3 + 1)^3$. Thus $\Size(V'', s', t')$ is bounded polynomially in $\Size(V, s, t)$. We are not guaranteed that $\CycleSpace(V'')$ is projective now. But it is routine to verify that 
    \begin{align}
        \CycleSpace(V'') \subseteq \{\Vec{c} \in \CycleSpace(V) \mid \Vec{c}|_K = \Vec{0}\} \subsetneq \CycleSpace(V),
    \end{align}
    where the second inclusion is proper because $K \subseteq \Supp(\CycleSpace(V))$. Therefore, $\dimcyc(V'') < \dimcyc(V) = 2$. We let $V'$ be obtained from $V''$ by adding a new state and self-loops on this state, whose effects are chosen as follows to make $\CycleSpace(V')$ projective:
    \begin{itemize}
        \item If $\dimcyc(V'') = 0$ then we may add two self-loops of effects $\Vec{e}_i$ and $\Vec{e}_j$ for any $i \ne j \in [d]$.
        \item If $\dimcyc(V'') = 1$, let $\CycleSpace(V'')$ be spanned by effect $\Vec{u}$ of a simple cycle. 
        \begin{itemize}
            \item If $\Vec{u}$ contains two different entries (so $\Vec{u}$ is not in parallel with $\Vec{1}$), we can add a self-loop with effect $\Vec{v} = \Vec{1}$. To see that $\CycleSpace(V')$ is projective, one may apply \autoref{lem:2d-space-projective-iff} to $\CycleSpace(V') = \Span\{\Vec{u}, \Vec{v}\}$ with $\Vec{n} = (0, 1)$.
            \item Otherwise, $\Vec{u} = \alpha \cdot \Vec{1}$ for some $\alpha \in \mathbb{Q}$. We add a self-loop with effect $\Vec{v} = \Vec{e}_i - \Vec{e}_j$ for any $i \ne j \in [d]$. Now $\CycleSpace(V')$ is projective by applying \autoref{lem:2d-space-projective-iff} with $\Vec{n} = (\alpha, 0)$. \qedhere
        \end{itemize}
    \end{itemize}
\end{proof}

\subsection{Proof of \autoref{cor:min-len-le-poly-of-geo-2d}}

\minLenLEPolyOfGeoTwoD*

\begin{proof}
    Assume $s \xrightarrow{*} t$. By \autoref{lem:geo-2d-to-2d-poly} there exists a family $\mathcal{V}$ of 2-VASSes such that $(2d+1)\cdot\Len(V, s, t) = \bigcup_{(V', s', t') \Len(V', s', t')}$ and $\Size(V', s', t') \le \Size(V, s, t)^{O(1)}$ for all $(V', s', t') \in \mathcal{V}$. By \autoref{thm:min-len-le-poly-of-2d}, for any $(V', s', t') \in \mathcal{V}$ with $s' \xrightarrow{*} t'$ we have $\min \Len(V, s, t) \le \min \Len(V', s', t') / (2d+1) \le (\Size(V, s, t)^{O(1)})^{O(1)} \le \Size(V, s, t)^{O(1)}$. 
\end{proof}

\subsection{Proof of Theorem \ref{thm:geo-2-vass}}

\complexityGeoTwoVASS*

\begin{proof}
    Consider a reachability instance $(V, s, t)$. Under binary encoding, the length bound in \autoref{cor:min-len-le-poly-of-geo-2d} becomes exponential in the input size. But polynomial space is enough to enumerate every run of exponential length. This confirms the \PSPACE{} membership. The \PSPACE{} hardness is inherited from that of 2-VASS \cite{DBLP:journals/jacm/BlondinEFGHLMT21}.

    Under unary encoding, we may decide reachability by searching for a run of polynomial length, which can be done in \NP{}. When the actual dimension $d$ is a fixed constant in the problem, the searching can be improved to \NL{} since in the searching we may only keep record of the number of steps and the current configuration whose size is bounded by $\log|Q| + d\cdot \log(\Size(V, s, t)^{O(1)}) \le d \cdot O(\log\Size(V, s, t))$. Notice, however, that when $d$ is not fixed, storing a single configuration requires polynomial space.

    The \NL{}-hardness follows from the trivial reduction from the graph reachability problem. For \NP{}-hardness, we reduce from the \textsc{Subset-Sum} problem, viewing the $d$-coordinates as a $d$-bit binary representation. Details are deferred to the appendix. We remark that a similar reduction can be found in \cite[Lemma 2]{DBLP:conf/apn/Leroux21}.

    \newcommand{\scnum}{\OpNameSC{num}}
    \newcommand{\scbin}{\OpNameSC{bin}}

    Consider a \textsc{Subset-Sum} instance $\langle S = \{a_1, \dots, a_n \}, s \rangle$, where $S \subseteq \mathbb{N}$ and the goal is to find a subset of $S$ whose sum is $s$. Let $d$ be the minimum integer such that $2^d > a_1 + a_2 + \dots + a_n$, so that the sum of any possible subset can be represented as a $d$-bit binary number. For a vector $\Vec{v} \in \mathbb{N}^d$ we define $\scnum(\Vec{v}) = \sum_{i=1}^d 2^{i - 1}\Vec{v}(i)$. For each number $a < 2^d$, we define $\scbin(a)$ to be the unique vector in $\mathbb{N}^d$ such that $\scnum(\scbin(a)) = a$. 

    We will construct a $d$-VASS $V = (Q, T)$ with no cycles so that the \textsc{Subset-Sum} instance has a solution if and only if there is a run in $V$ from $\Vec{0}$ to $\scbin(s)$. First we add to $V$ the states $p_1, q_1, \dots, p_n, q_n, p_{n+1}$. For each $j \in [n]$ we add the following transitions:
    \[
        p_j \xrightarrow{\Vec{0}} q_j, \quad p_j \xrightarrow{\scbin(a_j)} q_j.
    \]
    Intuitively, at state $p_j$ we can non-deterministically choose $a_j$ by adding its binary representation. Now before moving from $q_j$ to $p_{n+1}$ we need to make sure that the configuration holds a legal binary value. For this purpose we devise a VASS $G$ as an adjusting gadget. $G$ consists of states $r_1, \dots, r_d$ and the following transitions for each $i \in [d - 1]$:
    \[
        r_i \xrightarrow{\Vec{0}} r_{i+1}, \quad r_i \xrightarrow{-2 \Vec{e}_i + \Vec{e}_{i + 1}} r_{i+1}
    \]
    where $\Vec{e}_i$ is the unit vector in the $i$-th coordinate. One can easily see that $G$ does not change the ``binary'' number represented by the counters:
    \begin{claim}
        If $r_1(\Vec{m}) \xrightarrow{*} r_d (\Vec{n})$ in $G$ then $\scnum(\Vec{m}) = \scnum(\Vec{n})$.
    \end{claim}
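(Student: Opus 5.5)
The natural route is an invariant argument along the run. I will show that each transition of $G$ leaves $\scnum$ unchanged, and then conclude by induction on the length of the run $r_1(\Vec{m}) \xrightarrow{*} r_d(\Vec{n})$. Since $\scnum$ is a linear map $\mathbb{Z}^d \to \mathbb{Z}$, namely $\scnum(\Vec{v}) = \sum_{i=1}^d 2^{i-1}\Vec{v}(i)$, it suffices to check that every transition effect of $G$ lies in its kernel. Legality of the run plays no role here: the claim holds even for $\mathbb{Z}$-runs.

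There are two kinds of transitions in $G$, both going from $r_i$ to $r_{i+1}$ for some $i \in [d-1]$. A transition with effect $\Vec{0}$ trivially preserves $\scnum$. For the effect $-2\Vec{e}_i + \Vec{e}_{i+1}$, linearity gives
\[
\scnum(-2\Vec{e}_i + \Vec{e}_{i+1}) = -2\cdot 2^{i-1} + 2^{i} = 0,
\]
so this transition also preserves $\scnum$.

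For the induction, suppose the run is $r_1(\Vec{m}) = c_0 \to c_1 \to \cdots \to c_k = r_d(\Vec{n})$. By the previous step, $\scnum$ of the vector in $c_{\ell+1}$ equals $\scnum$ of the vector in $c_\ell$ plus $\scnum$ of the transition effect, and that effect contributes $0$. Hence $\scnum$ is constant along the run, and $\scnum(\Vec{m}) = \scnum(\Vec{n})$.

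There is no real obstacle here; the only point needing care is the bookkeeping of the exponent $2^{i-1}$ against $2^i$ at adjacent coordinates.
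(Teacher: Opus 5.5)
Your proof is correct and is the same invariant argument the paper leaves implicit with ``one can easily see'': each transition effect of $G$, namely $\Vec{0}$ or $-2\Vec{e}_i+\Vec{e}_{i+1}$, lies in the kernel of the linear map $\Vec{v}\mapsto\sum_{i=1}^d 2^{i-1}\Vec{v}(i)$. Hence this value is constant along every run of $G$.
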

    
    Moreover, $G$ is able to adjust the counters properly after each addition:
    
    \begin{claim}
        Let $a, b \in \mathbb{N}$ be such that $a + b < 2^d$ and let $\Vec{m} = \scbin(a) + \scbin(b)$. Then there exists $\Vec{n} \in \{0, 1\}^d$ such that $r_1 (\Vec{m}) \xrightarrow{*} r_d (\Vec{n})$ in $G$.
    \end{claim}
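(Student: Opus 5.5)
We take the greedy approach of binary addition with carries: process the coordinates from the least significant bit upward, and fire the carry transition $-2\Vec{e}_i + \Vec{e}_{i+1}$ at state $r_i$ exactly when the $i$-th counter is at least $2$. Since $\scbin(a), \scbin(b) \in \{0,1\}^d$, we have $\Vec{m} \in \{0,1,2\}^d$. We build the run $r_1(\Vec{m}_1) \to r_2(\Vec{m}_2) \to \cdots \to r_d(\Vec{m}_d)$ with $\Vec{m}_1 := \Vec{m}$. At step $i \in [d-1]$ we fire $-2\Vec{e}_i + \Vec{e}_{i+1}$ if $\Vec{m}_i(i) \ge 2$, and $\Vec{0}$ otherwise.

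We maintain the following invariant by induction on $i$: the coordinates $1, \ldots, i-1$ of $\Vec{m}_i$ lie in $\{0,1\}$, the coordinate $i$ satisfies $\Vec{m}_i(i) \le 3$, and the coordinates $>i$ agree with $\Vec{m}$. The base case $i = 1$ holds since $\Vec{m}(1) \le 2$. For the inductive step there are two cases.
\begin{itemize}
  \item If $\Vec{m}_i(i) \in \{2,3\}$, the carry transition is legal, because only coordinate $i$ decreases and it stays non-negative. It leaves $\Vec{m}_{i+1}(i) \in \{0,1\}$ and raises coordinate $i+1$ to $\Vec{m}(i+1) + 1 \le 3$.
  \item Otherwise coordinate $i$ is already in $\{0,1\}$, and we fire $\Vec{0}$.
\end{itemize}
Hence all intermediate configurations are non-negative and the run is legal in $G$.

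The one remaining point, and the only place where the hypothesis $a + b < 2^d$ is used, is the last coordinate. The invariant only gives $\Vec{n}(d) := \Vec{m}_d(d) \le 3$. To sharpen this, we apply the first claim, which states that $G$ preserves $\scnum$. This gives $\scnum(\Vec{n}) = \scnum(\Vec{m}) = a + b < 2^d$. Since all the other summands of $\scnum(\Vec{n})$ are non-negative, we get $2^{d-1}\Vec{n}(d) \le \scnum(\Vec{n}) < 2^d$, so $\Vec{n}(d) \le 1$. Together with the invariant, this yields $\Vec{n} \in \{0,1\}^d$, as required.
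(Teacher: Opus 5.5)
Your proof is correct and follows the paper's own argument. You use the same greedy carry strategy at each $r_i$ and the same invariant: lower counters lie in $\{0,1\}$ and the current counter is at most $3$. The final bound $\Vec{n}(d)\le 1$ also comes from $a+b<2^d$. The only difference is that you make this last step explicit through the first claim, namely that $G$ preserves the represented binary number, giving $2^{d-1}\cdot\Vec{n}(d)\le a+b<2^d$; the paper leaves that step implicit.
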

    
    \begin{claimproof}
        Note that $\Vec{m}(i) \in \{0, 1, 2\}$ for each $i \in [d]$. Once a counter $\Vec{m}(i) \geq 2$ we perform the transition $r_i \xrightarrow{-2 \Vec{e}_i + \Vec{e}_{i + 1}} r_{i+1}$. Under this strategy at each state $r_k$ we are guaranteed that (i) for $i < k$ the $i$-th counter is in $\{0, 1\}$, and (ii) the $k$-th counter is no greater than $3$. Finally, at state $r_d$ it must be valid that the $d$-th counter is at most one as $a + b < 2^d$.
    \end{claimproof}

    Now for each $j \in [n]$ we add a copy of $V$ to connect the state $q_j$ to $p_{j+1}$. One can easily check from the above claims that the \textsc{Subset-Sum} instance has a solution if and only if $p_{n+1} (\scbin(s))$ is reachable from $p_1(\Vec{0})$.
\end{proof}

\subsection{Proof of Propostition~\ref{prop:cone-generated-by-4-vec}}

\coneGeneratedByFourVec*

\begin{proof}
    Denote $X := \{\Vec{r}_1,\dots,\Vec{r}_d\}$. We may assume w.l.o.g.\ that $\Vec{r}_\ell\ne \Vec{0}$ for each $\ell \in [d]$. Observe that if $\Vec{u}$ and $\Vec{v}$ are linearly dependent, i.e.\ $\Span(X)$ is one-dimensional, then the result holds easily even with $|I| \le 2$. Now we assume $\Vec{u}$ and $\Vec{v}$ are linearly independent.
    \begin{claim}
        There exists a vector $\Vec{n}\in\mathbb{N}^2$ such that $\Inner{\Vec{r}_\ell,\Vec{n}}\ne 0$ for every $\ell\in[d]$.
    \end{claim}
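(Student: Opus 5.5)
The claim has a direct proof: pick $\Vec{n}$ off finitely many lines. Each $\Vec{r}_\ell$ is non-zero (zero rows were discarded), so the set $N_\ell := \{\Vec{n}\in\mathbb{Q}^2 \mid \Inner{\Vec{r}_\ell,\Vec{n}}=0\}$ is a one-dimensional line through the origin. There are only finitely many such lines, at most $d$.

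The first step is to restrict to the candidates $\Vec{n}_k := (1,k)\in\mathbb{N}^2$ for $k\in\mathbb{N}$. These vectors are pairwise linearly independent. Hence any fixed line $N_\ell$ contains at most one of them. Among $\Vec{n}_0,\dots,\Vec{n}_d$, at most $d$ lie in $\bigcup_\ell N_\ell$, so at least one of these $d+1$ vectors lies outside every $N_\ell$. That vector satisfies the claim.

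Because the later logspace result (\autoref{cor:exists-subset-card-4}) needs an effective version, I would also record an explicit choice. Write $\Vec{r}_\ell=(a_\ell,b_\ell)$. Then $\Inner{\Vec{r}_\ell,\Vec{n}_k}=a_\ell+k b_\ell$. If $b_\ell\ne 0$, this vanishes only for $k=-a_\ell/b_\ell$, which is bounded in absolute value by $\norm{\Vec{r}_\ell}\le\Size(V)$. If $b_\ell=0$, it vanishes for no $k$, since then $a_\ell\ne 0$. So every $k>\Size(V)$ works, for example $\Vec{n}:=(1,\Size(V)+1)$, and this can be written down in logarithmic space. The only subtle point is the case $b_\ell=0$, and the argument above already covers it.

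For the rest of \autoref{prop:cone-generated-by-4-vec}, this $\Vec{n}$ splits $X$ into $X_+$ and $X_-$, according to the sign of the inner product with $\Vec{n}$. Each part lies in an open half-plane, so its cone is pointed. A pointed cone in $\mathbb{Q}^2$ is generated by its two extremal rays, which are two vectors of that part. These are the extreme angular vectors as seen from $\Vec{n}$. Taking both pairs gives $|I|\le 4$.
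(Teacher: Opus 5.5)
Your proof is correct. The explicit choice $\Vec{n}=(1,\Size(V)+1)$ is the paper's argument with the coordinates swapped: the paper takes $(\Size(V)+1,1)$, splits on whether the first entry of $\Vec{r}_\ell$ vanishes, and uses integrality together with $\norm{\Vec{r}_\ell}\le\Size(V)$ exactly as you do. Your pigeonhole argument over $(1,k)$ for $k\in[0,d]$ is a valid extra route that does not need the norm bound, but the paper only uses the explicit version.
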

    \begin{claimproof}
        Let $\Vec{n}:=(\Size(V)+1, 1)$. For any vector $\Vec{r}_\ell=(a,b)\in X$, if $a=0$, then $\Inner{\Vec{r}_\ell,\Vec{n}}=b \ne 0$ must hold, as $\Vec{0}\notin X$. Otherwise, it holds that 
        \begin{equation}
            |\Inner{\Vec{r}_\ell,\Vec{n}}|\ge |a|\cdot (\Size(V)+1)-|b|\ge \Size(V)+1- \Size(V)>0.
        \end{equation}
        Therefore, $\Inner{\Vec{r}_\ell,\Vec{n}}\ne 0$ for every $\ell\in [d]$.
    \end{claimproof}
    Define $I_1:=\{\ell \in [d]\mid \Inner{\Vec{r}_\ell,\Vec{n}}>0\}$ and $I_{2}:=\{\ell\in[d]\mid \Inner{\Vec{r}_\ell,-\Vec{n}}>0\}$. By \autoref{lem:2d-space-projective-iff}, there exist two indices $i_1, j_1 \in I_1$ such that $\Cone(\{\Vec{r}_\ell\mid \ell \in I_1\})=\Cone(\{\Vec{r}_{i_1}, \Vec{r}_{j_1}\})$ and symmetrically, two indices $i_2, j_2 \in I_2$ such that $\Cone(\{\Vec{r}_\ell\mid \ell \in I_2\})=\Cone(\{\Vec{r}_{i_2}, \Vec{r}_{j_2}\})$. Let $I:=\{i_1, i_2, j_1, j_2\}$, clearly $|I|\le 4$. Note that $[d]:=I_1\cup I_2$. It is straightforward that $\Cone(\{\Vec{r}_1,\dots,\Vec{r}_d\})=\Cone(\{\Vec{r}_i\mid r\in I\})$. 
\end{proof}

\subsection{Proof of Lemma~\ref{cor:exists-subset-card-4}}

\existsSubsetCardFour*

\begin{proof}
    By \autoref{prop:cone-generated-by-4-vec}, there exists a $I\subseteq[d]$ such that $|I|\le 4$ and $\Cone(\{\Vec{r}_1,\dots,\Vec{r}_d\})=\Cone(\{\Vec{r}_i\mid r\in I\})$. A simple combinatorial algorithm can enumerate all the subsets $I$ with $|I|\le 4$ (only $O(d^4)$ choices) and then verify in logarithmic space whether every $\Vec{r}_\ell$ can be expressed as a non-negative linear combination of vectors in $\{\Vec{r}_i\mid r\in I\}$. Indeed, for any $\ell \in [d]$, if $\Vec{r}_\ell \in \Cone(\{\Vec{r}_i\mid r\in I\})$, by Carathéodory’s Theorem~\cite[Corollary 7.1i]{DBLP:books/daglib/0090562}, we have $c_\ell\cdot \Vec{r}_\ell=\alpha_\ell \cdot \Vec{r}_{i_\ell}+\beta_\ell\cdot\Vec{r}_{j_\ell}$ for some $i_\ell,j_\ell\in I$ and $c_\ell ,\alpha, \beta\in \mathbb{N}$, where $c_\ell>0$. By Cramer's rule, these coefficients can be bounded by a polynomial in $\Size(V)$. We can compute these coefficients by brute-force searching in logarithmic space.
    
    Now assume that we have correctly determined $I$ and the coefficients in $c_\ell\cdot \Vec{r}_\ell=\alpha_\ell \cdot \Vec{r}_{i_\ell}+\beta_\ell\cdot\Vec{r}_{j_\ell}$ for every $\ell \in [d]$. Then given any $\Vec{x}\in \CycleSpace(V)$, we may assume that $\Vec{x}=\lambda\cdot \Vec{u}+\mu\cdot \Vec{v}$ for some $\lambda, \mu \in \mathbb{Q}$. Then we can deduce that
    \begin{equation}
        \begin{aligned}
            c_\ell\cdot \Vec{x}(\ell)&=\Inner{(\lambda,\mu),c\cdot\Vec{r}_\ell}\\
            &=\Inner{(\lambda,\mu),\alpha_\ell \cdot \Vec{r}_{i_\ell}+\beta_\ell\cdot\Vec{r}_{j_\ell}}\\
            &=\alpha_\ell \cdot \Vec{x}(i_\ell)+\beta_\ell\cdot\Vec{x}(j_\ell),
        \end{aligned}
    \end{equation}
    which completes the proof.
\end{proof}

\end{document}